\documentclass[showpacs,twocolumn,pra,notitlepage]{revtex4-2}
\usepackage{graphics} 
\usepackage{tikz}
\usepackage[]{qcircuit}
\usetikzlibrary{arrows}
\usetikzlibrary{shapes,fadings,snakes}
\usetikzlibrary{decorations.pathmorphing,patterns}
\usetikzlibrary{calc}
\usepackage{hyperref}
\usepackage{amsfonts}
\usepackage{amssymb}
\usepackage{amscd}
\usepackage{amsmath}    
\usepackage{amsthm}
\usepackage{enumitem}
\usepackage{epsfig}
\usepackage{subfigure}
\usepackage{xcolor}
\usepackage{physics}
\usepackage{booktabs}
\usepackage{epstopdf}
\usepackage[most]{tcolorbox}
\usepackage{multirow}
\usepackage{bbm}
\usepackage{overpic}
\usepackage{diagbox}
\usepackage{lineno}
\usepackage[titletoc]{appendix}

\graphicspath{{./Figures/}}

\newtheorem{theorem}{Theorem}
\newtheorem{lemma}{Lemma}
\newtheorem{corollary}{Corollary}
\newtheorem{claim}{Claim}
\newtheorem{definition}{Definition}

\newtheorem{observation}{Observation}

\newtheorem{problem}{Problem}

\newtcolorbox[auto counter]{mybox}[2][]{
	enhanced,
	breakable,
	colback=blue!5!white,
	colframe=blue!75!black,
	fonttitle=\bfseries,
	title=Box \thetcbcounter: #2,#1
}

\newcounter{mybox}
\refstepcounter{mybox}

\begin{document}

\title{Quantum Complementarity Approach to Device-Independent Security}

\date{\today}

\author{Xingjian Zhang$^{1}$}
\author{Pei Zeng$^{1}$}
\author{Tian Ye$^{1}$}
\author{Hoi-Kwong Lo$^{2,\,3,\,4}$}
\author{Xiongfeng Ma$^{1}$}

\affiliation{$^1$Center for Quantum Information, Institute for Interdisciplinary Information Sciences, Tsinghua University, Beijing 100084, China}
\affiliation{$^2$Department of Physics, University of Hong Kong, Pokfulam, Hong Kong}
\affiliation{$^3$Centre for Quantum Information and Quantum Control (CQIQC), Dept. of Electrical \& Computer Engineering and Department of Physics, University of Toronto, Toronto, Ontario, M5S 3G4, Canada}
\affiliation{$^4$Quantum Bridge Technologies, Inc., 100 College Street, Toronto, ON M5G 1L5, Canada}

\begin{abstract}
Complementarity is an essential feature of quantum mechanics. The preparation of an eigenstate of one observable implies complete randomness in its complementary observable. In quantum cryptography, complementarity allows us to formulate security analyses in terms of phase-error correction. However, in the device-independent regime that offers security without device characterization, the concept becomes much subtler. Security proofs of device-independent quantum cryptography tasks are often complex and quite different from those of their more standard device-dependent cousins. The existing proofs pose huge challenges to experiments, among which large data-size requirement is a crux. Here, we show the complementarity security origin of the device-independent tasks. By linking complementarity with quantum nonlocality, we recast the device-independent scheme into a quantum error correction protocol. Going beyond the identical-and-independent-distribution case, we consider the most general attack. We generalize the sample entropy in classical Shannon theory for the finite-size analysis. Our method exhibits good finite-size performance and brings the device-independent scheme to a more practical regime. Applying it to the data in a recent ion-trap-based device-independent quantum key distribution experiment, one could reduce the requirement on data size to less than a third. Furthermore, the complementarity approach can be naturally extended to advantage key distillation to ease experiments by tolerating higher loss and lower transmittance.
\end{abstract}

\maketitle

\newpage

\emph{Introduction. ---}As demonstrated by Heisenberg, complementary observables, such as position and momentum, where the full knowledge of one property implies the ignorance of the other, exist~\cite{heisenberg1927anschaulichen}. The inability to measure multiple observables simultaneously indicates a fundamental limitation of learning physical properties, thus violating the tenet of determinism in Newtonian physics. In the quest to understand quantum physics, complementarity is found to be closely related to the nonlocal physical state --- entanglement~\cite{einstein1935can}. As the strongest method to verify its existence, the Bell test has drawn considerable attention~\cite{bell1964einstein}. In the famous Clauser-Horne-Shimony-Holt (CHSH) Bell test, space-like separated parties, Alice and Bob, each have a black box for random measurements. Without any communication, Alice and Bob each send a random input $X,Y\in\{0,1\}$ to their own devices, which then output a value $A,B\in\{+1,-1\}$, respectively. With respect to these random variables, the expected Bell value is given by the following expression~\cite{clauser1969proposed},
\begin{equation}\label{Eq:AveBellValue}
\begin{aligned}
S &= \sum_{x,y\in\{0,1\}}(-1)^{xy} \mathbb{E}(AB|X=x,Y=y),
\end{aligned}
\end{equation}
where the function $\mathbb{E}(\cdot)$ represents the expected value and lowercase letters represent the specific realizations of the associated random variables. Deterministic classical physics upper bounds the Bell value by $S\leq2$, but quantum physics allows for a higher value of $S$. Such a correlation, called Bell nonlocality, necessarily implies entanglement. In addition, to unveil the nonlocal correlation, the observables corresponding to the random inputs, $\hat{A}_0,\hat{A}_1$ on Alice's side and $\hat{B}_0,\hat{B}_1$ on Bob's side, must be non-commuting. Furthermore, the nonlocal correlation indicates the unpredictable randomness of the outputs. As the conclusions do not depend on any system characterization, the scenario enjoys a \textit{device-independent} feature~\cite{mayers1998quantum,acin2007device}.

Aside from fundamental interests, with recent progress in quantum information, complementarity and nonlocality have found novel applications. One of the most profound applications of complementary observables is quantum communication. In quantum key distribution (QKD), where two remote parties aim to obtain identical private keys, legitimate users can encode their information in the basis states of complementary observables at random~\cite{bennett1984quantum,ekert1991quantum}. With the aid of measurement complementarity and quantum error correction, QKD security can be established~\cite{lo1999unconditional,shor2000simple,koashi2009simple}. The connection between quantum nonlocality and measurement complementarity allows us to go even further --- giving rise to the device-independent quantum key distribution (DIQKD)~\cite{mayers1998quantum,acin2007device}. Legitimate users do not need to characterize their quantum devices to guarantee security. Instead, they take key generation rounds and Bell tests at random, where the Bell value provides a ``self-test'' of the devices in use. Quantum nonlocality solves the security loopholes caused by malfunctioning or imperfection of quantum devices in common device-dependent QKD protocols once and for all~\cite{xu2020secure}\footnote{Note, however, that DIQKD is still vulnerable to memory attacks and covert channels. A memory attack may occur when the devices are reused for multiple QKD sessions, where the untrusted devices may store the key information in one DIQKD session and leak it via the necessary public communication required by the protocol in new sessions~\cite{barrett2013memory}. A covert channel signals the key information to the outside in an unnoticed way other than the public communication allowed by the protocol. We do not discuss these issues in this work. Particularly, there is no memory attack if we only consider one QKD session. For discussions on practical countermeasures, one may refer to Ref.~\cite{curty2019foiling}. Note that in a different QKD protocol, measurement-device-independent quantum key distribution (MDIQKD)~\cite{lo2012measurement,braunstein2012side}, the problem of memory attacks and covert channels can be naturally avoided on the measurement site.}.
With recent development in the theory of quantum entropies, a few analyses finally arrive at full device-independent security against the most general attacks~\cite{vazirani2014fully,miller2016robust,arnon2018practical,knill2018quantum}.  On the experimental side, after decades of efforts, a complete loophole-free DIQKD experiment has been conducted for the first time with delicate implementation on an ion-trap platform~\cite{nadlinger2022experimental}. There is also progress in the photonic~\cite{liu2022toward} and cold-atom~\cite{zhang2022device} platforms. These works mark a major breakthrough toward a future quantum internet.

Despite the development in both theories and experiments of DIQKD, the role of complementarity remains vague in device-independent security. Rather than being based on the intuitive thinking of uncertainty relation, which is common practice in the trusted-device scenarios~\cite{koashi2009simple}, the existing security analyses are often abstract in an operational interpretation and share a much different conceptual flavor. Some works attempt to unite the device-independent scenario to its device-dependent cousins. Overall, the target is characterizing the key privacy by the complementary observable to the key generation measurement and linking it with the Bell value. Yet the picture of this approach is incomplete. Early efforts fail in providing a tight relation between nonlocality and complementarity~\cite{tsurumaru2016multi}. While a recent work finally gives the desired link~\cite{woodhead2021device}, the result does not go into the general adversarial statistics that are possibly not independent and identically distributed (i.i.d.), unless resorting to the existing methods of quantum entropic measures for a full DIQKD security analysis.

\emph{Security Analysis. ---}In this work, we present a simple security proof of sequential DIQKD against the most general attacks --- coherent attacks --- where the system behavior is not assumed to be i.i.d., and complete the analysis along the complementarity approach. We consider a variant of Ekert's entanglement-based QKD protocol~\cite{ekert1991quantum}, which is similar to the CHSH test setting with an additional measurement on Alice's side~\cite{acin2006efficient}. The users first perform random quantum measurements, including key generation measurements, where $X=2,Y=0$, and Bell tests, where $X,Y\in\{0,1\}$. After accumulating a sufficient quantity of data, based on quantum bit error rate $e_b$ and observed average Bell value $\bar{S}$, the users perform information reconciliation and privacy amplification to process the key to be identical and private, respectively.

We analyze the final secure key length after one complete run of the entire protocol. Different from common QKD, as the quantum devices are now uncharacterized or even untrusted, the measurements may have a memory effect and hence history-dependent performance~\cite{pironio2009device}. To deal with this, we can depict an $n$-round experiment with a system evolution on $n$ subsystems for both parties, where the memory effect from previous events is equivalently embedded in the state preparation. We leave the detailed modeling in Appendix~\ref{Supp:SysModel}. Unlike the existing approaches based on entropic measures, we recast the DIQKD security analysis into quantum phase error correction~\cite{lo1999unconditional,koashi2009simple}. In this scenario, information reconciliation and privacy amplification can be decoupled. The former task is essentially a classical procedure and relatively easy to analyze. In $m$ key generation rounds, the number of reconciled key bits can be lower bounded by $I_{AB}\geq m[1-fh(e_b)]$, with $f\geq1$ an appropriate information reconciliation efficiency parameter. The essence of security analysis is to estimate the amount of information leakage in key generation, reflected in the cost of privacy amplification.

The complementarity approach starts with the intuitive idea of quantum uncertainty. The secrecy is cast in Eve's ignorance of the measurement outcomes of $\hat{B}_0$, which form the reconciled key. As shown in Fig.~\ref{Fig:FlowSimple}(a), to quantify the adversary's uncertainty on the user side, Alice and Bob explore the complementary basis. That is, given that Bob has measured the complementary observable of $\hat{B}_0$, how well can Alice guess the outcome? The higher the guessing probability is, the less key information is leaked outside. An incorrect guess in a single round is called a phase error and the joint measurement result is called a phase-error pattern. The cost for privacy amplification is given by the number of most likely phase-error patterns, which we term phase-error cardinality. Then, the target of privacy estimation is two-fold: define the quantum measurement that gives rise to phase errors and estimate the phase-error cardinality from observed statistics.

It is challenging to tackle multiple rounds in bulk and deal with non-i.i.d. statistics directly. Instead,  we start with a single round, where we define the phase error and determine its probability from the Bell value. Then, we relate probabilities with frequencies in the non-i.i.d.~setting, as shown in Fig.~\ref{Fig:FlowSimple}(b). We aim at estimating key privacy in terms of the phase-error sample entropy through observed Bell test statistics.

\begin{figure}[hbt!]
\centering
\resizebox{8.7cm}{!}{\includegraphics{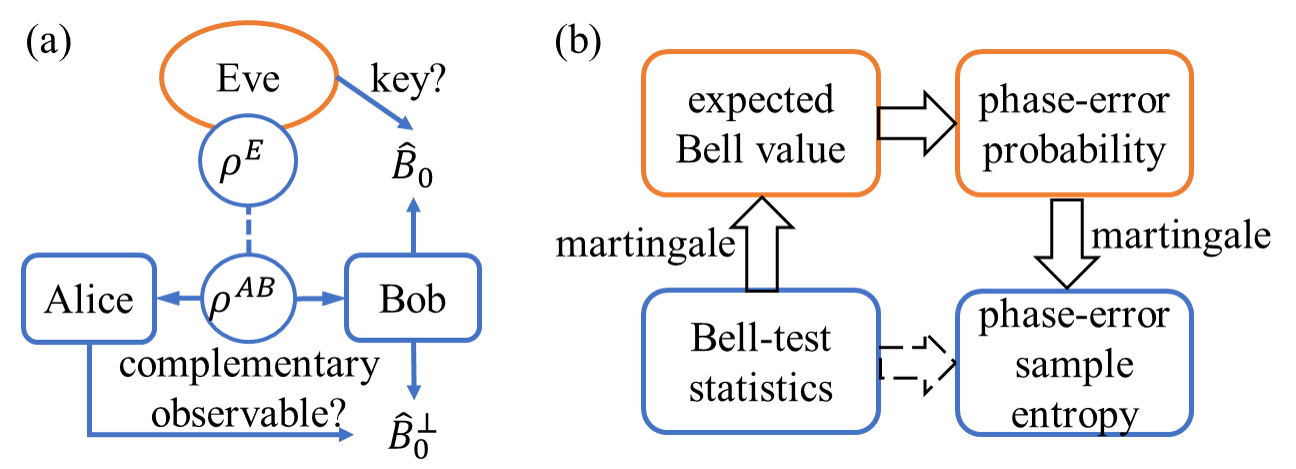}}
\caption{(a) The basic idea of the complementarity approach.
(b) The flowchart of privacy estimation. The two upper boxes indicate single-round probabilities and the two lower boxes indicate statistics over multiple rounds.}
\label{Fig:FlowSimple}
\end{figure}

We first focus on a single round. In the CHSH test, as there are only two measurement bases for each party and two outcomes for each measurement, one can apply Jordan's lemma and restrict the unknown quantum system to a low-dimensional Hilbert space~\cite{pironio2009device} (see Lemma~\ref{Lemma:DimRed} in Appendix). Conditioned on previous events, Eve squashes the quantum system of Alice and Bob into a pair of qubits and prepares their possible measurements to be projective observables. With a qubit subspace, we can define the complementary observable of $\hat{B}_0$, denoted as $\hat{B}_0^\perp$. The eigenvectors of the two observables form mutually unbiased bases. In a virtual experiment, Alice optimizes an auxiliary measurement, $\hat{A}_{\theta}$, to maximize its correlation with $\hat{B}_0^\perp$ over the pair of qubits, $\langle\hat{A}_{\theta}\otimes\hat{B}_0^\perp\rangle$, or equivalently, to minimize the phase-error probability, $(1-{\langle\hat{A}_{\theta}\otimes\hat{B}_0^\perp\rangle})/2$. Though the users cannot actually access $\hat{A}_{\theta}$ and $\hat{B}_0^\perp$, the relation between nonlocal behavior and measurement complementarity creates a bridge. Intuitively, a larger Bell value implies a smaller phase-error probability and more uncertainty from Eve's perspective. Note that different implementations can lead to the same Bell value, among which we need to consider the worst case for the legitimate users. In the end, the phase-error probability can be upper-bounded by
\begin{equation}\label{eq:PhaseErrorProb}
\begin{split}
e_{p}(S)&\equiv\dfrac{1}{2}\left(1-\min_{\rho,\hat{A}_0,\hat{A}_1,\hat{B}_0,\hat{B}_1}\max_{\hat{A}_{\theta}}{\langle\hat{A}_{\theta}\otimes\hat{B}_0^\perp\rangle}\right) \\
&=
\left\{
\begin{tabular}{ll}
$\dfrac{1-\sqrt{(S/2)^2-1}}{2}$, & $2<S\leq2\sqrt{2}$,\\
$\dfrac{1}{2}$, & $S\leq2$,
\end{tabular}
\right.
\end{split}
\end{equation}
where the pair of qubits, $\rho$, and the underlying Bell-test observables should give the expected Bell value, $S$. We leave its derivation in Appendix~\ref{Supp:PhaseErrorProb}.

Next, we extend the results to the case of multiple rounds. Though the behavior of untrusted quantum devices may not be i.i.d.~over the rounds under the most general attacks, fortunately, the sequential setting defines a time order for measurements. Hence, this process assures that the observables in each round depend only on the history but not the future. Furthermore, the correlations in both the quantum state and measurement devices from previous events are fully absorbed into the phase-error probability and the expected Bell value in each round. We can thus apply martingale analyses to their associate random variables. To guarantee that the martingale variables associated with the phase errors and Bell values share the same filtration, we also embed the random variables of measurement settings into the filtration. Then, the single-round relation between phase-error probability and expected Bell value bridges the martingales. In this way, we can estimate the phase-error cardinality in the key generation rounds from statistics in Bell-test rounds. Finally, we employ a modified version of Azuma's concentration inequality~\cite{azuma1967weighted}, namely, Kato's inequality~\cite{kato2020concentration}, to relate the probabilities of martingale variables with their corresponding frequencies.

Different from common complementarity-based QKD analyses, a difficult point here is that the phase-error rate, the average number of phase errors occurring in a round, is not a good measure of privacy. Note that Eve may set different expected Bell values over the rounds, while Alice and Bob can only estimate the average. Due to convexity issues, the relation between the phase-error probability and the Bell value in expectation does not imply a similar relation between the phase-error rate and the average Bell value over multiple rounds, where the former cannot be well upper-bounded via the latter. In Appendix~\ref{Sec:FiniteSizeProb}, we provide a concrete example showing this issue.

Note that the ultimate goal in complementarity-based privacy estimation is not the estimation of phase-error rate but the phase-error cardinality. To solve the convexity issue, we develop a sample-entropy method inspired by classical information theory~\cite{shannon1948mathematical} for the evaluation of the phase-error cardinality directly. The sample entropy is a random variable with respect to a phase-error pattern, the whose expected value is determined by the Shannon entropy. The direct use of the sample entropy of the phase error has a divergence problem. As a workaround, we apply a regularisation trick to the phase-error probability function.

With all the above ingredients, we arrive at the final security analysis result. We summarize it in the following informal theorem and leave the formal one and detailed derivation in Appendix~\ref{Supp:FiniteSize}.

\begin{theorem}[Informal]
After $n$ rounds of quantum measurements in DIQKD, let the observed average Bell value be $\bar{S}$ and the number of key generation rounds be $m$. Given fixed constants $\xi\in(0,\frac{1}{2}),\,\varepsilon_{S},\varepsilon_{pe},\varepsilon_{pc}\in(0,1)$, except for a failure probability no larger than $\varepsilon_f=\varepsilon_{S}+\varepsilon_{pe}+\varepsilon_{pc}$, the privacy amplification cost can be upper-bounded,
\begin{equation}\label{eq:priamp}
\begin{split}
I_{pa}&\leq \frac{1}{1+2\delta_{p}'}\left\{mh\left[e_{p}^{\xi} (S_{est})\right] + n\delta_{h}\right\}-\log\varepsilon_{pc}, \\
\end{split}
\end{equation}
where $h(x)=-x\log x-(1-x)\log(1-x)$ is the binary entropy function, $S_{est}$ is the estimation of the expected Bell value,
\begin{equation}\label{Eq:SimBellEst}
\begin{split}
S_{est}=(1+2\delta_{S}')\bar{S} - \delta_{S}, \\
\end{split}
\end{equation}
$\delta_{p}', \delta_{h}= O\left(\sqrt{\frac{-\log\varepsilon_{pe}}{n}}\right), \delta_{S}', \delta_{S}= O\left(\sqrt{\frac{-\log\varepsilon_{S}}{n}}\right)$, and $e_{p}^{\xi} \left(S\right)$ is the regularised phase-error function,
\begin{equation}\label{Eq:RelaxPhaseErrorProb}
e_{p}^{\xi}(S) = \dfrac{e_{p}(S) + \xi}{1+2\xi},
\end{equation}
with $e_p(S)$ given in Eq.~\eqref{eq:PhaseErrorProb}. Suppose the key length after information reconciliation is $I_{AB}$, the amount of generated key bits is given by $k=I_{AB}-I_{pa}$.
\label{Thm:Main}
\end{theorem}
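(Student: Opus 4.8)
The plan is to recast the key rate as a phase-error-correction cost and bound that cost by a regularised sample-entropy functional, using two applications of Kato's inequality together with one typical-set tail bound, whose three failure events sum by the union bound to $\varepsilon_f=\varepsilon_S+\varepsilon_{pe}+\varepsilon_{pc}$. I first fix the single-round picture: conditioned on the history, Jordan's lemma (Lemma~\ref{Lemma:DimRed}) lets me take Alice's and Bob's reduced systems to be a pair of qubits with projective observables, on which the complementary observable $\hat{B}_0^\perp$ and the optimal virtual measurement $\hat{A}_\theta$ are defined. For each round $i$ I introduce two conditional quantities relative to the filtration $\mathcal{F}_{i-1}$: the expected Bell value $S_i$ the state would yield under a Bell-test setting, and the phase-error probability $p_i$ the virtual measurement would yield under a key-generation setting. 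Because the setting in round $i$ is chosen randomly and independently of Eve's preparation given $\mathcal{F}_{i-1}$, both $S_i$ and $p_i$ are well-defined counterfactual properties of the \emph{same} round-$i$ qubit state, and the single-round result \eqref{eq:PhaseErrorProb} gives $p_i\le e_p(S_i)$. I embed the setting random variables into the filtration precisely so that the Bell-value and phase-error martingales I build next share a common filtration.

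Next I run the two concentration steps on this filtration. The Bell-test statistic forms a bounded martingale-difference sequence against $\sum_i S_i$; applying Kato's inequality (tighter than Azuma's in the finite-size regime) with failure probability $\varepsilon_S$ converts the observed frequency $\bar S$ into a high-probability lower bound $S_{est}$ on the average conditional Bell value, giving Eq.~\eqref{Eq:SimBellEst} with $\delta_S,\delta_S'=O(\sqrt{-\log\varepsilon_S/n})$. For privacy I deliberately do \emph{not} try to bound the phase-error rate: the obstruction flagged in the text is exactly that $h\circ e_p$ has the wrong convexity, so controlling $\tfrac1m\sum_i p_i$ through $\bar S$ fails. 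Instead I target the phase-error cardinality directly through its sample entropy $-\sum_i[b_i\log p_i+(1-b_i)\log(1-p_i)]$, whose round-$i$ conditional expectation is the Shannon entropy $h(p_i)$. A typical-set argument bounds $I_{pa}=\log N_{pc}$, the logarithm of the phase-error cardinality, by this sample entropy up to a coverage tail of probability $\varepsilon_{pc}$, which contributes the $-\log\varepsilon_{pc}$ term of Eq.~\eqref{eq:priamp}.

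The remaining difficulty is that the raw sample-entropy increments are unbounded, since $-\log p_i\to\infty$ as $p_i\to0$, so Kato cannot be applied to them. The cure is the regularisation \eqref{Eq:RelaxPhaseErrorProb}: replacing $p_i$ by $e_p^\xi(S_i)=(e_p(S_i)+\xi)/(1+2\xi)$ confines every conditional probability to $[\xi/(1+2\xi),(1+\xi)/(1+2\xi)]$, which bounds the log-increments (producing the $1/(1+2\delta_p')$ prefactor and the $n\delta_h$ term with $\delta_p',\delta_h=O(\sqrt{-\log\varepsilon_{pe}/n})$) and, since $e_p^\xi$ lies closer to $\tfrac12$ than $e_p$, only inflates $h$ so the upper bound stays valid. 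A second Kato step (failure $\varepsilon_{pe}$) then controls the regularised sample entropy by $\sum_i h[e_p^\xi(S_i)]$. Finally I invoke concavity of $S\mapsto h[e_p^\xi(S)]$, which the regularisation is also designed to guarantee on the relevant range: Jensen passes $\tfrac1m\sum_i h[e_p^\xi(S_i)]$ to $h[e_p^\xi]$ evaluated at the average of the $S_i$, and monotonicity of $h\circ e_p^\xi$ together with the step-one bound $\tfrac1m\sum_i S_i\ge S_{est}$ delivers the $m\,h[e_p^\xi(S_{est})]$ term, completing Eq.~\eqref{eq:priamp}.

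I expect the main obstacle to be the simultaneous resolution of the two difficulties that defeat the rate-based approach: the convexity direction of $h\circ e_p$, which forces the switch from phase-error rate to phase-error cardinality, and the divergence of the associated sample entropy, which must be tamed \emph{without} breaking either the concavity needed for the final Jensen step or the bounded-increment condition needed for Kato's inequality. Engineering the regularised functional $e_p^\xi$ so that it satisfies all three constraints at once --- validity of the upper bound, concavity of $h\circ e_p^\xi$, and boundedness of the log-increments --- is the delicate part; once it does, the martingale construction and the Kato manipulations around it are comparatively routine.
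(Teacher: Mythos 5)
Your proposal follows essentially the same route as the paper: Jordan's lemma for the single-round qubit reduction, the $e_p(S)$ bound, the $\xi$-regularised phase-error sample entropy engineered to satisfy simultaneously validity of the upper bound, concavity of $h\circ e_p^{\xi}$, and bounded log-increments, two Kato-type concentration steps for martingales built on a common filtration containing the basis settings, Jensen plus monotonicity to reach $m\,h[e_{p}^{\xi}(S_{est})]$, and a pattern-counting lemma converting the sample-entropy bound into a cardinality bound. The only minor imprecision is the role of $\varepsilon_{pc}$: in the paper the coverage failure of the probable set is already charged to $\varepsilon_{pe}$ through the sample-entropy concentration, while $-\log\varepsilon_{pc}$ is the additional syndrome cost of the two-universal hashing in the privacy-amplification (phase-error-correction) step itself rather than a typical-set coverage tail.
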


The estimation failure probabilities $\varepsilon_{S}$ and $\varepsilon_{pe}$ are introduced by Kato's concentration inequality, for the Bell test and phase errors, respectively. The terms $\delta_{h},\delta_{p}',\delta_{S}$ and $\delta_{S}'$ can be optimised. The realization of privacy amplification has a failure probability $\varepsilon_{pc}$. The parameter $\xi$ is introduced to avoid divergence of the sample-entropy function. Under fixed protocol parameters, the value of $\xi$ can be optimized and approaches zero when the data size is large.

\emph{Simulation. ---}We consider an experimental set-up designed according to the ideal setting while suffering from noise and loss. Ideally, the entanglement source prepares Bell state $\ket{\Phi^+}=(\ket{00}+\ket{11})/\sqrt{2}$, the key generation measurements are $\hat{A}_2=\hat{B}_0=\hat{\sigma}_z$, and the rest measurements are the ones giving the largest expected Bell value $S=2\sqrt{2}$. In entanglement distribution, the Bell state undergoes a depolarizing channel before measurement, characterised by $\hat{\mathcal{E}}_d[\rho] = (1-e_d)\rho + e_d \hat{I}/{4}$ with $e_d$ being the depolarising factor. The fidelity of the noisy state $\rho$ to the ideal state $\ket{\Phi^+}$ is calculated by $F(\rho,\Phi^+)=\bra{\Phi^+}\rho\ket{\Phi^+}=1-3e_d/4$. This model is close to the realistic case of entanglement distribution with a fiber link~\cite{li2021experimental}.
Denote the total transmittance of the quantum signals on one side to be $\eta$. That is, with probability $(1-\eta)$, no detection is made. In a standard protocol, the users assign the undetected events a value of $1$. Alice and Bob randomly choose the measurement settings. We fix the total failure probability of the protocol to $\varepsilon_f=5\times10^{-11}$.

In the experiment, we need to determine the feasible regions of parameters $\eta$ and $e_d$ that render a positive key rate. The simulation result is shown in Fig.~\ref{Fig:Feasible}. In the attached tables, we list the tolerable channel-noise levels and the threshold transmittance for a positive key rate.
As shown by the feasible experimental regions in Fig.~\ref{Fig:Feasible}, the complementarity approach exhibits a fast convergence to the asymptotic limit of infinite data size. The asymptotic limit is consistent with the result assuming i.i.d.~attacks \cite{pironio2009device}.

\begin{figure}[hbt!]
\centering
{\begin{overpic}[width=0.85\hsize]{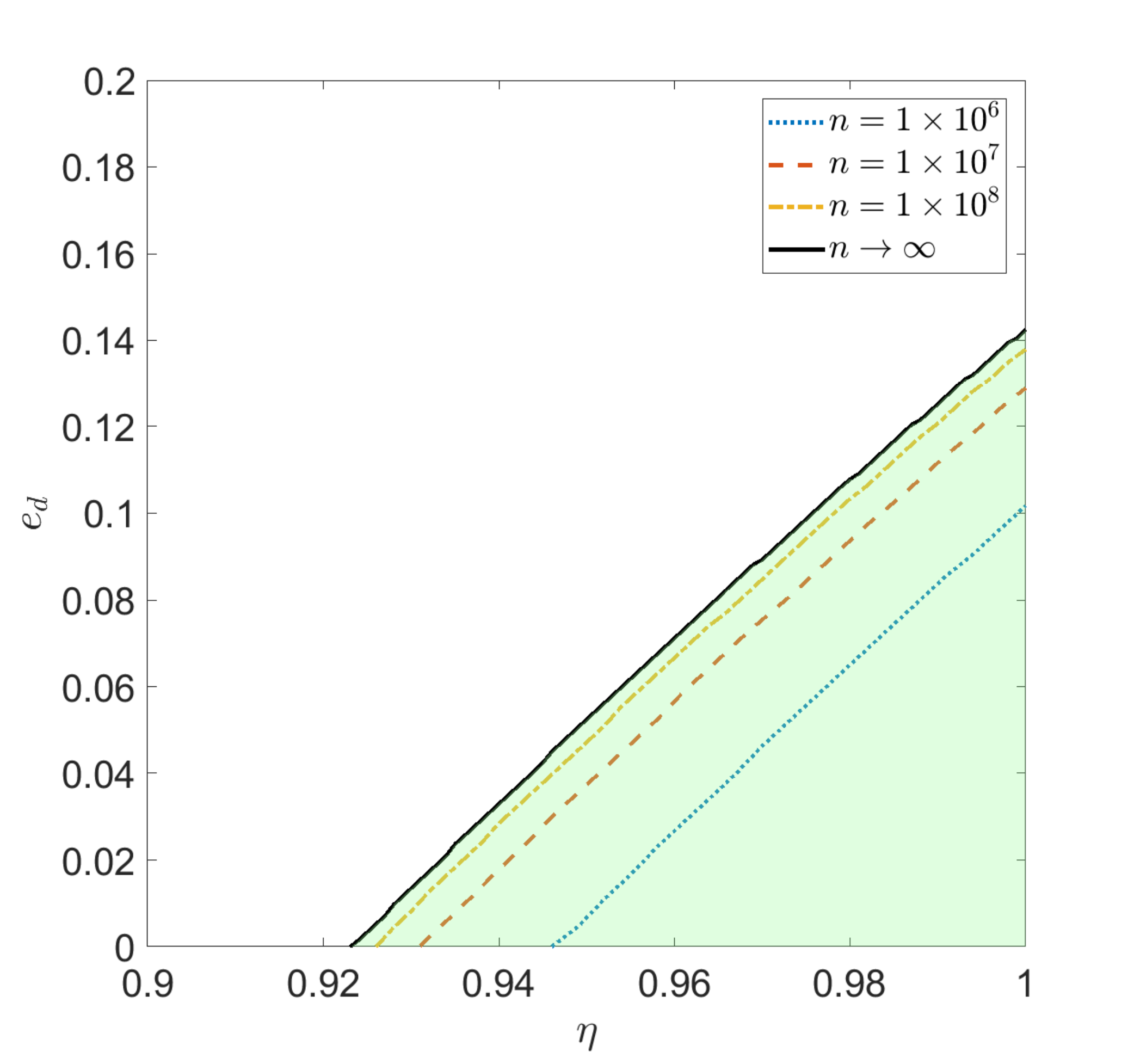}
\put(21,77)
{\fontsize{7}{7}\selectfont Tolerable Noise}
\put(23,65)
{\fontsize{7}{7}\selectfont
\begin{tabular}{c|c}
\hline $n$ & $\max{e_d}$ \\
\hline
$10^6$ & $0.102$ \\
$10^7$ & $0.129$ \\
$10^8$ & $0.138$ \\
$\infty$ & $0.143$ \\
\hline
\end{tabular}}
\put(17,50)
{\fontsize{7}{7}\selectfont Transmittance Threshold}
\put(23,38)
{\fontsize{7}{7}\selectfont
\begin{tabular}{c|c}
\hline $n$ & $\min{\eta}$ \\
\hline
$10^6$ & $94.6\%$ \\
$10^7$ & $93.1\%$ \\
$10^8$ & $92.6\%$ \\
$\infty$ & $92.3\%$ \\
\hline
\end{tabular}}
\end{overpic}}
\caption{Simulation results of key-generation performance with $\eta_A=\eta_B=\eta$.
The bottom right corner under each line represents the feasible region of experimental parameters $\eta,e_d$. For $n\rightarrow\infty$, we depict the feasible region in green.}
\label{Fig:Feasible}
\end{figure}

To better examine the finite-size performance, we consider an experimental setting close to the cold-atom and nitrogen-vacancy-center platforms. For these platforms, the efficiency can be close to 1 with heralding. Meanwhile, the depolarizing factor can be controlled down to a reasonable level. The bottleneck lies in the data size required for the experiment, where the successful heralding rate of entangled pairs is normally very low, typically, on the order of tens of pairs per minute. For this purpose, we simulate the smallest data sizes that are required under different values of depolarizing factors, as shown in Fig.~\ref{Fig:Simulation1LOCC}. In comparison, we simulate the data sizes required by the existing entropy-based analysis adopting the entropy accumulation theorem (EAT), including the original method~\cite{arnon2018practical} and a modified result~\cite{murta2019towards}. The min-tradeoff function in the EAT analyses shares a similar position as the phase-error sample entropy in the complementarity-based analysis. The simulation results show that our method is more advantageous in saving experimental time. For instance, at $e_d=10\%$, which corresponds to a state preparation fidelity of $92.5\%$, a reasonable experimental parameter achievable by state-of-the-art cold-atom~\cite{rosenfeld2017event,zhang2022device} and nitrogen-vacancy-centre~\cite{hensen2015loophole} platforms, the original EAT method $6.16\times10^8$ rounds and the modified EAT method requires $2.75\times10^7$ rounds for positive key generation. In comparison, our security analysis cuts down the experimental time by two orders, requiring $9.12\times10^5$ rounds, making a DIQKD realization on these experimental platforms plausible within a reasonable time under decent security parameters. With a larger depolarizing factor, the advantage of our result becomes more significant. We find a similar behavior for the relation between the least required data sizes and transmittance.

We also compare our method with the state-of-the-art EAT analysis, which utilizes second-order statistics to improve the finite-size performance~\cite{liu2021device}. Using the variance information of the distributions, the advanced EAT method drastically reduces the data size. With this advanced method, recently, a full DIQKD experiment was achieved on an ion-trap platform~\cite{nadlinger2022experimental}. Under an average Bell value of $\bar{S}=2.64$ and a quantum bit error rate of $e_b=1.8\%$, the experiment produces the first secure key bit after approximately $10^6$ rounds of measurements (see Fig.~4 in the cited reference). The correspondence between this method and the current complementarity-based method is less clear, as we have not utilized the variance information. Even without this technique, the complementarity-based method still exhibits better finite-size performance. The least data required for key generation could be shortened to approximately $3.16\times10^5$ rounds, less than one-third of that required by the EAT method.

\begin{figure}[hbt!]
\centering
{\begin{overpic}[width=1\hsize]{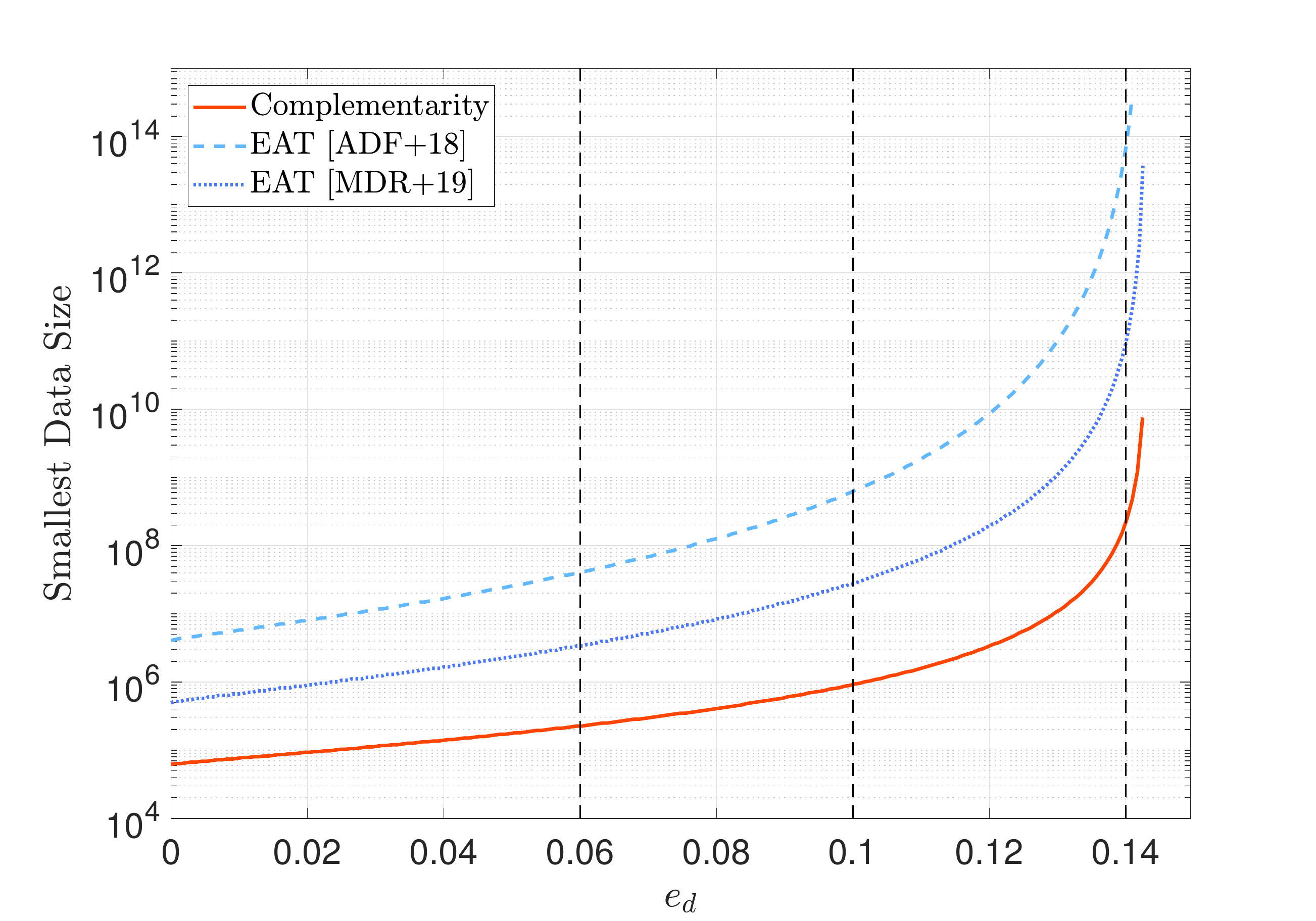}
\put(22,27)
{\fontsize{7}{7}\selectfont
\begin{tabular}{c}
$n=4.09\times10^{7}$
\end{tabular}
}
\put(22,21)
{\fontsize{7}{7}\selectfont
\begin{tabular}{c}
$n=3.39\times10^{6}$
\end{tabular}
}
\put(22,15)
{\fontsize{7}{7}\selectfont
\begin{tabular}{c}
$n=2.26\times10^{5}$
\end{tabular}
}
\put(43.5,34)
{\fontsize{7}{7}\selectfont
\begin{tabular}{c}
$n=6.16\times10^{8}$
\end{tabular}
}
\put(43.5,26)
{\fontsize{7}{7}\selectfont
\begin{tabular}{c}
$n=2.75\times10^{7}$
\end{tabular}
}
\put(43.5,18)
{\fontsize{7}{7}\selectfont
\begin{tabular}{c}
$n=9.12\times10^{5}$
\end{tabular}
}
\put(64,58)
{\fontsize{7}{7}\selectfont
\begin{tabular}{c}
$n=6.80\times10^{13}$ \\
\end{tabular}
}
\put(64,44)
{\fontsize{7}{7}\selectfont
\begin{tabular}{c}
$n=9.29\times10^{10}$ \\
\end{tabular}
}
\put(64,31)
{\fontsize{7}{7}\selectfont
\begin{tabular}{c}
$n=2.18\times10^{8}$ \\
\end{tabular}
}
\end{overpic}
}
\caption{The smallest data sizes required for successful key generation with respect to different depolarising factors at $\eta=1$.}
\label{Fig:Simulation1LOCC}
\end{figure}

One main reason for the advantage in the finite-size regime is that our approach has a tighter bound on the information leakage of classical postprocessing. In the EAT-type analyses, the entropy that reflects the key secrecy is accumulated round by round in the experiment via a Markov chain~\cite{arnon2018practical,liu2021device}. The classical communication for Bell-value evaluation also contributes to the entropy accumulation. Consequently, the users need to cost extra secret keys to compensate for the information leakage. On the contrary, our method bypasses the above issue. Along the complementarity approach, the users can freely transmit classical information for parameter estimation as long as it does not affect the phase-error cardinality in key generation rounds. In fact, the information leakage in test rounds is not so pessimistic as in the EAT analysis. Without the additional consumption of private keys, the users can expect positive key generation in a shorter time. In Appendix~\ref{Sec:NumericalDiscussion}, we provide the simulation model, more numerical results, and more discussions comparing the two approaches.

\emph{Advantage Key Distillation. ---}Due to the operational meaning of measurement complementarity, we can employ three advantage key distillation methods, namely, noisy pre-processing (N) \cite{kraus2005lower}, usage of loss information (L) \cite{ma2012improved}, and B-step via two-way classical communication (B) \cite{gottesman2003proof,chau2002practical}, to DIQKD without the i.i.d.~assumption. We also consider the protocol using both noise pre-processing and loss information (C). Their improvement over the standard protocol (S) can be explicitly traced by the change in phase-error cardinality. These advantage methods can effectively increase the tolerable noise, reduce the minimum requirement on state fidelity, and lower the threshold of total transmittance, as shown in Table~\ref{tab:ThresholdEff}. Under the assumption of i.i.d.~attacks, the methods have been considered in the literature. We obtain the same threshold values for the methods of (L)~\cite{ma2012improved} and (N)~\cite{ho2020noisy}, and a similar result for (B) to the usage of repetition code~\cite{tan2020advantage}, where the reported threshold transmittance is $89.1\%$. We now strengthen the results as they become applicable to the most general attacks. We present the details for the advantage key distillation methods in Appendix~\ref{Supp:AdvKeyDistill} and simulation results in Appendix~\ref{Sec:AdvNum}.

\begin{table}[htb!]
\centering \caption{Thresholds of transmittance and fidelity for different postprocessing. We calculate the threshold transmittance with the fidelity set to unity and vice versa.}
\begin{tabular}{ccc}
\hline Protocol & Threshold Transmittance & Threshold Fidelity \\
\hline
(S) & $92.36\%$ & $89.28\%$\\
(N) & $91.27\%$ & $87.88\%$\\
(L) & $90.87\%$ & $89.28\%$\\
(C) & $90.47\%$ & $87.88\%$\\
(B) & $88.30\%$ & $88.28\%$\\
\hline
\end{tabular}
\label{tab:ThresholdEff}
\end{table}

\emph{Outlook. ---}Based on the complementarity approach, our security analysis explicitly shows the role of measurement complementarity in device-independent quantum cryptography. We focus on the sequential DIQKD protocol using the CHSH test. A natural extension is the parallel setting, where the legitimate users input all the measurement settings at once and then obtain all measurement results~\cite{jain2020parallel}. In practice, it would also be meaningful to explore whether Bell tests other than the CHSH test are more advantageous under certain realistic conditions~\cite{eberhard1993background}. In addition to the cryptography tasks, complementarity analysis could be immediately used for device-independent distillable entanglement quantification. In privacy estimation, we generalize the concept of sample entropy from classical Shannon theory. To the best of our knowledge, this is the first time that sample entropy has been used for QKD parameter estimation. We believe that the sample entropy will find more use in other areas of quantum science.

\emph{Acknowledgments. ---}We acknowledge Hongyi~Zhou, Go~Kato, Rong~Wang, and Chenyang~Li for the insightful discussions on concentration inequalities, Toyoshiro~Tsurumaru and Charles~L.-C.~Wen for discussions on related works, Ernest~Tan for discussions on the memory effect, Yizhi~Huang for independently checking the numerical results, Zhiqiang~Liu and Zhenhuan~Liu for contributions during early stages of this project, and Guoding~Liu and Junjie Chen for the careful proofread. This work is supported by the National Natural Science Foundation of China Grant No.~11875173, the National Key Research and Development Program of China Grant No.~2019QY0702 and No.~2017YFA0303903, NSERC, CFI, ORF, Connaught Innovation Award, MITACS, the Royal Bank of Canada, Huawei Technologies Canada, Inc.~and the University of Hong Kong start-up funding.

\onecolumngrid
\newpage

\appendix
\begin{appendices}
\tableofcontents
\newpage

\section{Notations}
\begin{table}[htbp!]
\caption{Frequently used parameters, random variables, and quantum operators in this work.}
\hspace{0.5cm}
\centering
\begin{tabular} {p{100pt}|p{220pt}|p{160pt}}
\hline
Notation & Meaning & Relation to Other Quantities \\

\hline

$n$ & number of rounds of quantum measurements & \\
$[n]$ & $\{1,\cdots,n\}$ & \\
$\mathbb{N}$ & set of natural numbers (including $0$) & \\
$\mathbb{C}$ & set of complex numbers & \\
$\mathcal{L}(\mathcal{H})$ & set of linear operators defined on the Hilbert space $\mathcal{H}$ \\
$\mathcal{D}(\mathcal{H})$ & set of density operators defined on the Hilbert space $\mathcal{H}$ \\

$f_{ec}$ & efficiency in information reconciliation & \\

$x_i\in\{0,1,2\}$ & Alice's input random variable in the $i$th round & \\
$y_i\in\{0,1\}$ & Bob's input random variable in the $i$th round & \\
$a_i\in\{1,-1\}$ & Alice's output random variable in the $i$th round & \\
$b_i\in\{1,-1\}$ & Bob's output random variable in the $i$th round & \\
$p_X$ & Alice's input probability distribution & \\
$p_Y$ & Bob's input probability distribution & \\
$\lambda_i$ & system random variable (value) of the $i$th round & \\

$\kappa^A$ & Alice's raw key bit string & \\
$\kappa^B$ & Bob's raw key bit string & \\

$\hat{\Pi}_{\lambda_i}^A$ & projector onto the subspace $\mathcal{H}_A^{\lambda_i}$ & \\
$\hat{\Pi}_{\lambda_i}^B$ & projector onto the subspace $\mathcal{H}_B^{\lambda_i}$ & \\
$\rho_{\lambda_i}^{AB}\in\mathcal{D}(\mathcal{H}_A^{\lambda_i}\otimes\mathcal{H}_B^{\lambda_i})$ & underlying subsystem in the $i$th round corresponding to the system random variable value $\lambda_i$ \\
$\hat{A}_x^{\lambda_i}\in\mathcal{L}(\mathcal{H}_A^{\lambda_i}),x\in\{0,1\}$ & observable in the $i$th round corresponding to Alice's input $x$ and system random variable value $\lambda_i$ & \\
$\hat{A}_2^{\lambda_i}\in\mathcal{L}(\mathcal{H}_{A(i)})$ & observable in the $i$th round corresponding to Alice's input $x=2$ & \\
$\hat{B}_y^{\lambda_i}\in\mathcal{L}(\mathcal{H}_B^{\lambda_i}),y\in\{0,1\}$ & observable in the $i$th round corresponding to Bob's input $y$ and system random variable value $\lambda_i$ & \\
$\hat{A}_{\theta_i}^{\lambda_i}\in\mathcal{L}(\mathcal{H}_A^{\lambda_i})$ & auxiliary observable on Alice's side in the $i$th round corresponding to the system random variable value $\lambda_i$ \\
$\hat{B}_{\bot}^{\lambda_i}$ & complementary observable of $\hat{B}_0^{\lambda_i}\in\mathcal{L}(\mathcal{H}_B^{\lambda_i})$ & $\{\hat{B}_{\bot}^{\lambda_i},\hat{B}_0^{\lambda_i}\}=0$ \\
$\hat{S}^{\lambda_i}$ & Bell observable in the $i$th round corresponding to the system random variable value $\lambda_i$ & $\hat{S}^{\lambda_i} = \sum_{x,y\in\{0,1\}}(-1)^{xy}\hat{A}_{x}^{\lambda_i}\otimes \hat{B}_{y}^{\lambda_i}$ \\
$S^{\lambda_i}$ & expected Bell value in the $i$th round corresponding to the system random variable value $\lambda_i$ & $S^{\lambda_i}=\Tr(\rho_{\lambda_i}^{AB}\hat{S}^{\lambda_i})$ \\
$E_{p}^{\lambda_i}\in\{0,1\}$ & phase-error random variable in the $i$th round corresponding to the system random variable value $\lambda_i$ & \\
$e_{p}^{\lambda_i}\in[0,\frac{1}{2}]$ & phase-error probability in the $i$th round corresponding to the system random variable value $\lambda_i$ & \\

$\Delta_{p}^{(i)}$ & martingale random variable constructed from phase-error random variables in the $i$th round & Eq.~\eqref{eq:DeltaRV}, see also the definition of $\zeta_{p}^{(i)}$ in Eq.~\eqref{Eq:RelaxSampleEntropy} \\
$\Delta_{xy}^{(i)},x,y\in\{0,1\}$ & martingale random variable with respect to the measurement setting $(x,y)$ constructed from Bell test random variables in the $i$th round & Eq.~\eqref{eq:DeltaBellRV}, see also the definition of $\zeta_{xy}^{(i)}$ in Eq.~\eqref{Eq:TestRoundsRV} \\

$\xi\in(0,\frac{1}{2})$ & regularisation parameter in the regularised sample entropy & \\

$\varepsilon_{pe}\in(0,1)$  & failure probability in the phase-error sample entropy estimation & \\
$\varepsilon_{xy}\in(0,1)$ & failure probability in the parameter estimation of the Bell test setting $(x,y)$ & \\
$\varepsilon_{S}\in(0,1)$ & failure probability in the parameter estimation of the Bell value & $\varepsilon_{S}=\sum_{x,y=0,1}\varepsilon_{xy}$ \\
$\varepsilon_{pc}\in(0,1)$ & failure probability in privacy amplification & \\
\end{tabular}
\label{tab:notation}
\end{table}

Before we commence, we explain the notations in this work. Some frequently used notations are listed in Table~\ref{tab:notation}.
\begin{enumerate}
  \item The letters $A,B,E$ are used to specify the quantum systems of Alice, Bob (legitimate users), and Eve (adversarial party). We denote the Hilbert spaces that the quantum operators act on as $\mathcal{H}$ and use subscripts and superscripts to specify subsystems.
  \item Except in Sec.~\ref{Supp:FiniteSize}, for simplicity, we usually express random variables by lowercase English or Greek letters. In most cases, we do not express the underlying probabilistic spaces explicitly. We shall make it clear if lowercase letters represent abstract but specific values or \emph{realisations} of random variables. Nevertheless, to keep in accordance with convention, we shall always use $E_p\in\{0,1\}$ to represent the phase-error random variable, $e_p$ to represent the phase-error probability and $S$ to represent the Bell value. In cases where the realisation of $E_p$ matters, we shall explicitly express it. For instance, we denote $E_p=1$ as the occurrence of a phase error.

      In Sec.~\ref{Supp:FiniteSize}, we develop a martingale-based finite-size analysis. For this part, we shall rigorously distinguish the concepts of random variables, the sample space of a random variable, and realisations of a random variable. For English letters (except $E_p$ and $S$), we use the expression in the form $X=x\in\mathcal{X}$ to represent the following meaning: a random variable, of which the sample space is $\mathcal{X}$, takes the realisation $x$ in an experiment. Whether the letters $A,B,E$ represent random variables or systems should be clear from the context. We use a similar expression for $\Lambda=\lambda$, where the random variable $\Lambda$ represents the system variable in this paper. For other Greek letters that are related to the martingale analysis, we shall explicitly explain whether they shall represent a random variable or a specific realisation of it.
  \item For quantum operators, $\rho$ is kept for the density matrix of the state of a quantum system. Quantum channels or measurements letters acting on a quantum system are represented with capital letters with a hat notation.
  \item If there is a group of quantum operators or random variables of similar meaning, we shall use their subscripts and (or) superscripts to specify them. A common case is to specify the round number. For example, $\rho_i$ represents the quantum state in the $i$th round. If the round number needs to be displayed together with other notations, we shall denote the round number by the subscript with a bracket, $(i)$, in case of ambiguity. If the round number value can be embedded in another random variable's value without confusion, we shall omit the subscript $(i)$. For example, in the notation $\rho_{\lambda_i}$, the round number is incorporated into the notation $\lambda$. In this case, $\rho$ needs to be specified by both $\lambda$ and $i$, and $i$ is also used for the specification of $\lambda$.
  \item There are notations that do not follow the above explanation. They will be explicitly explained when they appear for the first time.
\end{enumerate}

In addition, we list the representations of the Pauli operators used in this work,
\begin{equation}
\hat{\sigma}_x = \left(\begin{matrix}
0 & 1 \\
1 & 0
\end{matrix}\right),
\hat{\sigma}_y = \left(\begin{matrix}
0 & -i \\
i & 0
\end{matrix}\right),
\hat{\sigma}_z = \left(\begin{matrix}
1 & 0 \\
0 & -1
\end{matrix}\right),
\end{equation}
where we represent them on the computational basis determined by the eigenvectors of $\hat{\sigma}_z$, and $i^2=-1$. We use $H(\cdot)$ to represent the Shannon entropy function or the von Neumann entropy function. If the argument is a random variable, the Shannon entropy is given by
\begin{equation}
  H(X)=-\sum_{x\in\mathcal{X}}p_X(x)\log[p_X(x)],
\end{equation}
where $X\in\mathcal{X}$ denotes the random variable, and $p_X$ denotes its probability mass function. For $p_X(x)=0$, we take the convention of $0\log 0\equiv0$. If the argument is a density operator, i.e., a positive semi-definite operator with a unit trace, the von Neumann entropy is given by
\begin{equation}
  H(\rho)=-\sum_{i\in\mathcal{I}}\lambda_i\log(\lambda_i),
\end{equation}
where $\{\lambda_i\}_{i\in\mathcal{I}}$ are the eigenvalues of $\rho$. Throughout this work, the logarithm function $\log$ is assumed to be base 2 and $\ln$ is the logarithm function with base $e$. For a binary random variable, the Shannon entropy becomes the binary entropy, which is defined as
\begin{equation}
h(x)=-x\log x-(1-x)\log(1-x),
\end{equation}
with $x\in[0,1]$.

\section{Protocol and Security Statements}\label{Supp:SecDef}
In establishing the security analysis, we consider the device-independent quantum key distribution (DIQKD) protocol given in Box~\ref{box:ProtocolDetail}. The protocol was originally proposed in Ref.~\cite{acin2006efficient}. For simplicity, we denote this protocol as the `standard protocol'. Later in Sec.~\ref{Supp:AdvKeyDistill}, we shall consider several modified protocols with advantage key distillation based on the standard protocol.

\begin{mybox}[label={box:ProtocolDetail}]{{DIQKD Protocol (Standard Protocol)}}
\textbf{Arguments: }\\
$n$: the number of rounds of quantum measurements\\
$\{p_X(x)\}_x,\,\{p_Y(y)\}_y$: probability distributions for the basis choices

\tcblower
\begin{enumerate}[leftmargin=*]
\item\emph{Measurement: }For every round $i\in[n]$, Alice and Bob
    \begin{enumerate}[leftmargin=*]
    \item randomly set measurement bases $x_i\in \{0,1,2\},y_i\in \{0,1\}$ with probability distributions $p_X,p_Y$;
    \item record the outputs of the devices $a_i \in \{\pm1\},b_i \in \{\pm1\}$.
    \end{enumerate}
\item \emph{Parameter Acquisition: }Alice and Bob obtain the following parameters in the experiment
    \begin{enumerate}[leftmargin=*]
    \item the number of rounds $m_{xy}$ with inputs $(x_i,y_i) = (x,y)\in\{0,1\}\times\{0,1\}$ and $m$ with $(x_i,y_i) = (2,0)$;
    \item for $(x_i,y_i)=(2,0)$, the number of positions $q$ where $a_i\neq b_i$;
    \item for $(x_i,y_i) = (x,y)\in\{0,1\}\times\{0,1\}$, the number of positions $q_{xy}$ where $(-1)^{x\cdot y}a_i\cdot b_i=-1$.
    \end{enumerate}
\item \emph{Raw Key Acquisition: }Alice and Bob obtain the bit strings $\kappa^A,\kappa^B$ from $a_i,b_i$ with $(x_i,y_i)=(2,0)$.
\item \emph{Privacy Estimation: }Alice and Bob estimate the key privacy with statistics acquired in Step 2, (c).
\item \emph{Key Distillation: }Alice and Bob
    \begin{enumerate}[leftmargin=*]
    \item reconcile the bit strings to $\kappa^B$ through an encrypted classical channel except for failure probability $\varepsilon_{ec}$;
    \item  apply a privacy amplification to $\kappa^B$ and derive the final key except for failure probability $\varepsilon_{pc}$.
    \end{enumerate}
\end{enumerate}
\end{mybox}

Referring to the notations in Theorem~\ref{Thm:Main} of the main text, the bit error rate and the observed average Bell value are given as
\begin{equation}\label{Eq:biterrorrate}
\begin{aligned}
e_b&=\frac{f(a\neq b,x=2,y=0)}{p_{X}(2)p_{Y}(0)},
\end{aligned}
\end{equation}
\begin{equation}\label{eq:Sexpobs}
  \bar{S} = \sum_{x,y\in\{0,1\}}\frac{m_{xy}-2q_{xy}}{np_X(x)p_Y(y)}.
\end{equation}
where $f(\cdot)$ is the frequency of the realisation.

We clarify the notions \emph{round} and \emph{session} in this work. When referring to a round, we mean one complete quantum measurement in Step 1. A round starts when Alice and Bob set the measurement bases and ends when their devices generate outputs. A complete run of the protocol is referred to as a session. A session ends either Alice and Bob abort in advance, or they obtain a final key.

In the security analysis, we consider the most general sequential condition. That is, in Step 1 of the protocol, Alice and Bob shall not start the $(i+1)$th round until they have finished the $i$th round. The measurement results might be dependent on the previous events. In particular, the probabilities of the outputs might not be independent and identically distributed (i.i.d.). The inner working of the quantum devices is not characterised in prior, except for the following assumptions/requirements:
\begin{enumerate}
\item
\emph{Quantumness: }Quantum mechanics is correct.
\item
\emph{Non-Signalling: }No information is communicated between Alice and Bob's devices in a round.
\item
\emph{Trusted Inputs: }The users have trusted local private randomness for the bases choices.
\item
\emph{Reliable Classical Module: }The users have reliable key management and authenticated channels for classical communication.
\end{enumerate}


We impose the assumption of reliable key management to prevent the devices from leaking the key information to an adversary. To ensure this, we assume that except for the public classical information exchange necessary for the classical phase, the devices of Alice and Bob signal no other information to the outside. In practice, though, this assumption would technically guarantee the second assumption, as the inputs and outputs on the one side can be kept away from the other side in the same way they are kept secret from the outside. Still, we can relax the constraint on the communication between Alice and Bob's devices in the security analysis, where the devices are allowed to communicate with each other between two adjacent rounds.

In this work, we consider that the devices are only used for one complete session. If the devices are reused for multiple sessions of DIQKD, the \emph{memory loophole} might exist. That is, the untrusted devices may store the key information in one DIQKD session and leak it via the necessary public communication required by the protocol when they are reused in new sessions~\cite{barrett2013memory}. There exist practical defences against the memory attacks~\cite{barrett2013memory,curty2019foiling}. In this work, we do not discuss this point. It should be noted that in a single session, the memory attack does not take place. In particular, the quantum measurements can be dependent on all possible information generated in the previous rounds on both sides. Our security analysis applies to this general condition.

Suppose after running the protocol, Alice and Bob obtain their $k$-bit key strings, $\tilde{\kappa}^A$ and $\kappa^B$, respectively. We use $\tilde{\kappa}^A$ to distinguish Alice's final key from her raw key $\kappa^A$. The final key state is described by a classical-classical-quantum state,
\begin{equation}
\rho_{ABE}=\sum_{\tilde{\kappa}^A,\kappa^B}\Pr(\tilde{\kappa}^A,\kappa^B)\ketbra{\tilde{\kappa}^A}\otimes\ketbra{\kappa^B}\otimes\rho_E(\tilde{\kappa}^A,\kappa^B),
\end{equation}
where system $E$ represents the quantum side information. From an adversarial perspective, the key bits of the legitimate users are characterised by random variables of the systems $A,B$. Ideally, the final key state should be
\begin{equation}
\rho_{ABE}^{ideal}=2^{-k}\sum_{\kappa}\ketbra{\kappa}_A\otimes\ketbra{\kappa}_B\otimes\rho_E,
\end{equation}
where Alice and Bob's keys are perfectly identical and appear fully random to the outside. The protocol security is defined as the difference between the actual key state and the ideal key state.

\begin{definition}[Security~\cite{ben2005universal,renner2005Security}]
A QKD protocol is called $\varepsilon_{QKD}^s$-secret, if for a successful implementation of the protocol,
    \begin{equation}
    \frac{1}{2}(1-p_{abort})\|\rho_{ABE}-\rho_{ABE}^{ideal}\|_1\leq\varepsilon_{QKD}^s,
    \end{equation}
    where $p_{abort}$ is the probability that the protocol aborts, and $\|A\|_1=\Tr(\sqrt{A^\dag A})$ is the trace norm.
\label{Def:SecurityDef}
\end{definition}

Here, the parameter $\varepsilon_{QKD}^s$ is also called the soundness error. The parameter $p_{abort}$ is linked with another security parameter, namely, the completeness error. We remark that soundness and completeness are different notions. Roughly speaking, it is the soundness error that corresponds to the privacy of generated keys. On the other hand, completeness guarantees that the legitimate users can obtain keys (possibly not private) with a high probability. In the context of QKD, the completeness issue can often be neglected. We shall come to this point later in Sec.~\ref{Supp:EAT}.

Generally, after a complete session, the net increase of secure key bits shall be given by
\begin{equation}
k = m - I_{ir} - I_{pa},
\end{equation}
where $I_{ir}$ and $I_{pa}$ are the costs of pre-shared private randomness, or secret keys, for information reconciliation and privacy amplification, respectively. Security analysis aims at the determination of their values. The two terms can be correlated with each other. We analyse the DIQKD security under the framework of complementarity-based quantum error correction protocol~\cite{koashi2009simple}. Under this framework, we aim at establishing a virtual protocol as follows:

\begin{mybox}[label={box:KoashiVirtual}]{{Virtual Protocol in Complementarity Approach~\cite{koashi2009simple}}}
Alice and Bob run the following procedures to achieve a final secure key.
\begin{enumerate}[leftmargin=*]
\item
Initialise the system in joint state $\rho^{AB}$;
		
\item
Apply operation $\hat{\Pi}$ on $\rho^{AB}$ such that $\hat{\Pi}(\rho^{AB})\in\mathcal{D}(\mathcal{K}^{\otimes m}\otimes\mathcal{H}_R)$, where $\dim{\mathcal{K}}=2$;
		
\item
Measure system $\mathcal{K}^{\otimes m}$ on the computational basis, denoted as $Z$-basis, and obtain $\kappa_{rec}$;
		
\item
Measure system $\mathcal{H}_R$ with $\hat{M}_R$ and obtain result $\gamma$, which determines $I_{pa}$;

\item
Apply hashing on $\kappa_{rec}$ and obtain the final key $\kappa_{fin}$ with length of $(m-I_{pa})$ bits.
\end{enumerate}
\end{mybox}

If we can establish such a virtual protocol that meets certain security requirements, then there exists a data post-processing procedure for the raw data in the real experiment that produces a secret key with a negligible failure probability. This result is given in Ref.~\cite{koashi2009simple}, which we summarise as follows~\footnote{There is an error in the security parameter in the original work of Ref.~\cite{koashi2009simple} due to the definition of fidelity measures. This is later corrected in Ref.~\cite{fung2010practical}}.

\begin{lemma}[Security statement~\cite{koashi2009simple,fung2010practical}]
With respect to the actual QKD protocol in Box~\ref{box:ProtocolDetail}, if one can set up a corresponding virtual protocol in Box~\ref{box:KoashiVirtual}, where the chosen operation $\hat{\Pi}$ and measurement $\hat{M}_R$ in the virtual protocol meet the following requirements:
\begin{enumerate}[label={(}\arabic*{)}]
\item
the $Z$-basis measurement statistics $\kappa_{rec}$ in the virtual protocol is the same as reconciled key $\kappa^B$ in the actual protocol;

\item
given each outcome $\gamma$ of measurement $\hat{M}_R$, had a measurement on the complementary basis of the $Z$ basis been taken on $\mathcal{K}^{\otimes m}$, the number of measurement outcomes can be upper bounded by $2^{T}$ except for failure probability $\varepsilon_{pe}$,
\end{enumerate}
then the actual protocol is $\sqrt{\varepsilon(2-\varepsilon)}$-secret and $\varepsilon_{ec}$-correct, where $\varepsilon=\varepsilon_{pe}+\varepsilon_{pc}$ and $I_{pa}=T-\log\varepsilon_{pc}$.
\label{Lemma:SecDef}
\end{lemma}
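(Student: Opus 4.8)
The plan is to follow Koashi's complementarity framework and reduce the secrecy of the \emph{actual} protocol to a phase-error-correction statement about the \emph{virtual} protocol, and then convert the resulting fidelity estimate into the trace-distance form demanded by Definition~\ref{Def:SecurityDef}. First I would establish the equivalence of the two protocols at the level of the final key distribution. By requirement~(1), the $Z$-basis measurement of $\mathcal{K}^{\otimes m}$ in Box~\ref{box:KoashiVirtual} reproduces the reconciled string $\kappa^B$ of Box~\ref{box:ProtocolDetail}, and in both pictures the final key is obtained by applying the \emph{same} classical hash to this string. Since a $Z$-basis measurement followed by a fixed hash is a single quantum-to-classical channel, the joint state of the final key and Eve's side information is identical in the two descriptions; hence it suffices to bound the secrecy of the virtual protocol. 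This step also delivers $\varepsilon_{ec}$-correctness for free, because the reconciliation step of the actual protocol maps $\kappa^A$ to $\kappa^B$ with failure at most $\varepsilon_{ec}$, and both final keys are deterministic functions of the reconciled string.

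The core of the argument is then complementarity. In place of the $Z$-basis measurement, I would consider the counterfactual measurement of $\mathcal{K}^{\otimes m}$ in the complementary ($X$) basis. By requirement~(2), conditioned on each outcome $\gamma$ of $\hat{M}_R$, this complementary measurement yields at most $2^{T}$ distinct patterns, except with probability $\varepsilon_{pe}$. The key point, due to Koashi, is that the ability to confine the complementary observable to $2^{T}$ values is \emph{dual} to the secrecy of the $Z$-basis key: a hash from $m$ bits to $m-I_{pa}$ bits that corrects these $2^{T}$ phase-error patterns, equipped with an additional $-\log\varepsilon_{pc}$ bits of margin so that the correction succeeds except with probability $\varepsilon_{pc}$, forces the hashed $Z$-basis key to be uniform and decoupled from $E$. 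Concretely I would exhibit such a hash and show that, under it, the fidelity between the actual final key state and the ideal state $\rho_{ABE}^{ideal}$ is at least $1-\varepsilon$ with $\varepsilon=\varepsilon_{pe}+\varepsilon_{pc}$, the two terms accounting respectively for the phase-estimation failure and the hashing failure. This fixes the announced budget $I_{pa}=T-\log\varepsilon_{pc}$.

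Finally I would translate the fidelity bound into the trace distance of Definition~\ref{Def:SecurityDef}. Applying the Fuchs--van de Graaf inequality to the conditional-on-success state, $F\ge 1-\varepsilon$ gives $\tfrac{1}{2}(1-p_{abort})\|\rho_{ABE}-\rho_{ABE}^{ideal}\|_1 \le \sqrt{1-(1-\varepsilon)^2}=\sqrt{\varepsilon(2-\varepsilon)}$, which is precisely the claimed $\sqrt{\varepsilon(2-\varepsilon)}$-secrecy. The hard part is the second step: making the complementarity duality quantitative. One must construct the phase-error-correcting hash and prove that bounding the $X$-basis support by $2^{T}$ genuinely decouples the $Z$-basis key from Eve --- a quantum coding argument in which the $-\log\varepsilon_{pc}$ term emerges from a leftover-hashing/smoothing estimate. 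This is also exactly where the fidelity-convention subtlety of the original proof, corrected in Ref.~\cite{fung2010practical}, must be tracked carefully so that the constants in the final secrecy parameter come out correct.
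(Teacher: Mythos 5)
Your proposal follows essentially the same route as the paper, which itself defers the detailed argument to Refs.~\cite{koashi2009simple,fung2010practical}: requirement~(1) identifies the virtual $Z$-basis statistics with the reconciled key, requirement~(2) guarantees a phase-error-correcting hash of cost $T-\log\varepsilon_{pc}$ that distills a state with fidelity at least $1-\varepsilon$ to $\ket{0}_X^{\otimes m}$ (Eq.~\eqref{Eq:FidelitySec}), and the Fuchs--van de Graaf conversion yields the $\sqrt{\varepsilon(2-\varepsilon)}$ trace-distance bound of Definition~\ref{Def:SecurityDef}. You also correctly flag the fidelity-convention correction of Ref.~\cite{fung2010practical} as the place where the constant must be tracked, which is exactly the caveat the paper records in its footnote.
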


In this security statement, the failure probability has the operational meaning of state fidelity. In Step~3 of the virtual protocol, suppose before the $Z$-basis measurement, the joint state becomes $\ket{0}_X^{\otimes m}$, an eigenstate of an observable complementary to the $Z$-basis measurement. Then, the $Z$-basis measurement result shall be intrinsically random and hence private from the outside. In establishing the security statement~\cite{koashi2009simple} for a general case, the privacy amplification is recast into a quantum error correction protocol that distills such a state with local operations and classical communication (LOCC). Conditioned on $\gamma$, the complementary measurement gives one of the eigenstates on $X$-basis. If the number of possible outcome patterns can be upper-bounded as stated by the second requirement in Lemma~\ref{Lemma:SecDef}, then there exists an error correction protocol for Alice and Bob to correct the state to $\ket{0}_X^{\otimes m}$. For the purpose of error correction, a general linear error correction code suffices~\cite{lo2003method,huang2021stream}. In practice, privacy amplification can be efficiently carried out by using the family of two-universal hashing functions~\cite{fung2010practical}. Suppose in the virtual quantum error correction protocol, the output state is $\rho_{\mathcal{K}}$. It is proved that the state fidelity of $\rho_{\mathcal{K}}$ to $\ket{0}_X^{\otimes m}$ is linked to the failure probability of the post-processing procedures,
\begin{equation}\label{Eq:FidelitySec}
  \Tr(\rho_{\mathcal{K}}\ketbra{0}_X^{\otimes m})\geq 1-\varepsilon.
\end{equation}
In Ref.~\cite{fung2010practical}, it is proved that the security parameter given by the fidelity measure can be linked to the trace-distance-based security definition. Measuring the state $\rho_{\mathcal{K}}$ with key generation measurement, the key is $\sqrt{\varepsilon(2-\varepsilon)}$-sound by Definition~\ref{Def:SecurityDef}. In the virtual quantum error correction protocol, the users need to communicate with each other the information of quantum error syndromes. They consume private randomness for this step, and the amount of private randomness corresponds to the privacy amplification cost, $I_{pa}$.

To simplify the statements in the complementarity-based security analysis, we introduce the notion of phase errors. We describe the measurement of complementary observable given $\gamma$ as the phase-error measurement and denote a specific outcome as a phase-error pattern. With respect to the ideal case $\ket{0}_X^{\otimes m}$, if the phase-error measurement result on the $i$th bit differs from $0$, then we say that a phase error occurs on this bit. Given an underlying system, there exists a smallest set of phase-error patterns that will occur except for a negligible failure probability. Denote the number of these patterns as the phase-error cardinality under the given failure probability. The second requirement in Lemma~\ref{Lemma:SecDef} represents an upper bound on the phase-error cardinality.

Another issue in this security statement is that information reconciliation and privacy amplification can be decoupled. This can be done by encrypting the error syndromes in classical communication of information reconciliation~\cite{lo2003method}. Information reconciliation, essentially the problem of key agreement, can be well solved efficiently via classical means~\cite{brassard1993secret}. At the moment, we assume that Alice and Bob apply one-way information reconciliation, and the outputs of $a_i,b_i$ in the key generation rounds are taken as the raw key bit strings $\kappa^A,\kappa^B$. Bob hashes his bit string and sends the hashing results to Alice through an encrypted classical channel. In this step, $I_{ir}$ bits of pre-shared keys are consumed for encryption. Based on the received message, Alice can reconcile her bit string $\kappa^A$ to $\kappa^B$. We denote the reconciled key length as
\begin{equation}
I_{AB} = m - I_{ir},
\end{equation}
which is the net increase of identical bits after the information reconciliation. In the Shannon limit, the cost of private randomness is
\begin{equation}
\lim_{n\rightarrow\infty} I_{ir} = H(\kappa^B|\kappa^A).
\end{equation}
In principle, the Shannon limit can be reached in the asymptotic limit of infinite data size with proper design of encoding and decoding schemes. In practice, the cost becomes
\begin{equation}\label{Supp:InfoRec}
I_{ir} = f_{ec}H(\kappa^B|\kappa^A),
\end{equation}
where $f_{ec}\geq1$ is an efficiency factor depending on the specific encryption code that is used. Much effort has been made in designing practical error-correcting codes and schemes for encoding and decoding. Except for the discussion in Section~\ref{Sec:IonSim}, in the numerical simulations in this work, for simplicity, we take the Shannon limit where $f_{ec}\rightarrow1$. For a more practical discussion, we would like to refer the readers to the recent state-of-the-art results~\cite{tang2021shannon}. If an error occurs with same probability for the two bit values, we have $H(\kappa^B|\kappa^A) = mh(e_b)$, where $e_b\equiv q/m$ is the bit error rate, which is the average difference between Alice and Bob's raw key bits. We note that this symmetric case can always be met, as Alice and Bob can have a public discussion and flip some of their raw bits. Nevertheless, this would result in the loss of the amount of reconciled key bits.

\section{Outline of Security Analysis}\label{Supp:SecOutline}
The evaluation of the privacy amplification cost, $I_{pa}$, is the essence of security analysis. To apply the complementarity framework to DIQKD, we shall
\begin{enumerate}
\item define phase-error measurement in the device-independent scenario;
\item estimate the phase-error cardinality via the observed statistics.
\end{enumerate}

The solutions to the two problems provide the ingredients for the virtual protocol in Box~\ref{box:KoashiVirtual}. The solution to the first problem gives the construction of $\hat{\Pi}$ and $\hat{M}_R$. To tackle the problem, we take advantage of Jordan's lemma and effectively reduce the quantum systems in each round to qubits~\cite{pironio2009device}. Then, we can apply the mature techniques in regular QKD to define phase-error measurement.

To solve the second problem, we adopt the idea of breaking the general non-i.i.d.~correlation into the analysis of single rounds. We divide the task into the following steps,
\begin{enumerate}[label=(\alph*)]
\item relating the phase-error probability with the expected Bell value in each round;
\item finding a measure of the phase-error cardinality that relates to the phase-error probability;
\item extending the single-round result to multiple rounds and deriving the estimation of the phase-error cardinality.
\end{enumerate}

For the first step, we derive an analytical relation between the expected values. For the second step, one might naively use the phase-error rate, a well-studied and commonly-used privacy measure in regular QKD. Unfortunately, as we shall explain later, this is not a good measure in DIQKD. Instead, we generalise the concept of sample entropy from classical Shannon theory: we prove that this provides a good measure of the phase-error cardinality. For the third step, we show that the sequential setting allows for a martingale-based analysis. In this step, the random basis setting in DIQKD plays a vital role. By combining the results, we obtain an analytical upper bound on the cost of privacy amplification in terms of the observed average Bell value.
In Figure \ref{Fig:flowchart} we demonstrate the key technical steps towards the key privacy estimation.


\begin{figure*}[tbh]
\centering
\resizebox{12cm}{!}{\includegraphics{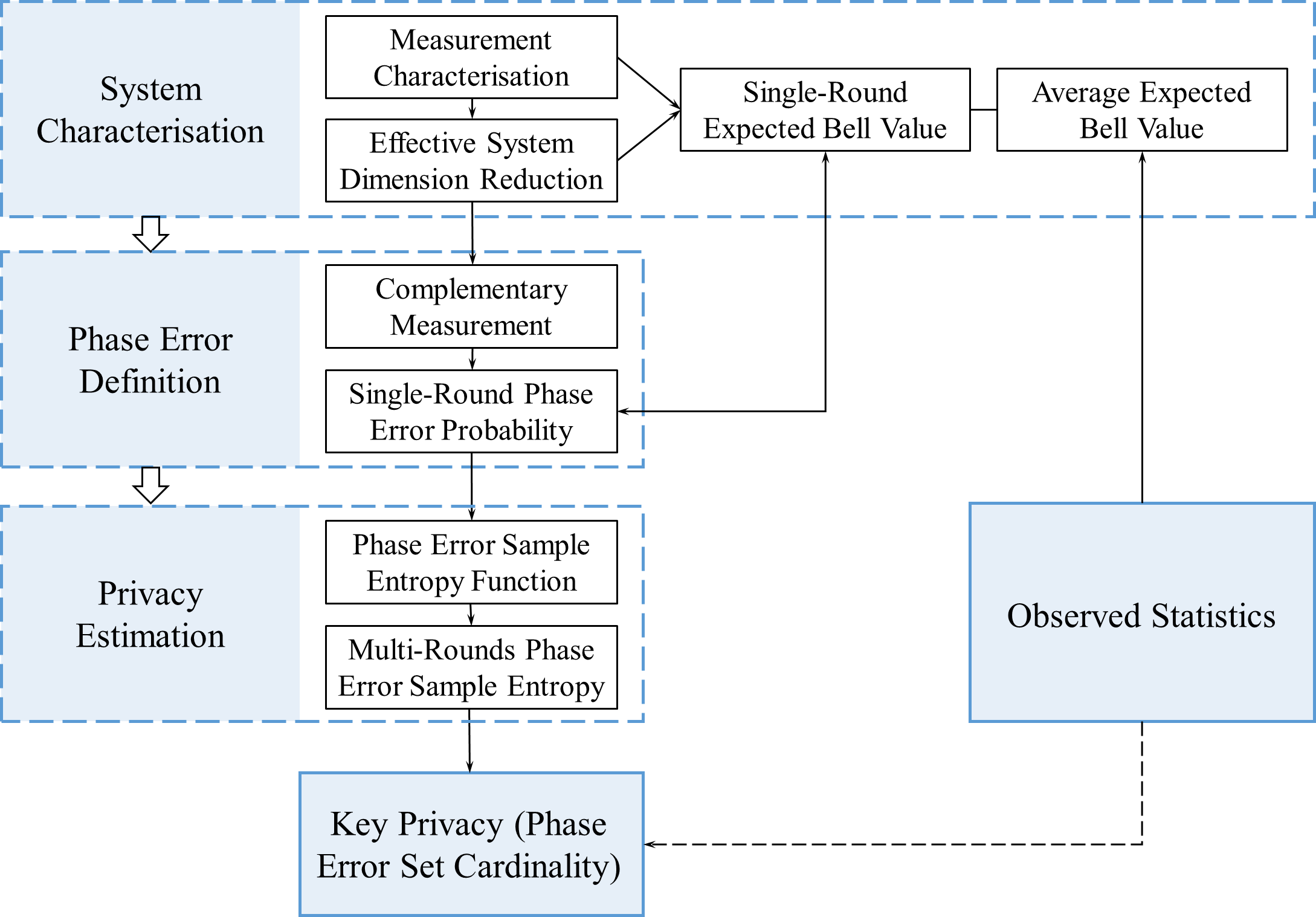}}
\caption{The flowchart of the security analysis steps in privacy estimation.
(1) System characterisation: In the first step, the quantum system needs to be defined. Due to the simplicity of our protocol, we can analyse the problem with an effective qubit system for each single round. We shall also model the possible correlations over rounds and define the underlying physical quantities, especially the expected Bell value.
(2) Phase error definition: Based on the system characterisation results, the measurement complementary to the key generation measurement is defined. This allows us to introduce a virtual protocol to estimate the key privacy. In the virtual protocol, the phase-error operation is defined, and the probability a phase error occurs is linked with the expected Bell value.
(3) Privacy estimation: We estimate the key privacy by upper bounding the number of possible phase-error patterns over multiple rounds, or, the cardinality of the phase error set. We construct the statistics of phase-error sample entropy for this task. As the random variables might not be independent and identically distributed, Azuma's inequality is used to link the expected values and probabilities with frequencies.
}
\label{Fig:flowchart}
\end{figure*}

\section{System Modelling}\label{Supp:SysModel}
In this section, we model the quantum system and measurement in the sequential DIQKD protocol. Also, we briefly review a few standard techniques developed in the field of quantum self-testing that are used in our security analysis.

\subsection{General sequential setting}\label{Supp:SequentialSetting}
In DIQKD, the legitimate users do not have access to the internal working of the uncharacterised devices but only the input-output statistics accumulated from the outside. Under the framework of quantum mechanics, the underlying physical processes can be described by quantum measurements. We analyse the security under the most general sequential measurement setting. Without loss of generality, DIQKD can be described by a sequential quantum process in Figure \ref{Fig:sequential}. In an adversarial picture, the quantum systems and measurement devices are provided by a potentially malicious party, Eve. After the protocol begins, Eve no longer has access to Alice's or Bob's devices. Nevertheless, she could prepare her quantum system correlated with Alice and Bob's joint system in prior. In the worst-case scenario, Eve could hold the purification of the initial state of Alice and Bob's joint system before measurement.

In sequential DIQKD, the measurement process is restricted to a specific time order noted by rounds. The state and measurement of a round cannot depend on the inputs and outputs of future rounds. From the $i$th round to the $(i+1)$th round of DIQKD, Alice and Bob's system evolves as follows.
\begin{enumerate}
\item
At the beginning of the $i$th round, Alice and Bob's joint system is in a bipartite state, $\rho_{i}^{AB}\in\mathcal{D}(\mathcal{H}_{A_i}\otimes\mathcal{H}_{B_i})$. In general, state $\rho_{i}^{AB}$ depends on all information generated in the previous rounds, including both the data accessible to Alice and Bob, $\{x_{j},y_{j},a_{j},b_{j}\}_{j=1}^{i-1}$, and any other possible hidden information of the untrusted devices inaccessible to the users, together denoted by $\lambda_{i}$ with $\lambda_{1}$ denoting the initial hidden variable of the system.

\item
Depending on the basis choices, $x_i,y_i\in\{0,1\}$, Alice and Bob perform general measurements on systems $\mathcal{H}_{A_i}$ and $\mathcal{H}_{B_i}$, independently, and then obtain classical outputs $a_i,b_i\in\{-1,1\}$. Alice and Bob's quantum operators could also depend on $\lambda_{i}$. Since there is no restriction on the system dimensions, the measurement operators can be projections due to the Neumark extension \cite{neumark1940spectral}, denoted by two observables, $\hat{A}_{x_i}$ and $\hat{B}_{y_i}$. Due to the non-signalling condition, they take a tensor-product form, $\hat{A}_{x_i}\otimes\hat{B}_{y_i}$. This is the main assumption in the Bell test, often guaranteed by space-like separation.

\item
A general channel $\hat{T}_i$ is applied to the post-measurement state. The state of the system becomes $\rho_{i+1}^{AB}\in\mathcal{D}(\mathcal{H}_{A_{i+1}}\otimes\mathcal{H}_{B_{i+1}})$ and is passed on to the $(i+1)$th round. Between two adjacent rounds, we do not restrict Alice's and Bob's devices from communicating with each other and sharing past events, but not with Eve. As a result, measurement $\hat{A}_{x_i}\otimes\hat{B}_{y_i}$ and operation $\hat{T}_i$ together transform $\rho_i^{AB}$ to $\rho_{i+1}^{AB}$.
\end{enumerate}

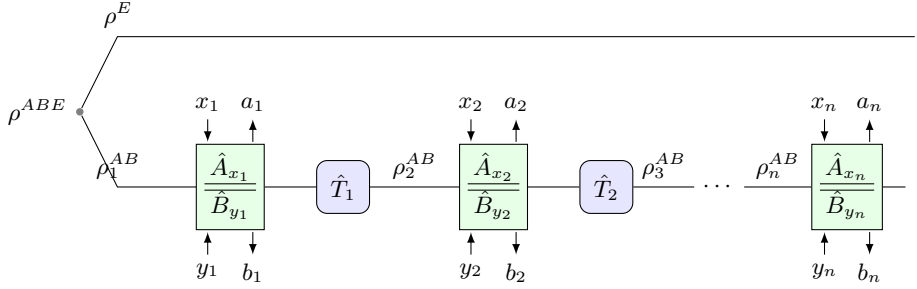
\begin{figure}[hbtp!]
	\begin{tikzpicture}[
		scale=1,
		>=latex
		]
		\node[circle,fill,gray,inner sep=1pt,label=left:$\rho^{ABE}$] (orig) at (0,0) {};
		\draw[-] (orig) -- (.5,1) node[right=.3,above]{$\rho^E$} -- ++(10.6,0);
		\draw[-] (orig) -- (.5,-1) node[right=.3,above]{$\rho^{AB}_1$} -- ++(1.5,0);
		\node[rectangle,draw=black,fill=green!10!white,minimum size=20pt] (op1) at (2,-1) {
			\begin{tabular}{c}
				$\hat{A}_{x_1}$ \\
				\hline
				\hline
				$\hat{B}_{y_1}$ \\
			\end{tabular}
		};
		\draw[->] ($ (op1) + (.3,.6) $) -- ++(0,0.3) node[above]{$a_1$};
		\draw[<-] ($ (op1) + (-.3,.6) $) -- ++(0,0.3) node[above]{$x_1$};
		\draw[->] ($ (op1) + (.3,-.6) $) -- ++(0,-0.3) node[below]{$b_1$};
		\draw[<-] ($ (op1) + (-.3,-.6) $) -- ++(0,-0.3) node[below]{$y_1$};
		\node[rectangle,draw=black,fill=blue!10!white,minimum size=20pt,rounded corners] (T1) at (3.5,-1) {$\hat{T}_1$};
		\draw[-] (op1.east) -- (T1.west) node[midway,above] {};
		\node[rectangle,draw=black,fill=green!10!white,minimum size=20pt] (op2) at (5.5,-1) {
			\begin{tabular}{c}
				$\hat{A}_{x_2}$ \\
				\hline
				\hline
				$\hat{B}_{y_2}$ \\
			\end{tabular}
		};
		\draw[->] ($ (op2) + (.3,.6) $) -- ++(0,0.3) node[above]{$a_2$};
		\draw[<-] ($ (op2) + (-.3,.6) $) -- ++(0,0.3) node[above]{$x_2$};
		\draw[->] ($ (op2) + (.3,-.6) $) -- ++(0,-0.3) node[below]{$b_2$};
		\draw[<-] ($ (op2) + (-.3,-.6) $) -- ++(0,-0.3) node[below]{$y_2$};
		\node[rectangle,draw=black,fill=blue!10!white,minimum size=20pt,rounded corners] (T2) at (7,-1) {$\hat{T}_2$};
		\draw[-] (T1.east) -- (op2.west) node[midway,above] {$\rho_{2}^{AB}$};
		\draw[-] (op2.east) -- (T2.west) node[midway,above] {};
		\node[] (dots) at (8.5,-1) {$\cdots$};
		\draw[-] (T2.east) -- (dots.west) node[midway,above] {$\rho_{3}^{AB}$};
		\node[rectangle,draw=black,fill=green!10!white,minimum size=20pt] (opn) at (10.2,-1) {
			\begin{tabular}{c}
				$\hat{A}_{x_n}$ \\
				\hline
				\hline
				$\hat{B}_{y_n}$ \\
			\end{tabular}
		};
		\draw[->] ($ (opn) + (.3,.6) $) -- ++(0,0.3) node[above]{$a_n$};
		\draw[<-] ($ (opn) + (-.3,.6) $) -- ++(0,0.3) node[above]{$x_n$};
		\draw[->] ($ (opn) + (.3,-.6) $) -- ++(0,-0.3) node[below]{$b_n$};
		\draw[<-] ($ (opn) + (-.3,-.6) $) -- ++(0,-0.3) node[below]{$y_n$};
		\draw[-] (dots.east) -- (opn.west) node[midway,above] {$\rho_{n}^{AB}$};
		\draw[-] (opn.east) -- ++(.3,0);
	\end{tikzpicture}
	\caption{System description of the sequential setting. Initially, the system of Alice and Bob is prepared in $\rho_1^{AB}\in\mathcal{D}(\mathcal{H}_A\otimes\mathcal{H}_B)$, which may be correlated or entangled with Eve's system, $\rho^E$. That is, $\rho_1^{AB}=\Tr_E(\rho^{ABE})$. In the $i$th round, the system of Alice and Bob to be measured is in the state of $\rho_i^{AB}$. Given the bases choices $x_i,y_i$ on each side, the measurement process is described by observables, $\hat{A}_{x_i}$ and $\hat{B}_{y_i}$, which yield the measurement results $a_i,b_i$. During the measurement process, no information can be communicated between Alice and Bob, as represented by the double lines. Between two adjacent rounds, communication is allowed between the measurement devices of Alice and Bob, characterised by a general quantum channel $\hat{T}_i$. Afterwards, the resulted state $\rho_{i+1}^{AB}$ becomes the input of the $(i+1)$th round.} \label{Fig:sequential}
\end{figure}

Now, we can make the criteria raised by the sequential setting mathematically rigorous. First, define the conditional probability,
\begin{equation}
	\begin{split}
\Pr(a_i,b_i|x_i,y_i,\lambda_{i}) &\equiv \Tr[\left(\hat{M}_{x_i}^{a_i}\otimes\hat{M}_{y_i}^{b_i}\right)\rho_i^{AB}],
	\end{split}
\end{equation}
where $\lambda_{i}$ is the hidden variable, $\hat{M}_{x_i}^{a_i}$ and $\hat{M}_{y_i}^{b_i}$ are the projectors of  observables $\hat{A}_{x_i}$ and $\hat{B}_{y_i}$ with specific outcomes, respectively. The probability distributions satisfy the non-signaling condition, $\forall x_i,x_i',y_i,y_i',a_i,b_i,\lambda_{i}$,
\begin{equation}
	\begin{split}
\sum_{b_i\in\{\pm1\}}\Pr(a_i,b_i|x_i,y_i,\lambda_{i}) &= \sum_{b_i\in\{\pm1\}}\Pr(a_i,b_i|x_i,y_i',\lambda_{i}), \\
\sum_{a_i\in\{\pm1\}}\Pr(a_i,b_i|x_i,y_i,\lambda_{i}) &= \sum_{a_i\in\{\pm1\}}\Pr(a_i,b_i|x_i',y_i,\lambda_{i}).
	\end{split}
\end{equation}
The sequential setting imposes further restrictions, such that for all $j>i$ and any given values of $x_i,y_i,a_i,b_i,\lambda_{i}$ at time $i$ and $x_j,y_j,x_j',y_j'$ at time $j$,
\begin{equation}
\begin{split}
\sum_{a_j,b_j}\Pr(a_i,b_i,a_j,b_j|x_i,y_i,x_j,y_j,\lambda_{i}) = \sum_{a_j,b_j}\Pr(a_i,b_i,a_j,b_j|x_i,y_i,x_j',y_j',\lambda_{i}).
\end{split}
\end{equation}

For the convenience of constructing phase-error measurement, we present an equivalent description of the above sequential processing. The $n$ experimental rounds correspond to $2n$ subsystems of Alice and Bob. Initially, these subsystems can be arbitrarily entangled. There is no limit on their dimensions. The measurements, $\hat{A}_{x_i}\otimes\hat{B}_{y_i}$, in the $i$th round are carried out on Alice's and Bob's $i$th subsystems. In the equivalent picture, as shown in Figure \ref{Fig:EffPar}, Alice and Bob's system evolves from the $i$th round to the $(i+1)$th round as follows.
\begin{enumerate}
\item
At the beginning of the $i$th round, the joint state of Alice's and Bob's subsystems is  $\rho_{i}^{AB}\in\mathcal{D}(\mathcal{H}_{A_i}\otimes\mathcal{H}_{B_i})$. Here, we use the same notations as the process of Figure \ref{Fig:sequential}, but we should understand $A_i$ and $B_i$ as subsystems.
	
\item
Alice and Bob perform the measurement, $\hat{A}_{x_i}\otimes\hat{B}_{y_i}$, on $\rho_{i}^{AB}$ and obtain outputs $a_i,b_i\in\{-1,1\}$. This is the same the process of Figure \ref{Fig:sequential}.
	
\item
The post-measurement state is transferred to the $(i+1)$th subsystems of Alice and Bob, which is done locally. We describe this `transfer' operation by a hidden updating quantum operation $\hat{K}_i^A$ acting on the $(i+1)$th subsystem of Alice as shown in Figure \ref{Fig:EffPar}, and similarly on Bob's side. For instance, $\hat{K}_i^A$ can be realised via teleportation from the $i$th subsystem to the $(i+1)$th subsystem \cite{bennett1993teleporting}. After this step, the $i$th subsystems are discarded. In a way, one can think of relabeling the subsystems from $i$ to $i+1$.

\item
A general joint channel $\hat{T}_i$ is applied to the $i+1$ subsystems, after which the state becomes $\rho_{i+1}^{AB}\in\mathcal{D}(\mathcal{H}_{A_{i+1}}\otimes\mathcal{H}_{B_{i+1}})$ and is passed on to the $(i+1)$th round. This is the same the process of Figure \ref{Fig:sequential}.
\end{enumerate}

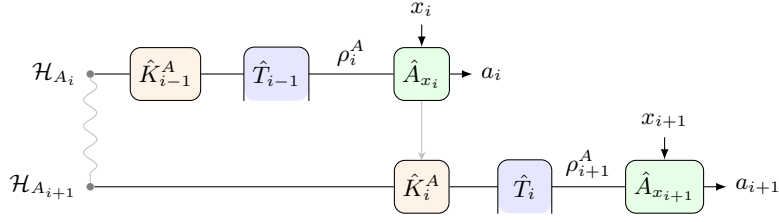
\begin{figure}[hbtp!]
	\begin{tikzpicture}[
		scale=1,
		>=latex
		]
		\node[circle,fill,gray,inner sep=1pt,label=left:$\mathcal{H}_{A_i}$] (orig) at (0,0) {};
		\node[rectangle,draw=black,fill=orange!10!white,minimum size=20pt,rounded corners] (K) at ($ (orig) + (1cm,0) $) {$\hat{K}_{i-1}^A$};
		\node[rectangle,fill=blue!10!white,minimum size=20pt,rounded corners] (TA) at ($ (K.east) + (1cm,0) $) {$\hat{T}_{i-1}$};
		\draw (TA.south west) {[rounded corners] -- (TA.north west)  -- (TA.north east) -- (TA.south east)};
		\node[rectangle,draw=black,fill=green!10!white,minimum size=20pt,rounded corners] (Ax) at ($ (TA.east) + (1.5cm,0) $) {$\hat{A}_{x_i}$};		
		\draw[<-] (Ax.north) --++ (0,0.3) node[above]{$x_i$};
		\draw[->] (Ax.east) --++ (.3,0) node[right] {$a_i$};
		\draw (orig) -- (K.west) (K.east) -- (TA.west) (TA.east) -- (Ax.west) node[midway,above] {$\rho_i^A$};
		\node[circle,fill,gray,inner sep=1pt,label=left:$\mathcal{H}_{A_{i+1}}$] (orig1) at (0,-1.5) {};
		\node[rectangle,draw=black,fill=orange!10!white,minimum size=20pt,rounded corners] (K1) at ($ (Ax) + (0,-1.5cm) $) {$\hat{K}_{i}^A$};
		\node[rectangle,fill=blue!10!white,minimum size=20pt,rounded corners] (TA1) at ($ (K1.east) + (1cm,0) $) {$\hat{T}_{i}$};
		\draw (TA1.south west) {[rounded corners] -- (TA1.north west)  -- (TA1.north east) -- (TA1.south east)};
		\node[rectangle,draw=black,fill=green!10!white,minimum size=20pt,rounded corners] (Ax1) at ($ (TA1.east) + (1.5cm,0) $) {$\hat{A}_{x_{i+1}}$};
		\draw[<-] (Ax1.north) --++ (0,0.3) node[above]{$x_{i+1}$};
		\draw[->] (Ax1.east) --++ (.3,0) node[right] {$a_{i+1}$};
		\draw (orig1) -- (K1.west) (K1.east) -- (TA1.west) (TA1.east) -- (Ax1.west) node[midway,above] {$\rho_{i+1}^A$};
		\draw[snake=snake,gray,opacity=.5] (orig) -- (orig1);
		\draw[-stealth,gray,opacity=.5] (Ax) -- (K1.north);
	\end{tikzpicture}
\caption{The $i$th and $(i+1)$th subsystems on Alice's side in a sequential manner. Bob's side is similar. For the $i$th subsystem of Alice, input state $\rho_i^{A}$ is measured by observable $\hat{A}_{x_i}$. Alice obtains classical output $a_i$. A hidden updating quantum operation $\hat{K}_i^A$ transfers the post-measurement state from subsystem $i$ to subsystem $(i+1)$. Afterwards, Alice and Bob's devices can communicate with each other to perform a joint operation $\hat{T}_i$, only half of which is shown in the figure. As a result, the state of the $(i+1)$th subsystem becomes $\rho_{i+1}^A$. We emphasise that before measurement, correlation or entanglement may already exist over the subsystems.} \label{Fig:EffPar}
\end{figure}

We remark that we do not impose any restriction on the initial joint state of Alice and Bob and operations $\hat{K}_i^A,\hat{T}_i$ in this description. The key point is that for any possible sequential system evolution in Figure~\ref{Fig:sequential}, one can find an equivalent description in the manner of Figure~\ref{Fig:EffPar}.

\subsection{Effective system dimension reduction}\label{Supp:EffDimReduction}
In the estimation of the raw key's privacy, we shall use the measurements $\hat{A}_0,\hat{A}_1$ and $\hat{B}_0,\hat{B}_1$ in each round. We can take advantage of Jordan's Lemma to effectively reduce the system dimension \cite{tsirelson1993some}. We restate the proof given in Ref.~\cite{pironio2009device}.

\begin{lemma}[Jordan's Lemma]\label{Lemma:DimRed}
For two Hermitian operators $\hat{A}_0,\hat{A}_1$ with eigenvalues $\pm1$ that act on the same Hilbert space $\mathcal{H}$ with an at most countable dimension, there exists a direct sum decomposition $\mathcal{H}=\bigoplus_\alpha\mathcal{H}^\alpha$ such that $\hat{A}_0,\hat{A}_1$ are simultaneously block diagonal, $\hat{A}_0 = \bigoplus_\alpha \hat{A}_0^\alpha,\,\hat{A}_1 = \bigoplus_\alpha \hat{A}_1^\alpha,\,\hat{A}_0^\alpha,\hat{A}_1^\alpha\in\mathcal{L}(\mathcal{H}^\alpha)$, and $\forall\alpha,\dim\mathcal{H}^\alpha\leq 2$.
\end{lemma}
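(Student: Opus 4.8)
The plan is to exploit the fact that eigenvalue-$\pm1$ Hermitian operators are involutions and to reduce the whole problem to the spectral analysis of a single unitary. Since $\hat{A}_0^2=\hat{A}_1^2=\hat{I}$, each $\hat{A}_j$ is its own inverse, so the product $\hat{U}\equiv\hat{A}_0\hat{A}_1$ obeys $\hat{U}^\dagger=\hat{A}_1^\dagger\hat{A}_0^\dagger=\hat{A}_1\hat{A}_0=\hat{U}^{-1}$ and is therefore unitary. The decisive structural fact is that $\hat{A}_0$ intertwines $\hat{U}$ with its inverse: $\hat{A}_0\hat{U}\hat{A}_0=\hat{A}_0\hat{A}_0\hat{A}_1\hat{A}_0=\hat{A}_1\hat{A}_0=\hat{U}^{-1}$, and identically $\hat{A}_1\hat{U}\hat{A}_1=\hat{U}^{-1}$. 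Thus both $\hat{A}_0$ and $\hat{A}_1$ act as ``reflections'' that flip the spectrum of $\hat{U}$, while $\hat{A}_1=\hat{A}_0\hat{U}$ recovers the second observable from $\hat{A}_0$ and $\hat{U}$.

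First I would invoke the spectral theorem for the unitary $\hat{U}$, writing its spectrum on the unit circle and grouping eigenvalues into the conjugate pairs $\{e^{i\phi},e^{-i\phi}\}$. The intertwining relation $\hat{U}\hat{A}_0=\hat{A}_0\hat{U}^{-1}$ forces $\hat{A}_0$ to map the $e^{i\phi}$-eigenspace onto the $e^{-i\phi}$-eigenspace. Hence for any eigenvector $v$ of $\hat{U}$ with eigenvalue $e^{i\phi}$, $\phi\notin\{0,\pi\}$, the two-dimensional span $\mathcal{H}^\alpha=\mathrm{span}\{v,\hat{A}_0 v\}$ is closed under $\hat{A}_0$ (because $\hat{A}_0^2 v=v$) and under $\hat{A}_1=\hat{A}_0\hat{U}$ (since $\hat{A}_1 v=e^{i\phi}\hat{A}_0 v$ and $\hat{A}_1\hat{A}_0 v=e^{-i\phi}v$); these supply the desired blocks with $\dim\mathcal{H}^\alpha=2$. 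For the fixed points $\phi\in\{0,\pi\}$, on the corresponding eigenspace $\hat{U}=\pm\hat{I}$, i.e.\ $\hat{A}_0\hat{A}_1=\pm\hat{I}$, so $\hat{A}_0$ and $\hat{A}_1$ commute there and can be simultaneously diagonalised into one-dimensional blocks. Collecting all blocks and orthonormalising within each conjugate pair yields the orthogonal direct sum $\mathcal{H}=\bigoplus_\alpha\mathcal{H}^\alpha$ with each $\hat{A}_j$ block-diagonal and every block of dimension at most two.

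I expect the main obstacle to be rigour in the countably-infinite-dimensional case, where $\hat{U}$ need not possess a complete set of eigenvectors and may carry continuous spectrum, so the naive ``pick an eigenvector and its partner'' step is not literally available. There I would replace the pairing by the projection-valued spectral measure $\hat{U}=\int_{-\pi}^{\pi}e^{i\phi}\,dE(\phi)$, upgrade the intertwining relation to $\hat{A}_0\,E(\Delta)\,\hat{A}_0=E(-\Delta)$ for every measurable arc $\Delta$, and build the two-dimensional blocks fibrewise over the half-circle $\phi\in(0,\pi)$ via a direct-integral decomposition, handling the atoms at $\phi\in\{0,\pi\}$ separately as above. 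For the use made of this lemma in the present work it suffices to note that the eigenvector argument is already exact in finite dimension, and that the single-round qubit reduction only ever applies it on spaces where $\hat{U}$ is diagonalisable; accordingly I would present the eigenvector pairing as the main line of proof and relegate the measure-theoretic upgrade to a closing remark.
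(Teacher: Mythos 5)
Your proposal is correct and follows essentially the same route as the paper's proof: both diagonalise the product unitary ($\hat{A}_1\hat{A}_0$ in the paper, $\hat{U}=\hat{A}_0\hat{A}_1$ in yours), use the involution property to pair each eigenvector with a partner carrying the conjugate eigenvalue, and verify that the resulting at-most-two-dimensional span is invariant under both observables. If anything, you are more careful than the paper: its proof simply asserts that the eigenvectors of the unitary span the whole Hilbert space, which is only automatic in finite dimension, whereas you explicitly flag the continuous-spectrum obstruction in the countably-infinite case and sketch the spectral-measure/direct-integral repair.
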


\begin{proof}
Since the binary observables $\hat{A}_0,\hat{A}_1\in\mathcal{L}(\mathcal{H})$ have eigenvalues $\pm1$, they are unitary operators and do not have zero eigenspaces. The operator $\hat{A}_1\hat{A}_0$ is also unitary, and hence there exists a vector $\ket{\alpha}$, such that
\begin{equation}
  \hat{A}_1\hat{A}_0\ket{\alpha} = a_{\alpha}\ket{\alpha},
\end{equation}
with $a_{\alpha}\in\mathbb{C}$ and $|a_{\alpha}| = 1$. Then we have
\begin{equation}
  \hat{A}_1\hat{A}_0(\hat{A}_1\ket{\alpha}) = a_{\alpha}^*(\hat{A}_1\ket{\alpha}),
\end{equation}
indicating that $\ket{\alpha'}\equiv\hat{A}_1\ket{\alpha}$ is also an eigenvector of $\hat{A}_1\hat{A}_0$. Consider a subspace $\mathcal{H}^\alpha=\text{span}(\ket{\alpha},\ket{\alpha'})\subseteq \mathcal{H}$, if $\abs{\braket{\alpha}{\alpha'}}=1$, then $\dim{\mathcal{H}^\alpha}=1$; otherwise, $\dim{\mathcal{H}^\alpha}=2$. 
Since $\hat{A}_1\hat{A}_0$ is unitary, its eigenvectors span the whole Hilbert space. Then, we can iterate above steps and finally decompose the Hilbert space, $\mathcal{H}=\oplus_{\alpha}\mathcal{H}^\alpha$, where $\alpha$ denotes the index of subspace.


Next, we verify that $\mathcal{H}^\alpha$ are invariant subspaces of the operators $\hat{A}_0,\hat{A}_1$,
\begin{equation}
	\begin{split}
\hat{A}_1\ket{\alpha}&=\ket{\alpha'}, \\
\hat{A}_1\ket{\alpha'}&=\ket{\alpha}, \\
\hat{A}_0\ket{\alpha'}&=\hat{A}_0\hat{A}_1\ket{\alpha}=(\hat{A}_1\hat{A}_0)^{\dagger}\ket{\alpha}=a_{\alpha}^*\ket{\alpha}, \\
\hat{A}_0\ket{\alpha}&=\hat{A}_0\hat{A}_1\hat{A}_1\ket{\alpha}=(\hat{A}_1\hat{A}_0)^{\dagger}\ket{\alpha'}=a_{\alpha}\ket{\alpha'}. \\
	\end{split}
\end{equation}
Therefore, the lemma is proved.
\end{proof}

As a consequence of Jordan's lemma, on Alice's side, there exists a direct-sum decomposition of the space $\mathcal{H}_A=\bigoplus_\alpha\mathcal{H}_A^\alpha$ and $\forall\alpha, \dim{\mathcal{H}_A^\alpha}\le2$, such that the  quantum observables $\hat{A}_0,\hat{A}_1$ can be expressed as
\begin{equation}
  \hat{A}_i=\bigoplus_\alpha \hat{A}_i^\alpha=\sum_\alpha\hat{\Pi}_\alpha \hat{A}_i\hat{\Pi}_\alpha,
\end{equation}
where $i\in\{0,1\}$ and $\hat{\Pi}_\alpha$ is the projector onto the subspace $\mathcal{H}_A^\alpha$. The expectation of measuring the observable on the system $\rho^A$ becomes
\begin{equation}
	\begin{split}
\Tr(\hat{A}_i\rho^A) &= \sum_\alpha p_\alpha^A\Tr(\hat{A}_i^\alpha\rho_\alpha^A), \\
	\end{split}
\end{equation}
where $\rho^A_\alpha\in\mathcal{D}(\mathcal{H}_A^\alpha)$, $p^A_\alpha\ge0$, $\sum_\alpha p^A_\alpha=1$, and
\begin{equation}
	\begin{split}
\sum_\alpha\hat{\Pi}_\alpha\rho^A\hat{\Pi}_\alpha &=\bigoplus_\alpha p^A_\alpha\rho^A_\alpha. \\
	\end{split}
\end{equation}
Since $\dim{\mathcal{H}_A^\alpha}\le2$, we can effectively treat $\rho^A_\alpha$ as a qubit state.

A similar argument can be applied on Bob's side, where $\mathcal{H}_B=\bigoplus_\beta\mathcal{H}_B^\beta$. Therefore, the effective system of Alice and Bob in a round can be treated as a mixture of 2-qubit states, $\rho_{\alpha\beta}^{AB}\in\mathcal{D}(\mathcal{H}_A^\alpha\otimes\mathcal{H}_B^\beta)$. In DIQKD, we can absorb the variables, $\alpha$ and $\beta$, into the hidden variable $\lambda$. That is, given $\lambda$, 2-qubit state $\rho_{\lambda}^{AB}$, and Alice's and Bob's observables $\hat{A}_0^\lambda,\hat{A}_1^\lambda,\hat{B}_0^\lambda,\hat{B}_1^\lambda$ are fixed. So, we can also treat $\lambda$ as a system variable.

Note that the additional measurement setting on Alice's side, $\hat{A}_2$, cannot be guaranteed to take the block-diagonal form with the same direct-sum decomposition of the Hilbert space $\mathcal{H}_A$ as $\hat{A}_0,\hat{A}_1$. Nevertheless, we can further absorb the parameters for constructing the measurement $\hat{A}_2$ into $\lambda$,  $\hat{A}_2^{\lambda}$. In general, $\hat{A}_2^{\lambda}$ does not represent a qubit measurement on a qubit state. In the end, the system variable $\lambda$ would determine all the underlying states and  measurements.

In summary, the DIQKD protocol can be equivalently modelled by the quantum circuit shown in Figure \ref{Fig:EffParSum}. We summarise the system modelling results with the equivalent process description shown in Box~\ref{box:VirtualProtocol}, which generates the same classical statistics and quantum signals as the actual protocol. In this equivalent process description, it becomes clear how we should link the DIQKD security analysis to the complementarity-based security statement in Lemma~\ref{Lemma:SecDef}. In the key generation rounds, the projection operations $\{\hat{\Pi}_{\lambda_i}^B\}_{\lambda_i}$ prepare Bob's subsystems in the key generation rounds to qubit ones, corresponding to Step~2 in the protocol of Box~\ref{box:KoashiVirtual} and naturally sufficing the state preparation requirement in Lemma~\ref{Lemma:SecDef}.

\begin{mybox}[label={box:VirtualProtocol}]{{Equivalent Process Description in DIQKD}}
\textbf{Note: }The process described here is an equivalent description of the physical process in Step 1, Box~\ref{box:ProtocolDetail}.

\tcblower
\begin{enumerate}[leftmargin=*]
\item \emph{State Preparation:}
\begin{enumerate}[leftmargin=*]
  \item Alice and Bob share a bipartite quantum state $\rho^{AB}\in \mathcal{D}\left\{\bigotimes_{i=1}^n[\mathcal{H}_{A(i)}\otimes\mathcal{H}_{B(i)}]\right\}$.
  \item Quantum channels $\hat{K}_i^A,\hat{K}_i^B$ are applied to the $i$th subsystems $\mathcal{H}_{A(i)},\mathcal{H}_{B(i)}$ based on all the local information.
  \item An LOCC channel $\hat{T}_{i}$ is applied to the $(i+1)$th subsystems between $\mathcal{H}_{A(i+1)},\mathcal{H}_{B(i+1)}$.
\end{enumerate}
\item \emph{Projection to Qubits:}
\begin{itemize}[leftmargin=*]
  \item If $x_i = 2$,\\
    The projection $\{\hat{\Pi}_{\lambda_i}^B\}_{\lambda_i}$ is applied to Bob's $i$th subsystem. The projection result of $\lambda_i$ is recorded.
  \item If $x_i = 0,1$,\\
    The projection $\{\hat{\Pi}_{\lambda_i}^A\otimes\hat{\Pi}_{\lambda_i}^B\}_{\lambda_i}$ is applied to Alice's and Bob's $i$th subsystems. The projection result of $\lambda_i$ is recorded.
\end{itemize}
\item \emph{Measurements:}
\begin{itemize}[leftmargin=*]
  \item If $x_i = 2$,\\
    Alice and Bob measure $\hat{A}_{2}^{\lambda_i},\,\hat{B}_{y}^{\lambda_i}$ on their $i$th subsystems respectively, where $y_i = y$.
  \item If $x_i = 0,1$,\\
    Alice and Bob measure $\hat{A}_{x}^{\lambda_i},\,\hat{B}_{y}^{\lambda_i}$ on their $i$th subsystems respectively, where $x_i = x,y_i = y$.
\end{itemize}
\end{enumerate}
\end{mybox}

\begin{figure}[hbtp!]
	\begin{tikzpicture}[
		scale=1,
		>=latex
		]
		\node[circle,fill,gray,inner sep=1pt,label=left:$\mathcal{H}_{A_i}$] (orig) at (0,0) {};
		\node[rectangle,draw=black,fill=orange!10!white,minimum size=20pt,rounded corners] (K) at ($ (orig) + (1cm,0) $) {$\hat{K}_{i-1}^A$};
		\node[rectangle,fill=blue!10!white,minimum size=20pt,rounded corners] (TA) at ($ (K.east) + (1cm,0) $) {$\hat{T}_{i-1}$};
		\draw (TA.south west) {[rounded corners] -- (TA.north west)  -- (TA.north east) -- (TA.south east)};
		\node[rectangle,draw=black,fill=green!10!white,minimum size=20pt,rounded corners] (Ax) at ($ (TA.east) + (1.5cm,0) $) {$\hat{A}_{x_i}$};		
		\draw[<-] (Ax.north) --++ (0,0.3) node[above]{$x_i$};
		\draw[->] (Ax.east) --++ (.3,0) node[right] {$a_i$};
		\draw (orig) -- (K.west) (K.east) -- (TA.west) (TA.east) -- (Ax.west) node[midway,above] {$\rho_i^A$};
		\node[circle,fill,gray,inner sep=1pt,label=left:$\mathcal{H}_{B_i}$] (origB) at (0,-1.5) {};
		\node[rectangle,draw=black,fill=orange!10!white,minimum size=20pt,rounded corners] (KB) at ($ (origB) + (1cm,0) $) {$\hat{K}_{i-1}^B$};
		\node[rectangle,fill=blue!10!white,minimum size=20pt,rounded corners] (TB) at ($ (KB.east) + (1cm,0) $) {$\hat{T}_{i-1}$};
		\node[rectangle,draw=black,fill=green!10!white,minimum size=20pt,rounded corners] (Bx) at ($ (TB.east) + (1.5cm,0) $) {$\hat{B}_{y_i}$};		
		\draw[<-] (Bx.north) --++ (0,0.3) node[above]{$y_i$};
		\draw[->] (Bx.east) --++ (.3,0) node[right] {$b_i$};
		\draw (origB) -- (KB.west) (KB.east) -- (TB.west) (TB.east) -- (Bx.west) node[midway,above] {$\rho_i^B$};
		\draw[black,fill=blue!10!white,rounded corners] (TA.north west) rectangle node{$\hat{T}_{i-1}$} (TB.south east) ;
		\node[circle,fill,gray,inner sep=1pt,label=left:$\mathcal{H}_{A_{i+1}}$] (orig1) at (0,-3) {};
		\node[rectangle,draw=black,fill=orange!10!white,minimum size=20pt,rounded corners] (K1) at ($ (Ax) + (0,-3cm) $) {$\hat{K}_{i}^A$};
		\node[rectangle,fill=blue!10!white,minimum size=20pt,rounded corners] (TA1) at ($ (K1.east) + (1cm,0) $) {$\hat{T}_{i}$};
		\draw (TA1.south west) {[rounded corners] -- (TA1.north west)  -- (TA1.north east) -- (TA1.south east)};
		\node[rectangle,draw=black,fill=green!10!white,minimum size=20pt,rounded corners] (Ax1) at ($ (TA1.east) + (1.5cm,0) $) {$\hat{A}_{x_{i+1}}$};
		\draw[<-] (Ax1.north) --++ (0,0.3) node[above]{$x_{i+1}$};
		\draw[->] (Ax1.east) --++ (.3,0) node[right] {$a_{i+1}$};
		\draw (orig1) -- (K1.west) (K1.east) -- (TA1.west) (TA1.east) -- (Ax1.west) node[midway,above] {$\rho_{i+1}^A$};
		\node[circle,fill,gray,inner sep=1pt,label=left:$\mathcal{H}_{B_{i+1}}$] (origB1) at (0,-4.5) {};
		\node[rectangle,draw=black,fill=orange!10!white,minimum size=20pt,rounded corners] (KB1) at ($ (Bx) + (0,-3cm) $) {$\hat{K}_{i}^B$};
		\node[rectangle,fill=blue!10!white,minimum size=20pt,rounded corners] (TB1) at ($ (KB1.east) + (1cm,0) $) {$\hat{T}_{i}$};
		\node[rectangle,draw=black,fill=green!10!white,minimum size=20pt,rounded corners] (Bx1) at ($ (TB1.east) + (1.5cm,0) $) {$\hat{B}_{y_{i+1}}$};
		\draw[<-] (Bx1.north) --++ (0,0.3) node[above]{$y_{i+1}$};
		\draw[->] (Bx1.east) --++ (.3,0) node[right] {$a_{i+1}$};
		\draw (origB1) -- (KB1.west) (KB1.east) -- (TB1.west) (TB1.east) -- (Bx1.west) node[midway,above] {$\rho_{i+1}^B$};
		\draw[black,fill=blue!10!white,rounded corners] (TA1.north west) rectangle node{$\hat{T}_{i}$} (TB1.south east) ;
		\draw[snake=snake,gray,opacity=.5] (orig) -- (origB1);
	\end{tikzpicture}
\caption{System characterisation of the sequential setting. Two rounds are depicted for demonstration.
The subsystems for each round are denoted by their Hilbert spaces $\mathcal{H}_{A_i},\mathcal{H}_{B_i}$,
which are measured in a time sequence.
In general, the subsystems can be correlated or entangled \emph{a priori}, as shown by the curvy line. 
In each round, Alice and Bob randomly set the measurement bases $x_i,y_i$, respectively. The measurement on either party is independent of the other but can be influenced by all previous events on both sides.
After the measurements in a round end, the leftover quantum systems are transferred to the next subsystems of Alice and Bob via local hidden updating quantum operations $\hat{K}_i^A,\hat{K}_i^B$, respectively.
Afterwards, a general quantum channel $\hat{T}_i$ is applied to Alice's and Bob's systems, which prepares the joint system to be measured in the subsequent round in $\rho_{i+1}^{AB}$ and allows the untrusted devices to synchronise the history on both sides. For measurements corresponding to the settings $x_i,y_i\in\{0,1\}$, they can be described by qubit observables realised via the qubit projections $\hat{\Pi}_{\lambda_i}^A,\hat{\Pi}_{\lambda_i}^B$.
} \label{Fig:EffParSum}
\end{figure}
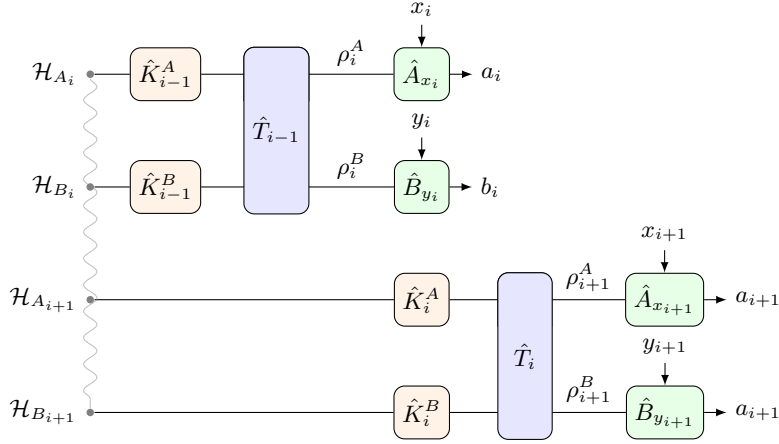


\section{Single-Round Phase-Error Probability}\label{Supp:PhaseErrorProb}
\subsection{Problem as an optimisation}
In this section, we shall derive the phase-error probability formula in a single round. Comparing to the virtual protocol in Box~\ref{box:KoashiVirtual}, the analysis in this section shall give the construction of the auxiliary measurement, $\hat{M}_R$. Along the complementarity-based quantum-error-correction approach, we virtually replace the key generation measurements with the phase-error measurement shown in Box~\ref{box:PhaseError} and remain all the other operations (both quantum and classical) the same as in the original protocol. Here, $\hat{B}_{\bot}^{\lambda_i}\in\mathcal{L}(\mathcal{H}_B^{\lambda_i})$ is the complementary observable of $\hat{B}_0^{\lambda_i}\in\mathcal{L}(\mathcal{H}_B^{\lambda_i})$. We denote the eigenvectors of $\hat{B}_{\bot}^{\lambda_i}$ as $\ket{b_{\bot}^{\lambda_i(+)}},\ket{b_{\bot}^{\lambda_i(-)}}$ and the eigenvectors of $\hat{B}_{0}^{\lambda_i}$ as $\ket{b_{0}^{\lambda_i(+)}},\ket{b_{0}^{\lambda_i(-)}}$. The sets of eigenvectors form mutually unbiased bases. Since the observables are now qubit ones, their eigenvectors satisfy
\begin{equation}
\left|\bra{b_{\bot}^{\lambda_i(s)}}\ket{b_{0}^{\lambda_i(t)}}\right|^2=\frac{1}{2},\forall s,t\in\{+,-\}.
\end{equation}
The measurement on Alice's side, $\hat{A}_{\theta_i}^{\lambda_i}\in\mathcal{L}(\mathcal{H}_A^{\lambda_i})$, is an auxiliary measurement. Without loss of generality, we can take it as a binary projection measurement with eigenvalues $\pm1$. The parameter $\theta_i$ denotes the freedom in the choice of the auxiliary measurement. The explicit parameterised representation of $\hat{A}_{\theta_i}^{\lambda_i}$ shall be made clear in the next section. In the prediction, Alice takes the measurement result of $\hat{A}_{\theta_i}^{\lambda_i}$ as her guess of the outcome of $\hat{B}_{\bot}^{\lambda_i}$. With probability $[1+\alpha(\hat{A}_{\theta_i}^{\lambda_i})]/2$, the outcome of $\hat{B}_\bot^{\lambda_i}$ can be predicted correctly, where
\begin{equation}\label{Eq:ComBasis}
\alpha(\hat{A}_{\theta_i}^{\lambda_i}) = \Tr[\rho_{\lambda_i}^{AB}({\hat{A}_{\theta_i}^{\lambda_i}}\otimes \hat{B}_\bot^{\lambda_i})],
\end{equation}
which we call the correlation between the measurements $\hat{A}_{\theta_i}^{\lambda_i}$ and $\hat{B}_\bot^{\lambda_i}$ on the state $\rho_{\lambda_i}^{AB}$. If the measurement results of $\hat{B}_{\bot}^{\lambda_i}$ and $\hat{A}_{\theta_i}^{\lambda_i}$ are perfectly correlated, that is, $\alpha(\hat{A}_{\theta_i}^{\lambda_i})=1$, the system would be certified on the complementary basis. In this case, the key bit $\kappa_i^B$ shall be private. In a general case, we introduce a random variable $E_{p}^{\lambda_i}$, which we call the phase-error variable in the $i$th round. An error occurs if the measurement results of $\hat{A}_{\theta_i}^{\lambda_i}$ and $\hat{B}_{\bot}^{\lambda_i}$ are not equal and $E_{p}^{\lambda_i}=1$, otherwise $E_{p}^{\lambda_i}=0$. Conditioned on the system variable, had the phase-error measurement been performed in the $i$th round, a phase error would occur with probability
\begin{equation}\label{def:ConPhaseError}
  \Pr{E_{p}^{\lambda_i} = 1|x_i = 2, y_i = 0, \lambda_i} = \dfrac{1-\alpha(\hat{A}_{\theta_i}^{\lambda_i})}{2}.
\end{equation}
In Figure~\ref{Fig:VirtualProtocol}, we depict the virtual phase-error measurement procedure. If we compare it to the well-known Bennett-Brassard-1984 QKD protocol \cite{bennett1984quantum}, $\hat{A}_{2}^{\lambda_i}$ and $\hat{B}_0^{\lambda_i}$ correspond to the $Z$-basis measurements, while $\hat{A}_{\theta_i}^{\lambda_i}$ and $\hat{B}_{\perp}^{\lambda_i}$ correspond to the $X$-basis measurements.

\begin{mybox}[label={box:PhaseError}]{{Phase-error Measurement}}
\textbf{Notes:}
\begin{itemize}[leftmargin=*]
\item Replace \emph{only} the key generation measurements with the following operations.
\item The information exchange with the outside needs to be the same as in the experiment.
\end{itemize}

\tcblower
\begin{enumerate}[leftmargin=*]
\item Alice applies the projection $\{\hat{\Pi}_{\lambda_i}^A\}_{\lambda_i}$ to her $i$th subsystem.
\item Alice and Bob measure $\hat{A}_{\theta_i}^{\lambda_i},\hat{B}_{\bot}^{\lambda_i}$ on their systems and record the measurement results, respectively.
\item Based on the measurement outcome of $\hat{A}_{\theta_i}^{\lambda_i}$, Alice predicts the corresponding measurement outcome of $\hat{B}_{\bot}^{\lambda_i}$.
\item Alice and Bob compare the measurement results. If Alice makes a wrong prediction, a phase error occurs with a binary variable $E_{p}^{\lambda_i}$ taking the value $1$. Otherwise, $E_{p}^{\lambda_i}=0$.
\end{enumerate}
\end{mybox}

\begin{figure}[htb]
\centering \includegraphics[width=7.5cm]{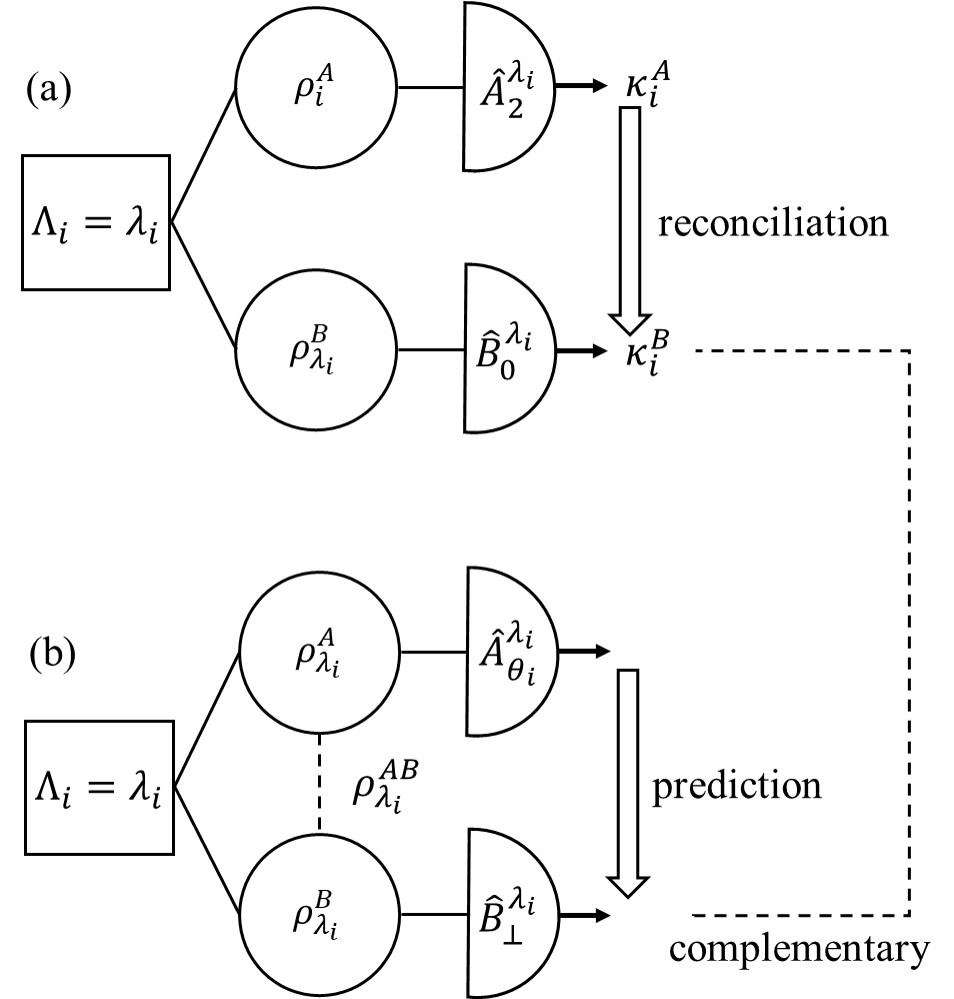}
\caption{A schematic diagram of the virtual phase-error measurement. The underlying quantum state $\rho_{\lambda_i}^{AB}$ depends on the system variable $\Lambda_i$. (a) In a key generation round, the key bit $\kappa_i^B$ is obtained by measuring $\hat{B}_0^{\lambda_i}$ on the state $\rho_{\lambda_i}^B$. (b) In a virtual phase-error measurement, Alice measures $\hat{A}_{\theta_i}^{\lambda_i}$ on her subsystem $\rho_{\lambda_i}^A$. Bob measures $\hat{B}_{\bot}^{\lambda_i}$ on his subsystem $\rho_{\lambda_i}^B$, where $\hat{B}_{\bot}^{\lambda_i}$ is complementary to $\hat{B}_{0}^{\lambda_i}$. Alice takes the measurement result of $\hat{A}_{\theta_i}^{\lambda_i}$ to predict the result of $\hat{B}_{\perp}^{\lambda_i}$.}
\label{Fig:VirtualProtocol}
\end{figure}

By our definition of the system variable $\lambda_i$ (see Sec.~\ref{Supp:EffDimReduction}), the correlated effects brought by the previous rounds have been considered.
We denote the corresponding underlying phase-error probability corresponding to the system variable value $\lambda_i$ in Eq.~\eqref{def:ConPhaseError} as $e_p^{\lambda_i}$. In this section, all the conditional probabilities should be considered as a value, where the terms being conditioned are specific values of random variables.

While the phase error is not directly measured in the protocol, the phase-error probability is an intrinsic property of the system. Intuitively, more entanglement between Alice and Bob implies more privacy in the generated key and hence the smaller the phase-error probability. On the other hand, entanglement can be quantified via the expected Bell value. Conditioned on the system variable, the expected Bell value of a round is given by
\begin{equation}\label{def:UnderBellValue}
\begin{split}
  S^{\lambda_i} &\equiv
  \sum_{x,y\in\{0,1\}}(-1)^{xy}\mathbb{E}(a_{i}b_{i}|x_i=x,y_i=y, \lambda_i) \\ &=
  \sum_{x,y\in\{0,1\}}(-1)^{xy}\Tr[\rho_{\lambda_i}^{AB}(\hat{A}_{x}^{\lambda_i}\otimes \hat{B}_{y}^{\lambda_i})] \\ &= \Tr(\rho_{\lambda_i}^{AB}\hat{S}^{\lambda_i}).
\end{split}
\end{equation}
Here, the operator
\begin{equation}\label{Eq:BellOperator}
\hat{S}^{\lambda_i}=\sum_{x,y\in\{0,1\}}(-1)^{xy}\hat{A}_{x}^{\lambda_i}\otimes \hat{B}_{y}^{\lambda_i}
\end{equation}
is the Bell operator of the round given the system variable value $\lambda_i$.

For brevity, we omit the system variable $\lambda$ and the label for round number in this section, and express the joint underlying quantum system in the round as $\rho\in\mathcal{D}(\mathcal{H}_A\otimes\mathcal{H}_B)$. Suppose given the system variable, the expected Bell value $S$ of the round is given. To obtain the worst phase-error probability under this constraint, we can solve the following optimisation problem of the correlation between $\hat{A}_\theta,\hat{B}_\perp$ on the state $\rho$,
\begin{equation}\label{Eq:OriginOpt}
\begin{split}
\alpha =& \min_{\rho,\hat{A}_0,\hat{A}_1,\hat{B}_0,\hat{B}_1}\max_{\hat{A}_\theta}\Tr[\rho({\hat{A}_\theta}\otimes \hat{B}_\perp)], \\
\text{s.t. }
&S = \sum_{x,y\in\{0,1\}}(-1)^{xy}\Tr[\rho(\hat{A}_x\otimes \hat{B}_y)].
\end{split}
\end{equation}
All the measurement observables are projective with eigenvalues $\pm1$, and $\hat{A}_0,\hat{A}_1,\hat{A}_{\theta}\in\mathcal{L}(\mathcal{H}_A),\hat{B}_0,\hat{B}_1,\hat{B}_\perp\in\mathcal{L}(\mathcal{H}_B)$. For convenience, we first state the optimisation result,
\begin{lemma}\label{lemma:Correlation}
The solution to the optimisation in Eq.~\eqref{Eq:OriginOpt} is given by,
\begin{eqnarray}\label{SuppEq:PhaseErrorProb}
\alpha\geq\alpha(S)\equiv
\left\{
\begin{tabular}{ll}
$\sqrt{(S/2)^2-1}$, & $2<S\leq2\sqrt{2}$,\\
$0$, & $S\leq2$.
\end{tabular}
\right.
\end{eqnarray}
Therefore, the corresponding phase-error probability is upper bounded by
\begin{eqnarray}\label{Eq:SuppPhaseError}
e_p\leq e_p(S)\equiv\dfrac{1-\alpha(S)}{2}=
\left\{
\begin{tabular}{ll}
$\dfrac{1-\sqrt{(S/2)^2-1}}{2}$, & $2<S\leq2\sqrt{2}$,\\
$\dfrac{1}{2}$, & $S\leq2$.
\end{tabular}
\right.
\end{eqnarray}
\end{lemma}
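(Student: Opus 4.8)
My plan is to solve the optimisation in Eq.~\eqref{Eq:OriginOpt} by reducing it to an elementary statement about the singular values of the two-qubit correlation matrix. First I would invoke Jordan's lemma (Lemma~\ref{Lemma:DimRed}), so that $\hat{A}_0,\hat{A}_1,\hat{A}_\theta$ and $\hat{B}_0,\hat{B}_1,\hat{B}_\perp$ are all qubit $\pm1$ observables, identified with unit Bloch vectors $\hat{a}_0,\hat{a}_1,\hat{a}_\theta,\hat{b}_0,\hat{b}_1,\hat{b}_\perp\in\mathbb{R}^3$, and encode the state through its correlation matrix $T_{ij}=\Tr[\rho(\hat{\sigma}_i\otimes\hat{\sigma}_j)]$. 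Crucially, the optimisation here is over a single (possibly mixed) two-qubit $\rho$, so no convexity issue arises inside the lemma; the mixing over Jordan blocks $\lambda$ is deferred to the multi-round analysis.

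Next I would carry out the inner maximisation over Alice. For fixed state and Bob observable, $\Tr[\rho(\hat{A}_\theta\otimes\hat{B}_\perp)]=\hat{a}_\theta\cdot(T\hat{b}_\perp)$, which is maximised over unit $\hat{a}_\theta$ at $\abs{T\hat{b}_\perp}$. Since the virtual complement $\hat{B}_\perp$ is ours to choose among observables anticommuting with $\hat{B}_0$ (equivalently $\hat{b}_\perp\perp\hat{b}_0$), the quantity of interest becomes $\max_{\hat{b}_\perp\perp\hat{b}_0}\abs{T\hat{b}_\perp}=s_1$, the larger singular value of $T$ restricted to the plane $\hat{b}_0^{\perp}$. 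This turns the whole min-max into a single clean task: given that the measured Bell value is $S$, lower-bound $s_1$.

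The heart of the proof is three facts about $T$ and a one-line chain. (i) Writing $S=\hat{a}_0\cdot T(\hat{b}_0+\hat{b}_1)+\hat{a}_1\cdot T(\hat{b}_0-\hat{b}_1)$ and using $\hat{b}_0+\hat{b}_1\perp\hat{b}_0-\hat{b}_1$, the Cauchy--Schwarz inequality gives $S\le 2\sqrt{t_1^2+t_2^2}$, i.e. $t_1^2+t_2^2\ge (S/2)^2$, where $t_1\ge t_2\ge t_3$ are the singular values of $T$ (this is the relevant half of the Horodecki criterion, reproved directly). (ii) Because $(\hat{a}\cdot\hat{\sigma})\otimes(\hat{b}\cdot\hat{\sigma})$ has eigenvalues $\pm1$, every singular value obeys $t_k\le 1$; in particular $t_1\le1$. (iii) A subspace-intersection (interlacing) argument shows $s_1\ge t_2$: the $2$-plane $\hat{b}_0^{\perp}$ meets the span of the top two right singular vectors in at least one dimension, and on that intersection $\abs{T\hat{w}}\ge t_2$. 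Chaining these, $\alpha=s_1\ge t_2$ while $t_2^2\ge (t_1^2+t_2^2)-1\ge (S/2)^2-1$, so $\alpha\ge\sqrt{(S/2)^2-1}$, which is Eq.~\eqref{SuppEq:PhaseErrorProb}. I expect (ii)--(iii) to be the main subtlety rather than (i): it is the innocuous-looking bound $t_1\le1$ that produces the $-1$ in the formula, encoding that one strong correlation direction is ``spent'' on the key basis $\hat{b}_0$ while the Bell value still demands $t_1^2+t_2^2\ge(S/2)^2$.

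Finally I would assemble the pieces. For $S\le2$ the right-hand side is non-positive, so the bound degenerates to $\alpha\ge0$ and $e_p\le\tfrac12$, matching the second branch of Eq.~\eqref{SuppEq:PhaseErrorProb}; substituting $e_p=(1-\alpha)/2$ then yields Eq.~\eqref{Eq:SuppPhaseError}. To confirm the minimum is exactly $\alpha(S)$ and not merely a lower bound, I would exhibit the saturating configuration: a partially entangled pure state $\cos\chi\ket{00}+\sin\chi\ket{11}$ with $\hat{B}_0=\hat{\sigma}_z$ aligned to the $t=1$ direction and $\sin 2\chi=\sqrt{(S/2)^2-1}$, for which $T=\mathrm{diag}(\sin2\chi,-\sin2\chi,1)$ and the optimal complement gives $\abs{T\hat{b}_\perp}=\sin2\chi$, rendering every inequality tight.
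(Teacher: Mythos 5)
Your proposal is correct, and it reaches Lemma~\ref{lemma:Correlation} by a genuinely different route than the paper. The paper's proof is an explicit coordinate computation: it parameterizes all observables in the $\hat{\sigma}_x$--$\hat{\sigma}_z$ plane and $\rho$ as a real matrix in the Bell basis, reduces the objective to $\sqrt{c_{xx}^2+c_{zx}^2}$, removes $c_{zx}$ by a local rotation on Alice's side, and then solves the dual problem (maximize $S$ subject to $|c_{xx}|=\alpha$) using the positivity bound $|c_{zz}|\le\sqrt{1-c_{xz}^2}$ together with a trigonometric/geometric optimization, obtaining $S^*=2\sqrt{1+\alpha^2}$ and inverting. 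You instead argue coordinate-free through the singular values of the correlation matrix $T$: the (reproved) Horodecki bound $t_1^2+t_2^2\ge(S/2)^2$, the norm bound $t_1\le1$, and the interlacing fact $s_1\ge t_2$ from the dimension count $\dim\left(\hat b_0^{\perp}\cap\mathrm{span}\{v_1,v_2\}\right)\ge1$; chaining gives $\alpha\ge\sqrt{(S/2)^2-1}$, and an explicit partially entangled state certifies tightness. Your decomposition is shorter and more conceptual --- it isolates exactly where the ``$-1$'' comes from ($t_1\le1$: one unit of correlation is spent on the key direction) and each of the three facts is independently reusable --- while the paper's parametric/dual route is elementary and self-contained, and produces the exact inverse function $S^*(\alpha)$ without needing a separate saturating example. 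Two points you should state explicitly to make your argument airtight: (1) your maximization over $\hat b_\perp$ within the plane orthogonal to $\hat b_0$ is legitimate because the virtual protocol only requires $\{\hat{B}_\perp,\hat{B}_0\}=0$, so the analyst is free to pick the best complementary observable (the paper instead fixes $\hat{B}_\perp=\hat{\sigma}_x$ after arguing $\rho$ may be taken real, and for real states the two choices coincide); (2) the tightness witness must also exhibit Bell settings attaining the constraint value $S$, e.g. $\hat b_0=\hat z$, $\hat b_1=\hat x$ with Alice's optimal directions, for which one checks $S=2\sqrt{1+\sin^2 2\chi}$ is indeed achieved while $\max_{\hat b_\perp\perp\hat b_0}|T\hat b_\perp|=\sin 2\chi$, so every inequality in your chain is simultaneously saturated.
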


If we restrict Eve's hacking strategy to i.i.d.~attacks, where the system variables $\Lambda_i$ are independent of $i$, the lemma is consistent with the previous result \cite{pironio2009device}. Nevertheless, the result in Ref.~\cite{pironio2009device} has a different operational meaning as Holevo information and takes a different approach. To our best knowledge, in the sense of phase-error probability, the first attempt to construct a complementarity-based device-independent security analysis starts in Ref.~\cite{tsurumaru2016multi}, where the authors make the same definition for a complementary basis as in Eq.~\eqref{Eq:ComBasis}. Yet the authors did not reach a tight relation between phase-error probability and expected Bell value. Upon finishing our paper, we notice the work of Ref.~\cite{woodhead2021device}, which arrives at a similar result as ours in a single round with a slightly different derivation. However, the work does not further go into the non-i.i.d. statistics. Instead, to obtain a full security analysis, the authors transform the result to the quantum entropy conditioned on Eve's system via the entropic uncertainty relation~\cite{berta2010uncertainty} and resorts to the entropic approach in the end~\cite{arnon2018practical}.

We solve the optimisation problem in two steps. In the first step, we parameterise the quantum system and operators using Pauli operators. Afterwards, we derive a lower bound on the optimal value of the optimisation in Eq.~\eqref{Eq:OriginOpt} from the dual problem.

%

\subsection{Step 1: operator parametrisation}\label{Supp:RedStep1}
Without loss of generality, the observables can be taken as expanded by the Pauli operators $\hat{\sigma}_x,\hat{\sigma}_z$ on the corresponding spaces by using additional degrees of freedom. The measurements can be parameterised as
\begin{equation}\label{Eq:EffMeasure}
\begin{split}
&\hat{A}_0 = \sin{\theta_0}\hat{\sigma}_x + \cos{\theta_0}\hat{\sigma}_z, \\
&\hat{A}_1 = \sin{\theta_1}\hat{\sigma}_x + \cos{\theta_1}\hat{\sigma}_z, \\
&\hat{B}_0 = \hat{\sigma}_z, \\
&\hat{B}_1 = \sin{\gamma}\hat{\sigma}_x + \cos{\gamma}\hat{\sigma}_z. \\
\end{split}
\end{equation}
The complementary measurement of $\hat{B}_0$, $\hat{B}_\perp$, and the auxiliary measurement observable on Alice's side, $\hat{A}_\theta$, can also be spanned by $\hat{\sigma}_x,\,\hat{\sigma}_z$,
\begin{equation}
\begin{split}
  \hat{A}_\theta &= \sin{\theta}\hat{\sigma}_x+\cos{\theta}\hat{\sigma}_z, \\
  \hat{B}_\perp &= \hat{\sigma}_x.
\end{split}
\end{equation}

Since we are focused on the measurement results of the observables, where the system $\rho$ and its complex conjugate $\rho^*$ give the same statistics, the density operator of the shared state $\rho$ can be taken as real positive semi-definite. We can write the density operator of $\rho$ as the matrix
\begin{equation}\label{SuppEq:state}
\rho=
\left(\begin{matrix}
\rho_{11} & \rho_{12} & \rho_{13} & \rho_{14}\\
\rho_{21} & \rho_{22} & \rho_{23} & \rho_{24}\\
\rho_{31} & \rho_{32} & \rho_{33} & \rho_{34}\\
\rho_{41} & \rho_{42} & \rho_{43} & \rho_{44}
\end{matrix}\right)
\end{equation}
under the Bell-state basis $\left\{\ket{\Phi^+},\ket{\Psi^-},\ket{\Phi^-},\ket{\Psi^+}\right\}$, with $\rho_{ij}=\rho_{ji},\forall i,j\in\{1,2,3,4\}$. The Bell states are expressed as
\begin{equation}
\begin{split}
\ket{\Phi^+}&=\frac{\ket{00}+\ket{11}}{\sqrt{2}},\\ \ket{\Psi^-}&=\frac{\ket{01}-\ket{10}}{\sqrt{2}},\\ \ket{\Phi^-}&=\frac{\ket{00}-\ket{11}}{\sqrt{2}},\\ \ket{\Psi^+}&=\frac{\ket{01}+\ket{10}}{\sqrt{2}},
\end{split}
\end{equation}
where $\{\ket{0},\ket{1}\}$ are the eigenvectors of the Pauli operator $\hat{\sigma}_z$ acting on the corresponding local system. We denote $c_{ij}=\Tr\left[\rho (\hat{\sigma}_i\otimes\hat{\sigma}_j)\right],\forall i,j\in\{x,z\}$. After some simple calculation, we could get
\begin{equation}\label{SuppEq:correlation}
\begin{split}
c_{xx}&=\rho_{11}-\rho_{22}-\rho_{33}+\rho_{44},\\
c_{xz}&=-\rho_{12}-\rho_{21}+\rho_{34}+\rho_{43},\\
c_{zx}&=\rho_{12}+\rho_{21}+\rho_{34}+\rho_{43},\\
c_{zz}&=\rho_{11}-\rho_{22}+\rho_{33}-\rho_{44}.\\
\end{split}
\end{equation}
The value of the objective function becomes
\begin{equation}
\Tr\left[\rho (\hat{A}_\theta\otimes\hat{B}_\perp)\right]=\sin{\theta}c_{xx} + \cos{\theta}c_{zx}.
\end{equation}
We note that the objective function has a trivial lower bound of zero. In the optimisation of the objective function, the minimisation and the maximisation can be exchanged. Now we focus on the optimisation over $\hat{A}_{\theta}$. If $c_{xx}=c_{zx}=0$, the objective function is zero. If $c_{xx},c_{zx}$ do not simultaneously take the value of zero, say $c_{xx}\neq0$, then by taking $\theta=\arctan{-\frac{c_{zx}}{c_{xx}}}$, the objective function takes the value of zero. Therefore, we have $e_p\leq\frac{1}{2}$. Intuitively, this means that Alice can always find a strategy such that her prediction is correct with a probability of at least $1/2$.

For the case where $c_{xx},c_{zx}$ do not simultaneously take the value of zero, over the optimisation of $\hat{A}_\theta$, the optimum value of the objective function is taken when
\begin{equation}
\begin{split}
\sin{\theta}&=\frac{|c_{xx}|}{\sqrt{c_{xx}^2+c_{zx}^2}},\\
\cos{\theta}&=\frac{|c_{zx}|}{\sqrt{c_{xx}^2+c_{zx}^2}}.
\end{split}
\end{equation}
The optimisation problem becomes
\begin{equation}\label{SuppEq:primal}
\begin{split}
\alpha &= \min_{\rho_{ij},\theta_0,\theta_1,\gamma}\sqrt{c_{xx}^2+c_{zx}^2}, \\
\text{s.t. }
S &= S(\rho_{ij},\theta_0,\theta_1,\gamma) = \sum_{x,y\in\{0,1\}}(-1)^{xy}\Tr[\rho(\hat{A}_x\otimes \hat{B}_y)].
\end{split}
\end{equation}
The case of $c_{xx}=c_{zx}=0$ can be unified to Eq.~\eqref{SuppEq:primal}. The function $S(\rho_{ij},\theta_0,\theta_1,\gamma)$ represents the expected Bell value given by the shared state with elements in the density matrix $\rho_{ij}$, and measurement observables parameterised by $\theta_0,\theta_1,\gamma$. We consider a unitary operator $\hat{R}(\vartheta)=\cos{\frac{\vartheta}{2}}\hat{I}+i\sin{\frac{\vartheta}{2}}\hat{\sigma}_y$ acting on $\mathcal{H}_A$. By applying the unitary operation of $\left(\hat{R}(\vartheta)\otimes\hat{I}\right)[\cdot]\left(\hat{R}(\vartheta)\otimes\hat{I}\right)^{-1}$ to both $\rho$ and $\hat{A}_x\otimes\hat{B}_y$, the solution to the optimization does not change. A simple computation shows
\begin{equation}
\begin{split}
\left(R(\vartheta)\otimes\hat{I}\right)\rho\left(R(\vartheta)\otimes\hat{I}\right)^{-1}=\left(\begin{matrix}\rho'_{11} & \rho'_{12} & \rho'_{13} & \rho'_{14}\\
\rho'_{21} & \rho'_{22} & \rho'_{23} & \rho'_{24}\\
\rho'_{31} & \rho'_{32} & \rho'_{33} & \rho'_{34}\\
\rho'_{41} & \rho'_{42} & \rho'_{43} & \rho'_{44}\end{matrix}\right),
\end{split}
\end{equation}
where
\begin{equation}
\begin{split}
  \rho'_{12}&=\frac{1}{2}\left(\sin{\vartheta}\rho_{11}+2\cos{\vartheta}\rho_{12}-\sin{\vartheta}\rho_{22}\right),\\
  \rho'_{34}&=\frac{1}{2}\left(-\sin{\vartheta}\rho_{33}+2\cos{\vartheta}\rho_{34}+\sin{\vartheta}\rho_{44}\right).
\end{split}
\end{equation}
Let $\vartheta = \arctan\left(\frac{-2\rho_{12}-2\rho_{34}}{\rho_{11}-\rho_{22}-\rho_{33}+\rho_{44}}\right)$. Then we have $c'_{zx}=\rho'_{12}+\rho'_{21}+\rho'_{34}+\rho'_{43}=0$. For simplicity, we still represent the measurements using the notations in Eq.~\eqref{Eq:EffMeasure}. The state can be parameterised as
\begin{equation}\label{Eq:EffState}
\begin{split}
\rho=
\left(\begin{matrix}
\rho_{11} & \rho_{12} & \rho_{13} & \rho_{14}\\
\rho_{12} & \rho_{22} & \rho_{23} & \rho_{24}\\
\rho_{13} & \rho_{23} & \rho_{33} & -\rho_{12}\\
\rho_{14} & \rho_{24} & -\rho_{12} & \rho_{44}
\end{matrix}\right),
\end{split}
\end{equation}
where $\sum_{i=1}\rho_{ii}^{\lambda}=1,\rho_{ij}\geq0$. In this parametrisation, $c_{zx}=0$. Then, the optimisation problem in Eq.~\eqref{SuppEq:primal} can be simplified,
\begin{equation}\label{SuppEq:SimPrimal}
\begin{split}
\alpha &= \min_{\rho_{ij},\theta_0,\theta_1,\gamma}\left|c_{xx}\right|, \\
\text{s.t. }
S &= S(\rho_{ij},\theta_0,\theta_1,\gamma).
\end{split}
\end{equation}
Here, the function $S(\rho_{ij},\theta_0,\theta_1,\gamma)$ is parameterised by
\begin{eqnarray}
S(\rho_{ij},\theta_0,\theta_1,\gamma)
&=&
\left(\cos{\theta_0}+\cos{\theta_0}\cos{\gamma}+\cos{\theta_1}-\cos{\theta_1}\cos{\gamma}\right)c_{zz}
+ \left(\sin{\theta_0}\sin{\gamma}-\sin{\theta_1}\sin{\gamma}\right)c_{xx} \notag\\ && + \left(\sin{\theta_0}+\sin{\theta_0}\cos{\gamma}+\sin{\theta_1}-\sin{\theta_1}\cos{\gamma}\right)c_{xz}.
\end{eqnarray}

\subsection{Step 2: lower bound from the dual problem}\label{Supp:RedStep2}
If we consider the dual problem of Eq.~\eqref{SuppEq:SimPrimal},
\begin{equation}\label{SuppEq:dual}
\begin{split}
S^* &= \max_{\rho_{ij},\theta_0,\theta_1,\gamma}S(\rho_{ij},\theta_0,\theta_1,\gamma), \\
\text{s.t. }
\alpha&=|c_{xx}|,
\end{split}
\end{equation}
to which the solution is of the form $S^*(\alpha)$, with a duality argument, we can get a lower bound on the solution to the primal problem by taking the value $S$ into the inverse function $[S^*(\alpha)]^{-1}$. For the dual problem, the solution is given by Lemma \ref{Lem:Smax}. In the proof of the lemma, we employ the following observation.

\begin{observation}
The parameters defined in Eq.~\eqref{SuppEq:correlation} satisfy the following inequality
\begin{equation}\label{eq:cxzczz}
\begin{split}	
|c_{zz}|\leq\sqrt{1-c^2_{xz}}.
\end{split}
\end{equation}
\end{observation}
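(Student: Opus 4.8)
The plan is to recognize $c_{zz}$ and $c_{xz}$ as the expectation values of two operators that share the same Bob-side factor, bundle them into a single $\pm1$-valued observable, and then invoke the elementary fact that any observable with eigenvalues $\pm 1$ has expectation value in $[-1,1]$.

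Concretely, recall from Eq.~\eqref{SuppEq:correlation} that $c_{zz}=\Tr[\rho(\hat{\sigma}_z\otimes\hat{\sigma}_z)]$ and $c_{xz}=\Tr[\rho(\hat{\sigma}_x\otimes\hat{\sigma}_z)]$, where the second tensor factor acts on Bob's subsystem. First I would introduce, for an arbitrary angle $\phi$, the Alice-side observable $\hat{M}_\phi=\cos\phi\,\hat{\sigma}_z+\sin\phi\,\hat{\sigma}_x$. Since $\hat{M}_\phi^2=\hat{I}$, it has eigenvalues $\pm1$, so the product observable $\hat{M}_\phi\otimes\hat{\sigma}_z$ also has eigenvalues $\pm1$, and hence on any state $|\Tr[\rho(\hat{M}_\phi\otimes\hat{\sigma}_z)]|\le1$. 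By linearity, $\Tr[\rho(\hat{M}_\phi\otimes\hat{\sigma}_z)]=\cos\phi\,c_{zz}+\sin\phi\,c_{xz}$.

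The final step is to choose $\phi$ so that this expression attains its maximum over $\phi$, namely $\sqrt{c_{zz}^2+c_{xz}^2}$ (assuming the two are not both zero, in which case the claim is trivial). Combining with the bound above yields $c_{zz}^2+c_{xz}^2\le1$, which rearranges to the asserted inequality $|c_{zz}|\le\sqrt{1-c_{xz}^2}$. I expect no serious obstacle here: the only point requiring attention is noticing that $c_{zz}$ and $c_{xz}$ share the common Bob-side factor $\hat{\sigma}_z$, which is precisely what lets one combine them into a single bounded observable. An alternative route would be a direct computation using the positive semidefiniteness of $\rho$ together with the explicit matrix entries in Eq.~\eqref{SuppEq:correlation}, but this is messier, and the operator-norm argument makes the geometric content transparent.
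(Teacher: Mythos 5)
Your proof is correct, but it takes a genuinely different route from the paper's. You treat $c_{zz}$ and $c_{xz}$ as expectation values sharing the common Bob-side factor $\hat{\sigma}_z$, bundle the Alice-side parts into the single $\pm1$-valued observable $\hat{M}_\phi=\cos\phi\,\hat{\sigma}_z+\sin\phi\,\hat{\sigma}_x$ (unitarity following from the anticommutation of $\hat{\sigma}_x$ and $\hat{\sigma}_z$), and conclude $\sqrt{c_{zz}^2+c_{xz}^2}=\max_\phi \Tr[\rho(\hat{M}_\phi\otimes\hat{\sigma}_z)]\leq 1$ from the operator-norm bound; this is equivalent to the stated inequality. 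The paper instead works directly with the Bell-basis matrix entries of $\rho$ from Eq.~\eqref{SuppEq:correlation}: it bounds $|c_{xz}|\leq 2|\rho_{12}|+2|\rho_{34}|$, uses positive semidefiniteness to get $|\rho_{12}|\leq\sqrt{\rho_{11}\rho_{22}}$ and $|\rho_{34}|\leq\sqrt{\rho_{33}\rho_{44}}$, and then applies Cauchy--Schwarz and the trace condition to reach $|c_{xz}|\leq\sqrt{1-c_{zz}^2}$, which is the same statement rearranged. Your argument is shorter, basis-free, and makes the geometric content ($c_{zz}^2+c_{xz}^2\leq1$ as a consequence of a bounded observable) transparent; it also generalizes immediately to any pair of anticommuting $\pm1$-observables on Alice's side paired with a common observable on Bob's side, in any dimension. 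The paper's computation buys nothing extra here beyond staying within the explicit parametrisation of $\rho$ that the surrounding optimisation already uses. Both are valid.
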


\begin{proof}
	\begin{equation}
		\begin{split}
			|c_{xz}|&=|-\rho_{12}-\rho_{21}+\rho_{34}+\rho_{43}| \\
			&\leq 2|\rho_{12}|+2|\rho_{34}| \\
			&\leq 2\sqrt{\rho_{11}\rho_{22}} + 2\sqrt{\rho_{33}\rho_{44}} \\
			&\leq 2\sqrt{(\rho_{11}+\rho_{33})(\rho_{22}+\rho_{44})} \\
			&= 2\sqrt{\left(\frac{1+c_{zz}}{2}\right)\left(\frac{1-c_{zz}}{2}\right)} \\
			&= \sqrt{1-c^2_{zz}}.
		\end{split}
	\end{equation}
	In the second line, we have used the fact that the state $\rho$ in Eq.~\eqref{Eq:EffState} is Hermitian and the triangle inequality. In the third line, we have used the fact that $\rho$ is positive semi-definite. In the fourth line, we have used the Cauchy-Schwarz inequality. In the fifth line, we have used the expression of $c_{zz}$ in Eq.~\eqref{SuppEq:correlation} and the fact that $\Tr(\rho)=1$. Rearranging this inequality shall we obtain Eq.~\eqref{eq:cxzczz}.
\end{proof}

\begin{lemma}\label{Lem:Smax}
The solution to Eq.~\eqref{SuppEq:dual} is given by
\begin{equation}
S^*= 2\sqrt{1+\alpha^2}.
\end{equation}
\end{lemma}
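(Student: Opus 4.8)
The plan is to prove the upper bound $S \leq 2\sqrt{1+\alpha^2}$ for every admissible state and measurement choice satisfying $|c_{xx}|=\alpha$, and then to verify it is saturated. Throughout I would work in the reduced parametrisation of Eq.~\eqref{Eq:EffState}, in which $c_{zx}=0$, so that the Bell value $S(\rho_{ij},\theta_0,\theta_1,\gamma)$ is an affine function of the three remaining correlators with angle-dependent coefficients $C_1 = \cos\theta_0(1+\cos\gamma)+\cos\theta_1(1-\cos\gamma)$, $C_2 = \sin\gamma(\sin\theta_0-\sin\theta_1)$, and $C_3 = \sin\theta_0(1+\cos\gamma)+\sin\theta_1(1-\cos\gamma)$, i.e. $S = C_1 c_{zz} + C_2 c_{xx} + C_3 c_{xz}$. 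Since the constraint pins $c_{xx}=\alpha$ (taking $\alpha\geq 0$ without loss of generality, flipping $\hat{\sigma}_x$ on one side otherwise), the only state freedom left for fixed angles is the pair $(c_{zz},c_{xz})$.

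First I would dispose of the state dependence for fixed angles. By the Cauchy--Schwarz inequality together with the Observation $c_{zz}^2 + c_{xz}^2 \leq 1$ (the content of Eq.~\eqref{eq:cxzczz}), the state-dependent part obeys $C_1 c_{zz}+C_3 c_{xz} \leq \sqrt{C_1^2+C_3^2}\,\sqrt{c_{zz}^2+c_{xz}^2}\leq \sqrt{C_1^2+C_3^2}$, so that $S \leq \sqrt{C_1^2+C_3^2} + \alpha\,C_2$. What remains is to maximise the right-hand side over the three angles, and this angular optimisation is where I expect the only genuine work to lie.

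For that core step I would first compute $C_1^2+C_3^2 = 2(1+\cos^2\gamma) + 2\sin^2\gamma\cos(\theta_0-\theta_1)$. Introducing $w=\sin^2\tfrac{\theta_0-\theta_1}{2}\in[0,1]$ and using $\cos(\theta_0-\theta_1)=1-2w$ collapses this to $C_1^2+C_3^2 = 4 - 4\sin^2\gamma\,w$. Simultaneously the half-angle identity $\sin\theta_0-\sin\theta_1 = 2\cos\tfrac{\theta_0+\theta_1}{2}\sin\tfrac{\theta_0-\theta_1}{2}$ gives $|C_2| \leq 2|\sin\gamma|\sqrt{w}$. Writing $s = |\sin\gamma|\sqrt{w}\in[0,1]$, both bounds combine into the one-parameter inequality $S \leq 2\sqrt{1-s^2} + 2\alpha s$, whose maximum over $s\in[0,1]$ is attained at $s=\alpha/\sqrt{1+\alpha^2}$ and equals $2\sqrt{1+\alpha^2}$. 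This establishes $S^*\leq 2\sqrt{1+\alpha^2}$ for all angles at once.

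Finally I would confirm tightness by reverse-engineering the equality conditions: Cauchy--Schwarz saturates when $(c_{zz},c_{xz})\propto(C_1,C_3)$ with $c_{zz}^2+c_{xz}^2=1$, the half-angle step saturates when $\theta_1=-\theta_0$, and the one-parameter optimum fixes $|\sin\gamma|\sqrt{w}=\alpha/\sqrt{1+\alpha^2}$; setting $\theta_0=-\theta_1=\delta$ and checking $2\sqrt{1-\sin^2\gamma\sin^2\delta}+2\alpha\sin\gamma\sin\delta = 2\sqrt{1+\alpha^2}$ at the optimal $s$ exhibits an explicit admissible configuration, with a pure two-qubit state realising the required $c_{xx}=\alpha$ and $c_{zz}^2+c_{xz}^2=1$. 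Conceptually, the whole chain is the statement that this CHSH value is governed by the squared singular values of the $x,z$ correlation block through $t_1^2+t_2^2 = c_{xx}^2+c_{xz}^2+c_{zz}^2 \leq \alpha^2+1$, a Horodecki-type observation that serves as a useful consistency check; the only real obstacle is the trigonometric bookkeeping that reduces the three-angle maximisation to the single variable $s$.
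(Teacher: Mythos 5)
Your proposal is correct, and every step checks out: the coefficient identity $C_1^2+C_3^2=2(1+\cos^2\gamma)+2\sin^2\gamma\cos(\theta_0-\theta_1)=4-4\sin^2\gamma\,w$, the half-angle bound $|C_2|\le 2|\sin\gamma|\sqrt{w}$, and the one-variable maximisation of $2\sqrt{1-s^2}+2\alpha s$ all hold, and the equality analysis (e.g.\ $\gamma=\pi/2$, $\theta_0=-\theta_1$ with $\sin\theta_0=\alpha/\sqrt{1+\alpha^2}$, and a state $\cos\tfrac{\Omega}{2}\ket{00}+\sin\tfrac{\Omega}{2}\ket{11}$ with $\sin\Omega=\alpha$) does exhibit saturation. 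However, your route is genuinely different from the paper's. The paper keeps the standard CHSH grouping $\hat{A}_0\otimes(\hat{B}_0+\hat{B}_1)+\hat{A}_1\otimes(\hat{B}_0-\hat{B}_1)$, applies the Cauchy inequality to each of the two halves separately to get a sum of two square roots depending on $(\gamma, c_{xx}, c_{xz})$, and then resolves the remaining two-variable optimisation via a geometric construction (a circle diagram parametrised by $\cos\tau=c_{xz}$) together with a stationarity condition and a product identity for the two radicals. You instead eliminate the state variables first, applying a single Cauchy--Schwarz to the pair $(c_{zz},c_{xz})$ against $(C_1,C_3)$ and invoking the same Observation $c_{zz}^2+c_{xz}^2\le 1$, which leaves a purely trigonometric three-angle problem that half-angle identities collapse to one variable $s$. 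Your version is more elementary and arguably more transparent -- it avoids the geometric figure and the somewhat opaque derivative manipulation -- and it makes the achievability conditions easy to read off (the paper never explicitly exhibits a saturating configuration, though only the upper-bound direction is needed for the security claim). The Horodecki-type remark $t_1^2+t_2^2\le 1+\alpha^2$ is a nice sanity check but, as you note, is not itself the proof since it presumes the measurements lie in the $x$--$z$ plane.
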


\begin{proof}
From the Cauchy inequality, the value of $S(\rho_{ij},\theta_0,\theta_1,\gamma)$ can be upper bounded by
\begin{eqnarray}\label{SuppEq:Sbound}
S(\rho_{ij},\theta_0,\theta_1,\gamma)
&\leq&\sqrt{\left(c_{zz}+\cos{\gamma}c_{zz}\right)^2 + \left(\sin{\gamma}c_{xx}+c_{xz}+\cos{\gamma}c_{xz}\right)^2}\notag\\
&& + \sqrt{\left(c_{zz}-\cos{\gamma}c_{zz}\right)^2 + \left(-\sin{\gamma}c_{xx}+c_{xz}-\cos{\gamma}c_{xz}\right)^2}.
\end{eqnarray}

With the inequality Eq.~\eqref{eq:cxzczz}, the RHS of Eq.~\eqref{SuppEq:Sbound} can be upper bounded by
\begin{eqnarray}\label{Supp:UppGeoFunction}
&&\sqrt{\left(1+\cos{\gamma}\right)^2\left(1-c^2_{xz}\right) + \left(\sin{\gamma}c_{xx}+c_{xz}+\cos{\gamma}c_{xz}\right)^2}\notag\\
&& + \sqrt{\left(1-\cos{\gamma}\right)^2\left(1-c^2_{xz}\right) + \left(-\sin{\gamma}c_{xx}+c_{xz}-\cos{\gamma}c_{xz}\right)^2}\notag\\
&=& \sqrt{(1+\cos{\gamma})^2+(\sin{\gamma}c_{xx})^2+2\sin{\gamma}(1+\cos{\gamma})c_{xx}c_{xz}}\notag\\
&& + \sqrt{(1-\cos{\gamma})^2+(\sin{\gamma}c_{xx})^2-2\sin{\gamma}(1-\cos{\gamma})c_{xx}c_{xz}}.
\end{eqnarray}
Without loss of generality, we assume $c_{xx}\geq0$. Then, we can visualize Eq.~\eqref{Supp:UppGeoFunction} in Figure \ref{Fig:CHSHgeometry}, from which we can see that the maximum of this function reaches when the derivative of $\tau$ is 0, where $\cos{\tau}\equiv c_{xz}$.

\begin{figure}[!hbtp]
\centering \resizebox{5cm}{!}{\includegraphics{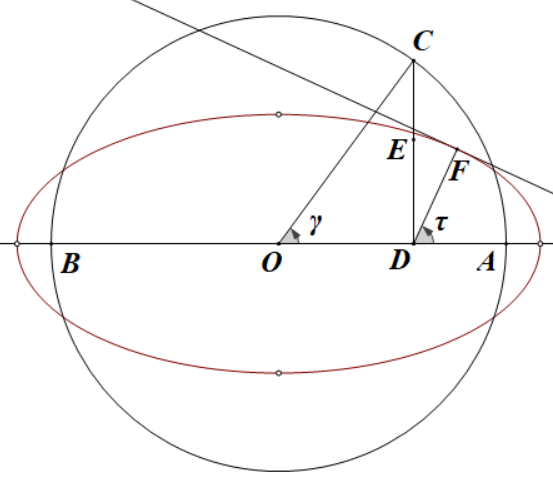}}
\caption{The geometric representation of Eq.~\eqref{Supp:UppGeoFunction}. As shown in Figure \ref{Fig:CHSHgeometry}, $\odot O$ is a circle centered at $O$ with a unit radius and $\angle AOC=\gamma$. Let $\frac{ED}{CD}=c_{xx},\cos{\tau}=c_{xz}$, and $DE=DF$. This is well-defined, since $c_{xz}\in[-1, 1]$. Then the formula in Eq.~\eqref{Supp:UppGeoFunction} equals $AF+BF$.} \label{Fig:CHSHgeometry}
\end{figure}

By taking derivative of $\tau$, we could have
\begin{eqnarray}
\frac{1+\cos{\gamma}}{\sqrt{(1+\cos{\gamma})^2+(\sin{\gamma}c_{xx})^2+2\sin{\gamma}(1+\cos{\gamma})c_{xx}\cos{\tau}}}\notag\\ =\frac{1-\cos{\gamma}}{\sqrt{(1-\cos{\gamma})^2+(\sin{\gamma}c_{xx})^2-2\sin{\gamma}(1-\cos{\gamma})c_{xx}\cos{\tau}}}.
\end{eqnarray}
With this equation, we can derive another formula,
\begin{eqnarray}
&&\prod\limits_{s\in\{-1,1\}}\sqrt{(1+s\cos{\gamma})^2+(\sin{\gamma}c_{xx})^2+2s\sin{\gamma}(1+s\cos{\gamma})c_{xx}\cos{\tau}}\notag\\
&=&(1+\cos{\gamma})(1-\cos{\gamma}) + (\sin{\gamma}c_{xx})^2,\
\end{eqnarray}
which immediately implies
\begin{eqnarray}
&&\sqrt{(1+\cos{\gamma})^2+(\sin{\gamma}c_{xx})^2+2\sin{\gamma}(1+\cos{\gamma})c_{xx}c_{xz}}\notag\\
&& + \sqrt{(1-\cos{\gamma})^2+(\sin{\gamma}c_{xx})^2-2\sin{\gamma}(1-\cos{\gamma})c_{xx}c_{xz}}\notag\\
&=& 2\sqrt{
\frac{\prod\limits_{s\in\{-1,1\}}\sqrt{(1+s\cos{\gamma})^2+(\sin{\gamma}c_{xx})^2+2s\sin{\gamma}(1+s\cos{\gamma})c_{xx}\cos{\tau}}}
{(1+\cos{\gamma})(1-\cos{\gamma})}
}\notag\\
&=&2\sqrt{\frac{(1+\cos{\gamma})(1-\cos{\gamma}) + (\sin{\gamma}c_{xx})^2}{(1+\cos{\gamma})(1-\cos{\gamma})}}\notag\\
&=&2\sqrt{1+c_{xx}^2},
\end{eqnarray}
giving the result
\begin{equation}
  S^*= \max_{\rho_{ij},\theta_0,\theta_1,\gamma}S(\rho_{ij},\theta_0,\theta_1,\gamma)= 2\sqrt{1+\alpha^2}.
\end{equation}
\end{proof}

For $S\in[2,2\sqrt{2}]$, we derive a valid lower bound to the original primal problem from the inverse of this function,
\begin{equation}
\alpha\geq\alpha(S)\equiv\sqrt{\frac{S^2}{4}-1}.
\end{equation}
For $S<2$, we take the lower bound of $\alpha\geq\alpha(S)\equiv0$. Therefore, we derive the lower bound on the correlation between $\hat{A}_\theta$ and $\hat{B}_\perp$, as given in Lemma~\ref{lemma:Correlation}.

\section{Finite-Size Analysis}\label{Supp:FiniteSize}
In Sec.~\ref{Supp:PhaseErrorProb}, we have established the relationship between the phase-error probability and the expected values of measurable observables in the experiment. While in practice, the expected values of observables cannot be accessed directly but only frequencies accumulated in a finite number of rounds. The key also has a finite length, hence the key privacy should be expressed in a way of frequency as well. In addition, statistical fluctuations and arbitrary correlations may exist over multiple rounds in the experiment. Here, we shall deal with these problems and derive an estimation of the key privacy in terms of observed statistics, as shown Figure \ref{Fig:FiniteSizeFlowchart}. This section is organised as follows.
\begin{enumerate}
  \item In Sec.~\ref{Sec:FiniteSizeProb}, we formulate the privacy estimation problem in an explicit optimisation. The key privacy is characterised by the number of probable phase-error patterns, or the cardinality of the smallest phase-error probable set (up to a negligible failure probability). We also introduce the necessary statistical tools for solving the problem, including sample entropy, which will be used to estimate the number of probable phase-error patterns, and martingale-based concentration inequalities, which are used for tackling the non-i.i.d.~condition.
  \item In Sec.~\ref{Sec:epRelax}, we introduce the regularised phase-error probability based on the single round analysis.      This subsection provides a technical preparation.
  \item Based on the ingredients introduced in Sec.~\ref{Sec:FiniteSizeProb} and~\ref{Sec:epRelax}, we define the (regularised) phase-error sample entropy in Sec.~\ref{Sec:PhaseErrorSample}. By constructing a corresponding martingale and using Azuma's concentration inequality, we show how the total phase-error sample entropy in an experiment can be related to the expected Bell value. This part is the essence of the finite-size analysis.
  \item In Sec.~\ref{Sec:BellValueEst}, we apply Azuma's concentration inequality to the estimation of the average expected Bell value in the experiment.
  \item In Sec.~\ref{Sec:KeyPrivacyEst}, we combine the results in Sec.~\ref{Sec:PhaseErrorSample} and~\ref{Sec:BellValueEst} to derive the final privacy estimation result with observed statistics in the experiment. This finishes the finite-size analysis.
  \item In Sec.~\ref{Sec:KatoResult}, we present a refined finite-size result using a modified version of Azuma's inequality, namely, Kato's inequality. We explain the necessary replacement and deliver the final result.
\end{enumerate}

\begin{figure}[hbtp!]
	\begin{tikzpicture}[
		scale=1,
		every text node part/.style={align=center},
		every rectangle node/.style={rounded corners},
		>=latex
		]
		\node[rectangle,draw,text width=6cm] (Obs) at (0,0) {Observed statistics in Bell-test rounds, $X_i,Y_i,A_i,B_i$};
		\node[rectangle,draw,text width=6cm] (ExpS) at ($ (Obs.south) + (0,-1cm) $) {Average expected Bell value};
		\draw[->] (Obs.south) -- (ExpS.north) node[midway,left] {Concentration inequality} node[midway,right] {Lemma~\ref{Lemma:BellValueBound}, Corollary~\ref{Corollary:EstBellValue}};
		\node[rectangle,draw,text width=6cm] (Sam) at ($ (ExpS.south) + (0,-1cm) $) {Expected $\xi$-regularised phase-error sample entropy};
		\draw[->] (ExpS.south) -- (Sam.north) node[midway,left] {Bell value $\Rightarrow$ phase error} ;
		\node[rectangle,draw,text width=6cm] (USam) at ($ (Sam.south) + (0,-1cm) $) {Upper bound on $\xi$-regularised phase-error sample entropy};
		\draw[->] (Sam.south) -- (USam.north) node[midway,left] {Concentration inequality} node[midway,right] {Lemma~\ref{Lemma:SmoothProperty},~\ref{Lemma:MartingaleConstruct},~\ref{Lemma:SampleEntropyBound}};
		\node[rectangle,draw,text width=6cm,label=right:Theorem~\ref{thm:PrivacyAmpCost}] (Key) at ($ (USam.south) + (0,-1cm) $) {Key privacy (phase-error $\varepsilon$-smallest probable set cardinality)};
		\draw[->] (USam.south) -- (Key.north) node[midway,left] {Pattern number counting} node[midway,right] {Lemma~\ref{Lemma:PatternCount}};
		\node[rectangle,draw,text width=3.5cm] (Regu) at ($ (Sam.east) + (2.5,0) $) {$\xi$-regularised phase-error probability};
		\node[rectangle,draw,text width=3.5cm] (Single) at ($ (Regu.east) + (2.5,0) $) {Single-round phase-error probability};
		\draw[->] (Single.west) -- (Regu.east) node[midway,above=.4cm]{Definition~\ref{Def:RelaxedSampleEntropy}} (Regu.west) -- (Sam.east) node[midway,above=.4cm]{Definition~\ref{Def:SampleEntropy}};
		\node[rectangle,draw,text width=3.5cm] (SingleS) at (ExpS-|Single) {Single-round expected Bell value $S_{\lambda_i}$};
		\draw[->] (SingleS.south) -- (Single.north) node[midway,right] {Lemma \ref{lemma:Correlation}};
		\draw[->] (SingleS.west) -- (ExpS.east) node[midway,above] {Averaging over $i$ in Eq.~\eqref{eq:avgexpS}};
	\end{tikzpicture}
	\caption{The flowchart of the finite size analysis. To estimate the key privacy using the observed statistics in the Bell-test rounds, we take the following steps. (1) By using Azuma's inequality, the average expected Bell value over rounds can be estimated from the observed Bell test statistics in Lemma~\ref{Lemma:BellValueBound} and Corollary~\ref{Corollary:EstBellValue}. (2) In each single round, the expected Bell value can be linked with the phase-error probability in Lemma~\ref{lemma:Correlation}. (3) Based on the phase error, the statistics of its $\xi$-regularised sample entropy is constructed via Definition~\ref{Def:SampleEntropy} and \ref{Def:RelaxedSampleEntropy}. Combined with (1), the expected value of $\xi$-regularised phase-error sample entropy over all key generation rounds can be estimated. (4) By using Azuma's inequality one more time, an upper bound on the value of $\xi$-regularised phase-error sample entropy is derived via Lemma~\ref{Lemma:SmoothProperty},~\ref{Lemma:MartingaleConstruct},~\ref{Lemma:SampleEntropyBound}. (5) The $\xi$-regularised phase-error sample entropy can be used to upper bound the cardinality of the phase-error $\varepsilon$-smallest probable set via Lemma~\ref{Lemma:PatternCount}. This gives the key privacy estimation result in Theorem~\ref{thm:PrivacyAmpCost}.} \label{Fig:FiniteSizeFlowchart}
\end{figure}
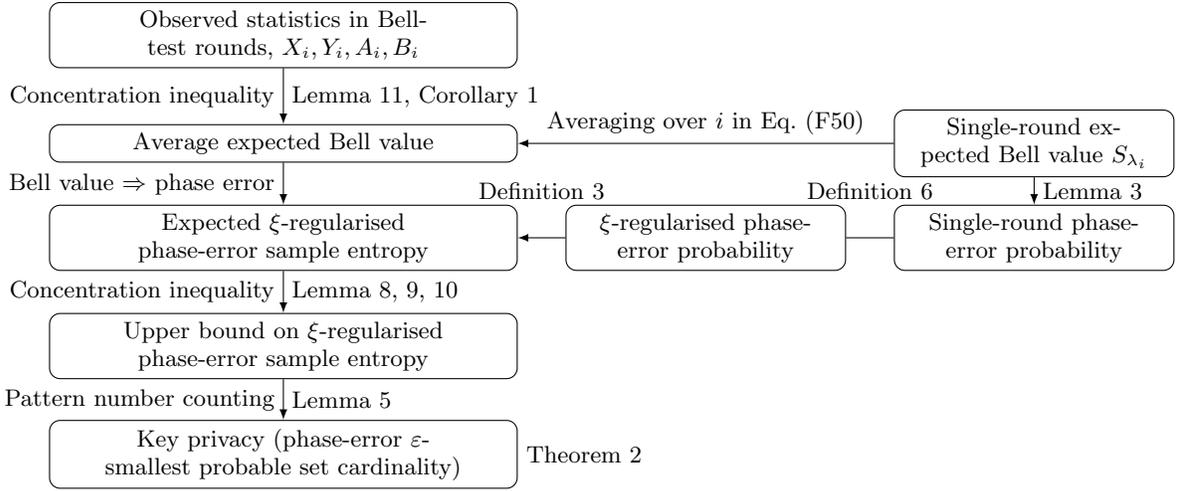

%

\subsection{Problem formulation and preliminaries}\label{Sec:FiniteSizeProb}
The target of QKD is essentially a key agreement problem.
The problem can be expressed as an error correction protocol based on the entanglement distillation protocol, where inconsistency of the keys and information leakage are characterised by bit error and phase error to be corrected, respectively~\cite{lo1999unconditional,shor2000simple}.
The objective is to determine the error pattern via a minimal amount of (bidirectional) communication.
Under this framework, the key privacy is determined by the amount of communication needed for this phase error correction.
The problem of key privacy estimation can be mathematically formulated into bounding the cardinality of the $\varepsilon$-smallest probable (typical) set of phase-error patterns. First, we give the definition of the $\varepsilon$-smallest probable set.

\begin{definition}[$\varepsilon$-Smallest Probable Set]
Given $\varepsilon\in[0,\frac{1}{2})$ and a random variable $X\in\mathcal{X}$ defined over a finite alphabet set $|\mathcal{X}|<\infty$ with a probability mass function $p_{X}$, the $\varepsilon$-smallest probable set of $X$ is defined as the smallest set of values $\mathcal{T}_X^\varepsilon$ such that a realisation of $X$ lies in it with a failure probability no larger than $\varepsilon$,
  \begin{equation}
  \begin{split}
  &\mathcal{T}_X^\varepsilon=\arg\min_{\mathcal{T}\subseteq\mathcal{X}}|\mathcal{T}|,\\
  &\text{s.t. }\Pr(X\in \mathcal{T})\geq 1-\varepsilon.
  \end{split}
  \end{equation}
\end{definition}



Now we formulate the problem of privacy estimation of the bits from the key generation rounds in DIQKD.

\begin{problem}\label{OriginalProblem}
In an $n$-round experiment, suppose a set of events $\{F_i = f_i\}_{i=1}^n$ is observed and $m$ rounds are randomly chosen for key generation $(X_{i_k},Y_{i_k}) = (2,0)$, where $i_k \in \{i_{1},i_{2},\cdots,i_{m}\}$. Given $\varepsilon_{pe}\in[0,\frac{1}{2})$ and the observed data $\{f_i\}$, upper bound the cardinality of the $\varepsilon_{pe}$-smallest probable set of the phase-error patterns $\vec{{E}}_{p}^{\lambda_{i_k}} = \left(E_{p}^{\lambda_{i_1}},\cdots,E_{p}^{\lambda_{i_{m(2,0)}}}\right)$.
\end{problem}


After the cardinality of the $\varepsilon_{pe}$-smallest probable set of the phase-error random variables is determined, a privacy amplification procedure can be efficiently carried out by using the family of two-universal hashing functions, corresponding to the phase error correction in an entanglement distillation protocol~\cite{lo1999unconditional,shor2000simple}. Given a fixed failure probability of phase error correction $\varepsilon_{pc}\in(0,1)$, the cost for privacy amplification is the logarithm of the $\varepsilon_{pe}$-smallest set cardinality $\left|\mathcal{T}_{\vec{{E}}_{p}^{\lambda_{i_k}}}^{\varepsilon_{pe}}\right|$ plus a constant of $-\log\varepsilon_{pc}$~\cite{fung2010practical}.
We emphasize that $\varepsilon_{pe}$ and $\varepsilon_{pc}$ are two different failure probabilities, corresponding to privacy estimation and phase error correction, respectively. In the following we shall focus on the first term and add the second one in the end. Then, the privacy estimation problem becomes the following one,
\begin{equation}\label{Eq:PrivacyEstProb}
\begin{split}
I_{pa} &= \sup_{\Pr{\vec{{E}}_{p}^{\lambda_{i}}}_i, \,\Lambda_i = \lambda_i} \log \left| \mathcal{T}_{\vec{{E}}_{p}^{\lambda_{i_k}}}^{\varepsilon_{pe}}\right| -\log\varepsilon_{pc}, \\
\text{s.t. }F_i &= f_i.
\end{split}
\end{equation}
The optimisation takes over all possible joint probability distributions of $\vec{{E}}_{p}^{\lambda_{i}}$ and system variables values $\Lambda_i=\lambda_i$ in each round that are compatible with the observed statistics, $F_i = f_i$. In the sequential DIQKD protocol, the relationship between the probability a phase error occurs and the expected Bell value in a round conditioned on the past events, as given by Eq.~\eqref{Eq:SuppPhaseError}, can be utilised for analysis.

In practice, a good upper bound on the solution to the optimisation Eq.~\eqref{Eq:PrivacyEstProb} suffices for privacy estimation.
A common method in QKD security analysis is to estimate the total number of phase errors, which can be used to derive an upper bound on the privacy amplification cost. In the DIQKD protocol, an upper bound on the total number of phase errors can be derived from the relationship between phase-error probability and expected Bell value in each round given by Eq.~\eqref{Eq:SuppPhaseError}. However, under a fixed value of the system variable $\Lambda_i=\lambda_i$, the phase-error probability is convex in the expected Bell value. Since the system variables values cannot be accessed and only the average Bell value over rounds can be estimated from observed statistics, in the worst case analysis, one needs to take the function concave closure of Eq.~\eqref{Eq:SuppPhaseError} for the estimation of the total number of phase errors. To help understand this point, we provide an example which can actually happen. Suppose in an experiment, all the quantum devices and the quantum systems are first implemented according to the ideal setting, which gives the expected Bell value of $2\sqrt{2}$ in every round. After one half of the rounds in the experiment, suppose the measurement devices break down and give the same output of $1$ irrespective of the measurement bases settings. We would expect that the first half of the rounds behave ideally and the users should obtain perfectly secure key bits in these rounds. In average, the users should obtain about $1/2$ bit of secure key in every key generation round. While if the users evaluate the phase-error rate over the rounds, they shall determine a phase-error rate of approximately $1/4$. When using this parameter for key privacy estimation, the average information leakage would be approximately $h(1/4)$ bit in every key generation round~\cite{fung2010practical}.
Though the result is tight in the number of phase errors, it does not appear to reach the optimal privacy estimation result. The convexity problem is not removed if one wishes to average the possible values of the system variable in a round and derive a $\Lambda_i$-independent phase-error probability.


Instead of estimating the number of phase errors, we shall take advantage of the concept of sample entropy, a notion originally appeared in classical Shannon theory~\cite{shannon1948mathematical} (see, Chapter~3 of Ref.~\cite{cover2012elements} and Chapter 14.2 of Ref.~\cite{wilde2011classical}).

\begin{definition}[Sample Entropy]
Given a random variable $X\in\mathcal{X}$ with $|\mathcal{X}|<\infty$ and a probability mass function $p_X$, the sample entropy of a realisation $X=x$ is defined as
\begin{equation}\label{Eq:SampleEntropy}
     \bar{H}(x) =
        -\log[p_X(x)].
\end{equation}
\label{Def:SampleEntropy}
\end{definition}

The sample entropy can be understood as another random variable derived from $X$ with the same probability mass function $p_X$, which we denote as $\bar{H}(X)$. The name `sample entropy' originates from the fact that the expected value of $\bar{H}(X)$ equals the Shannon entropy of the random variable $X$,
\begin{equation}
\begin{split}
  \mathbb{E}_{p_X}[\bar{H}(X)] &= \sum_{x\in\mathcal{X}}p_X(x)\{-\log[p_X(x)]\} \\
  &=H(X).
\end{split}
\end{equation}
In this work, we mainly deal with binary random variables. Consider $X\in\{0,1\}$ with $\Pr{X=1}=p\in(0,1)$, the Shannon entropy of $X$ equals to
\begin{equation}
\begin{split}
  H(X) &= \sum_{x\in\{0,1\}}p_X(x)(-\log[p_X(x)]) \\
  &=-(1-p)\log(1-p) - p\log p \\
  &= h(p).
\end{split}
\end{equation}
For two parameters $p$ and $\theta$ such that $0\leq p<p+\theta\leq 1/2$, the following inequality holds,
\begin{equation} \label{eq:binaryEntOpt}
\begin{split}
h(p)<-p\log(p+\theta) - (1-p)\log[(1-p)-\theta]<h(p+\theta).
\end{split}
\end{equation}
The proof is rather straightforward and hence is skipped here.

The number of possible realisations of a random variable can be estimated from upper-bounding the sample entropy. Here, we present it as the following lemma.
\begin{lemma}
Given $n$ bits $x_1,\cdots,x_n\in\mathcal{X}=\{0,1\}$, $n$ constants $t_1,\cdots,t_n\in(0,1)$, and $C_U>0$, define the set $\mathcal{T}\subseteq\mathcal{X}^n$
\begin{equation}
\mathcal{T} \equiv \left\{(x_1,\cdots,x_n)\big|-\sum_{i=1}^n\log[q(x_i)] \leq C_U\right\},
\end{equation}
where
\begin{eqnarray}
q(x_i) \equiv
\left\{
\begin{tabular}{ll} $1-t_i$, & $x_i=0$, \\
$t_i$, & $x_i=1$.
\end{tabular}
\right.
\end{eqnarray}
  The cardinality of $\mathcal{T}$ can be upper bounded,
  \begin{equation} \label{eq:cardCU}
    |\mathcal{T}|\leq 2^{C_U}.
  \end{equation}
\label{Lemma:PatternCount}
\end{lemma}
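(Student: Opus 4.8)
The plan is to recognise $q$ as defining a product probability distribution over the binary strings, which reduces the claim to the standard typical-set cardinality bound. Concretely, for each index $i$ the two values $q(x_i=0)=1-t_i$ and $q(x_i=1)=t_i$ sum to one, so $q(\cdot)$ is precisely the probability mass function of a Bernoulli random variable with parameter $t_i\in(0,1)$. I would therefore introduce the product weight $P(x_1,\cdots,x_n)\equiv\prod_{i=1}^n q(x_i)$ and note that it is a genuine probability mass function on $\mathcal{X}^n=\{0,1\}^n$, so that $\sum_{(x_1,\cdots,x_n)\in\{0,1\}^n}P(x_1,\cdots,x_n)=1$. This is exactly the non-i.i.d.\ analogue of the weight appearing in the definition of sample entropy in Definition~\ref{Def:SampleEntropy}.

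First I would rewrite the defining inequality of $\mathcal{T}$ in terms of $P$. Because $-\sum_{i=1}^n\log[q(x_i)]=-\log\prod_{i=1}^n q(x_i)=-\log P(x_1,\cdots,x_n)$, and $\log$ is base two throughout the paper, the condition $-\sum_{i}\log[q(x_i)]\leq C_U$ is equivalent to $\log P(x_1,\cdots,x_n)\geq -C_U$, that is, $P(x_1,\cdots,x_n)\geq 2^{-C_U}$. Hence $\mathcal{T}$ consists exactly of those strings whose product weight is at least $2^{-C_U}$.

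Then the cardinality bound follows from a single use of the fact that the total weight is at most one: summing $P$ over $\mathcal{T}$ and lower-bounding each term by $2^{-C_U}$ gives $|\mathcal{T}|\cdot 2^{-C_U}\leq\sum_{(x_1,\cdots,x_n)\in\mathcal{T}}P(x_1,\cdots,x_n)\leq\sum_{(x_1,\cdots,x_n)\in\{0,1\}^n}P(x_1,\cdots,x_n)=1$, and rearranging yields the claimed $|\mathcal{T}|\leq 2^{C_U}$ of Eq.~\eqref{eq:cardCU}.

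There is essentially no hard step here; the proof is the counting bound underlying the asymptotic equipartition property, specialised to the product (but not identical) Bernoulli weights $q(x_i)$ demanded by the non-i.i.d.\ phase-error setting. The only points requiring care are the direction of the inequality when exponentiating the logarithm --- a larger value of the sample-entropy-type quantity $-\log P$ corresponds to a \emph{smaller} weight $P$ --- and the base-two convention for $\log$, which is fixed earlier in the paper precisely so that the threshold $2^{-C_U}$ and the final factor $2^{C_U}$ appear without spurious constants.
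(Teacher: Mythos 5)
Your proposal is correct and is essentially the same argument as the paper's: the paper's identity $1=\sum_{(x_1,\cdots,x_n)\in\{0,1\}^n}2^{\sum_i\log[q(x_i)]}$ is exactly your normalisation $\sum P=1$ for the product weight $P=\prod_i q(x_i)$, and both proofs then restrict the sum to $\mathcal{T}$, lower-bound each term by $2^{-C_U}$, and rearrange. Your only addition is making the probabilistic (product-Bernoulli) interpretation explicit, which the paper merely remarks on after its proof.
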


\begin{proof}
By the fact that,
\begin{equation}
    1 = \sum_{(x_1,\cdots,x_n)\in\{0,1\}^n} 2^{\sum_{i=1}^n \log[q(x_i)]},
\end{equation}
and $\mathcal{T}\subseteq\{0,1\}^n$, we have
  \begin{equation}
  \begin{split}
    1 &= \sum_{(x_1,\cdots,x_n)\in\{0,1\}^n} 2^{\sum_{i=1}^n \log[q(x_i)]} \\
    &\geq \sum_{(x_1,\cdots,x_n)\in\mathcal{T}} 2^{\sum_{i=1}^n \log[q(x_i)]} \\
    &\geq \sum_{(x_1,\cdots,x_n)\in\mathcal{T}} 2^{-C_U} \\
    &=|\mathcal{T}|2^{-C_U}.
  \end{split}
  \end{equation}
Therefore, Eq.~\eqref{eq:cardCU} holds.
\end{proof}

We remark that the function $q(x_i)$ can be understood as the underlying probability corresponding to a random variable. Then, $-\log[q(x_i)]$ becomes the sample entropy of the realization $x_i$. The lemma holds for the general case of non-i.i.d.~random variables, which is later used to tackle the non-i.i.d.~condition in the finite-size analysis with Azuma's concentration inequality.

\begin{definition}[Filtration]
Let $(\Omega,\mathcal{A},P)$ be a probability triplet. For every natural number $i\in \mathbb{N}$, let $\mathcal{F}_i$ be a sub-$\sigma$-algebra of $\mathcal{A}$. Then the set
\begin{equation}
  \mathbb{F} \equiv \{\mathcal{F}_i\}_{i\in\mathbb{N}}
\end{equation}
is called a filtration if $\mathcal{F}_k\subseteq\mathcal{F}_l,\,\forall k\leq l$.
\end{definition}

\begin{definition}[Martingale]
Let $(\Omega,\mathcal{A},P)$ be a probability triplet and let $\mathbb{F} = \{\mathcal{F}_i\}_{i\in\mathbb{N}}$ be a filtration. For a sequence of random variables $X_0,X_1,X_2,\cdots$ such that $\forall i$, $X_i$ is an $\mathcal{F}_i$-measurable function, the sequence is called a martingale with respect to the filtration $\mathbb{F}$ if $\forall i$,
\begin{equation}
\begin{split}
\mathbb{E}(|X_i|) &<\infty, \\
\mathbb{E}(X_{i+1}|\mathcal{F}_i) &= X_i.
\end{split}
\end{equation}
\end{definition}

\begin{lemma}[Azuma's Inequality]
Given a martingale $\{X_0,X_1,\cdots\}$ with respect to filtration $\mathbb{F}=\{\mathcal{F}_i\}_{i\in\mathbb{N}}$ and finite constants $ c_i\in[0,\infty),i\in\mathbb{N}^+$, if there exist two sets of predictable processes $\{A_i\}_{i\in\mathbb{N}^+}$ and $\{B_i\}_{i\in\mathbb{N}^+}$ where $A_i$ and $B_i$ are $\mathcal{F}_{i-1}$-measurable functions satisfying,
\begin{equation} \label{eq:boundedmartingale}
  \begin{split}
    A_t\leq X_t-X_{t-1}\leq B_t, \\
    B_t - A_t \leq c_t,
  \end{split}
\end{equation}
then for all $\delta>0$ and $n\in\mathbb{N}^+$,
  \begin{equation}
    \Pr(X_n - X_0\geq\delta)\leq \exp\left(-\dfrac{2\delta^2}{\sum_{t=1}^n c_t^2}\right).
  \end{equation}
  \label{Lemma:Azuma}
\end{lemma}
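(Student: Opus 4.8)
The plan is to use the standard exponential-moment (Chernoff) method combined with iterated conditioning on the filtration. First I would fix a free parameter $s>0$ and apply Markov's inequality to the exponentiated deviation,
\begin{equation}
\Pr(X_n-X_0\geq\delta)=\Pr\!\left(e^{s(X_n-X_0)}\geq e^{s\delta}\right)\leq e^{-s\delta}\,\mathbb{E}\!\left[e^{s(X_n-X_0)}\right],
\end{equation}
thereby reducing the problem to controlling the moment generating function of the total deviation. Writing $X_n-X_0=\sum_{t=1}^n D_t$ with martingale differences $D_t\equiv X_t-X_{t-1}$, I would peel off the last term by conditioning on $\mathcal{F}_{n-1}$, using that $e^{s(X_{n-1}-X_0)}$ is $\mathcal{F}_{n-1}$-measurable,
\begin{equation}
\mathbb{E}\!\left[e^{s(X_n-X_0)}\right]=\mathbb{E}\!\left[e^{s(X_{n-1}-X_0)}\,\mathbb{E}\!\left(e^{sD_n}\,\middle|\,\mathcal{F}_{n-1}\right)\right].
\end{equation}

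The key step, and the main obstacle, is bounding the inner conditional expectation $\mathbb{E}(e^{sD_n}\mid\mathcal{F}_{n-1})$. Here the martingale property gives $\mathbb{E}(D_n\mid\mathcal{F}_{n-1})=0$, and the predictability assumption is essential: since $A_n$ and $B_n$ are $\mathcal{F}_{n-1}$-measurable, conditioned on $\mathcal{F}_{n-1}$ they act as constants, so $D_n$ is a centred random variable confined to an interval of length $B_n-A_n\leq c_n$. I would then invoke Hoeffding's lemma, which states that any mean-zero random variable $Y$ supported on $[a,b]$ satisfies $\mathbb{E}[e^{sY}]\leq e^{s^2(b-a)^2/8}$; this is proved by bounding $e^{sy}$ above by the chord joining its endpoint values (using convexity) and optimising the resulting expression. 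Applied conditionally, this yields
\begin{equation}
\mathbb{E}\!\left(e^{sD_n}\,\middle|\,\mathcal{F}_{n-1}\right)\leq e^{s^2(B_n-A_n)^2/8}\leq e^{s^2c_n^2/8}.
\end{equation}

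Substituting this bound and iterating the conditioning argument down to $X_0$ collapses the product into $\mathbb{E}[e^{s(X_n-X_0)}]\leq\exp\!\left(s^2\sum_{t=1}^n c_t^2/8\right)$. Combining with the Markov step gives $\Pr(X_n-X_0\geq\delta)\leq\exp\!\left(-s\delta+s^2\sum_{t=1}^n c_t^2/8\right)$ for every $s>0$. Finally I would optimise the exponent over $s$: the quadratic $-s\delta+s^2\sum_t c_t^2/8$ is minimised at $s=4\delta/\sum_t c_t^2$, and substituting this value yields the exponent $-2\delta^2/\sum_{t=1}^n c_t^2$, which is exactly the claimed bound. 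The only subtlety requiring care is ensuring the predictability of $A_t,B_t$ is used correctly so that Hoeffding's lemma applies conditionally round by round; the remaining steps are routine.
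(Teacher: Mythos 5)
Your proposal is correct: the Chernoff (exponential-moment) bound, the peeling-off of martingale differences by conditioning on $\mathcal{F}_{n-1}$, the conditional application of Hoeffding's lemma (valid precisely because $A_t,B_t$ are predictable and $\mathbb{E}(D_t\mid\mathcal{F}_{t-1})=0$), and the optimisation $s=4\delta/\sum_t c_t^2$ together yield exactly the claimed exponent $-2\delta^2/\sum_{t=1}^n c_t^2$. The paper itself states this lemma without proof, citing the original literature, so there is no in-paper argument to compare against; your derivation is the canonical proof of the Azuma--Hoeffding inequality and fills that omission correctly.
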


In essence, Azuma's inequality relates the expected values with frequencies. Together with the single-round result in Lemma~\ref{lemma:Correlation}, we take the procedure shown in Figure \ref{Fig:FiniteSizeFlowchart} to derive an upper bound to Problem~\ref{OriginalProblem}.

Because Azuma's inequality is relatively simple in its expression and well-known in the literature, we shall mainly explain how our finite-size analysis works based on this inequality. While for a tighter key rate, one can apply refined versions of Azuma's inequality. In our work, we consider the use of recently developed Kato's inequality~\cite{kato2020concentration}.

\begin{lemma}[Kato's Inequality~\cite{kato2020concentration}]
Let $\{X_i\}$ be a list of random variables, and $\{\mathcal{F}_i\}$ be a filtration that identifies random variables, including those variables $\{X_1,\cdots,X_i\}$, i.e., $\mathcal{F}_i$ is a $\sigma$-algebra that satisfies $\mathcal{F}_i\subseteq\mathcal{F}_{i+1}$ and $E(X_{i'}|\mathcal{F}_i)=X_{i'}$ for $i'\leq i$. Suppose that the relation $0\leq X_i\leq1$ holds for any $i$. In this case, for any $n\in\mathbb{N},\delta\in[0,\infty)$ and $\delta'\in\mathbb{R}$, the following inequality holds,
  \begin{equation}\label{eq:Kato}
    \Pr{\sum_{i=1}^n \frac{\mathbb{E}(X_i|\mathcal{F}_{i-1})}{n} \geq(1+2\delta')\sum_{i=1}^n\frac{X_i}{n}+\delta-\delta'}\leq \exp{-\frac{2n(\delta^2-\delta'^2)}{\left(1+\frac{4\delta'}{3}\right)^2}}.
  \end{equation}
\label{Lemma:Kato}
\end{lemma}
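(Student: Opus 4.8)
The plan is to prove Lemma~\ref{Lemma:Kato} by the exponential-martingale (Chernoff) method, the same engine that underlies Azuma's inequality in Lemma~\ref{Lemma:Azuma}, but with an exponential tilt tuned to exploit the observed values rather than only their worst-case range. First I would rewrite the event in terms of a single predictable sum. Setting $p_i \equiv \mathbb{E}(X_i|\mathcal{F}_{i-1})$ and $Z_i \equiv p_i - (1+2\delta')X_i$, the event whose probability must be bounded is exactly $\{\sum_{i=1}^n Z_i \geq n(\delta-\delta')\}$, obtained by clearing the factor $1/n$ from the statement. Since $p_i$ is $\mathcal{F}_{i-1}$-measurable and $X_i \in [0,1]$, each increment $Z_i$ is bounded and has the predictable-plus-observable structure required for the exponential method.

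Second, for any tilt $t > 0$ I would apply the exponential Markov inequality,
\begin{equation}
\Pr\left[\sum_{i=1}^n Z_i \geq n(\delta-\delta')\right] \leq e^{-tn(\delta-\delta')}\,\mathbb{E}\left[e^{t\sum_{i=1}^n Z_i}\right],
\end{equation}
and peel off the terms one at a time with the tower property, $\mathbb{E}[e^{t\sum_i Z_i}] = \mathbb{E}[e^{t\sum_{i<n}Z_i}\,\mathbb{E}(e^{tZ_n}|\mathcal{F}_{n-1})]$. The key per-step estimate is the conditional moment generating function $\mathbb{E}(e^{tZ_i}|\mathcal{F}_{i-1}) = e^{tp_i}\,\mathbb{E}(e^{-t(1+2\delta')X_i}|\mathcal{F}_{i-1})$. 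Using convexity of $x\mapsto e^{sx}$ on $[0,1]$, namely $e^{sx} \leq 1 + x(e^s-1)$ with $s = -t(1+2\delta')$, and taking the conditional expectation gives the deterministic bound $e^{tp_i}\,[1 + p_i(e^{s}-1)]$, a function of $p_i$ alone once $t$ and $\delta'$ are fixed.

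Third, I would remove the dependence on the unknown $p_i$ by maximizing $\phi(p) \equiv tp + \log[1 + p(e^{s}-1)]$ over $p \in [0,1]$; the maximizer $p^\star$ follows from $\phi'(p)=0$. Substituting $p^\star$ back yields a per-step bound $\mathbb{E}(e^{tZ_i}|\mathcal{F}_{i-1}) \leq e^{c(t,\delta')}$ uniform in $i$, so iterating the tower property gives $\mathbb{E}[e^{t\sum Z_i}] \leq e^{nc(t,\delta')}$ and hence $\Pr[\cdots] \leq \exp[-tn(\delta-\delta') + nc(t,\delta')]$. Optimizing the combined exponent over the tilt $t$ and simplifying is what should produce the closed form $\exp[-2n(\delta^2-\delta'^2)/(1+4\delta'/3)^2]$.

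I expect this final step to be the main obstacle. The bare convexity bound alone reproduces only an Azuma-type estimate; obtaining the sharper $\delta^2-\delta'^2$ numerator together with the $(1+4\delta'/3)^2$ denominator requires controlling the second-order (variance) contribution in $\phi$ carefully, essentially a quadratic-with-cubic-correction over-estimate of $\log[1+p(e^{s}-1)]$ that stays valid across the whole range $p\in[0,1]$, and then carrying out the joint optimization in $p$ and $t$ so that the cubic term $4\delta'/3$ surfaces. Keeping these two nested optimizations tight rather than bounding crudely is the delicate part; this is precisely the technical content attributed to Ref.~\cite{kato2020concentration}, whose parametrization I would follow to push the algebra through.
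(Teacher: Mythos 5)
The paper does not prove Lemma~\ref{Lemma:Kato} at all: it is imported verbatim from Ref.~\cite{kato2020concentration} and used as a black box (only the substitution $X_i\rightarrow 1-X_i$ and the subsequent optimisation of $\delta,\delta'$ are the paper's own contributions). So there is no internal proof to compare against; the only question is whether your sketch would stand on its own as a proof of the cited result.

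Your overall strategy --- rewrite the event as $\{\sum_i Z_i\ge n(\delta-\delta')\}$ with $Z_i=p_i-(1+2\delta')X_i$, apply the exponential Markov inequality, peel off terms with the tower property, and bound the conditional MGF via $e^{sx}\le 1+x(e^s-1)$ on $[0,1]$ --- is indeed the engine behind Kato's proof, so the approach is not wrong. But as written the proposal has a genuine gap, and it sits exactly where you flag it: everything that distinguishes Kato's bound from Azuma's is contained in the step you defer to the reference. Two concrete problems. First, your step of maximising $\phi(p)=tp+\log[1+p(e^s-1)]$ over all $p\in[0,1]$ \emph{independently of the target deviation} is too lossy: the resulting uniform per-step constant $c(t,\delta')$ does not obviously satisfy $\min_t[-t(\delta-\delta')+c(t,\delta')]<0$, and in Kato's argument the deviation is instead folded into a per-step supermartingale condition $\mathbb{E}[e^{t(Z_i-(\delta-\delta'))}|\mathcal{F}_{i-1}]\le 1$, which couples the choice of $t$ to $\delta,\delta'$ from the start. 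Second, the specific exponent $2n(\delta^2-\delta'^2)/(1+4\delta'/3)^2$ requires a Bernstein-type estimate of the log-MGF (a quadratic term with a cubic correction, whence the $4\delta'/3$), and nothing in the sketch derives or even constrains that constant; one cannot check from your argument that the numerator is $\delta^2-\delta'^2$ rather than, say, $(\delta-\delta')^2$. Since the lemma is anyway taken from the literature, the honest resolution is the one the paper adopts: cite Ref.~\cite{kato2020concentration} for the proof rather than reprove it, or else carry the two nested optimisations through explicitly.
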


When applying Kato's inequality in our analysis, we shall take the following substitution in Eq.~\eqref{eq:Kato},
\begin{equation}
  X_i\rightarrow1-X_i,
\end{equation}
and use the following concentration result,
\begin{equation}\label{eq:reverseKato}
  \Pr{\sum_{i=1}^n \frac{\mathbb{E}(X_i|\mathcal{F}_{i-1})}{n} \leq(1+2\delta')\sum_{i=1}^n\frac{X_i}{n}-\delta-\delta'}\leq \exp{-\frac{2n(\delta^2-\delta'^2)}{\left(1+\frac{4\delta'}{3}\right)^2}}.
\end{equation}

If one has empirical knowledge of the behaviour of random variables, then one can optimise the values of $\delta$ and $\delta'$ to obtain better concentration results. To be specific, given failure probability $\varepsilon$ and the number of rounds $n$, if one has an empirical estimation of the random variables $\{X_i\}$, where $\sum_{i=1}^n\frac{X_i}{n}$ shall be close to $\tilde{X}$, then the following optimisation gives the optimal values of $\delta$ and $\delta'$ in Eq.~\eqref{eq:reverseKato},
\begin{equation}\label{Eq:MaxOpt}
\begin{split}		\arg\max_{\delta,\delta'}&\left(1+2\delta'\right)\tilde{X}-\delta-\delta', \\
\text{s.t. }\varepsilon&=\exp{-\frac{2n(\delta^2-\delta'^2)}{\left(1+\frac{4\delta'}{3}\right)^2}}, \\
\delta&>0.
\end{split}
\end{equation}
Similarly, if one has empirical estimation of the conditional expectation variables $\{\mathbb{E}(X_i|\mathcal{F}_{i-1})\}$, where $\sum_{i=1}^n \frac{\mathbb{E}(X_i|\mathcal{F}_{i-1})}{n}$ shall be close to $\tilde{\chi}$, then the following optimisation gives the optimal values of $\delta$ and $\delta'$ in Eq.~\eqref{eq:reverseKato},
\begin{equation}\label{Eq:MinOpt}
\begin{split}		\arg\min_{\delta,\delta'}&\frac{\tilde{\chi}+\delta+\delta'}{1+2\delta'}, \\
\text{s.t. }\varepsilon&=\exp{-\frac{2n(\delta^2-\delta'^2)}{\left(1+\frac{4\delta'}{3}\right)^2}}, \\
\delta&>0.
\end{split}
\end{equation}
The solution to Eq.~\eqref{Eq:MaxOpt} is given by the following function,
\begin{equation}
  \delta'^*=\dfrac{3\left(-16t^2+72t\tilde{X}-72t\tilde{X}^2-9\sqrt{2}\sqrt{2t^2-9t\tilde{X}-8t^2\tilde{X}+45t\tilde{X}^2+8t^2\tilde{X}^2-72t\tilde{X}^3+36t\tilde{X}^4}\right)}{4(-18t+16t^2+81\tilde{X}-72t\tilde{X}-81\tilde{X}^2+72t\tilde{X}^2)},
\end{equation}
and the solution to Eq.~\eqref{Eq:MinOpt} is given by the following function,
\begin{equation}
  \delta'^*=\dfrac{3\left(-27t-4t^2+72t\tilde{\chi}-72t\tilde{\chi}^2-9\sqrt{t^2-18t\tilde{\chi}-4t^2\tilde{\chi}+90t\tilde{\chi}^2+4t^2\tilde{\chi}^2-144t\tilde{\chi}^3+72t\tilde{\chi}^4}\right)}{4(36t+4t^2+81\tilde{\chi}-72t\tilde{\chi}-81\tilde{\chi}^2+72t\tilde{\chi}^2)},
\end{equation}
where we denote
\begin{equation}
  t=\frac{\ln\varepsilon}{n}.
\end{equation}
The optimal values of $\delta$ can be solved by the constraints in Eq.~\eqref{Eq:MaxOpt} and~\eqref{Eq:MinOpt}.

\subsection{Phase-error probability regularisation}\label{Sec:epRelax}
In the finite-size analysis, we shall convert the phase-error probability (as a function of the underlying Bell value given in Eq.~\eqref{Eq:SuppPhaseError}) to the sample entropy. A direct conversion would result in the divergence of the logarithm function in Eq.~\eqref{Eq:SampleEntropy} when the phase-error rate approaches zero. Then, the requirement of Eq.~\eqref{eq:boundedmartingale} cannot be satisfied. In order to solve this divergence problem, we employ a regularised version of the relation between the phase-error probability and the expected Bell value, following Eq.~\eqref{Eq:SuppPhaseError}.

\begin{definition}
Given $\xi\in(0,\frac{1}{2})$, the $\xi$-regularised phase-error probability function with respect to the expected Bell value is defined as
  \begin{eqnarray}
    e_{p}^{\xi}(S) &\equiv& \frac12\cdot\dfrac{e_{p}(S) + \xi}{\frac12+\xi} \nonumber \\ &=& \left\{
    \begin{tabular}{ll}
        $\dfrac{1 + 2\xi -\sqrt{(S/2)^2-1}}{2(1+2\xi)}$, &$2<S\leq2\sqrt{2},$\\
        $\dfrac{1}{2},$ &$0\leq S\leq2$,
        \end{tabular}
          \right.
\label{Eq:SmoothPhaseErrorProb}
\end{eqnarray}
where $e_{p}(S)$ is the phase-error probability function with respect to $S\in[0,2\sqrt{2}]$ given in Eq.~\eqref{Eq:SuppPhaseError}. Here, we use the fact that $\max_S{e_{p}(S)}=\frac{1}{2}$.
\label{Def:RelaxedSampleEntropy}
\end{definition}

Here are some properties of $e_{p}^{\xi}(S)$ that will be used for later analysis.

\begin{lemma}[Properties of $e_{p}^{\xi}(S)$]
Given $\xi\in(0,\frac{1}{2})$, the function $e_{p}^{\xi}(S)$ defined over the interval $S\in[0,2\sqrt{2}]$ in Eq.~\eqref{Eq:SmoothPhaseErrorProb} has the following properties:
\begin{enumerate}
\item Monotonicity: $e_{p}^{\xi}(S)$ is monotonically decreasing with respect to $S$.
\item Finite range strictly above $0$:
\begin{equation} \label{eq:xiephrange}
\begin{split}
e_{p}^{\xi}(S)\in\left[\dfrac{\xi}{1+2\xi},\dfrac{1}{2}\right].
\end{split}
\end{equation}
\item Upper bound of $e_{p}(S)$:
\begin{equation} \label{eq:xiephrelax}
\begin{split}
e_{p}^{\xi}(S)\geq e_{p}(S).
\end{split}
\end{equation}
\item
Concavity: $\forall S_1,S_2$,
\begin{equation} \label{eq:concaveepS}
\begin{split}
h\left[e_{p}^{\xi}\left(\dfrac{1}{2}(S_1+S_2)\right)\right]\geq \dfrac{1}{2}\left(h\left[e_{p}^{\xi}(S_1)\right] + h\left[e_{p}^{\xi}(S_2)\right]\right).
\end{split}
\end{equation}
\end{enumerate}
\label{Lemma:SmoothProperty}
\end{lemma}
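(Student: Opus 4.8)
The plan is to handle the four properties in order of difficulty, the organizing observation being that $e_p^\xi$ is an increasing affine transform of $e_p$, since $e_p^\xi(S)=\frac{e_p(S)+\xi}{1+2\xi}$ by Definition~\ref{Def:RelaxedSampleEntropy}. For monotonicity I would read off from Eq.~\eqref{Eq:SuppPhaseError} that $e_p$ equals the constant $\tfrac12$ on $[0,2]$ and $\frac{1-\sqrt{(S/2)^2-1}}{2}$ on $(2,2\sqrt2]$, where $\sqrt{(S/2)^2-1}$ is increasing; hence $e_p$ is continuous at $S=2$ and non-increasing, and so is its affine image $e_p^\xi$. The range in Eq.~\eqref{eq:xiephrange} then follows by feeding the extreme values $e_p\in\{0,\tfrac12\}$ (attained at $S=2\sqrt2$ and $S\le2$, respectively) through the affine map. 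For Eq.~\eqref{eq:xiephrelax} I would simply compute $e_p^\xi(S)-e_p(S)=\frac{\xi(1-2e_p(S))}{1+2\xi}$, which is nonnegative because $e_p(S)\le\tfrac12$.

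The concavity, Eq.~\eqref{eq:concaveepS}, is the substantive step and the main obstacle. It suffices to prove that $g(S)\equiv h[e_p^\xi(S)]$ is concave on $[0,2\sqrt2]$, of which the stated midpoint inequality is a special case. On $[0,2]$ one has $g\equiv h(\tfrac12)=1$, constant. The difficulty lives on the smooth branch $(2,2\sqrt2)$, where I would substitute $w\equiv\sqrt{(S/2)^2-1}\in(0,1)$, so that $e_p^\xi=\tfrac12-\frac{w}{c}$ with $c\equiv2(1+2\xi)$ is affine in $w$, while $w$ is itself a \emph{concave} function of $S$ (indeed $w''=-\tfrac{1}{4w^3}<0$). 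The obstruction is that $g$ is a concave--increasing function ($h$) composed with a convex--decreasing one ($e_p^\xi$), a combination for which no elementary composition rule fixes the sign, so one must examine $g''$ directly.

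Writing $u=e_p^\xi$ and $g''=h''(u)(u')^2+h'(u)u''$, the first term is negative ($h''<0$) while the second is positive ($h'\ge0$ since $u\le\tfrac12$, and $u''=\tfrac{1}{4cw^3}>0$), so the sign is genuinely in doubt. Using the identities $S^2=4(w^2+1)$ and $1-2u=\tfrac{2w}{c}$, clearing the positive factors reduces $g''\le0$ to
\[
\ln\frac{1-u}{u}\le\frac{w(w^2+1)}{c\,u(1-u)}.
\]
Here I would invoke the elementary bound $\ln\frac{1-u}{u}\le\frac{1-2u}{2u(1-u)}$, valid for $u\in(0,\tfrac12]$, which one proves by checking that the difference vanishes at $u=\tfrac12$ and has derivative $-\frac{(2u-1)^2}{2u^2(1-u)^2}\le0$. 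Substituting $1-2u=\tfrac{2w}{c}$ turns its right-hand side into $\frac{w}{c\,u(1-u)}$, so the target collapses to $w(w^2+1)\ge w$, i.e.\ $w^2\ge0$, which is trivially true. This yields $g''\le0$ on $(2,2\sqrt2)$.

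Finally I would glue the branches: $g$ is constant (hence concave) on $[0,2]$, concave on $[2,2\sqrt2]$, and continuous at $S=2$, where the left derivative is $0$ and the right derivative is $\le0$ (in the limit $-2/c^2$). Since $g'_-(2)\ge g'_+(2)$, the function $g$ is concave on all of $[0,2\sqrt2]$, establishing Eq.~\eqref{eq:concaveepS}. The crux throughout is the single logarithmic inequality displayed above; once it is in hand, every other step is routine algebra.
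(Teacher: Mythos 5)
Your proof is correct and follows the same route the paper indicates: the paper dismisses items 1--3 as straightforward and states that concavity ``can be proved by taking the second derivative of $S$'', which is exactly your computation, carried out in full detail (including the key logarithmic bound $\ln\frac{1-u}{u}\le\frac{1-2u}{2u(1-u)}$ for $u\in(0,\tfrac12]$ and the gluing of the two branches at $S=2$, both of which the paper leaves implicit). One negligible nit: with the paper's base-2 binary entropy the right derivative at $S=2$ tends to $-2/(c^2\ln 2)$ rather than $-2/c^2$, but only its sign matters for your gluing argument, so the conclusion stands.
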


The proofs are rather straightforward. Eq.~\eqref{eq:concaveepS} can be proved by taking the second derivative of $S$. The last property that $h\left[e_{p}^{\xi}(S)\right]$ is concave in $S$ shall play a vital role in Lemma \ref{Lemma:SampleEntropyBound}. In Figure \ref{Fig:RelaxedSampleEnt}, we plot the function over the interval $S\in[2,2\sqrt{2}]$ for $\xi=1\times10^{-7}$.

\begin{figure}[htbp]
\centering \includegraphics[width=8cm]{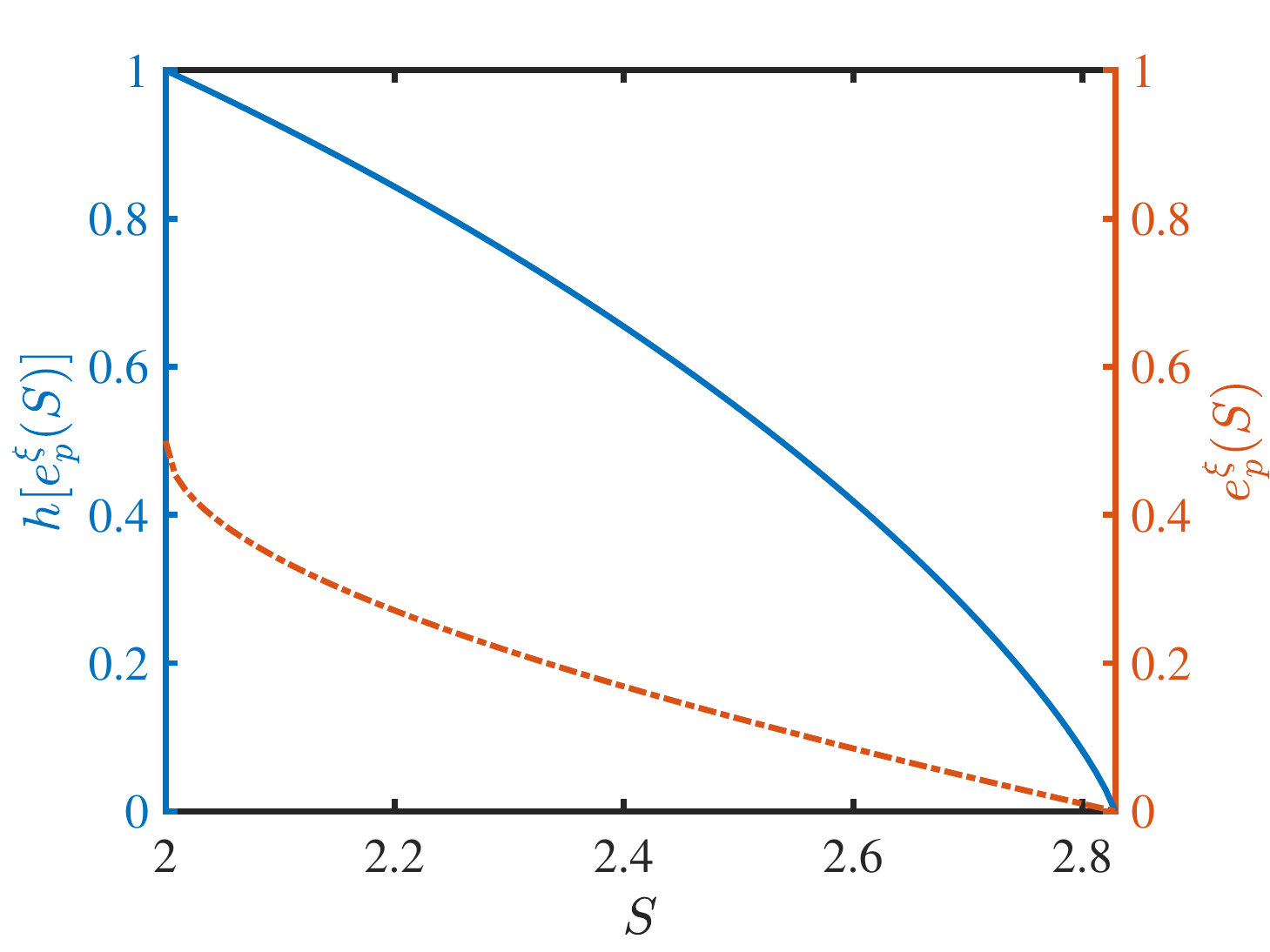}
\caption{A diagram of the functions $h[e_{p}^{\xi}(S)],e_{p}^{\xi}(S)$ on the interval $S\in[2,2\sqrt{2}]$. The functions are given in Eq.~\eqref{Eq:SmoothPhaseErrorProb}\eqref{Eq:SuppPhaseError}, respectively, and we omit the interval $S\in[0,2]$ in the figure where $h[e_{p}^{\xi}(S)]\equiv1,e_{p}^{\xi}(S)\equiv\frac{1}{2}$.
In the showcase, the parameter $\xi=1\times10^{-7}$. It can be seen that the function $h[e_{p}^{\xi}(S)]$ is concave in $S$, while $e_{p}^{\xi}(S)$ is convex in $S$.} \label{Fig:RelaxedSampleEnt}
\end{figure}

We emphasize that $\xi$ is not a security parameter but a value one can choose freely. In real experiments, $\xi$ could be optimised given the number of rounds and security parameters for a better key rate. Normally, $\xi$ is a small value in comparison to the phase-error probability so that the key rate is not affected much. Furthermore, note that the regularised phase-error probability function needs not necessarily be of the form in Eq.~\eqref{Eq:SmoothPhaseErrorProb}. It can be chosen in other forms, as long as it has similar properties as Eq.~\eqref{Eq:SmoothPhaseErrorProb} shown in Lemma \ref{Lemma:SmoothProperty}. For instance, $\xi$ does not necessarily be a constant. One can let $\xi$ be a function with respect to $S$. We leave the optimization of the regularisation forms for future research.

\subsection{Phase-error sample entropy}\label{Sec:PhaseErrorSample}
Now let us construct the phase-error sample entropy for each round of experiment based on Eq.~\eqref{Eq:SmoothPhaseErrorProb} and show that it satisfies the criteria of Eq.~\eqref{eq:boundedmartingale} so that we can apply Azuma's inequality.

Consider a DIQKD experiment of $n$ rounds. Before the $i$th round, the experiment runs the same as in the actual protocol. In the $i$th round, the projection $\{\hat{\Pi}_{\lambda_i}^A\otimes\hat{\Pi}_{\lambda_i}^B\}_{\lambda_i}$ is first applied to the system. Construct the filtration $\mathbb{F} = \{\mathcal{F}_j\}_{j=0}^{n}$ with $\mathcal{F}_{j-1}\subseteq\mathcal{F}_{j}$, where $\mathcal{F}_{j-1}$ is the $\sigma$-algebra generated by $\{\Lambda_1,\cdots,\Lambda_j\}$. By our definition of the system variable, the random variable $\Lambda_j$ contains all the information in the system variables with a smaller subscript, hence we can take $\mathcal{F}_{j-1}=\Lambda_j$ for simplicity. For the sake of completeness, define
$\Lambda_{n+1}$ to be an arbitrary quantum operation on the leftover quantum system after the experiment is finished. For the key generation rounds $(X_i,Y_i)=(2,0)$, Alice and Bob need to learn whether a phase error occurs, \emph{had} they applied the phase-error measurement to the system. For other basis choices, they just apply the original measurements.

On the sub-$\sigma$-algebra $\mathcal{F}_{i-1},i=1,\cdots,n$, define the following random variable
\begin{eqnarray}\label{Eq:RelaxSampleEntropy}
\zeta_{p}^{(i)} =
\left\{
\begin{tabular}{ll} $0$, & if $(X_i,Y_i)\neq(2,0)$, \\
$-\log\left[e_{p}^{\xi}(S^{\lambda_i})\right]$, & if $(X_i,Y_i)=(2,0),\,E_{p}^{\lambda_i}=1$, \\
$-\log\left[1 - e_{p}^{\xi}(S^{\lambda_i})\right]$, & if $(X_i,Y_i)=(2,0),\,E_{p}^{\lambda_i}=0$,
\end{tabular}
\right.
\end{eqnarray}
where $S^{\lambda_i}$ is the expected Bell value in the $i$th round given in Eq.~\eqref{def:UnderBellValue}. For the key generation rounds $(X_i,Y_i)=(2,0)$, the random variable $\zeta_{p}^{(i)}$ becomes the $\xi$-regularised phase-error sample entropy random variable,
\begin{eqnarray}
{\zeta_{p}^{(i)}}_{|(X_i,Y_i)=(2,0)} = \bar{H}^{\xi}(E_{p}^{\lambda_i}) =
\left\{
\begin{tabular}{ll}
$-\log\left[e_{p}^{\xi}(S^{\lambda_i})\right]$, &$E_{p}^{\lambda_i}=1,$\\
$-\log\left[1-e_{p}^{\xi}(S^{\lambda_i})\right]$, &$E_{p}^{\lambda_i}=0.$
\end{tabular}
\right.
\end{eqnarray}
Now, we can construct a sequence of random variables,
\begin{eqnarray} \label{eq:DeltaRV}
  \Delta_{p}^{(l)} = \left\{
        \begin{tabular}{ll} $0$, & $l=0$, \\
        $\sum_{i=1}^{l}\left(\zeta_{p}^{(i)} - \mathbb{E}_{\Pr{X_i,Y_i,E_{p}^{\lambda_i}}}\left[\zeta_{p}^{(i)}|\mathcal{F}_{i-1}\right]\right)$, & $l=1,\cdots,n$.
        \end{tabular}
        \right.
\end{eqnarray}

\begin{lemma} \label{Lemma:MartingaleConstruct}
The random variables $\{\Delta_{p}^{(l)}\}_{l\in[n]\cup\{0\}}$ form a bounded martingale with respect to the filtration $\mathbb{F}$. For all $\xi\in(0,\frac{1}{2})$ and $l\geq1$, there are two sets of constants $\{C_L^{(l)}\}_{l\in[n]}$ and $\{C_U^{(l)}\}_{l\in[n]}$, such that
\begin{equation}
\begin{split}
  C_L^{(l)}\leq\Delta_{p}^{(l)}-\Delta_{p}^{(l-1)}&\leq C_U^{(l)}, \\
  C_U^{(l)}-C_L^{(l)}&\leq c_\xi,
\end{split}
\end{equation}
where
\begin{equation}\label{Eq:CXi}
c_\xi = -\log\left(\dfrac{\xi}{1+2\xi}\right).
\end{equation}
\end{lemma}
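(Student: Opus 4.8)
The plan is to verify the two asserted properties in turn, treating the martingale property as essentially a bookkeeping check and reserving the real work for the uniform increment bound, which is exactly where the regularisation parameter $\xi$ is needed. Throughout I use that $\mathcal{F}_{l-1}$ is generated by $\Lambda_l$, so the state $\rho_{\lambda_l}^{AB}$, all $l$th-round observables, and hence the expected Bell value $S^{\lambda_l}$ are $\mathcal{F}_{l-1}$-measurable, whereas the setting pair $(X_l,Y_l)$ and the (virtual) phase-error outcome $E_p^{\lambda_l}$ are fixed only after round $l$ and are therefore $\mathcal{F}_l$-measurable.

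For the martingale property I would observe that the sum in Eq.~\eqref{eq:DeltaRV} is a centered Doob-type sum, with one-step increment
\[
\Delta_p^{(l)}-\Delta_p^{(l-1)} = \zeta_p^{(l)} - \mathbb{E}\left[\zeta_p^{(l)}|\mathcal{F}_{l-1}\right].
\]
Three checks then suffice. \emph{Adaptedness}: each $\zeta_p^{(i)}$ in Eq.~\eqref{Eq:RelaxSampleEntropy} is a function of $X_i,Y_i,E_p^{\lambda_i}$ and of $S^{\lambda_i}$, all $\mathcal{F}_i$-measurable, so $\Delta_p^{(l)}$ is $\mathcal{F}_l$-measurable. \emph{Integrability}: once I establish $0\le\zeta_p^{(i)}\le c_\xi$ below, it follows that $\mathbb{E}|\Delta_p^{(l)}|\le 2l\,c_\xi<\infty$. \emph{Zero-mean increment}: applying $\mathbb{E}[\,\cdot\,|\mathcal{F}_{l-1}]$ to the displayed increment yields zero, because $\mathbb{E}[\zeta_p^{(l)}|\mathcal{F}_{l-1}]$ is itself $\mathcal{F}_{l-1}$-measurable; hence $\mathbb{E}[\Delta_p^{(l)}|\mathcal{F}_{l-1}]=\Delta_p^{(l-1)}$.

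For the bounded-increment property the key step is to control the range of $\zeta_p^{(l)}$ \emph{conditioned on} $\mathcal{F}_{l-1}$, where $S^{\lambda_l}$ is frozen. By Eq.~\eqref{Eq:RelaxSampleEntropy}, $\zeta_p^{(l)}$ can take only the values $0$, $-\log[e_p^\xi(S^{\lambda_l})]$, and $-\log[1-e_p^\xi(S^{\lambda_l})]$. Feeding in the range bound $e_p^\xi(S)\in[\tfrac{\xi}{1+2\xi},\tfrac12]$ from property~2 of Lemma~\ref{Lemma:SmoothProperty} (Eq.~\eqref{eq:xiephrange}) gives $0\le -\log[1-e_p^\xi(S^{\lambda_l})]\le 1\le -\log[e_p^\xi(S^{\lambda_l})]\le c_\xi$, with $c_\xi$ as in Eq.~\eqref{Eq:CXi}, so that $0\le\zeta_p^{(l)}\le -\log[e_p^\xi(S^{\lambda_l})]$ almost surely. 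I then take the \emph{predictable} bounds
\[
C_L^{(l)} = -\mathbb{E}\left[\zeta_p^{(l)}|\mathcal{F}_{l-1}\right], \qquad C_U^{(l)} = -\log\left[e_p^\xi(S^{\lambda_l})\right] - \mathbb{E}\left[\zeta_p^{(l)}|\mathcal{F}_{l-1}\right],
\]
both $\mathcal{F}_{l-1}$-measurable, which sandwich the increment and satisfy $C_U^{(l)}-C_L^{(l)} = -\log[e_p^\xi(S^{\lambda_l})]\le c_\xi$. This is precisely the predictable-bound hypothesis required to apply Azuma's inequality (Lemma~\ref{Lemma:Azuma}) with deterministic width $c_\xi$.

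The conceptual crux, and the only place anything can fail, is this uniform cap on the increment width. Had the bare phase-error probability $e_p(S)$ been used, it would tend to $0$ as $S\to2\sqrt2$, making $-\log[e_p(S)]$ diverge and violating the bounded-difference condition; the regularisation forces $e_p^\xi(S)\ge\tfrac{\xi}{1+2\xi}>0$, which is exactly what bounds the width by the finite constant $c_\xi=-\log\tfrac{\xi}{1+2\xi}$. Everything else is routine. The one subtlety to state carefully is the interpretation of $E_p^{\lambda_l}$ as a genuine random variable of the virtual phase-error experiment, so that $\mathbb{E}[\zeta_p^{(l)}|\mathcal{F}_{l-1}]$ correctly averages over both the trusted setting choice $(X_l,Y_l)$ and the phase-error outcome with the round's frozen probability $e_p^{\lambda_l}$.
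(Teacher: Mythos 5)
Your proposal is correct and follows essentially the same route as the paper: verify the centered-sum martingale property via the recurrence $\Delta_p^{(l)}=\Delta_p^{(l-1)}+\zeta_p^{(l)}-\mathbb{E}[\zeta_p^{(l)}|\mathcal{F}_{l-1}]$, then bound the increment using $0\leq\zeta_p^{(l)}$ together with the regularisation-induced floor $e_p^{\xi}(S)\geq\xi/(1+2\xi)$, which is exactly how the paper obtains predictable bounds of width at most $c_\xi$. Your choice of $C_U^{(l)}-C_L^{(l)}=-\log[e_p^{\xi}(S^{\lambda_l})]\leq c_\xi$ is marginally tighter than the paper's fixed width $c_\xi$, but this is a cosmetic difference, not a different argument.
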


\begin{proof}
According to the definition in Eq.~\eqref{eq:DeltaRV}, the random variable $\Delta_{p}^{(l)}$ has the following recurrence relationship,
\begin{equation}
  \begin{split}
    \Delta_{p}^{(l)} &= \Delta_{p}^{(l-1)} + \zeta_{p}^{(l)} - \mathbb{E}_{\Pr{X_l,Y_l,E_{p}^{\lambda_l}}}\left[\zeta_{p}^{(l)}|\mathcal{F}_{l-1}\right].
  \end{split}
\end{equation}
It follows that
\begin{equation}
\begin{split}
\mathbb{E}_{\Pr{X_l,Y_l,E_{p}^{\lambda_l}}}\left[\Delta_{p}^{(l)}|\mathcal{F}_{l-1}\right] &= \mathbb{E}_{\Pr{X_l,Y_l,E_{p}^{\lambda_l}}}\left[\Delta_{p}^{(l-1)}|\mathcal{F}_{l-1}\right] + \mathbb{E}_{\Pr{X_l,Y_l,E_{p}^{\lambda_l}}}\left[\zeta_{p}^{(l)}|\mathcal{F}_{l-1}\right] - \mathbb{E}_{\Pr{X_l,Y_l,E_{p}^{\lambda_l}}}\left[\zeta_{p}^{(l)}|\mathcal{F}_{l-1}\right] \\
  &= \Delta_{p}^{(l-1)},
\end{split}
\end{equation}
and the value of $\Delta_{p}^{(l)}$ is bounded, as
\begin{equation}
\begin{split}
  \Delta_{p}^{(l)} &\geq \Delta_{p}^{(l-1)} - p_X(2)p_Y(0)\left\{-e_{p}^{\lambda_l}\log\left[e_{p}^{\xi}(S^{\lambda_l})\right] - \left(1-e_{p}^{\lambda_l}\right)\log\left[1 - e_{p}^{\xi}(S^{\lambda_l})\right]\right\} \\ &\geq \Delta_{p}^{(l-1)} - p_X(2)p_Y(0)h\left[e_{p}^{\xi}(S^{\lambda_l})\right],
\end{split}
\end{equation}
and
\begin{equation}
\begin{split}
\Delta_{p}^{(l)}&\leq \Delta_{p}^{(l-1)} - \log\left(\dfrac{\xi}{1+2\xi}\right) - p_X(2)p_Y(0)\left\{-e_{p}^{\lambda_l}\log\left[e_{p}^{\xi}(S^{\lambda_l})\right] - \left(1-e_{p}^{\lambda_l}\right)\log\left[1 - e_{p}^{\xi}(S^{\lambda_l})\right]\right\} \\ &\leq \Delta_{p}^{(l-1)} - \log\left(\dfrac{\xi}{1+2\xi}\right) - p_X(2)p_Y(0)h\left(e_{p}^{\lambda_l}\right).
\end{split}
\end{equation}
Here, we denote the actual phase-error probability of the underlying system as $e_{p}^{\lambda_l}$, which satisfies $e_{p}^{\lambda_l}\leq e_{p}^{\xi}(S^{\lambda_l})$. With $\Delta_{p}^{(0)}=0$, we can see that the random process $\{\Delta_{p}^{(l)}\}_l$ form a bounded martingale.

For any $l\in[n]$, the difference between two adjacent random variables in the process is given by
\begin{equation}
\begin{split}
  \Delta_{p}^{(l)} - \Delta_{p}^{(l-1)} &= \zeta_{p}^{(l)} - \mathbb{E}_{\Pr{X_l,Y_l,E_{p}^{\lambda_l}}}\left[\zeta_{p}^{(l)}|\mathcal{F}_{l-1}\right] \\
  &= \zeta_{p}^{(l)} - p_X(2)p_Y(0)\left\{-e_{p}^{\lambda_l}\log\left[e_{p}^{\xi}(S^{\lambda_l})\right] - \left(1-e_{p}^{\lambda_l}\right)\log\left[1 - e_{p}^{\xi}(S^{\lambda_l})\right]\right\}.
\end{split}
\end{equation}
Let
\begin{equation}
\begin{split}
C_L^{(l)} &= - p_X(2)p_Y(0)\left\{-e_{p}^{\lambda_l}\log\left[e_{p}^{\xi}(S^{\lambda_l})\right] - \left(1-e_{p}^{\lambda_l}\right)\log\left[1 - e_{p}^{\xi}(S^{\lambda_l})\right]\right\}, \\
C_U^{(l)} &= - p_X(2)p_Y(0)\left\{-e_{p}^{\lambda_l}\log\left[e_{p}^{\xi}(S^{\lambda_l})\right] - \left(1-e_{p}^{\lambda_l}\right)\log\left[1 - e_{p}^{\xi}(S^{\lambda_l})\right]\right\}-\log\left(\dfrac{\xi}{1+2\xi}\right),
\end{split}
\end{equation}
we have
\begin{equation}
C_L^{(l)}\leq\Delta_{p}^{(l)}-\Delta_{p}^{(l-1)}\leq C_U^{(l)}.
\end{equation}
By construction, $C_L^{(l)},C_U^{(l)}$ are constants, and
\begin{equation}
C_U^{(l)}-C_L^{(l)} = c_\xi = -\log\left(\dfrac{\xi}{1+2\xi}\right).
\end{equation}

\end{proof}

\begin{lemma}
Given $\varepsilon\in(0,1)$, with a probability no smaller than $1-\varepsilon$, the sum of $\xi$-regularised phase-error sample entropy in $n$ rounds can be upper bounded by the following expression,
\begin{equation}
  \sum_{i=1}^n\zeta_{p}^{(i)} < np_X(2)p_Y(0) h\left[ e_{p}^{\xi}\left(\dfrac{1}{n}\sum_{i=1}^n S^{\lambda_i}\right)\right] + \sqrt{-\dfrac{nc_\xi^2\ln\varepsilon}{2}},
\end{equation}
where $c_{\xi}$ is given in Eq.~\eqref{Eq:CXi}.
\label{Lemma:SampleEntropyBound}
\end{lemma}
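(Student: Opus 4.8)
The plan is to apply Azuma's inequality (Lemma~\ref{Lemma:Azuma}) to the martingale $\{\Delta_p^{(l)}\}$ supplied by Lemma~\ref{Lemma:MartingaleConstruct}, and then to control the accumulated conditional expectations $\sum_{i=1}^n\mathbb{E}[\zeta_p^{(i)}|\mathcal{F}_{i-1}]$ from above using the concavity of $h[e_p^\xi(\cdot)]$. First I would invoke the martingale bound: since $\Delta_p^{(0)}=0$ and the increments obey $C_U^{(l)}-C_L^{(l)}\le c_\xi$, the $\mathcal{F}_{l-1}$-measurable quantities $C_L^{(l)},C_U^{(l)}$ serve as the predictable processes, and I take the uniform gap $c_t=c_\xi$, giving $\sum_{t=1}^n c_t^2=nc_\xi^2$. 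Azuma then yields $\Pr(\Delta_p^{(n)}\ge\delta)\le\exp(-2\delta^2/(nc_\xi^2))$, and equating the right-hand side to $\varepsilon$ fixes $\delta=\sqrt{-nc_\xi^2\ln\varepsilon/2}$. Because $\Delta_p^{(n)}=\sum_{i=1}^n(\zeta_p^{(i)}-\mathbb{E}[\zeta_p^{(i)}|\mathcal{F}_{i-1}])$, with probability at least $1-\varepsilon$ one gets
\begin{equation}
\sum_{i=1}^n\zeta_p^{(i)}<\sum_{i=1}^n\mathbb{E}[\zeta_p^{(i)}|\mathcal{F}_{i-1}]+\sqrt{-\frac{nc_\xi^2\ln\varepsilon}{2}},
\end{equation}
which already produces the second term of the claimed bound.

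The next step is to bound each conditional expectation by a binary entropy. Conditioning on $\mathcal{F}_{i-1}$ fixes $\Lambda_i=\lambda_i$, hence $S^{\lambda_i}$ and the true phase-error probability $e_p^{\lambda_i}$; averaging over the basis choice and the phase-error outcome gives
\begin{equation}
\mathbb{E}[\zeta_p^{(i)}|\mathcal{F}_{i-1}]=p_X(2)p_Y(0)\left\{-e_p^{\lambda_i}\log[e_p^\xi(S^{\lambda_i})]-(1-e_p^{\lambda_i})\log[1-e_p^\xi(S^{\lambda_i})]\right\}.
\end{equation}
This is a cross-entropy between the \emph{true} distribution (parameter $e_p^{\lambda_i}$) and the \emph{surrogate} distribution (parameter $e_p^\xi(S^{\lambda_i})$). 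The single-round result Lemma~\ref{lemma:Correlation} together with property~3 of Lemma~\ref{Lemma:SmoothProperty} guarantees $e_p^{\lambda_i}\le e_p(S^{\lambda_i})\le e_p^\xi(S^{\lambda_i})\le\frac12$, so I can apply the right-hand inequality of Eq.~\eqref{eq:binaryEntOpt} with $p=e_p^{\lambda_i}$ and $p+\theta=e_p^\xi(S^{\lambda_i})$ to conclude $\mathbb{E}[\zeta_p^{(i)}|\mathcal{F}_{i-1}]\le p_X(2)p_Y(0)\,h[e_p^\xi(S^{\lambda_i})]$.

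Finally I would sum over $i$ and invoke the concavity of $h[e_p^\xi(\cdot)]$ (property~4 of Lemma~\ref{Lemma:SmoothProperty}), extended from two points to $n$ points by the standard Jensen argument, to obtain
\begin{equation}
\sum_{i=1}^n\mathbb{E}[\zeta_p^{(i)}|\mathcal{F}_{i-1}]\le p_X(2)p_Y(0)\sum_{i=1}^n h[e_p^\xi(S^{\lambda_i})]\le np_X(2)p_Y(0)\,h\!\left[e_p^\xi\!\left(\frac1n\sum_{i=1}^n S^{\lambda_i}\right)\right].
\end{equation}
Substituting this into the concentration bound from the first step closes the argument.

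I expect the delicate step to be the cross-entropy bound: the conditional expectation weights the surrogate sample-entropy terms $-\log[e_p^\xi(S^{\lambda_i})]$ and $-\log[1-e_p^\xi(S^{\lambda_i})]$ by the \emph{true} probability $e_p^{\lambda_i}$, and it is precisely the regularisation $e_p^\xi\ge e_p$ that both keeps $c_\xi$ finite and makes the Gibbs-type inequality~\eqref{eq:binaryEntOpt} applicable with the correct ordering $e_p^{\lambda_i}\le e_p^\xi(S^{\lambda_i})\le\frac12$. The concavity step is routine once property~4 is in hand, and the Azuma estimate applies immediately because Lemma~\ref{Lemma:MartingaleConstruct} already furnishes a uniform increment gap.
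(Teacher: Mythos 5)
Your proposal is correct and follows essentially the same route as the paper's proof: Azuma applied to the martingale $\{\Delta_p^{(l)}\}$ with uniform gap $c_\xi$, the cross-entropy bound on $\mathbb{E}[\zeta_p^{(i)}|\mathcal{F}_{i-1}]$ via Eq.~\eqref{eq:binaryEntOpt} using $e_p^{\lambda_i}\le e_p^{\xi}(S^{\lambda_i})\le\tfrac12$, and the concavity of $h[e_p^{\xi}(\cdot)]$ from Lemma~\ref{Lemma:SmoothProperty} to pass to the average Bell value. Your identification of the cross-entropy step and the role of the regularisation as the delicate point is exactly the content the paper's terser proof relies on.
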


\begin{proof}
According to Lemma \ref{Lemma:MartingaleConstruct}, we can apply Azuma's concentration result in Lemma~\ref{Lemma:Azuma} to the bounded martingale $\{\Delta_{p}^{(l)}\}_l$,
\begin{equation}
  \sum_{i=1}^n\zeta_{p}^{(i)} \leq \sum_{i=1}^n\mathbb{E}_{\Pr{X_i,Y_i,E_{p}^{\lambda_i}}}\left[\zeta_{p}^{(i)}|\mathcal{F}_{i-1}\right] + \sqrt{-\dfrac{nc_\xi^2\ln\varepsilon}{2}},
\end{equation}
where Eq.~\eqref{eq:DeltaRV} is applied. According to the definition of $\zeta_{p}^{(i)}$ in Eq.~\eqref{Eq:RelaxSampleEntropy} and applying Eq.~\eqref{eq:binaryEntOpt}, we have
\begin{equation}\label{Eq:SmoothRelation}
\begin{split}
  \mathbb{E}_{\Pr{X_i,Y_i,E_{p}^{\lambda_i}}}\left[\zeta_{p}^{(i)}|\mathcal{F}_{i-1}\right] &= p_X(2)p_Y(0) \left\{-e_{p}^{\lambda_i}\log\left[e_{p}^{\xi}(S^{\lambda_i})\right]-\left(1-e_{p}^{\lambda_i}\right)\log\left[1-e_{p}^{\xi}(S^{\lambda_i})\right]\right\} \\ &< p_X(2)p_Y(0) h\left[e_{p}^{\xi}(S^{\lambda_i})\right].
\end{split}
\end{equation}
Taking the summation of Eq.~\eqref{Eq:SmoothRelation} over rounds and further applying Eq.~\eqref{eq:concaveepS}, one can prove the lemma.
\end{proof}

\subsection{Estimating the average expected bell value}\label{Sec:BellValueEst}
The result in the last subsection shows that the total $\xi$-regularised phase-error sample entropy can be estimated from the average expected Bell value over the rounds. Still, the expected Bell value cannot be accessed directly.
In this subsection we show how to estimate this value from the Bell test results via Azuma's concentration inequality.

Given the basis choice $(x,y)$ with $x,y\in\{0,1\}$, for any $i\in[n]$, define the following random variables conditioned on $\sigma$-algebra $\mathcal{F}_{i-1}$,
\begin{eqnarray}\label{Eq:TestRoundsRV}
  \zeta_{xy}^{(i)} = \left\{
        \begin{tabular}{ll} $0$, & if $(X_i,Y_i)\neq(x,y)$ or $(X_i,Y_i,A_i,B_i)=(x,y,a,b),(-1)^{xy}\neq ab$, \\
  $1$, & if $(X_i,Y_i,A_i,B_i)=(x,y,a,b),(-1)^{xy}= ab$,
  \end{tabular}
          \right.
\end{eqnarray}
where $a,b\in\{+1,-1\}$. From the parameter estimation in the protocol we have $\sum_{i=1}^n\zeta_{xy}^{(i)} = m_{xy} - q_{xy}$, where $m_{xy}$ is the number of rounds with the inputs $(X_i,Y_i) = (x,y)$ and $q_{xy}$ is the number of rounds where $(-1)^{x\cdot y}A_i\cdot B_i=-1$. Using the random variables defined in Eq.~\eqref{Eq:TestRoundsRV}, the expected Bell value in each round can be expressed as
\begin{equation}\label{Eq:ExpectedBellValue}
\begin{split}
S^{\lambda_i} &= \sum_{x,y\in\{0,1\}}(-1)^{xy}\mathbb{E}(A_iB_i|X_i=x,Y_i=y,\mathcal{F}_{i-1}) \\ &= \sum_{x,y\in\{0,1\}}\left(\dfrac{2\Pr{\zeta_{xy}^{(i)}=1|\mathcal{F}_{i-1}}}{p_X(x)p_Y(y)}-1\right).
\end{split}
\end{equation}
For any fixed $(x,y)$, we can construct a sequence of random variables from $\{\zeta_{xy}^{(i)}\}_i$,
\begin{eqnarray}\label{eq:DeltaBellRV}
  \Delta_{xy}^{(l)} = \left\{
        \begin{tabular}{ll} $0$, & $l=0$, \\
        $\sum_{i=1}^{l}\left(\zeta_{xy}^{(i)} - \Pr{\zeta_{xy}^{(i)} = 1|\mathcal{F}_{i-1}}\right)$, & $l\geq1$.
        \end{tabular}
        \right.
\end{eqnarray}

Similar to the proof of Lemma~\ref{Lemma:MartingaleConstruct}, one can prove that for each $(x,y)$, the sequence of random variables $\{\Delta_{xy}^{(l)}\}_l$ forms a martingale and has a bounded difference,
\begin{equation}
\begin{split}
  \Delta_{xy}^{(l)} - \Delta_{xy}^{(l-1)} &= \zeta_{xy}^{(i)} - \Pr{\zeta_{xy}^{(i)} = 1|\mathcal{F}_{l-1}} \\ &\in\left[-\max\Pr{\zeta_{xy}^{(i)} = 1|\mathcal{F}_{l-1}},1-\min\Pr{\zeta_{xy}^{(i)} = 1|\mathcal{F}_{l-1}}\right] \\ &\subseteq\left[-p_X(x)p_Y(y),1\right).
\end{split}
\end{equation}
Let
\begin{equation}
c_{xy} = 1+p_X(x)p_Y(y),
\end{equation}
by applying Azuma's inequality, we have the following concentration result:

\begin{lemma}
Given $\in(0,1)$, with a probability no smaller than $1-\varepsilon$, the following inequality holds
\begin{equation}\label{Eq:TestRoundBound}
  \sum_{i=1}^n\zeta_{xy}^{(i)} \leq \sum_{i=1}^n \Pr{\zeta_{xy}^{(i)} = 1|\mathcal{F}_{i-1}} + \sqrt{-\dfrac{nc_{xy}^2\ln\varepsilon}{2}}.
\end{equation}
\label{Lemma:BellValueBound}
\end{lemma}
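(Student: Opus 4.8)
The plan is to apply Azuma's inequality (Lemma~\ref{Lemma:Azuma}) directly to the martingale $\{\Delta_{xy}^{(l)}\}_{l}$ defined in Eq.~\eqref{eq:DeltaBellRV}, exactly mirroring the phase-error argument of Lemma~\ref{Lemma:MartingaleConstruct}. First I would confirm the two hypotheses of Lemma~\ref{Lemma:Azuma}. The martingale identity $\mathbb{E}[\Delta_{xy}^{(l)}|\mathcal{F}_{l-1}] = \Delta_{xy}^{(l-1)}$ follows because $\zeta_{xy}^{(i)}$ is a $\{0,1\}$-indicator, so its conditional expectation equals $\Pr{\zeta_{xy}^{(i)}=1|\mathcal{F}_{i-1}}$, which is $\mathcal{F}_{i-1}$-measurable; hence each increment $\zeta_{xy}^{(i)} - \Pr{\zeta_{xy}^{(i)}=1|\mathcal{F}_{i-1}}$ has zero conditional mean, and boundedness gives integrability. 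For the bounded-difference hypothesis I would invoke the interval computed just above the lemma: since $\zeta_{xy}^{(i)}\in\{0,1\}$ and the conditional probability lies in $[0,p_X(x)p_Y(y)]$ (a round can have inputs $(x,y)$ only with probability $p_X(x)p_Y(y)$), the increment lies in $[-p_X(x)p_Y(y),1]$. I can therefore take the constant predictable bounds $A_t=-p_X(x)p_Y(y)$ and $B_t=1$, so that $B_t-A_t=1+p_X(x)p_Y(y)=c_{xy}$.

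With the hypotheses in place, the second step is the mechanical substitution. Setting $X_l=\Delta_{xy}^{(l)}$, $X_0=\Delta_{xy}^{(0)}=0$, and $c_t=c_{xy}$ for every $t$, Lemma~\ref{Lemma:Azuma} yields, for any $\delta>0$,
\begin{equation}
\Pr\left(\Delta_{xy}^{(n)} \geq \delta\right) \leq \exp\left(-\frac{2\delta^2}{n c_{xy}^2}\right).
\end{equation}
The final step is to calibrate the failure probability: choosing $\delta=\sqrt{-n c_{xy}^2\ln\varepsilon/2}$ makes the right-hand side exactly $\varepsilon$, so $\Delta_{xy}^{(n)}<\delta$ except with probability at most $\varepsilon$. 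Expanding $\Delta_{xy}^{(n)}=\sum_{i=1}^n\zeta_{xy}^{(i)}-\sum_{i=1}^n\Pr{\zeta_{xy}^{(i)}=1|\mathcal{F}_{i-1}}$ and rearranging then reproduces Eq.~\eqref{Eq:TestRoundBound}.

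I expect no genuine analytic obstacle: the statement is a one-sided concentration bound obtained by a routine invocation of the already-established Azuma inequality. The only points demanding care are bookkeeping ones. The first is making sure the increment bound uses the \emph{asymmetric} interval $[-p_X(x)p_Y(y),1]$ rather than a symmetric one, since a symmetric choice would enlarge $c_{xy}$ and loosen the bound. The second is verifying that the filtration $\mathcal{F}_{i-1}$ generated by $\{\Lambda_1,\dots,\Lambda_i\}$ renders both the conditional law of $\zeta_{xy}^{(i)}$ and the indicator of the input event $(X_i,Y_i)=(x,y)$ well-defined, so the conditional expectation is precisely the stated conditional probability. Both facts follow from the sequential system model of Sec.~\ref{Supp:SysModel} and are fully analogous to the phase-error martingale in Lemma~\ref{Lemma:MartingaleConstruct}.
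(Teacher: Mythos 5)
Your proposal is correct and matches the paper's own argument: the paper likewise notes that $\{\Delta_{xy}^{(l)}\}_l$ is a martingale with increments in $[-p_X(x)p_Y(y),1]$ (hence difference bound $c_{xy}=1+p_X(x)p_Y(y)$) and then applies Azuma's inequality with the same choice of $\delta$. The asymmetric increment interval and the calibration of $\varepsilon$ are handled exactly as in the paper, so there is nothing to add.
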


Taking this concentration result into Eq.~\eqref{Eq:ExpectedBellValue}, we can obtain an estimation of the average expected Bell value in the experiment, as presented in the following corollary.

\begin{corollary}
Given $\varepsilon_{xy}\in(0,1)$ such that $\varepsilon=\sum_{x,y\in\{0,1\}}\varepsilon_{xy}\in(0,1)$, with probability no smaller than $1-\varepsilon$, the average expected Bell value over $n$ rounds can be lowered bounded by,
\begin{equation}\label{eq:avgexpS}
  S_{est}\equiv\dfrac{1}{n}\sum_{i=1}^nS^{\lambda_i} \geq \sum_{x,y\in\{0,1\}}\left(\dfrac{2\left( m_{xy}-q_{xy}-\sqrt{-\dfrac{nc_{xy}^2\ln\varepsilon_{xy}}{2}}\right)}{np_X(x)p_Y(y)}-1\right),
\end{equation}
where $S^{\lambda_i}$ is the expected Bell value of the $i$th round with system variable value $\Lambda_i=\lambda_i$, as given in Eq.~\eqref{Eq:ExpectedBellValue}.
\label{Corollary:EstBellValue}
\end{corollary}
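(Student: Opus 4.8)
The plan is to obtain the corollary as a direct consequence of Lemma~\ref{Lemma:BellValueBound}, applied separately to each of the four Bell-test settings $(x,y)\in\{0,1\}\times\{0,1\}$, combined via a union bound and then fed into the linear relation Eq.~\eqref{Eq:ExpectedBellValue}. First I would fix a setting $(x,y)$ and invoke Lemma~\ref{Lemma:BellValueBound} with failure probability $\varepsilon_{xy}$. Rearranging its conclusion and substituting the protocol identity $\sum_{i=1}^n\zeta_{xy}^{(i)} = m_{xy}-q_{xy}$ (stated just below Eq.~\eqref{Eq:TestRoundsRV}), I obtain that, except with probability at most $\varepsilon_{xy}$,
\begin{equation}
\sum_{i=1}^n \Pr{\zeta_{xy}^{(i)}=1|\mathcal{F}_{i-1}} \geq m_{xy}-q_{xy}-\sqrt{-\dfrac{nc_{xy}^2\ln\varepsilon_{xy}}{2}}.
\end{equation}
The structural point that makes the four statements composable is that all four martingales $\{\Delta_{xy}^{(l)}\}_l$ are built on the \emph{same} filtration $\mathbb{F}=\{\mathcal{F}_j\}$ generated by the system variables $\{\Lambda_j\}$ (with the measurement settings embedded), so each concentration bound is a genuine high-probability event over one common probability space.

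Next I would invoke the union bound: the four failure events have total probability at most $\sum_{x,y}\varepsilon_{xy}=\varepsilon$, hence with probability at least $1-\varepsilon$ all four lower bounds hold simultaneously. Summing Eq.~\eqref{Eq:ExpectedBellValue} over $i$ gives
\begin{equation}
\sum_{i=1}^n S^{\lambda_i} = \sum_{x,y\in\{0,1\}}\left(\dfrac{2\sum_{i=1}^n\Pr{\zeta_{xy}^{(i)}=1|\mathcal{F}_{i-1}}}{p_X(x)p_Y(y)}-n\right),
\end{equation}
and I would substitute the four simultaneous lower bounds term by term. Because each coefficient $2/[p_X(x)p_Y(y)]$ is strictly positive, lower-bounding each conditional-probability sum lower-bounds the whole expression; dividing by $n$ then reproduces exactly Eq.~\eqref{eq:avgexpS}.

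I expect the algebra to be routine; the only genuine subtlety — and the step I would state most carefully — is the simultaneous-validity argument. One must check that plugging the per-setting bounds into a single expression is legitimate, which rests on (i) the shared filtration, so that the intersection of the four good events still has probability at least $1-\varepsilon$ by the union bound, and (ii) the positivity of the coefficients $2/[p_X(x)p_Y(y)]$, so that the inequalities propagate in the correct direction. Note that, in contrast to the phase-error side, no convexity or concave-closure argument is required here: the map from the conditional test probabilities $\Pr{\zeta_{xy}^{(i)}=1|\mathcal{F}_{i-1}}$ to the per-round expected Bell value $S^{\lambda_i}$ is linear, so the Azuma-type concentration bounds pass through cleanly into an estimate of the averaged quantity $S_{est}$.
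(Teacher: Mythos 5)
Your proposal is correct and follows exactly the route the paper takes (which it leaves mostly implicit): apply Lemma~\ref{Lemma:BellValueBound} once per setting $(x,y)$ with failure probability $\varepsilon_{xy}$, combine the four events by a union bound on the common filtration, and substitute the resulting lower bounds on $\sum_i\Pr{\zeta_{xy}^{(i)}=1|\mathcal{F}_{i-1}}$ into the linear relation Eq.~\eqref{Eq:ExpectedBellValue}. Your explicit attention to the shared filtration and the positivity of the coefficients is a sound and welcome elaboration of steps the paper glosses over.
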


\subsection{Key privacy estimation with observed statistics}\label{Sec:KeyPrivacyEst}
In this subsection, we combine the results in the above subsections and present the final privacy estimation result from the observed statistics. The martingale random variables associating with the phase-error sample entropy and the Bell test measurements are defined over the same filtration. Therefore, by combining the results of Lemma~\ref{Lemma:SampleEntropyBound} and Corollary \ref{Corollary:EstBellValue}, we can obtain an estimation of the total $\xi$-regularised phase-error sample entropy via the observed statistics. Afterwards, by applying Lemma~\ref{Lemma:PatternCount}, we can derive an upper bound on the $\varepsilon$-smallest probable set cardinality and hence the cost of privacy amplification. In Figure \ref{Fig:RandomSample}, we summarise the route of the finite-size analysis to the final key privacy. We present the formal statement of the estimation of the privacy amplification cost.

\begin{figure*}[!hbt]
\centering \resizebox{15cm}{!}{\includegraphics{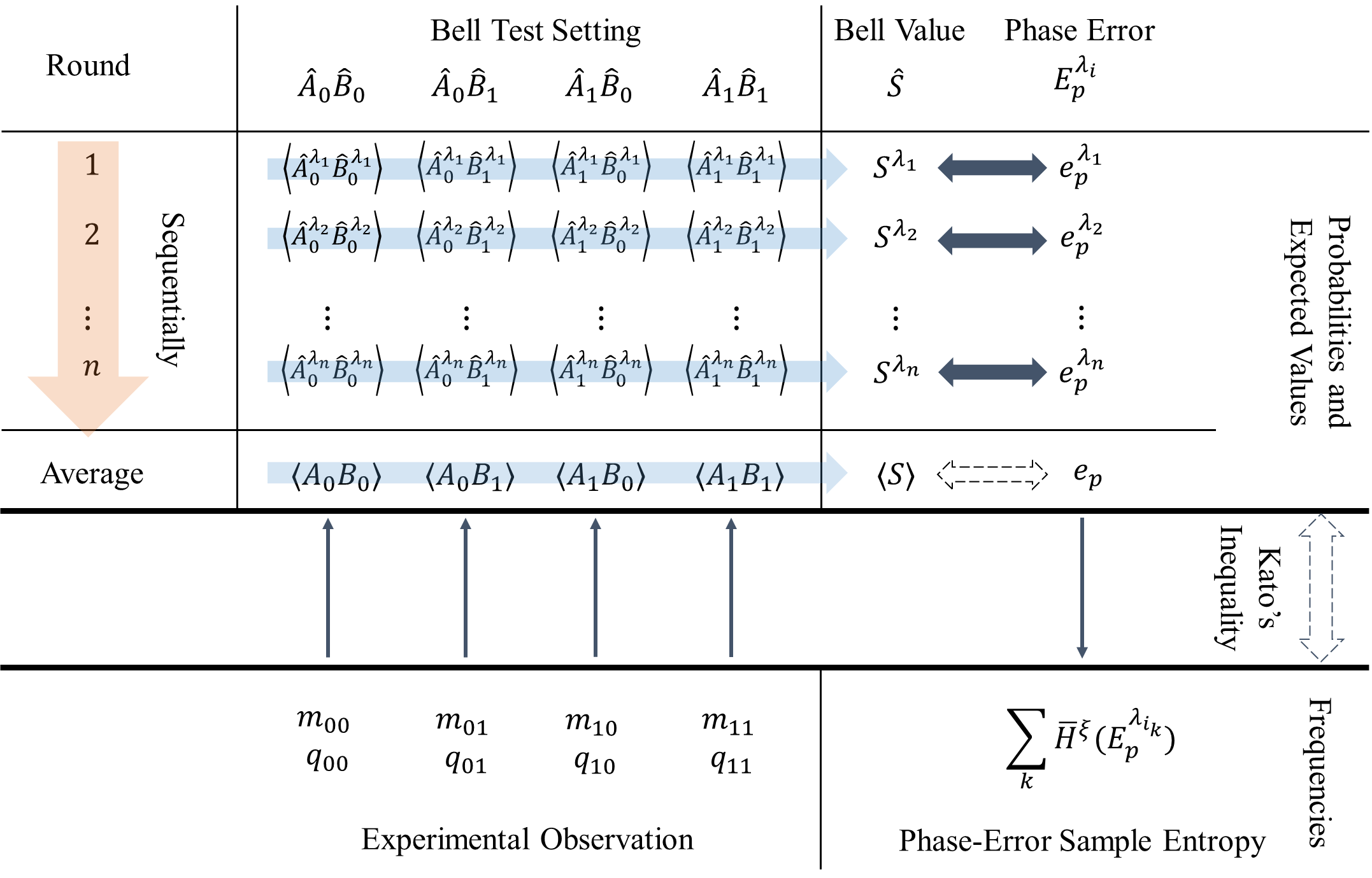}}
\caption{A diagram showing the route to the finite size analysis. The measurements are taken in a time sequence on the subsystems in each round (indicated by the orange arrow). On the upper part shows the expected values or probabilities of various variables (observable expectations in Bell tests, the Bell observable expectation, and the phase-error probability). The expected Bell value is constructed by the expected values of the possible observables in the Bell test (indicated by the blue arrow), which is equivalent to the measurement of the Bell operator in Eq.~\eqref{Eq:BellOperator}. Lemma~\ref{lemma:Correlation} bridges the expected Bell value and the phase-error probability (indicated by the two-way grey arrow). The lower part of the diagram corresponds to frequencies. They can be related to the expected values or probabilities via martingale-based concentration inequalities.}
\label{Fig:RandomSample}
\end{figure*}

\begin{theorem}[Privacy Amplification Cost]
For a DIQKD protocol of $n$ rounds, given the failure probability in the phase-error sample entropy estimation $\varepsilon_{pe}\in(0,1)$, failure probabilities for the parameter estimation of each Bell test setting $\varepsilon_{xy}\in(0,1)$, and the failure probability in privacy amplification $\varepsilon_{pc}\in(0,1)$, such that $\varepsilon = \varepsilon_{pe} + \sum_{x,y\in\{0,1\}}\varepsilon_{xy} + \varepsilon_{pc}\in(0,1)$, then for any   $\xi\in\left(0,\frac{1}{2}\right)$, with a failure probability no larger than $\varepsilon$, the cost for privacy amplification is upper bounded by
\begin{equation}\label{Eq:PrivacyAmplificationCost}
    I_{pa} \leq n\left\{p_X(2)p_Y(0) h\left[e_{p}^{\xi} \left(S_{est}\right) \right] + c_{\xi}\delta_{h} \right\} -\log\varepsilon_{pc},
\end{equation}
where $\delta_h$ is the deviation term in the estimation of $\xi$-regulraised phase-error sample entropy,
\begin{equation}
  \delta_h=\sqrt{-\dfrac{\ln\varepsilon_{pe}}{2n}},
\end{equation}
$S_{est}$ is the estimated expected Bell value,
\begin{equation}
  S_{est}=\sum_{x,y\in\{0,1\}}\left[\dfrac{2\left( m_{xy}-q_{xy}-nc_{xy}\delta_{xy}\right)}{np_X(x)p_Y(y)}-1\right],
\end{equation}
with $\delta_{xy}$ a deviation term in the estimation of expected Bell value,
\begin{equation}
  \delta_{xy}=\sqrt{-\dfrac{\ln\varepsilon_{xy}}{2n}},
\end{equation}
$c_{\xi}$ and $\forall x,y\in\{0,1\},c_{xy}$ are constants,
\begin{equation}
\begin{split}
  c_\xi &= \log\left(\dfrac{1+2\xi}{\xi}\right), \\
  c_{xy} &= 1 + p_X(x)p_Y(y).
\end{split}
\end{equation}
\label{thm:PrivacyAmpCost}
\end{theorem}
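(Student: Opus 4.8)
The plan is to read $I_{pa}$ off the security statement of Lemma~\ref{Lemma:SecDef}: the privacy amplification cost equals $\log$ of the cardinality of the $\varepsilon_{pe}$-smallest probable set of phase-error patterns plus the constant $-\log\varepsilon_{pc}$. Hence it suffices to exhibit a number $C_U$, computable from the observed statistics, such that the realised pattern $\vec{E}_p^{\lambda_{i_k}}$ lies in a set of cardinality at most $2^{C_U}$ except with failure probability $\varepsilon_{pe}+\sum_{x,y}\varepsilon_{xy}$; then $I_{pa}\le C_U-\log\varepsilon_{pc}$ with total failure probability $\varepsilon$.

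First I would chain the two martingale estimates already in hand. By Lemma~\ref{Lemma:SampleEntropyBound} (with $\varepsilon=\varepsilon_{pe}$), with probability at least $1-\varepsilon_{pe}$,
\begin{equation}
\sum_{i=1}^n\zeta_p^{(i)}<np_X(2)p_Y(0)\,h\!\left[e_p^\xi\!\left(\tfrac1n\sum_{i=1}^n S^{\lambda_i}\right)\right]+\sqrt{-\tfrac{nc_\xi^2\ln\varepsilon_{pe}}{2}}.
\end{equation}
By Corollary~\ref{Corollary:EstBellValue} (with budget $\sum_{x,y}\varepsilon_{xy}$), with probability at least $1-\sum_{x,y}\varepsilon_{xy}$ one has $\tfrac1n\sum_i S^{\lambda_i}\ge S_{est}$. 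Since $e_p^\xi$ is monotonically decreasing with values in $[0,\tfrac12]$ and $h$ is increasing there (Lemma~\ref{Lemma:SmoothProperty}), the composite $h[e_p^\xi(\cdot)]$ is monotonically decreasing, so the Bell-value lower bound converts into $h[e_p^\xi(\tfrac1n\sum_i S^{\lambda_i})]\le h[e_p^\xi(S_{est})]$. Taking a union bound and recognising $\sqrt{-nc_\xi^2\ln\varepsilon_{pe}/2}=nc_\xi\delta_h$, I obtain, except with failure probability $\varepsilon_{pe}+\sum_{x,y}\varepsilon_{xy}$,
\begin{equation}
\sum_{i=1}^n\zeta_p^{(i)}<n\left\{p_X(2)p_Y(0)\,h[e_p^\xi(S_{est})]+c_\xi\delta_h\right\}=:C_U.
\end{equation}

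Next I would convert this sample-entropy bound into a cardinality bound via Lemma~\ref{Lemma:PatternCount}. For non-key rounds $\zeta_p^{(i)}=0$, so $\sum_i\zeta_p^{(i)}$ equals the sum over key rounds of $-\log q(E_p^{\lambda_{i_k}})$, with $q$ taking the value $e_p^\xi(S^{\lambda_{i_k}})$ on a phase error and $1-e_p^\xi(S^{\lambda_{i_k}})$ otherwise; this is precisely the sample entropy of the pattern in the sense of Lemma~\ref{Lemma:PatternCount}. The set $\mathcal{T}$ of patterns with sample entropy at most $C_U$ therefore (i) contains the realised pattern except with the failure probability above, qualifying as a valid $\varepsilon_{pe}$-probable set once the Bell-estimation budget is absorbed into $\varepsilon_S=\sum_{x,y}\varepsilon_{xy}$, and (ii) satisfies $|\mathcal{T}|\le 2^{C_U}$. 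Hence $\log|\mathcal{T}_{\vec{E}_p^{\lambda_{i_k}}}^{\varepsilon_{pe}}|\le C_U$, and adding $-\log\varepsilon_{pc}$ from the phase-error-correction step yields the claimed bound.

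The main obstacle I anticipate is the interface between the probabilistic martingale statement and the deterministic counting lemma, because the ``probabilities'' $q=e_p^\xi(S^{\lambda_i})$ defining the sample entropy are themselves random, history-dependent quantities chosen by the adversary. The resolution is to apply Lemma~\ref{Lemma:PatternCount} conditioned on a fixed realisation of the system variables $\{\lambda_i\}$, for which the $t_i=e_p^\xi(S^{\lambda_i})$ are genuine constants; the bound $2^{C_U}$ is then uniform over all realisations precisely because the threshold $C_U$ depends only on the observed $S_{est}$ and not on the hidden $\lambda_i$, and because the common filtration controls the phase-error and Bell-value martingales simultaneously. One must also verify that $e_p^\xi(S^{\lambda_i})$ never reaches $0$ (guaranteed by the finite lower bound $\xi/(1+2\xi)$ in Lemma~\ref{Lemma:SmoothProperty}), which is exactly what keeps the sample entropy bounded and licenses the earlier application of Azuma's inequality.
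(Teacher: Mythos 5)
Your proposal follows essentially the same route as the paper: the paper proves Theorem~\ref{thm:PrivacyAmpCost} precisely by combining Lemma~\ref{Lemma:SampleEntropyBound} and Corollary~\ref{Corollary:EstBellValue} over the common filtration (via a union bound and the monotonicity of $h[e_p^{\xi}(\cdot)]$), and then invoking Lemma~\ref{Lemma:PatternCount} to convert the sample-entropy bound into a bound on the $\varepsilon_{pe}$-smallest probable set cardinality, with $-\log\varepsilon_{pc}$ added from Lemma~\ref{Lemma:SecDef}. Your additional remarks on conditioning on the realisation of $\{\lambda_i\}$ and on the strictly positive lower bound $\xi/(1+2\xi)$ correctly fill in details the paper leaves implicit, and no gap remains.
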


We remark that unlike $\varepsilon_{pe},\varepsilon_{pc},\varepsilon_{xy}$, the parameter $\xi$ is not a security parameter. Theorem~\ref{thm:PrivacyAmpCost} holds valid for all values of $\xi\in(0,\frac{1}{2})$. Nevertheless, when all the other protocol parameters are fixed, $\xi$ can be optimised for better key generation performance. In particular, it converges to zero with increasing data size. We leave the detailed numerical discussion for its optimisation in Sec.~\ref{Sec:Regular}.

\subsection{Refined result using Kato's inequality}\label{Sec:KatoResult}
In this section, we show a refined result after using Kato's inequality. In general, the deductions are very much similar to those using Azuma's inequality. Here, we state the modifications in each step.

\begin{lemma}[Refined version of Lemma~\ref{Lemma:SampleEntropyBound}]
Given $n\in\mathbb{N}^+,\varepsilon\in(0,1),\delta\in[0,\infty)$, and $\delta'\in\mathbb{R}$ that satisfy
\begin{equation}
  \varepsilon=\exp{-\frac{2n(\delta^2-\delta'^2)}{\left(1+\frac{4\delta'}{3}\right)^2}},
\end{equation}
with a probability no smaller than $1-\varepsilon$, the sum of $\xi$-regularised phase-error sample entropy in $n$ rounds can be upper bounded by the following expression,
  \begin{equation}
    \sum_{i=1}^n\zeta_{p}^{(i)}<\frac{np_X(2)p_Y(0) h\left[ e_{p}^{\xi}\left(\dfrac{1}{n}\sum_{i=1}^n S^{\lambda_i}\right)\right]+nc_{\xi}(\delta+\delta')}{1+2\delta'},
  \end{equation}
where $c_{\xi}$ is given in Eq.~\eqref{Eq:CXi}.
\label{Lemma:RefinedSamEnt}
\end{lemma}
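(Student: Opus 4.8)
The plan is to follow the proof of Lemma~\ref{Lemma:SampleEntropyBound} almost verbatim, replacing the single invocation of Azuma's inequality (Lemma~\ref{Lemma:Azuma}) with the reverse form of Kato's inequality recorded in Eq.~\eqref{eq:reverseKato}. The one structural difference is that Kato's inequality in Lemma~\ref{Lemma:Kato} is phrased for random variables valued in $[0,1]$, rather than for a martingale with bounded increments, so the initial task is to recast the phase-error sample entropy into this normalised form before feeding it into the concentration bound.

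First I would introduce the normalised variables $X_i \equiv \zeta_{p}^{(i)}/c_{\xi}$, with $c_{\xi}=-\log\!\left(\frac{\xi}{1+2\xi}\right)$ as in Eq.~\eqref{Eq:CXi}, and verify that $X_i\in[0,1]$. For a non-key-generation round $\zeta_{p}^{(i)}=0$; for a key-generation round the two possible values $-\log[e_{p}^{\xi}(S^{\lambda_i})]$ and $-\log[1-e_{p}^{\xi}(S^{\lambda_i})]$ both lie in $[0,c_{\xi}]$, which follows directly from the range $e_{p}^{\xi}(S^{\lambda_i})\in\left[\frac{\xi}{1+2\xi},\frac12\right]$ guaranteed by property~2 of Lemma~\ref{Lemma:SmoothProperty}. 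The filtration $\mathbb{F}$ and the measurability requirement $\mathbb{E}(X_{i'}|\mathcal{F}_i)=X_{i'}$ for $i'\leq i$ are inherited unchanged from the construction preceding Lemma~\ref{Lemma:MartingaleConstruct}, so all the hypotheses of Lemma~\ref{Lemma:Kato} are met.

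Next I would apply Eq.~\eqref{eq:reverseKato} to $\{X_i\}$ with the given $\delta,\delta'$ satisfying the stated constraint on $\varepsilon$. Rearranging the resulting high-probability inequality and multiplying through by $c_{\xi}$ gives, with probability at least $1-\varepsilon$,
\[
\sum_{i=1}^n \zeta_{p}^{(i)} < \frac{\sum_{i=1}^n \mathbb{E}\left[\zeta_{p}^{(i)}\big|\mathcal{F}_{i-1}\right] + nc_{\xi}(\delta+\delta')}{1+2\delta'}.
\]
To close, I would reuse the conditional-expectation estimate already proved as Eq.~\eqref{Eq:SmoothRelation}, namely $\mathbb{E}[\zeta_{p}^{(i)}|\mathcal{F}_{i-1}] < p_X(2)p_Y(0)\,h[e_{p}^{\xi}(S^{\lambda_i})]$, and then apply the concavity of $h[e_{p}^{\xi}(\cdot)]$ from Eq.~\eqref{eq:concaveepS}, extended to $n$ arguments by Jensen's inequality, to replace the average of $h[e_{p}^{\xi}(S^{\lambda_i})]$ by $h\!\left[e_{p}^{\xi}\!\left(\frac1n\sum_i S^{\lambda_i}\right)\right]$. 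Substituting this into the numerator yields exactly the claimed bound.

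The genuinely new content relative to Lemma~\ref{Lemma:SampleEntropyBound} is confined to the normalisation and the verification that $\zeta_{p}^{(i)}\in[0,c_{\xi}]$; everything downstream of the concentration step is identical. I expect the main obstacle to be careful bookkeeping of the inequality's direction: Eq.~\eqref{eq:reverseKato} lower-bounds $\frac1n\sum_i\mathbb{E}(X_i|\mathcal{F}_{i-1})$ in terms of $\frac1n\sum_i X_i$, so one must track the rearrangement (and the sign of $1+2\delta'$) to be sure it produces an \emph{upper} bound on $\sum_i\zeta_{p}^{(i)}$ rather than a lower one, which is the quantity needed later for the phase-error cardinality count in Lemma~\ref{Lemma:PatternCount}.
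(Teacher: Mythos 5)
Your proposal is correct and matches the paper's own proof essentially step for step: the paper likewise normalises to $c_{\xi}^{-1}\zeta_{p}^{(i)}\in[0,1]$ via Eq.~\eqref{eq:xiephrange}, applies the reverse form of Kato's inequality in Eq.~\eqref{eq:reverseKato}, and then closes with the conditional-expectation bound of Eq.~\eqref{Eq:SmoothRelation} followed by the concavity property Eq.~\eqref{eq:concaveepS}. The sign caveat you flag (needing $1+2\delta'>0$ when dividing) is real but is handled implicitly in the paper as well, so it does not distinguish your argument from theirs.
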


\begin{proof}
By Eq.~\eqref{eq:xiephrange}, $\forall i$, we have
\begin{equation}
  c_{\xi}^{-1}\zeta_{p}^{(i)}\in[0,1].
\end{equation}
Then, we can apply Eq.~\eqref{eq:reverseKato} to the above random variables, such that with a probability no smaller than $1-\varepsilon$,
\begin{equation}
  \sum_{i=1}^n\zeta_{p}^{(i)}<\frac{\sum_{i=1}^n \mathbb{E}_{\Pr{X_i,Y_i,E_{p}^{\lambda_i}}}[\zeta_{p}^{(i)}|\mathcal{F}_{i-1}]+nc_{\xi}(\delta+\delta')}{1+2\delta'}.
\end{equation}
According to the definition of $\zeta_{p}^{(i)}$ in Eq.~\eqref{Eq:RelaxSampleEntropy} and applying Eq.~\eqref{eq:binaryEntOpt}, we have
\begin{equation}
\begin{split}
  \mathbb{E}_{\Pr{X_i,Y_i,E_{p}^{\lambda_i}}}\left[\zeta_{p}^{(i)}|\mathcal{F}_{i-1}\right] &= p_X(2)p_Y(0) \left\{-e_{p}^{\lambda_i}\log\left[e_{p}^{\xi}(S^{\lambda_i})\right]-\left(1-e_{p}^{\lambda_i}\right)\log\left[1-e_{p}^{\xi}(S^{\lambda_i})\right]\right\} \\ &< p_X(2)p_Y(0) h\left[e_{p}^{\xi}(S^{\lambda_i})\right].
\end{split}
\end{equation}
Taking the summation of the above inequality over rounds and further applying Eq.~\eqref{eq:concaveepS}, one can prove the lemma.
\end{proof}

\begin{lemma}[Refined version of Lemma~\ref{Lemma:BellValueBound}]
Given $n\in\mathbb{N}^+,\varepsilon\in(0,1),\delta\in[0,\infty)$, and $\delta'\in\mathbb{R}$ that satisfy
\begin{equation}
  \varepsilon=\exp{-\frac{2n(\delta^2-\delta'^2)}{\left(1+\frac{4\delta'}{3}\right)^2}},
\end{equation}
with a probability no smaller than $1-\varepsilon$, the following inequality holds,
  \begin{equation}
    \frac{1}{n}\sum_{i=1}^n \Pr{\zeta_{xy}^{(i)} = 1|\mathcal{F}_{i-1}}>(1+2\delta')\frac{1}{n}\sum_{i=1}^n\zeta_{xy}^{(i)}-\delta-\delta'.
  \end{equation}
\end{lemma}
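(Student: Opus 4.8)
The plan is to obtain this statement as an essentially verbatim application of Kato's inequality (Lemma~\ref{Lemma:Kato}), mirroring the proof of Lemma~\ref{Lemma:BellValueBound} but substituting the sharper Kato concentration for Azuma's bound. The crucial structural observation, which makes Kato directly applicable, is that each test-round variable $\zeta_{xy}^{(i)}$ defined in Eq.~\eqref{Eq:TestRoundsRV} is binary, taking values in $\{0,1\}\subseteq[0,1]$, so it satisfies the boundedness hypothesis $0\leq X_i\leq 1$ of Lemma~\ref{Lemma:Kato} without any rescaling (in contrast to the phase-error case of Lemma~\ref{Lemma:RefinedSamEnt}, where one must first divide by $c_\xi$).

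First I would confirm that the adaptedness and filtration requirements are already in place from the martingale construction of $\Delta_{xy}^{(l)}$ in Eq.~\eqref{eq:DeltaBellRV}: the sequence $\{\zeta_{xy}^{(i)}\}$ is adapted to $\mathbb{F}=\{\mathcal{F}_i\}$, since $\mathcal{F}_{i-1}$ is generated by the system variable $\Lambda_i$ and the setting/outcome variables up to round $i$ are $\mathcal{F}_i$-measurable, exactly as needed by Lemma~\ref{Lemma:Kato}. Next I would record the elementary Bernoulli identity $\mathbb{E}(\zeta_{xy}^{(i)}|\mathcal{F}_{i-1})=\Pr{\zeta_{xy}^{(i)}=1|\mathcal{F}_{i-1}}$, so that the conditional expectation appearing in Kato's bound coincides with the conditional probability in the statement. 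Setting $X_i=\zeta_{xy}^{(i)}$ and invoking the reverse form of Kato's inequality, Eq.~\eqref{eq:reverseKato}, then yields
\begin{equation}
\Pr{\frac{1}{n}\sum_{i=1}^n \Pr{\zeta_{xy}^{(i)}=1|\mathcal{F}_{i-1}} \leq (1+2\delta')\frac{1}{n}\sum_{i=1}^n\zeta_{xy}^{(i)} - \delta - \delta'} \leq \exp{-\frac{2n(\delta^2-\delta'^2)}{\left(1+\frac{4\delta'}{3}\right)^2}}.
\end{equation}
Under the hypothesis that $\delta,\delta'$ are chosen so that the right-hand side equals $\varepsilon$, taking the complement of the event inside the probability gives precisely the claimed inequality with probability no smaller than $1-\varepsilon$; unlike in Lemma~\ref{Lemma:RefinedSamEnt}, no algebraic rearrangement is even required, since the desired bound is literally the complementary event of Eq.~\eqref{eq:reverseKato}.

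Honestly, there is no substantive obstacle internal to this lemma: it is a one-line instantiation of Eq.~\eqref{eq:reverseKato}. The only point demanding genuine care — and the reason this parallel development is worth stating separately — is conceptual rather than computational, namely that $\zeta_{xy}^{(i)}$ and the phase-error variables of Lemma~\ref{Lemma:RefinedSamEnt} are built over the \emph{same} filtration $\mathbb{F}$. This shared-filtration property, which hinges on the random basis settings being embedded into the $\sigma$-algebras $\mathcal{F}_i$, is what later permits the Bell-value estimate here and the phase-error sample-entropy estimate to be combined consistently; for the present statement it suffices to verify the measurability of $\zeta_{xy}^{(i)}$ with respect to $\mathcal{F}_i$, which is immediate from the definition of $\mathcal{F}_{i-1}$.
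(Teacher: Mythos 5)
Your proposal is correct and coincides with the paper's own proof: the paper likewise notes that the binary variables $\zeta_{xy}^{(i)}$ satisfy the $0\leq X_i\leq 1$ hypothesis of Kato's inequality without the $c_\xi^{-1}$ rescaling needed in Lemma~\ref{Lemma:RefinedSamEnt}, and then applies Eq.~\eqref{eq:reverseKato} directly over the same filtration. Your added remarks (the Bernoulli identity $\mathbb{E}(\zeta_{xy}^{(i)}|\mathcal{F}_{i-1})=\Pr{\zeta_{xy}^{(i)}=1|\mathcal{F}_{i-1}}$ and the shared-filtration point) are exactly the details the paper leaves implicit.
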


The proof for this lemma is basically the same as the one for Lemma~\ref{Lemma:RefinedSamEnt}. Note that $\forall x,y\in\{0,1\}$, with respect to the filtration $\{\mathcal{F}_i\}$, the random variables $\zeta_{xy}^{(i)}$ naturally suffice the condition in Lemma~\ref{Lemma:Kato}.

\begin{corollary}[Refined version of Corollary~\ref{Corollary:EstBellValue}]
Given $n\in\mathbb{N}^+,\varepsilon_{xy},\varepsilon\in(0,1),\delta_{xy}\in[0,\infty)$, and $\delta_{xy}'\in\mathbb{R}$ that satisfy
\begin{equation}
  \begin{split}
     \varepsilon&=\sum_{x,y\in\{0,1\}}\varepsilon_{xy}, \\
     \varepsilon_{xy}&=\exp{-\frac{2n(\delta_{xy}^2-\delta_{xy}'^2)}{\left(1+\frac{4\delta_{xy}'}{3}\right)^2}},
  \end{split}
\end{equation}
with probability no smaller than $1-\varepsilon$, the average expected Bell value over $n$ rounds can be lowered bounded by,
\begin{equation}
  \dfrac{1}{n}\sum_{i=1}^nS^{\lambda_i} \geq \sum_{x,y\in\{0,1\}}\left\{\dfrac{2\left[ \left(1+2\delta_{xy}'\right)\left(m_{xy}-q_{xy}\right)-n\left(\delta_{xy}+\delta_{xy}'\right)\right]}{np_X(x)p_Y(y)}-1\right\},
\end{equation}
where $S^{\lambda_i}$ is the expected Bell value of the $i$th round with system variable value $\Lambda_i=\lambda_i$, as given in Eq.~\eqref{Eq:ExpectedBellValue}.
\end{corollary}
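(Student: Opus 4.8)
The plan is to lift the single-setting refined concentration lemma stated just above to all four Bell-test settings at once, and then read off the average expected Bell value from the round-averaged form of Eq.~\eqref{Eq:ExpectedBellValue}. Averaging that identity over the $n$ rounds gives
\begin{equation}
\frac{1}{n}\sum_{i=1}^n S^{\lambda_i} = \sum_{x,y\in\{0,1\}}\left(\frac{2}{p_X(x)p_Y(y)}\cdot\frac{1}{n}\sum_{i=1}^n\Pr{\zeta_{xy}^{(i)}=1|\mathcal{F}_{i-1}} - 1\right),
\end{equation}
so the task reduces to lower bounding each of the four conditional-probability averages. Since each such average enters with the strictly positive weight $2/[p_X(x)p_Y(y)]$, a lower bound on every average translates directly into a lower bound on $\frac{1}{n}\sum_i S^{\lambda_i}$.

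For each fixed $(x,y)$ I would invoke the refined version of Lemma~\ref{Lemma:BellValueBound}: because $\{\zeta_{xy}^{(i)}\}_i$ takes values in $[0,1]$ and is adapted to the common filtration $\{\mathcal{F}_i\}$, Kato's inequality (Lemma~\ref{Lemma:Kato}) applies with the deviation parameters $\delta_{xy},\delta_{xy}'$ tied to $\varepsilon_{xy}$ through the stated constraint, yielding with probability at least $1-\varepsilon_{xy}$ that
\begin{equation}
\frac{1}{n}\sum_{i=1}^n\Pr{\zeta_{xy}^{(i)}=1|\mathcal{F}_{i-1}} > (1+2\delta_{xy}')\frac{1}{n}\sum_{i=1}^n\zeta_{xy}^{(i)} - \delta_{xy} - \delta_{xy}'.
\end{equation}
Next I would substitute the observed identity $\sum_{i=1}^n\zeta_{xy}^{(i)} = m_{xy}-q_{xy}$, insert the four resulting bounds into the averaged expectation above, and collect the terms over the common denominator $n\,p_X(x)p_Y(y)$. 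This is a purely mechanical rearrangement that reproduces the claimed closed-form expression.

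The only probabilistic combination needed is a union bound, which is legitimate precisely because all four processes $\{\zeta_{xy}^{(i)}\}_i$ are defined over the \emph{same} filtration $\{\mathcal{F}_i\}$: the four refined inequalities then fail with total probability at most $\sum_{x,y}\varepsilon_{xy}=\varepsilon$, so they hold jointly with probability at least $1-\varepsilon$. I expect the main obstacle to be bookkeeping rather than any hard estimate --- the heavy lifting is already carried out by Kato's inequality. In particular, the points requiring care are tracking the direction of the inequality (confirming that the positive coefficient $2/[p_X(x)p_Y(y)]$ preserves the lower bound) and making sure the per-setting deviation parameters $\delta_{xy},\delta_{xy}'$ and failure probabilities $\varepsilon_{xy}$ are handled consistently, so that the four contributions assemble into a single $(1-\varepsilon)$-confidence statement without invoking any cross-correlation argument.
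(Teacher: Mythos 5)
Your proposal is correct and matches the paper's route: the paper likewise averages Eq.~\eqref{Eq:ExpectedBellValue} over rounds, applies the Kato-based refined Lemma~\ref{Lemma:BellValueBound} to each of the four settings with $\sum_i\zeta_{xy}^{(i)}=m_{xy}-q_{xy}$, and combines them via a union bound over the $\varepsilon_{xy}$. No gaps.
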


\begin{theorem}[Refined version of Theorem~\ref{thm:PrivacyAmpCost}]
For the DIQKD protocol given in Box~\ref{box:ProtocolDetail}, suppose the number of experimental rounds is $n$. Consider failure probabilities in phase-error sample entropy estimation $\varepsilon_{pe}\in(0,1)$, in parameter estimation of each Bell test setting $\forall x,y\in\{0,1\}, \varepsilon_{xy}\in(0,1)$, and in privacy amplification $\varepsilon_{pc}\in(0,1)$, such that they satisfy $\varepsilon = \varepsilon_{pe} + \sum_{x,y\in\{0,1\}}\varepsilon_{xy} + \varepsilon_{pc}\in(0,1)$. Choose parameters $\xi\in\left(0,\frac{1}{2}\right)$, $\delta_{p}\in [0,\infty),\delta_{p}'\in\mathbb{R}$, and $\forall x,y\in\{0,1\}, \delta_{xy}\in [0,\infty),\delta_{xy}'\in\mathbb{R}$ that satisfy
\begin{equation}
\begin{split}
   \varepsilon_{pe} &= \exp[-\frac{2n(\delta_{p}^2-\delta_{p}'^2)}{\left(1+\frac{4\delta_{p}'}{3}\right)^2}], \\
   \varepsilon_{xy} &= \exp[-\frac{2n(\delta_{xy}^2-\delta_{xy}'^2)}{\left(1+\frac{4\delta_{xy}'}{3}\right)^2}]. \\
\end{split}
\end{equation}
With a failure probability no larger than $\varepsilon$, the cost for privacy amplification can be upper bounded by
\begin{equation}
  I_{pa}\leq \frac{1}{1+2\delta_{p}'}n\left\{p_X(2)p_Y(0)h\left[e_{p}^{\xi} (S_{est})\right] + c_{\xi}(\delta_{p}+\delta_{p}') \right\}-\log\varepsilon_{pc},
\end{equation}
where $S_{est}$ is the estimated expected Bell value,
\begin{equation}\label{Eq:EstExpBell}
\begin{split}
   S_{est}&= \sum_{x,y\in\{0,1\}}\left\{\dfrac{2\left[ (1+2\delta_{xy}')(m_{xy}-q_{xy})-n(\delta_{xy}+\delta_{xy}')\right]}{np_X(x)p_Y(y)}-1\right\}, \\
\end{split}
\end{equation}
the $\xi$-regularised phase-error probability function $e_p^{\xi}(S)$ is given in Eq.~\eqref{Eq:SmoothPhaseErrorProb}, and $c_{\xi}$ is a constant originated from the regularisation parameter $\xi$,
\begin{equation}
  c_{\xi}= \log(\frac{1+2\xi}{\xi}).
\end{equation}
\label{Thm:SuppMainResult}
\end{theorem}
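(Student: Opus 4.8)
The plan is to assemble the two refined single-inequality results already in hand---Lemma~\ref{Lemma:RefinedSamEnt} for the phase-error side and the refined version of Corollary~\ref{Corollary:EstBellValue} for the Bell-test side---and then convert the resulting sample-entropy bound into a cardinality bound via Lemma~\ref{Lemma:PatternCount}. The key structural fact enabling the combination is that both martingales $\{\Delta_{p}^{(l)}\}$ and $\{\Delta_{xy}^{(l)}\}$ live on the \emph{same} filtration $\mathbb{F}=\{\mathcal{F}_j\}$ generated by the system variables $\Lambda_j$, so their Kato-type concentration statements may be combined under a single union bound without any independence assumption.

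First I would invoke Lemma~\ref{Lemma:RefinedSamEnt}: with $\delta_{p},\delta_{p}'$ constrained through $\varepsilon_{pe}$, except with probability $\varepsilon_{pe}$ the total regularised phase-error sample entropy obeys
\[
\sum_{i=1}^n \zeta_{p}^{(i)} < \frac{n\,p_X(2)p_Y(0)\,h\!\left[e_{p}^{\xi}\!\left(\tfrac1n\sum_{i} S^{\lambda_i}\right)\right] + n c_{\xi}(\delta_{p}+\delta_{p}')}{1+2\delta_{p}'}.
\]
Here the finite-range property Eq.~\eqref{eq:xiephrange} is exactly what guarantees $c_{\xi}^{-1}\zeta_{p}^{(i)}\in[0,1]$, so that the reverse Kato inequality \eqref{eq:reverseKato} applies; this is the role played by the regularisation $\xi$, which removes the logarithmic divergence. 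Second, I would apply the refined corollary to lower-bound the inaccessible average expected Bell value, $\tfrac1n\sum_i S^{\lambda_i}\geq S_{est}$, except with probability $\sum_{xy}\varepsilon_{xy}$, with $S_{est}$ as in Eq.~\eqref{Eq:EstExpBell}.

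The crucial bridge between the two is the monotonicity in Lemma~\ref{Lemma:SmoothProperty}: since $h[e_{p}^{\xi}(S)]$ is non-increasing in $S$, the lower bound on the Bell value yields the worst-case upper bound $h[e_{p}^{\xi}(\tfrac1n\sum_i S^{\lambda_i})]\leq h[e_{p}^{\xi}(S_{est})]$. Substituting this into the sample-entropy bound and taking a union bound over the two failure events (total $\varepsilon_{pe}+\sum_{xy}\varepsilon_{xy}$) gives $\sum_i \zeta_{p}^{(i)} < C_U$, where $C_U$ is the right-hand side of the claimed bound without the $-\log\varepsilon_{pc}$ term. Finally, identifying the sample entropy $-\log q(x_i)$ of Lemma~\ref{Lemma:PatternCount} with $\zeta_{p}^{(i)}$ restricted to the key-generation rounds, the event $\sum_i \zeta_{p}^{(i)} < C_U$ confines the phase-error pattern to a set of cardinality at most $2^{C_U}$; hence the smallest probable set of phase-error patterns has log-cardinality at most $C_U$, and adding the $-\log\varepsilon_{pc}$ cost of phase-error correction from Eq.~\eqref{Eq:PrivacyEstProb} and Lemma~\ref{Lemma:SecDef} yields the stated $I_{pa}$.

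I expect the main subtlety---rather than any hard calculation, since the heavy lifting is already in Lemmas~\ref{Lemma:RefinedSamEnt} and~\ref{Lemma:PatternCount}---to be the bookkeeping of the three failure probabilities across the concentration steps, together with verifying that the monotone substitution $\tfrac1n\sum_i S^{\lambda_i}\to S_{est}$ goes in the conservative direction for the legitimate users. A secondary point to state carefully is that the sample-entropy formulation, not a phase-error-rate bound, is what circumvents the convexity obstruction discussed after Problem~\ref{OriginalProblem}: the average expected Bell value is inserted inside the concave envelope $h\circ e_{p}^{\xi}$ (already handled in Lemma~\ref{Lemma:RefinedSamEnt} via Eq.~\eqref{eq:concaveepS}), so that only monotonicity is needed at the final assembly stage.
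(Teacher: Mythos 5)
Your proposal is correct and follows essentially the same route as the paper: the paper obtains this theorem by combining Lemma~\ref{Lemma:RefinedSamEnt} and the refined version of Corollary~\ref{Corollary:EstBellValue} (both Kato-type concentration bounds on martingales over the common filtration generated by the system variables), using the monotonicity of $h[e_p^{\xi}(\cdot)]$ to substitute $S_{est}$ conservatively, taking the union bound over the failure events, and then converting the sample-entropy bound into a cardinality bound via Lemma~\ref{Lemma:PatternCount} before adding the $-\log\varepsilon_{pc}$ privacy-amplification cost. Your identification of the regularisation's role (ensuring $c_{\xi}^{-1}\zeta_p^{(i)}\in[0,1]$ so that Kato's inequality applies) and of the concavity/monotonicity division of labour matches the paper's argument.
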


Here, $\delta_{p},\delta_{p}'$ and $c_{\xi}$ come from the estimation of the phase-error sample entropy and $\delta_{xy}$ and $\delta_{xy}'$ come from the estimation of the expected Bell value. The deviation terms $\delta_p,\delta_p',\delta_{xy}$ and $\delta_{xy}'$ scale in the order of $O(1/\sqrt{n})$. Compared with Azuma's inequality, $\delta_{p}'$ and $\delta_{xy}'$ are new modification terms introduced in Kato's inequality. If  $\delta_{p}'=\delta_{xy}'=0$, Kato's inequality becomes Azuma's inequality. In correspondence to the informal version of the theorem in the main text, for the symmetric case where the input settings are chosen with equal probabilities for Bell tests and the values $q_{xy}$ are expected to be almost identical, one can choose $\forall x,y\in\{0,1\}, \varepsilon_{xy}=\varepsilon_{S}/4$ and fix identical values for $\delta_{xy}\equiv\delta$ and $\delta_{xy}'\equiv\delta'$. In this case, Eq.~\eqref{Eq:EstExpBell} can be simplified to,
\begin{equation}
  S_{est}\approx(1+2\delta')\bar{S}+8\delta'-\sum_{x,y\in\{0,1\}}\frac{2(\delta+\delta')}{p_X(x)p_Y(y)}.
\end{equation}
For simplicity, we take the approximation $m_{xy}\approx np_X(x)p_Y(y)$.
In the asymptotic limit of infinite data size, the average cost per round for privacy amplification converges to
\begin{equation}
  \lim_{n\rightarrow\infty}\frac{1}{n}I_{pa}=p_X(2)p_Y(0)h\left[e_{p} (\bar{S})\right],
\end{equation}
with the regularisation parameter $\xi$ converging to zero, which is tight in the asymptotic limit. In the numerical simulation, we take the information reconciliation efficiency to be $1$, reaching the Shannon limit.

In the refined result, the terms $\delta_p,\delta_p',\delta_{xy}$, and $\delta_{xy}'$ can be optimised according to Eqs.~\eqref{Eq:MaxOpt} and \eqref{Eq:MinOpt}. Here, we provide an optimisation protocol. Note that this protocol only aims at tighter finite-size performance. Non-optimal results do not cause any security problems as long as they satisfy the requirements in Theorem~\ref{Thm:SuppMainResult}.

\begin{mybox}[label={box:KatoOpt}]{{Optimisation protocol for the use of Kato's inequality}}
\textbf{Arguments: }\\
$n$: the number of rounds of quantum measurements \\
$\{p_X(x)\}_x,\,\{p_Y(y)\}_y$: probability distributions for the basis choices \\
$\varepsilon_{pe}\in(0,1)$: failure probability in phase-error sample entropy estimation \\
$\varepsilon_{xy}\in(0,1),\forall x,y\in\{0,1\}$: failure probability in Bell value estimation for the setting $(x_i,y_i)=(x,y)$ \\
$\xi\in(0,\frac{1}{2})$: regularisation parameter

\tcblower
\begin{enumerate}[leftmargin=*]
\item\emph{Prediction for experimental data: }Before the DIQKD experiment, for each Bell test setting, predict the number of rounds, $\tilde{q}_{xy}$, where $(-1)^{x\cdot y}a_ib_i=-1$.
\item \emph{Optimisation for Bell-value estimation: }In Eq.~\eqref{Eq:MaxOpt}, $\forall x,y\in\{0,1\}$, take the replacement
    \begin{equation}\nonumber
    \begin{split}
       \tilde{X}\leftarrow\frac{\tilde{q}_{xy}}{np_X(x)p_Y(y)}, \varepsilon\leftarrow\varepsilon_{xy}, \\
    \end{split}
  \end{equation}
  solve the optimisation and then take
  \begin{equation}\nonumber
    \begin{split}
    \delta_{xy}\leftarrow\delta^*,
    \delta_{xy}'\leftarrow\delta'^*.
    \end{split}
  \end{equation}

\item \emph{Prediction of the expected Bell value: }In Eq.~\eqref{Eq:EstExpBell}, take the replacement
    \begin{equation}\nonumber
      m_{xy}\leftarrow np_X(x)p_Y(y),q_{xy}\leftarrow\tilde{q}_{xy},
    \end{equation}
    calculate the formula and then take
    \begin{equation}\nonumber
      \tilde{S}_{est}\leftarrow S_{est}.
    \end{equation}
\item \emph{Optimisation for regularised phase-error sample entropy estimation: }In Eq.~\eqref{Eq:MinOpt}, take the replacement
    \begin{equation}\nonumber
    \begin{split}
       \tilde{\chi}\leftarrow c_{\xi}^{-1}h[e_p^{\xi}(\tilde{S}_{est})], \varepsilon\leftarrow\varepsilon_{pe}, \\
    \end{split}
    \end{equation}
    solve the optimisation and then take
    \begin{equation}\nonumber
      \delta_{p}\leftarrow\delta^*, \delta_{p}'\leftarrow\delta'^*.
    \end{equation}
\end{enumerate}
\end{mybox}

\section{Advantage Key Distillation}\label{Supp:AdvKeyDistill}
So far, we have restricted our discussions to the standard protocol in Box~\ref{box:ProtocolDetail}, which has the following features.
\begin{enumerate}
\item
The devices always generate binary measurement outputs. That is, the measurement results of each observable range in $\{+1,-1\}$. If not, output values other than in this set are merged to $+1$ or $-1$.
\item
The measurement results under the key generation setting are taken directly as the raw keys.
\item
The users carry out one-way classical communication in the post-processing procedure.
\end{enumerate}

For common device-dependent QKD protocols, there exist methods to obtain better key generation performance than a similar `standard protocol'. For instance, one could utilise the noisy processing~\cite{kraus2005lower}, the detection tag information~\cite{ma2012improved}, and two-way classical communication such as the B-step~\cite{gottesman2003proof} to improve the tolerance on transmittance and noise. In Sec.~\ref{Sec:NumericalDiscussion} we shall explain the specific meanings of `loss' and `noise' in experiments.
On the contrary, little is known about the feasibility of the advantage key distillation methods in DIQKD so far. An exceptional work in~\cite{tan2020advantage} shows that the repetition code can be adopted in DIQKD. Nevertheless, the analysis is restricted to the i.i.d.~scenario.


Fortunately, under the complementarity-based security analysis, the advantage key distillation methods mentioned above can be adopted in DIQKD under the most general attacks --- coherent attacks. In this section, we present the procedures for carrying out methods and analyse their validity.


\subsection{Detection tag}
In a real experiment, detection outcomes are not necessarily binary. Typically, a measurement device with two single-photon detectors might output inconclusive results of no-clicks and double-clicks denoted by $\phi$, in addition to conclusive results of single-clicks denoted by $+1$ and $-1$. In DIQKD, in order to apply Jordan's Lemma, Alice and Bob would coarse-grain all other detection results to a bit, 0 or 1. For instance, in our simulation as shown in Sec.~\ref{Supp:EAT}, we project $\phi$ to $+1$. Then, Alice and Bob could attach a \textit{detection tag} to each raw bit to indicate whether it comes from a genuine single-click or coarse-grain. Note that this detection tag could contain any extra information from the measurement device beyond the basis and key bits.

In standard DIQKD, Alice and Bob simply ignore the detection tag in postprocessing. Obviously, this is a waste of information. By taking advantage of the detection tag, Alice and Bob can perform information reconciliation more efficiently \cite{ma2012improved}, as shown in Box~\ref{box:Vacancy}. Bob hashes his raw bits and sends the parity information to Alice, who locates the bit errors and correct them. In this procedure, Alice can group her raw key bits by the detection tag, $t_a$, and correct errors for each group separately. Since the bit error rate for each group with different tags is different, Alice can achieve a higher information reconciliation rate,
\begin{equation}
\sum_{t_a} H(\kappa^{B}|\kappa^{A,t_a}) \geq H(\kappa^{B}|\kappa^{A}),
\end{equation}
where $\kappa^{A}$ is the concatenation of $\kappa^{A,t_a}$ with different $t_a$ and $H(\cdot)$ is the Shannon entropy.


\begin{mybox}[label={box:Vacancy}]{{Usage of Loss Information}}
\begin{enumerate}[leftmargin=*]
\item\emph{Measurement: }For every round $i\in[n]$, Alice and Bob
\begin{enumerate}[leftmargin=*]
\item randomly set measurement bases $x_i\in \{0,1,2\},y_i\in \{0,1\}$;
\item record the outputs of the devices $a_i'\in = \{+1,-1,\phi\},b_i' \in \{+1,-1,\phi\}$;
\item assign the events $\phi$ to the value $+1$ and obtain $\{a_i\}_i,\{b_i\}_i\in\{+1,-1\}^n$;
\item Alice records the loss information in an $n$-bit-string tag $t_a$, where the $i$th bit is $1$ if $a_i'=\phi$.
\end{enumerate}
\item \emph{Parameter Acquisition: }Alice and Bob obtain the following parameters in the experiment
\begin{enumerate}[leftmargin=*]
\item the number of rounds $m_{xy}$ with the inputs $(x_i,y_i) = (x,y)\in\{0,1,2\}\times\{0,1\}$;
\item for $(x_i,y_i)=(2,0)$, the difference between $\{a_i'\}_i,\{b_i\}_i$;
\item for $(x_i,y_i) = (x,y)\in\{0,1\}\times\{0,1\}$, the number of positions $q_{xy}$ where $(-1)^{x\cdot y}a_i\cdot b_i=-1$.
\end{enumerate}
\item \emph{Raw Key Acquisition: }Alice and Bob obtain the strings $\kappa^{A},\kappa^B$ from $\{a_i\},\{b_i\}$ with $(x_i,y_i)=(2,0)$.
\item \emph{Privacy Estimation: }Alice and Bob estimate the key privacy with statistics acquired in Step 2, (c).
\item \emph{Key Distillation: }Alice and Bob
\begin{enumerate}[leftmargin=*]
\item reconcile the bit strings to $\kappa^B$ through an encrypted classical channel with the tag $t_a$;
\item  apply a privacy amplification to $\kappa^B$ and derive the final key.
\end{enumerate}
\end{enumerate}
\end{mybox}


\subsection{Adding-noise pre-processing}
Instead of directly taking the measurement results in key generation rounds as for information reconciliation and privacy amplification, Alice and Bob can locally process their raw key data before classical communication. We employ the \textit{adding-noise} pre-processing technique \cite{kraus2005lower}, as shown in Box~\ref{box:NoisyProcess}. Bob randomly flips some of his raw key bits. Afterwards, the users perform the same procedures of information reconciliation and privacy amplification as in the standard protocol, in which they reconcile the keys to Bob's flipped key-bit string. The random flips can be interpreted as noise deliberately added by Bob, which unavoidably increases the quantum bit error rate and hence the information reconciliation cost. On the other hand, as the positions of flipped key bits are not controlled or known by Eve, this added noise would effectively reduce the amount of information leakage and hence the cost for privacy amplification. Overall, the users might acquire a net increase in key generation.

\begin{mybox}[label={box:NoisyProcess}]{{Adding-noise pre-processing}}
\begin{enumerate}[leftmargin=*]
\item\emph{Measurement: }For every round $i\in[n]$, Alice and Bob
    \begin{enumerate}[leftmargin=*]
    \item randomly set measurement bases $x_i\in \{0,1,2\},y_i\in \{0,1\}$ with probability distributions $p_X,p_Y$;
    \item record the outputs $a_i \in \{\pm1\},b_i \in \{\pm1\}$.
    \end{enumerate}
\item \emph{Parameter Acquisition: }Alice and Bob obtain the following parameters in the experiment
    \begin{enumerate}[leftmargin=*]
    \item the number of rounds $m_{xy}$ with inputs $(x_i,y_i) = (x,y)\in\{0,1,2\}\times\{0,1\}$;
    \item for $(x_i,y_i)=(2,0)$, the number of positions $q$ where $a_i\neq b_i$;
    \item for $(x_i,y_i) = (x,y)\in\{0,1\}\times\{0,1\}$, the number of positions $q_{xy}$ where $(-1)^{x\cdot y}a_i\cdot b_i=-1$.
    \end{enumerate}
\item
\emph{Raw Key Acquisition:} Alice and Bob obtain the bit strings $\kappa^A,\kappa^B$ from $a_i,b_i$ with $(x_i,y_i)=(2,0)$.
\item
\emph{Adding-noise:} Bob randomly flips the bits in $\kappa^B$ according to a conditional probability distribution $\Pr(u|b)$ and obtain $\kappa^U$.
\item \emph{Privacy Estimation: }Alice and Bob estimate the key privacy with statistics acquired in Step 2(c).
\item \emph{Key Distillation: }Alice and Bob
\begin{enumerate}[leftmargin=*]
\item reconcile the bit strings to $\kappa^U$ through an encrypted classical channel;
\item  apply a privacy amplification to $\kappa^U$ and derive the final key.
\end{enumerate}
\end{enumerate}
\end{mybox}

In the adding-noise step of the modified protocol, Bob flips his original raw key bits at random. This step can be equivalently described by a quantum operation as shown in Figure \ref{Fig:AddingNoise}. In the key generation rounds, Bob holds an ancillary system $\mathcal{H}_{B_i'}$ initialised in the state $\ket{\varphi}\in\mathcal{H}_{B_i'}$ and applies a quantum control-not ($CNOT$) operation to the subsystem $\mathcal{H}_{B}^{\lambda_i}$. Denote the $CNOT$ operation as $\hat{C}_{B_i'B_i}$, with the first subscript for the control and the second subscript for the target. The ancillary state takes the form of $\ket{\varphi}=\sqrt{1-q}\ket{0}+\sqrt{q}\ket{1}$, where $q=\Pr(u_i=-b_i|b_i)$. Here, Bob takes the ancillary states to be i.i.d.~over all the key generation rounds, corresponding to an i.i.d.~random flip in a real experiment.


\begin{figure}[!hbtp]
	\centering\resizebox{4.5cm}{!}{
		\Qcircuit @C=1.5em @R=2.25em {
			\lstick{} & \ustick{\mathcal{H}_{A_i}} \qw &    \qw &       \multigate{1}{I.R.} \qw & \measureD{\hat{A}_{\theta_i}^{\lambda_i}} & \cw \\
			\lstick{} & \ustick{\mathcal{H}_{B_i}} \qw &    \targ &     \ghost{I.R.} \qw & \measureD{\hat{B}_{\bot}^{\lambda_i}} & \cw \\
			\lstick{\ket{\varphi}} & \ustick{\mathcal{H}_{B_{i'}}} \qw & \ctrl{-1} & \qw     &  \qw & \qw
			\inputgroupv{1}{2}{0.5em}{1.4em}{\rho_{\lambda_i}^{AB}} \\
		}			
	}
\caption{An effective quantum circuit for the pre-processing of adding noise. We depict the $i$th round subsystem for illustration. The noise can be effectively described by an ancillary quantum system, $\varphi$, which affects Bob's quantum system via a $CNOT$ operation. The quantum gate $I.R.$ refers to the operation of one-way information reconciliation from Bob to Alice. In the circuit, we consider the virtual phase-error measurement where Alice and Bob measure the observables of $\hat{A}_{\theta_i}^{\lambda_i},\hat{B}_{\bot}^{\lambda_i}$ on their own systems, respectively. It is not hard to show that the adding-noise operation would not change the results of phase-error measurement, since they commute with each other and with $I.R.$}
	\label{Fig:AddingNoise}
\end{figure}
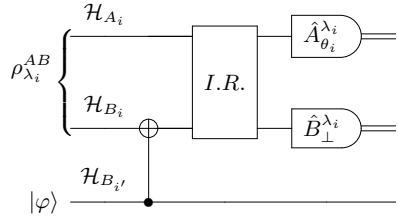

The state of Bob's $i$th ancillary system after $CNOT$ operation is given by,
\begin{equation}\label{Eq:NoiseAncilla}
	\begin{split}
		\sigma^{B_i'}=(1-e_{p}^{\lambda_i})\ketbra{\varphi}+e_{p}^{\lambda_i}\hat{Z}\ketbra{\varphi}\hat{Z}.
	\end{split}
\end{equation}
where $e_{p}^{\lambda_i}$ is the original phase-error probability in the $i$th round without the adding-noise pre-processing as a function of the Bell value $S$ in Eq.~\eqref{Eq:SuppPhaseError}. The eigenvalues of this state are
\begin{equation}
	\begin{split}
		\omega^\pm(e_p^{\lambda_{i}})=\frac{1\pm\sqrt{1-4\left\{q-q^2-\left[1-2e_{p}^{\lambda_i}\sqrt{q(1-q)}\right]^2\right\}}}{2}.
	\end{split}
\end{equation}
We employ the adding-noise analysis \cite{renes2007noisy} and extend it to the complementarity security proof. Here, we utilise the phase-error cardinality argument and apply a martingale-based method similar to that in Sec.~\ref{Supp:FiniteSize}. In the limit of infinite data size, the cost  in privacy amplification is given by, 
\begin{equation}\label{Supp:NoisySave}
\begin{split}
\sum_{i=1}^m h(e_{p}^{\lambda_i}) - H\left(\sigma^{B'}\right) &=\sum_i \left[h(e_{p}^{\lambda_i})- H\left(\sigma^{B_i'}\right)\right] \\
&=\sum_i \left\{h(e_{p}^{\lambda_i})-h\left[\omega^{+}(e_p^{\lambda_{i}})\right] \right\} \\
&\leq m \left(h\left[e_p(\bar{S})\right] - h\left\{\omega^+[e_p(\bar{S})]\right\}\right),
\end{split}
\end{equation}
where the function $H(\cdot)$ represents the von Neumann entropy function, and the summations in the first two lines are taken over the key generation rounds. The last inequality originates from convexity and the fact that in the asymptotic limit of infinite data size, the average expected Bell value in the key generation rounds converges to the observed average Bell value in the Bell test rounds in probability.

In practice, the noise level added to the sifted key, $q$, can be adjusted to maximise the final key generation rate. We present some numerical discussions in Sec.~\ref{Sec:AdvNum}. For a rigorous finite-size analysis, we leave the details to future works.

\subsection{B step}
In all previous discussions, we have restricted the data post-processing to one-way classical communication from Bob to Alice. In general, Alice and Bob can apply two-way classical communication. In common device-dependent QKD protocols, the B-step~\cite{gottesman2003proof}, a two-way information reconciliation method, has been found useful in increasing the tolerance to the quantum bit errors. Here, we discuss the application of this method to DIQKD, as shown in Box~\ref{box:B-step}.

\begin{mybox}[label={box:B-step}]{{B Step}}
\begin{enumerate}[leftmargin=*]
\item\emph{Measurement: }For every round $i\in[n]$, Alice and Bob
    \begin{enumerate}[leftmargin=*]
    \item randomly input $x_i\in\{0,1\},y_i\in\{0,1\}$ to the devices with the probabilities $p_X,p_X$;
    \item record the outputs of the devices $a_i \in \{\pm1\},b_i \in \{\pm1\}$.
    \end{enumerate}
\item \emph{Parameter Acquisition: }Alice and Bob obtain the following parameters in the experiment
    \begin{enumerate}[leftmargin=*]
    \item the number of rounds $m_{xy}$ with the inputs $(x_i,y_i)=(x,y)\in\{0,1\}\times\{0,1\}$;
    \item for $(x_i,y_i)=(x,y)$, the number of positions $q_{xy}$ where $(-1)^{x\cdot y}a_i b_i=-1$.
    \end{enumerate}

\item \emph{Raw Key Acquisition: }Alice and Bob obtain the bit strings $\kappa^A,\kappa^B$ from $a_i,b_i$ with $(x_i,y_i)=(0,0)$.
\item \emph{B Step: }Denote $\kappa^{A(0)}\equiv\kappa^A,\kappa^{B(0)}\equiv\kappa^B$. For $t\in[N_B]$, Alice and Bob repeat the following procedures:
\begin{enumerate}
  \item each randomly permute the bits of $\kappa^{A(t-1)},\kappa^{B(t-1)}$ and obtain $\bar{\kappa}^{A(t)},\bar{\kappa}^{B(t)}$.
  \item apply an XOR operation on the $j$th, $(j+1)$th bits of the permuted bit stings $\bar{\kappa}^{A(t)},\bar{\kappa}^{B(t)}$ ($j\in\{2\mathbb{N}+1\}$), obtain $\mu^{A(t)},\mu^{B(t)}$, where $\mu_{i}^{A(t)}=\bar{\kappa}_{j}^{A(t)}\oplus\bar{\kappa}_{j+1}^{A(t)}, i=(j+1)/2$, and similarly for $\mu^{B(t)}$;
  \item compare the bits of $\mu^{A(t)},\mu^{B(t)}$ via two-way classical communication and discard the positions where $\mu_{i}^{A(t)}\neq\mu_{i}^{B(t)}$, obtain $\bar{\mu}^{A(t)},\bar{\mu}^{B(t)}$, which will decide the bits to be kept;
  \item for $\bar{\mu}_{i}^{A(t)} = \bar{\kappa}_j^{A(t-1)}\oplus\bar{\kappa}_{j+1}^{A(t-1)},\bar{\mu}_{i}^{B(t)} = \bar{\kappa}_j^{B(t-1)}\oplus\bar{\kappa}_{j+1}^{B(t-1)}$, keep $\bar{\kappa}_j^{A(t-1)},\bar{\kappa}_j^{B(t-1)}$ and discard $\bar{\kappa}_{j+1}^{A(t-1)},\bar{\kappa}_{j+1}^{B(t-1)}$, obtain $\kappa^{A(t)},\kappa^{B(t)}$.
\end{enumerate}
One run of the above procedures is called \emph{a B-step}.

\item \emph{Privacy Estimation: }Alice and Bob estimate the key privacy with statistics acquired in Steps 2 - 4.
\item \emph{Key Distillation: }Alice and Bob
\begin{enumerate}[leftmargin=*]
\item reconcile the bit strings to $\kappa^{B(N_B)}$ through an encrypted classical channel;
\item  apply a privacy amplification to $\kappa^{B(N_B)}$ and derive the final key.
\end{enumerate}
\end{enumerate}
\end{mybox}


Different from all the DIQKD protocols discussed in the previous sections, in the protocol shown in Box~\ref{box:B-step}, both Alice and Bob have only two possible measurement observables in each round of the quantum measurements. This modification originates from the difficulty in an effective dimension reduction.
In the B-step method, since Alice and Bob shall use the raw key information in collaborating the key processing, we must have an effective quantum characterisation of Alice's key generation measurement observable. If we use the same design for the quantum measurements as in the standard protocol, the ignorance of $\hat{A}_2$ shall lay an obstacle for further analysis. For this reason, we remove the measurement $\hat{A}_2$ on Alice's side, and the legitimate users shall employ the standard CHSH Bell test setup. They take one pair of the measurement settings, say $(x_i,y_i)=(0,0)$, also for key generation. In this way, we can treat all the quantum systems as effective qubits with the analysis in Sec.~\ref{Supp:EffDimReduction}. It is unavoidable to have a systematical quantum bit error rate, as the users cannot arrange the key generation outcomes to be perfectly correlated while obtaining a large violation of the Bell inequality at the same time. Nevertheless, by applying the B-step, the quantum bit error rate can be reduced, and a net increase in the key generation may occur.

We remark that a full security analysis of the B-step-based DIQKD protocol is not straightforward. First, different from the B-step in common device-dependent QKD protocols, the quantum bit-error probability and the phase-error probability in two adjacent rounds are not equal if the underlying system variables $\Lambda_i$ are different. Second, as discussed in Sec.~\ref{Supp:FiniteSize}, the average phase error probability is not a good measure for privacy estimation. Instead, the phase-error sample entropy should be used. Still, we could generalise the analysis in~\cite{gottesman2003proof} to obtain the refreshed quantum bit-error and phase-error probabilities. We present the following claim and leave the detailed analysis for future works.

\begin{claim}
  In the asymptotic limit of infinite data size, suppose before the application of B-steps, the quantum bit error rate is $e_b$, and the average cost per round for privacy amplification is $h[e_p(\bar{S})]$, where the function $e_p(\bar{S})$ is given in Eq.~\eqref{Eq:SuppPhaseError}. After one B-step, the key rate in the key generation rounds is
  \begin{equation}
    r'=\frac{1}{2}\left[e_b^2+(1-e_b)^2\right]\left\{1-h(e_b')-h[e_p'(\bar{S})]\right\},
  \end{equation}
  where $e_b'$ is the effective quantum bit rate after the B-step, $e_p'(S)$ is the effective quantum phase error probability with respect to the Bell value $S$, and the term $\frac{1}{2}\left[e_b^2+(1-e_b)^2\right]$ represents the fraction of data that is kept after post-selection. The values are upper bounded by
  \begin{equation}
  \begin{split}
  e_b'&\leq\dfrac{e_b^2}{1-2e_b+2e_b^2},\\
  e_p'&\leq\dfrac{2e_p(S)[1-e_b-e_p(S)]}{1-2e_b+2e_b^2}.
  \end{split}
  \end{equation}
\end{claim}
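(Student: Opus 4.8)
The plan is to recast one B-step in the entanglement-distillation (phase-error-correction) picture of Ref.~\cite{gottesman2003proof} and then port the resulting single-pair relations into the complementarity framework of this work. First I would invoke the dimension reduction of Sec.~\ref{Supp:EffDimReduction} so that, conditioned on the system variable $\lambda_i$, each key-generation round is an effective pair of qubits measured in $\hat{\sigma}_z\otimes\hat{\sigma}_z$, carrying an underlying bit-error probability $e_b^{\lambda_i}$ and a phase-error probability $e_p^{\lambda_i}$ that is controlled by the Bell value through Eq.~\eqref{Eq:SuppPhaseError}. In this picture the bilateral XOR of Step 4(b) is a $CNOT$ from the surviving pair (control) to the discarded pair (target), and the parity comparison of Step 4(c) is a $\hat{\sigma}_z$-basis measurement of the target. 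Propagating Pauli errors through the $CNOT$, the surviving pair keeps its own amplitude error $x_1$ while its phase error becomes $z_1\oplus z_2$; the measured bit parity reveals $x_1\oplus x_2$, so post-selection retains a pair exactly when $x_1=x_2$.

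For the bit-error refresh I would condition on survival. A kept pair is erroneous iff both constituent pairs were erroneous, so $e_b'=\Pr(x_1=x_2=1)/\Pr(x_1=x_2)$; in the i.i.d. case this equals $e_b^2/(1-2e_b+2e_b^2)$, and the survival probability equals $(1-e_b)^2+e_b^2$, which, together with the factor $1/2$ from pairing two rounds into one, gives the post-selection fraction $\tfrac12[e_b^2+(1-e_b)^2]$. For the general non-i.i.d. case I would use the random permutation in Step 4(a) to argue that pairs are formed between randomly chosen rounds, so the relevant quantity is an average over random pairs; the convexity of $e\mapsto e^2/(1-2e+2e^2)$ then lets me upper-bound the averaged refreshed error by its value at the mean bit-error rate $e_b$, yielding the stated inequality.

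For the phase-error refresh I would parameterise each pair by its Bell-diagonal weights $(\mu_{00},\mu_{10},\mu_{01},\mu_{11})$, indexed by (amplitude error, phase error), with marginals $e_b=\mu_{10}+\mu_{11}$ and $e_p=\mu_{01}+\mu_{11}$. A direct enumeration over the two pairs gives $\Pr(z_1\oplus z_2=1,\,x_1=x_2)=2\mu_{00}\mu_{01}+2\mu_{10}\mu_{11}$. Since the legitimate users must take the worst case (largest phase error, hence largest privacy-amplification cost), I would maximise this numerator over the one free Bell-diagonal parameter $\mu_{11}$ at fixed marginals; a short derivative computation shows the maximum lies at the boundary $\mu_{11}=0$ whenever $e_b+e_p\le\tfrac12$, where $\mu_{00}=1-e_b-e_p$ and the numerator becomes $2e_p(1-e_b-e_p)$. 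Dividing by the survival probability and substituting $e_p\to e_p(\bar S)$ from Lemma~\ref{lemma:Correlation} gives the claimed bound on $e_p'$.

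Assembling the rate is then routine: each surviving bit costs $h(e_b')$ for information reconciliation and $h[e_p'(\bar S)]$ for privacy amplification through the phase-error-cardinality argument of Sec.~\ref{Supp:FiniteSize}, and multiplying the net yield $1-h(e_b')-h[e_p'(\bar S)]$ by the survival fraction $\tfrac12[e_b^2+(1-e_b)^2]$ reproduces $r'$. The hard part is the rigorous non-i.i.d. treatment of the phase error under post-selection: the random permutation destroys the time-ordered filtration that underlies the martingale construction of Sec.~\ref{Supp:FiniteSize}, and the refreshed phase-error sample entropy is a nonlinear function of two rounds whose retention is itself correlated with their bit-error statistics. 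Extending the sample-entropy machinery to this paired, post-selected setting is precisely what forces the statement to remain a claim restricted to the asymptotic limit, where the per-round average Bell value converges to $\bar S$ and the convexity arguments above become exact.
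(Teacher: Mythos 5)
First, note that the paper does not actually prove this statement: it is stated as a Claim, prefaced by the remark that the refreshed error formulas ``could'' be obtained by generalising Ref.~\cite{gottesman2003proof}, and the detailed analysis is explicitly left to future work. So there is no paper proof to compare against step by step; what can be assessed is whether your derivation reproduces the claimed formulas and whether it closes the gap the authors left open. Your single-pair derivation is indeed the intended one and is correct: the bilateral-CNOT picture (survivor as control, discarded pair as target, parity comparison as a $\hat{\sigma}_z$-basis measurement of the target, post-selection on $x_1=x_2$, survivor inheriting bit error $x_1$ and phase error $z_1\oplus z_2$), the conditional bit-error probability $e_b^2/[e_b^2+(1-e_b)^2]$, the enumeration $\Pr(z_1\oplus z_2=1,\,x_1=x_2)=2\mu_{00}\mu_{01}+2\mu_{10}\mu_{11}$, and the worst case at $\mu_{11}=0$ (valid for $e_b+e_p\leq\frac{1}{2}$) giving the numerator $2e_p(1-e_b-e_p)$, all reproduce exactly the claimed bounds and the survival fraction $\frac{1}{2}\left[e_b^2+(1-e_b)^2\right]$ in the rate formula.

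The genuine gap is in the reduction from the general attack to i.i.d.\ pairs evaluated at the mean parameters, which you dispatch with ``the convexity of $e\mapsto e^2/(1-2e+2e^2)$ then lets me upper-bound the averaged refreshed error by its value at the mean'' and, for the phase error, by asserting that in the asymptotic limit ``the convexity arguments above become exact.'' Neither step is routine, and this is precisely what keeps the statement a claim. The refreshed quantities are two-variable functions of the parameters $(e_b^{\lambda_1},e_p^{\lambda_1})$ and $(e_b^{\lambda_2},e_p^{\lambda_2})$ of the two paired rounds, and what security requires is that the privacy-amplification cost per kept pair, essentially $h\!\left[e_p'(S^{\lambda_1},S^{\lambda_2})\right]$, averaged over Eve's choice of system-variable mixing across rounds, is maximised when all rounds share the same Bell value $\bar{S}$ --- a concavity property of the composite function $h\circ e_p'$ in the pair of Bell values, analogous to Eq.~\eqref{eq:concaveepS} but for a genuinely two-round, post-selected quantity. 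Jensen applied to a single-variable map does not give this, and the asymptotic limit does not rescue it either: Eve's freedom to distribute distinct $\lambda_i$ over rounds persists at infinite data size, which is exactly the convexity obstruction that led Sec.~\ref{Sec:FiniteSizeProb} to reject the average phase-error rate as a privacy measure in the first place. Your final paragraph correctly identifies the destroyed filtration and the nonlinearity under post-selection as obstacles, but mischaracterises the missing concavity as something that ``becomes exact'' asymptotically; in fact it is an unproven premise of the claim itself, which is why the paper flags the whole statement as a claim rather than a theorem.
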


\section{Numerical Discussions and Simulations}\label{Sec:NumericalDiscussion}
In this section, we present some numerical discussions of our analysis and evaluate the key generation performance with specific experimental models. We shall also explain the numerical results presented in the main text in detail. In the beginning, we remark that all the numerical results of our complementarity-based method are obtained by the use of Kato's inequality, namely, the results in Sec.~\ref{Sec:KatoResult}.

\subsection{Numerical simulation model}\label{Supp:DetailNum}
In all the numerical simulations related with the standard model, including also the protocols taking advantage of the loss information and applying a noisy pre-processing step, we shall consider a `noisy and lossy' experimental setup.
The experimental set-up is designed according to the ideal setting. In each round, Alice and Bob share a Bell state
\begin{equation}
\ket{\Phi^+}=\frac{\ket{00}+\ket{11}}{\sqrt{2}},
\end{equation}
where $\{\ket{0},\ket{1}\}$ are the eigenvectors of the Pauli operator $\sigma_z$. The observables of their measurements are
\begin{equation}\label{Eq:IdealObservable}
\begin{split}
\hat{A}_0&=\frac{\hat{\sigma}_z+\hat{\sigma}_x}{\sqrt{2}}, \hat{A}_1=\frac{\hat{\sigma}_z-\hat{\sigma}_x}{\sqrt{2}}, \hat{A}_2=\hat{\sigma}_z, \\ \hat{B}_0&=\hat{\sigma}_z, \hat{B}_1=\hat{\sigma}_x.
\end{split}
\end{equation}
The transmission of the Bell state suffers from depolarising channel noise. When the Bell state reaches the detectors, it evolves into
\begin{equation}
\hat{\mathcal{E}}_d[\rho] = (1-e_d)\rho + e_d \frac{\hat{I}}{4},
\end{equation}
where $e_d\in[0,1]$ is the depolarising factor. The state preparation fidelity is given by
\begin{equation}\label{Eq:Fidelity}
F\left(\Phi^+,\hat{\mathcal{E}}_d[\Phi^+]\right)=\bra{\Phi^+}\hat{\mathcal{E}}_d[\Phi^+]\ket{\Phi^+}=1-\frac{3e_d}{4}.
\end{equation}
There is also loss in both state transmission and detection. We model this effect by considering detectors with finite efficiencies. For a detector with efficiency $\eta$, with probability $\eta$, it works ideally, which shall measure the observables in Eq.~\eqref{Eq:IdealObservable} on the bases settings. With probability $(1-\eta)$ it fails to detect the quantum signal. For simplicity, we take a symmetric detection model, where the detectors of Alice and Bob have the same efficiency, $\eta_A=\eta_B=\eta$. We call $\eta$ the total transmittance in the experiment. If the postprocessing of using the loss information is not applied, the undetected events shall all be assigned with the value $+1$. We remark that though the experimental behaviour is actually i.i.d., we do not assume this in the numerical simulation of the key generation. In information reconciliation, except for the numerical simulation in Sec.~\ref{Sec:IonSim}, we take the efficiency parameter $f_{ec}$ in Eq.~\eqref{Supp:InfoRec} as $f_{ec}=1$ for simplicity, reaching the Shannon limit.
%

\subsection{Numerical discussion on regularisation}\label{Sec:Regular}
In this section, we present some numerical results on the regularisation parameter $\xi$. From the security analysis in Sec.~\ref{Supp:FiniteSize}, it can be seen that the regularisation parameter $\xi$ in the regularised sample entropy can be taken as a free parameter in the range of $(0,\frac{1}{2})$ without compromising the security. Nevertheless, after the other parameters are fixed in the protocol, $\xi$ can be optimised for a better key generation rate. We investigate the relations between the optimal value of $\xi$ and the total transmittance and the number of total experiment rounds, respectively, in Figure \ref{Fig:OptimalXi}. In the simulation, we consider the state preparation fidelity to be unity. The numerical results for the state preparation fidelity are similar to that for the total transmittance, hence we do not discuss it in detail. In Figure \ref{Fig:OptimalXi}(a) we present the optimal values of $\xi$ under different total transmittance. We also consider three different data sizes. The regularisation parameter $\xi$ is optimised to maximise the right-hand-side term in Eq.~\eqref{Eq:PrivacyAmplificationCost}. From the simulation results, we find that with a larger data size and a higher transmittance, the optimal value for the regularisation becomes smaller. In Figure \ref{Fig:OptimalXi}(b), we present the optimal value of $\xi$ with respect to different data sizes. We also consider three different values of total transmittance. As shown by the tendency of the curves, the optimal value of $\xi$ approaches to zero in the limit of infinite data size.

\begin{figure*}[hbt!]
\hfil
\vbox{
\hsize 0.5\textwidth
\hbox{
{\begin{overpic}[width=\hsize]{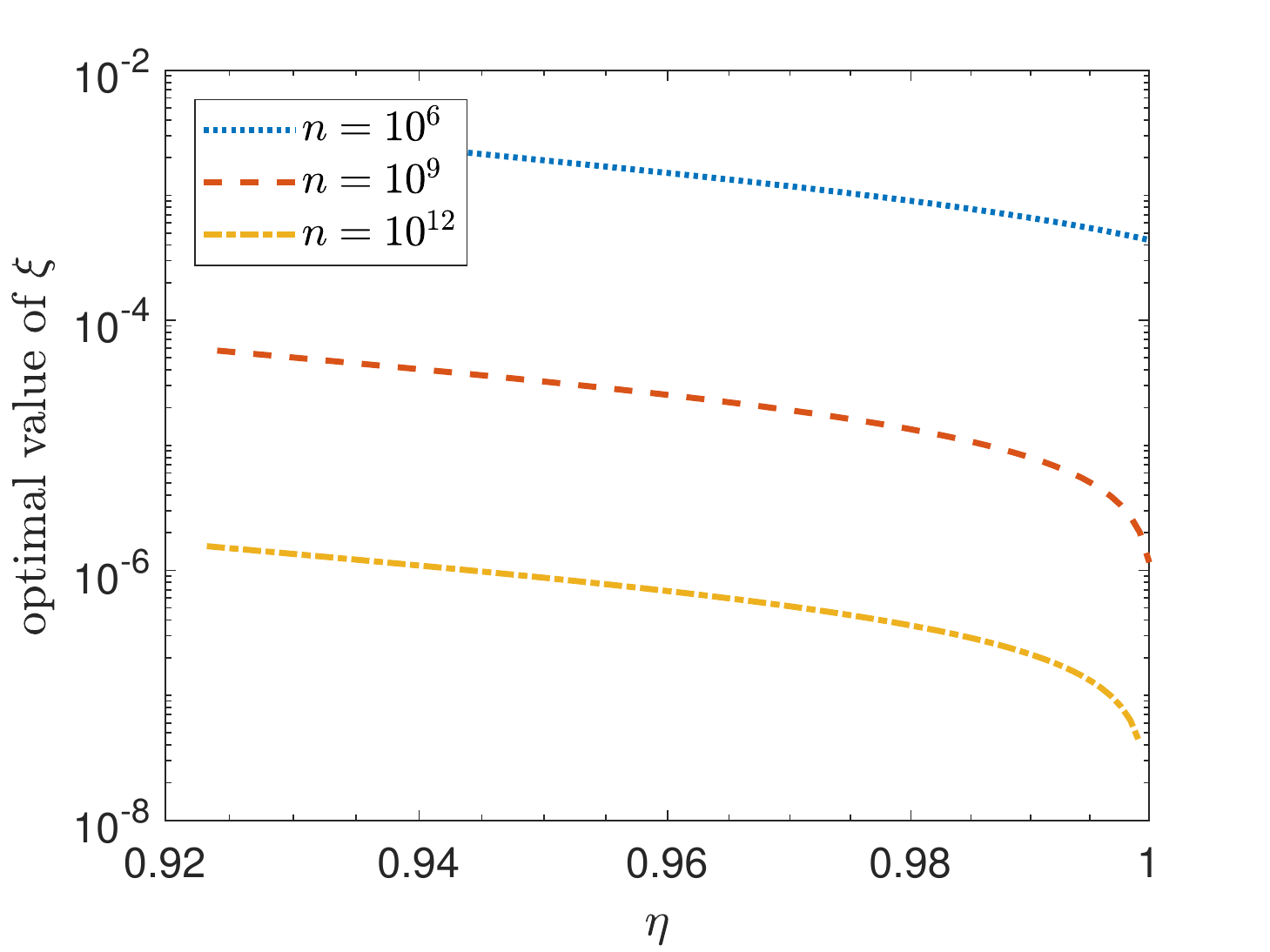}
\put(49,-4)
{\fontsize{10}{10}\selectfont (a)}
\end{overpic}}
\hbox{
{\begin{overpic}[width=\hsize]{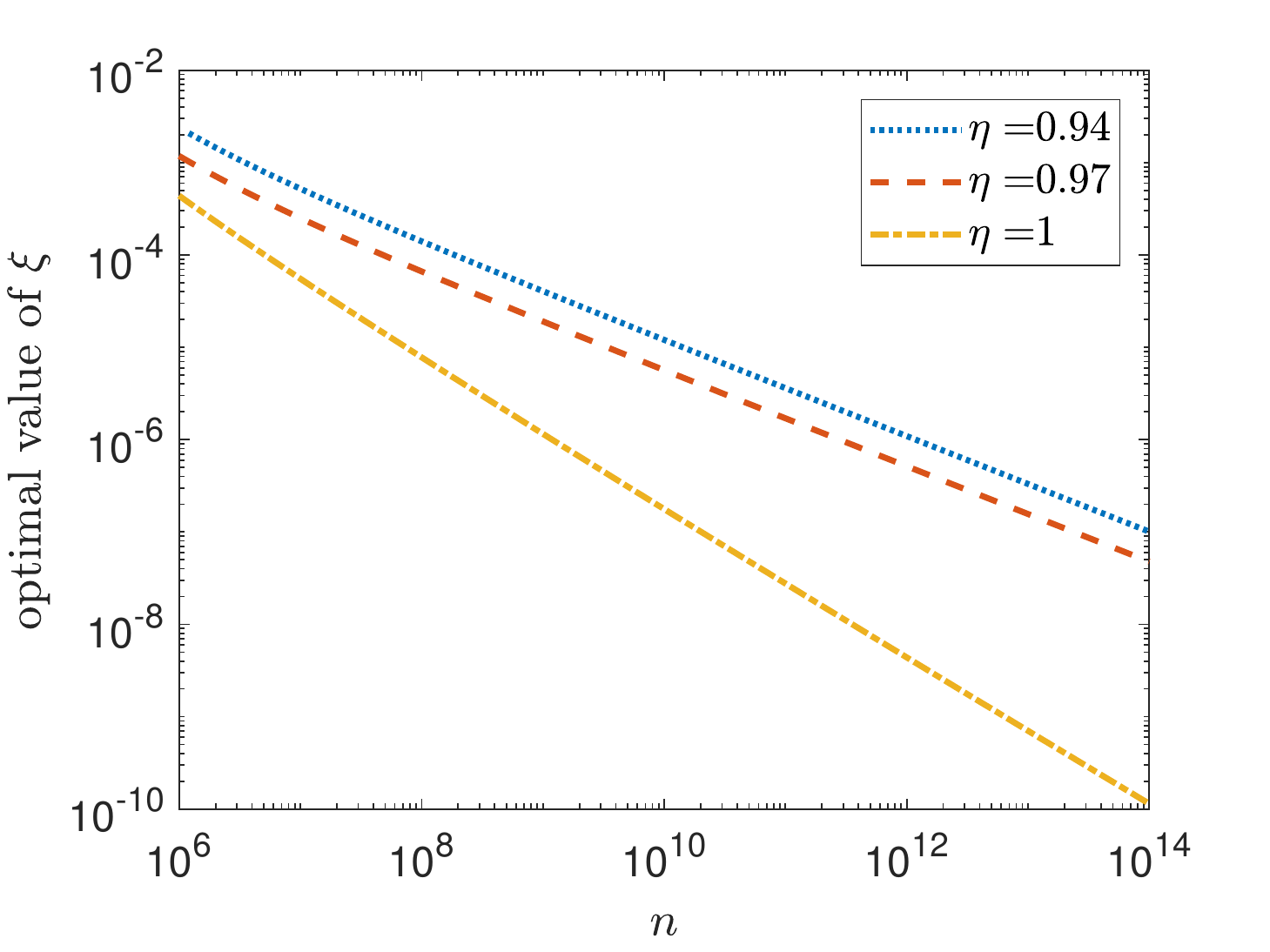}
\put(49,-4)
{\fontsize{10}{10}\selectfont (b)}
\end{overpic}}
}
}
}
\hfil
\caption{Optimisation of the regularisation parameter $\xi$. In the simulation, the failure probabilities in privacy estimation are set as $\varepsilon = \varepsilon_{S} + \varepsilon_{pe}+\varepsilon_{pc}$, with $\varepsilon_{S} = \sum_{x,y\in\{0,1\}} \varepsilon_{xy} = 1\times10^{-7},\varepsilon_{pe} = 1\times10^{-7},\varepsilon_{pc} = 1\times10^{-7}$. We consider a uniform distribution for the choices of measurement settings on both sides. (a) The optimal value of $\xi$ with respect to different total transmittance. Three different data sizes of $n=1\times10^6,1\times10^9,1\times10^{12}$ are considered. (b) The optimal value of $\xi$ with respect to different data sizes. Three different total transmittance of $\eta=0.94,0.97,1$ are considered.}
\label{Fig:OptimalXi}
\end{figure*}


\subsection{Comparison with entropy accumulation theorem}\label{Supp:EAT}
In this section, we give a brief discussion on the comparison between our security analysis and the results in~\cite{arnon2018practical,arnon2019simple,murta2019towards}, where the DIQKD security is established via EAT, an entropy-based method. For brevity, we call our method the complementarity-based method.

\subsubsection{Theoretical basis for comparison}
The statements such as the description of security parameters are different in the complementarity-based and EAT-based approaches. To unify the descriptions, we first explain the languages in two approaches on stating security parameters.

Generally, given a QKD protocol, we can define a success criterion for key generation. When the criterion is met in the experiment, the users shall obtain keys. In DIQKD, before the experiment begins, suppose the users estimate that the underlying quantum system should yield an expected Bell value $S_{exp}$. We can set the protocol success criterion as that after the quantum measurements are finished, the observed average Bell value is above a threshold,
\begin{equation}\label{Supp:SuccCriteria}
\bar{S}\geq S_{exp}-\delta_{s},
\end{equation}
where $\delta_{s}>0$ is a pre-fixed parameter accounting for statistical fluctuations, related to the completeness error. If the success criterion is met, the legitimate users of QKD can proceed to the post-processing steps and obtain the final keys. We say the users has a successful implementation of the protocol. Otherwise, the protocol aborts. In some security analyses, the potential failure of information reconciliation is also considered. Nevertheless, whether information reconciliation succeeds can be checked via error verification~\cite{fung2010practical}. We do not consider this issue in the analysis and assume that the information reconciliation is successful by default.

In a rigorous manner, the completeness error $\varepsilon_{QKD}^c$ means that there exists an honest implementation, of which the probability that the protocol aborts is upper bounded, $p_{abort}\leq\varepsilon_{QKD}^c$. Completeness requires that a protocol does not abort all the time. There should exist an implementation such that the users can obtain final keys with a high probability. If a protocol always aborts, it does not leak information, but it is useless. For QKD, the completeness is usually naturally met. For well-designed experiments that are capable to generate secure keys, the abort probability is typically negligible.

If the legitimate users consider that they have a successful implementation, it is possible that their keys are actually not secure. This issue is covered by the security definition in Def.~\ref{Def:SecurityDef}. In the entropic approaches like the EAT method, it is preferred to use the soundness error for security statements. In the complementarity-based method, the security parameter is often given by the failure probability of the process, of which the operation meaning is given by Eq.~\eqref{Eq:FidelitySec}.

In~\cite{fung2010practical}, it is proven that the soundness error can be quantitatively linked with the total failure probability in a phase-error-correction-based security analysis. If the total failure probability of the phase-error-correction-based analysis is upper bounded by $\varepsilon_f$, then the protocol soundness can be upper bounded by
\begin{equation}\label{Supp:FailProbtoSound}
\varepsilon_{QKD}^s\leq(1-p_{abort})\sqrt{\varepsilon_f(2-\varepsilon_f)}.
\end{equation}
When $\varepsilon_{QKD}^c,\varepsilon_f\approx0$, the protocol is then approximately $\sqrt{2\varepsilon_f}$-sound. This relation shall be the basis for a relatively fair comparison between the complementarity-based analysis and the EAT-based analysis.

In the entropy-based security proofs, the final key security is quantified in terms of the (smooth) min-entropy~\cite{renner2008security}.

\begin{definition}[Smooth Min-Entropy]
For a classical-quantum state $\rho_{AE}=\sum_{a\in\mathcal{A}}p_A(a)\ketbra{a}\otimes\rho_E(a)$ acting on the joint Hilbert space $\mathcal{H}_{AE}$, given $\varepsilon\in(0,1)$, the $\varepsilon$-smooth min-entropy is defined as
\begin{equation}
H_{min}^\varepsilon(A|E)_{\rho_{AE}}=\max_{\tilde{\rho}_{AE}}\left[-\log\max_{\hat{M}_a}\sum_{a\in\mathcal{A}}\tilde{p}(a)\Tr(\hat{M}_a\tilde{\rho}_E(a))\right],
\end{equation}
where the maximisation is taken over all sub-normalised classical-quantum operators $\tilde{\rho}_{AE}=\sum_{a\in\mathcal{A}}\tilde{p}_A(a)\ketbra{a}\otimes\tilde{\rho}_E(a)$ that are $\varepsilon$-close to $\rho_{AE}$ in the purified distance, and $\{\hat{M}_a\}_{a\in\mathcal{A}}$ is a positive operator-valued measure (POVM). For a sub-normalised operator, $\sigma$, $\Tr(\sigma)\leq1$. The purified distance between two sub-normalised positive semi-definite operators $\rho,\sigma$ is defined as
\begin{equation}
\begin{split}
P(\rho,\sigma)=\sqrt{1-\left(\Tr|\sqrt{\rho}\sqrt{\sigma}|+\sqrt{[1-\Tr(\rho)][1-\Tr(\sigma)]}\right)^2}.
\end{split}
\end{equation}
\end{definition}

\subsubsection{EAT protocols in comparison}
In this section, we recap two EAT-based security analyses and express clearly what formulae we are using in evaluating their performance. In both protocols, the CHSH Bell value is evaluated through the winning probability of the CHSH game. If $a_i\cdot b_i=(-1)^{x_i\cdot y_i}$, we say the CHSH game wins in the round. The expected CHSH Bell value $S$ can be converted into the winning probability $\omega$ via a linear transformation,
\begin{equation}
\omega=\frac{1}{8}S+\frac{1}{2}.
\end{equation}
The protocol success criterion is set similar to Eq.~\eqref{Supp:SuccCriteria}, which requires that the observed winning frequency to be high enough. This shall be made rigorous in the statement of protocols.

The first result we consider comes from the original EAT-based DIQKD security analysis~\cite{arnon2018practical,arnon2019simple}. We consider a spot-checking protocol, as shown in Box~\ref{box:OriginEAT}. Alice and Bob shall use an additional trusted i.i.d.~random variable, $t_i$, to determine whether a round is used for key generation, termed as a `spot', or for the Bell test, termed as a `check'. This protocol is similar to the protocol we use, while an advantage is that the additional random variable avoids `useless' experimental rounds, i.e., the rounds with measurement settings $(x_i,y_i)=(2,1)$ in Box~\ref{box:ProtocolDetail}. With the improvement in the basis sifting factor, this protocol design may yield a higher key expansion rate, that is, the value of $k/n$, when the number of experimental rounds is large. The key generation result is given in Theorem~\ref{Thm:OriginEAT}.

\begin{mybox}[label={box:OriginEAT}]{{DIQKD Protocol in Ref.~\cite{arnon2019simple}}}
\textbf{Arguments: }\\
$n$: the number of rounds of quantum measurements \\
$\gamma\in(0,1]$: the ratio of Bell test rounds \\
$\omega_{exp}$: expected CHSH winning probability \\
$\delta_{est}\in(0,1)$: width of the statistical confidence interval for parameter estimation \\
$\chi$: indicator function, taking its value as $1$ for a true statement and $0$ for a false statement

\tcblower
\begin{enumerate}[leftmargin=*]
\item\emph{Measurement: }For every round $i\in[n]$, Alice and Bob
    \begin{enumerate}[leftmargin=*]
    \item randomly set $t_i\in\{0,1\}$ such that $\Pr(t_i=1)=\gamma$;
    \item if $t_i=0$, set measurement bases $(x_i,y_i)=(2,0)$; otherwise, randomly set $x_i,y_i\in\{0,1\}$;
    \item record the outputs of the devices $a_i,b_i\in\{\pm1\}$.
    \end{enumerate}
\item \emph{Error correction: }Alice and Bob apply an error correction protocol. If the protocol succeeds, Alice and Bob obtain raw keys $\kappa^A$ and $\kappa^B$.
\item \emph{Parameter estimation: }Bob sets $c_i=\perp$ if $t_i=0$ and $c_i=\chi[a_i\cdot b_i=(-1)^{x_i\cdot y_i}]$ if $t_i=1$. He aborts the protocol if $\sum_{i=1}^{n}c_i<(\omega_{exp}\gamma-\delta_{est})n$.
\item \emph{Privacy amplification: }Alice and Bob apply privacy amplification on $\kappa^A$ and $\kappa^B$ and obtain final keys of length $k$.
\end{enumerate}
\end{mybox}

\begin{theorem}[Theorem 5.1 in Ref.~\cite{arnon2019simple}]
For the DIQKD protocol in Box~\ref{box:OriginEAT}, suppose the expected CHSH winning probability is $\omega_{exp}\in[0,1]$. Given fixed parameters $\varepsilon_s,\varepsilon_{EC},\varepsilon_{PA},\varepsilon_{EA}\in(0,1)$, with the total soundness error $\varepsilon_{QKD}^s\leq\varepsilon_s+\varepsilon_{EC}+\varepsilon_{PA}+\varepsilon_{EA}$, a $k$-bit final key can be obtained in a successful experiment,
\begin{equation}\label{Eq:EATKeyRate}
\begin{split}
  k=n\cdot\eta_{opt}\left(\frac{\varepsilon_s}{4},\varepsilon_{EA}+\varepsilon_{EC}\right)-\text{leak}_{EC}-3\log\left(1-\sqrt{1-\left(\frac{\varepsilon_s}{4}\right)^2}\right) \\-\gamma n-\sqrt{n}2\log7\sqrt{1-2\log(\frac{\varepsilon_s}{4}\cdot(\varepsilon_{EA}+\varepsilon_{EC}))}+2\log\varepsilon_{PA}.
\end{split}
\end{equation}
Here, $\text{leak}_{EC}$ is the amount of private randomness required for information reconciliation, $\varepsilon_s$ is the smoothing parameter in the smooth min-entropy, and $\varepsilon_{EC},\varepsilon_{PA},\varepsilon_{EA}$ are the failure probabilities of information reconciliation, privacy amplification, and an EAT protocol, respectively.
The function $\eta_{opt}(\cdot,\cdot)$ is defined as follows,
\begin{eqnarray}
    g(p) &=& \left\{
    \begin{tabular}{ll}
        $1-h\left(\frac{1}{2}+\frac{1}{2}\sqrt{16\frac{p}{\gamma}\left(\frac{p}{\gamma}-1\right)+3}\right)$, &$\frac{p}{\gamma}\in\left[\frac{3}{4},\frac{2+\sqrt{2}}{4}\right],$\\
        $1$, &$\frac{p}{\gamma}\in\left[\frac{2+\sqrt{2}}{4},1\right],$
        \end{tabular}
          \right.
\end{eqnarray}
\begin{eqnarray}\label{Eq:CutAndGlueMinTradeoff}
    f_{\min}(p,p_t) &=& \left\{
    \begin{tabular}{ll}
        $g(p)$, &$p\leq p_t,$\\
        $\frac{\mathrm{d}}{\mathrm{d} p}g(p)|_{p_t}\cdot p+\left(g(p_t)-\frac{\mathrm{d}}{\mathrm{d} p}g(p)|_{p_t}\cdot p_t\right)$, &$p> p_t,$
        \end{tabular}
          \right.
\end{eqnarray}
\begin{equation}
\begin{split}
  \eta(p,p_t,\varepsilon_s,\varepsilon_e)=f_{\min}(p,p_t)-\frac{1}{\sqrt{n}}2\left(\log13+\left\lceil\frac{\mathrm{d}}{\mathrm{d}p}g(p)|_{p_t}\right\rceil\right)\sqrt{1-2\log(\varepsilon_s\cdot\varepsilon_e)},
\end{split}
\end{equation}
\begin{equation}\label{Eq:OptEta}
\begin{split}
  \eta_{opt}(\varepsilon_s,\varepsilon_e)&=\max_{p_t}\eta(\omega_{exp}\gamma-\delta_{est},p_t,\varepsilon_s,\varepsilon_e),\\
  \text{s.t. }\frac{3}{4}&<\frac{p_t}{\gamma}<\frac{2+\sqrt{2}}{4}.
\end{split}
\end{equation}
\label{Thm:OriginEAT}
\end{theorem}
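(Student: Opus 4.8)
The plan is to establish the key-length formula by controlling the $\varepsilon_s$-smooth min-entropy of Alice's raw key conditioned on all of Eve's side information, and then invoking the Leftover Hash Lemma for privacy amplification. Writing $A=\kappa^A$, $B=\kappa^B$, and letting $E$ collect Eve's quantum register together with all public communication (including the error-correction transcript), the number of extractable secret bits is, up to logarithmic smoothing corrections, $k\approx H_{min}^{\varepsilon_s}(A|E)-\mathrm{leak}_{EC}+2\log\varepsilon_{PA}$. Thus the whole difficulty reduces to a coherent-attack lower bound on $H_{min}^{\varepsilon_s}(A_1^n|E)$, which I would obtain from the Entropy Accumulation Theorem (EAT). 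The EAT is precisely the tool that upgrades a single-round von Neumann entropy bound into a smooth-min-entropy bound for the entire (non-i.i.d.) sequence, at the cost of only an $O(\sqrt{n})$ correction.

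The first technical step is to recast the protocol of Box~\ref{box:OriginEAT} as a sequence of EAT channels $\mathcal{M}_i$, each mapping the running side-information register to the output registers $(A_i,B_i,X_i,Y_i,C_i)$ of round $i$, where $C_i$ records the CHSH test bit $c_i$. I would then verify the Markov-chain condition required by the EAT, namely that conditioned on the past outputs and on $E$, the fresh inputs $(X_i,Y_i)$ and the leftover quantum state factorize appropriately; this is guaranteed by the \emph{trusted-inputs} assumption (the test/key and basis choices are generated by local private randomness) together with the sequential non-signalling structure already modelled in Fig.~\ref{Fig:sequential}. I would also identify the event-statistics register whose frequency distribution the EAT tracks, here the single coordinate recording the fraction of won CHSH rounds.

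The heart of the proof is the construction of the min-tradeoff function. I would first prove the single-round device-independent bound $H(A_i|E)_{\mathcal{M}_i}\ge g(p)$ as a function of the round's CHSH winning probability $p$, with
\begin{equation}
g(p)=1-h\!\left(\tfrac12+\tfrac12\sqrt{16\tfrac{p}{\gamma}\big(\tfrac{p}{\gamma}-1\big)+3}\right),
\end{equation}
which follows from the known tight relation between the conditional von Neumann entropy of a qubit measurement outcome and the CHSH value (obtainable after the Jordan-lemma reduction of Lemma~\ref{Lemma:DimRed}). Because the EAT requires a function that is affine from below on the relevant probability simplex, I would then apply the standard ``cut-and-glue'' linearization at a tangent point $p_t$, producing $f_{min}(p,p_t)$, and feed it into the EAT to obtain $H_{min}^{\varepsilon_s}(A|E)\ge n\,\eta(\omega_{exp}\gamma-\delta_{est},p_t,\varepsilon_s,\varepsilon_e)$, where the $\tfrac{1}{\sqrt{n}}$ term carries the dependence on $\sup f_{min}$, $\mathrm{Var}(f_{min})$, and $\max_{p_t}|\tfrac{d}{dp}g|_{p_t}|$. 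Optimizing over the tangent point $p_t$ gives $\eta_{opt}$ as defined in Eq.~\eqref{Eq:OptEta}.

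Finally, I would assemble the key rate by combining the EAT min-entropy bound with the Leftover Hash Lemma for privacy amplification (contributing the $-3\log(1-\sqrt{1-(\varepsilon_s/4)^2})$ and $+2\log\varepsilon_{PA}$ terms), subtracting the reconciliation leakage $\mathrm{leak}_{EC}$ and the sifting cost $\gamma n$, and bookkeeping the failure probabilities so that the total soundness is bounded by $\varepsilon_s+\varepsilon_{EC}+\varepsilon_{PA}+\varepsilon_{EA}$, with error verification handling correctness separately. The main obstacle I expect is twofold: establishing the \emph{tight} single-round entropy bound $g(p)$ and verifying that its cut-and-glue linearization genuinely yields a valid EAT min-tradeoff function (monotonicity, the correct convexity, and the right second-order constants), and, relatedly, tracking all the $O(\sqrt{n})$ and logarithmic correction terms so that they aggregate exactly into Eq.~\eqref{Eq:EATKeyRate}. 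The Markov-condition check, while conceptually delicate, is comparatively routine once the sequential model is fixed.
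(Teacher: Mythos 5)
The paper does not actually prove this statement: Theorem~\ref{Thm:OriginEAT} is quoted verbatim from Ref.~\cite{arnon2019simple} solely so that its key-length formula can be numerically compared against the complementarity-based bound, so there is no in-paper proof to check you against. Your sketch faithfully reconstructs the strategy of that original reference — modelling the rounds as EAT channels satisfying the Markov condition, lower-bounding the single-round conditional von Neumann entropy by $g(p)$ via the Jordan-lemma/CHSH relation (with $p/\gamma$ the conditional winning probability under spot-checking), linearizing it at $p_t$ into a valid min-tradeoff function, applying the EAT to get the $\varepsilon_s$-smooth min-entropy bound with the $O(\sqrt{n})$ correction, and finishing with the leftover hash lemma and the $\gamma n$ and $\mathrm{leak}_{EC}$ subtractions — so it is consistent with the source's proof.
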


In~\cite{arnon2018practical,arnon2019simple}, the authors have also applied EAT for the evaluation of $\text{leak}_{EC}$, which corresponds to the cost of private randomness in information reconciliation. While to have a relatively fair comparison between our analysis and EAT in privacy estimation, in the numerical simulation, we shall take the result in the Shannon limit for information reconciliation. In the current EAT analysis, the cost for information reconciliation converges to the following formula with increasing data size,
\begin{equation}\label{Eq:LeakEC}
  \text{leak}_{EC}=n[\gamma h(\omega_{exp})+(1-\gamma)h(e_b)],
\end{equation}
where $e_b$ is the bit error rate in the rounds with $t_i=0$.

The second result we consider is a modified EAT-based DIQKD security analysis~\cite{murta2019towards}. The protocol is shown in Box~\ref{box:NewEAT}. In this protocol, a statistical technique first introduced in Ref.~\cite{arnon2019device} is applied for better finite-size performance, where the total number of experimental rounds is not fixed in advance. Instead, Alice and Bob take quantum measurements in blocks, where each block either ends with a Bell test, or when a maximal allowed number of rounds inside a block is reached. Apart from the protocol design, Ref.~\cite{murta2019towards} also optimises the key rate formula. The result is given in Theorem~\ref{Thm:NewEAT}.

\begin{mybox}[label={box:NewEAT}]{{DIQKD Protocol in Ref.~\cite{murta2019towards}}}
\textbf{Arguments: }\\
$m$: the number of blocks of quantum measurements \\
$\gamma\in(0,1]$: the ratio of Bell test rounds \\
$s_{\max} = \lceil\frac{1}{\gamma}\rceil$: the maximum number of rounds inside a block \\
$\bar{s}=\frac{1-(1-\gamma)^{s_{\max}}}{\gamma}$: the expected number of rounds in a block \\
$n=m\bar{s}$: the expected number of rounds \\
$\omega_{exp}$: expected CHSH winning probability \\
$\delta_{est}\in(0,1)$: width of the statistical confidence interval for parameter estimation \\
$\chi$: indicator function, taking its value as $1$ for a true statement and $0$ for a false statement

\tcblower
\begin{enumerate}[leftmargin=*]
\item\emph{Measurement: }For every block $j\in[m]$, Alice and Bob
    \begin{enumerate}[leftmargin=*]
    \item set $i=0$;
    \item while $i\leq s_{\max}$,
    \begin{enumerate}
      \item set $i = i+1$;
      \item randomly set $t_i\in\{0,1\}$ such that $\Pr(t_i=1)=\gamma$;
      \item if $t_i=0$, set measurement bases $(x_i,y_i)=(2,0)$; otherwise, randomly set measurement bases $x_i,y_i\in\{0,1\}$;
      \item record the outputs of the devices $a_i,b_i\in\{\pm1\}$;
      \item if $t_i=1$, set $i = s_{\max}+1$; Bob sets $c_i=\chi[a_i\cdot b_i=(-1)^{x_i\cdot y_i}]$ if $t_i=1$.
    \end{enumerate}
    \end{enumerate}
\item \emph{Error correction: }Alice and Bob apply an error correction protocol. If the protocol succeeds, Alice and Bob obtain raw keys $\kappa^A$ and $\kappa^B$.
\item \emph{Parameter estimation: }
    Bob aborts the protocol if $\sum_{i=1}c_i<m(\omega_{exp}\gamma-\delta_{est})[1-(1-\gamma)^{s_{\max}}]$.
\item \emph{Privacy amplification: }Alice and Bob apply privacy amplification on $\kappa^A$ and $\kappa^B$ and obtain final keys of length $k$.
\end{enumerate}
\end{mybox}

\begin{theorem}[Theorem 1 in Ref.~\cite{murta2019towards}]
For the DIQKD protocol in Box~\ref{box:NewEAT}, suppose the expected CHSH winning probability is $\omega_{exp}\in[0,1]$. Given fixed parameters $\varepsilon_s,\varepsilon_{EC},\varepsilon_{PA},\varepsilon_{EA}\in(0,1)$, with the total soundness error $\varepsilon_{QKD}^s\leq2\varepsilon_{EC}+\varepsilon_{PA}+\varepsilon_s$, the DIQKD protocol either aborts with probability higher than $1-(\varepsilon_{EA}+\varepsilon_{EC})$, or it generates a secure key of length
\begin{equation}
\begin{aligned}
  k=&\frac{n}{\bar{s}}\eta_{opt}-\frac{n}{\bar{s}}h(\omega_{exp}-\delta_{est})-\sqrt{\frac{n}{\bar{s}}}\nu_1-\text{leak}_{EC}\\
  &-3\log(1-\sqrt{1-\left[\frac{\varepsilon_s}{4(\varepsilon_{EA}+\varepsilon_{EC})}\right]^2})+2\log(\frac{1}{2\varepsilon_{PA}}),
\end{aligned}
\end{equation}
where
\begin{equation}
  \nu_1=2\left(\log7+\left\lceil\dfrac{h'(\omega_{exp}+\delta_{est})}{1-(1-\gamma)^{s_{\max}}}\right\rceil\right)\sqrt{1-2\log\varepsilon_s},
\end{equation}
$\text{leak}_{EC}$ is the amount of private randomness required for information reconciliation, $\varepsilon_s$ is the smoothing parameter in the smooth min-entropy, and $\varepsilon_{EC},\varepsilon_{PA},\varepsilon_{EA}$ are the failure probabilities of information reconciliation, privacy amplification, and an EAT protocol, respectively.
The term $\eta_{opt}$ is defined as follows,
\begin{equation}
  \eta_{opt}=\max_{\dfrac{3}{4}<\dfrac{p_t}{1-(1-\gamma)^{s_{\max}}}<\dfrac{2+\sqrt{2}}{4}}\left[F_{\min}(\omega_{exp}[1-(1-\gamma)^{s_{\max}}]-\delta_{est},p_t)-\frac{1}{\sqrt{m}}\nu_2\right],
\end{equation}
\begin{equation}
  F_{\min}(p,p_t)=\frac{\mathrm{d}}{\mathrm{d} p}g(p)|_{p_t}\cdot p+\left(g(p_t)-\frac{\mathrm{d}}{\mathrm{d} p}g(p)|_{p_t}\cdot p_t\right),
\end{equation}
\begin{equation}
  g(p)=\bar{s}\left[1-h\left(\frac{1}{2}+\frac{1}{2}\sqrt{16\frac{p}{1-(1-\gamma)^{s_{\max}}}\left(\frac{p}{1-(1-\gamma)^{s_{\max}}}-1\right)+3}\right)\right],
\end{equation}
\begin{equation}
  \nu_2=2\left(\log(1+6\cdot2^{s_{\max}})+\left\lceil\frac{\mathrm{d}}{\mathrm{d} p}g(p)|_{p_t}\right\rceil\right)\sqrt{1-2\log\varepsilon_s}.
\end{equation}
\label{Thm:NewEAT}
\end{theorem}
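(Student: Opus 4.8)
The plan is to prove this via the Entropy Accumulation Theorem (EAT), treating each of the $m$ measurement blocks as a single EAT channel and then converting the resulting smooth min-entropy bound into an extractable key length through error correction and privacy amplification. I would first cast the block-structured protocol of Box~\ref{box:NewEAT} into the EAT framework: for each block $j\in[m]$ define a channel $\mathcal{M}_j$ that takes the quantum memory passed from the previous block and outputs Alice's raw bit(s) $A_j$, a test-statistic register $C_j\in\{0,1,\perp\}$ recording whether the block terminated on a winning test round, a losing test round, or ran out as a key block, and the updated memory. I would then verify the two structural hypotheses of the EAT: (i) a Markov condition of the form $A_{1:j-1}\leftrightarrow B_{1:j-1}E\leftrightarrow B_j$, which holds here because within a block any side information Bob or Eve gains is produced only from the current block's systems and the shared prior state, while Alice's earlier outputs are already fixed --- the sequential, no-leakage-to-Eve structure of Box~\ref{box:VirtualProtocol} guarantees this; and (ii) that $C_j$ is a deterministic classical function of the block's registers.

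Second, I would construct the min-tradeoff function. The crux is the single-round device-independent bound relating the conditional von Neumann entropy to the CHSH winning probability,
\[
H(A\mid E)\ge 1 - h\!\left(\tfrac12 + \tfrac12\sqrt{16\,\omega(\omega-1)+3}\right),
\]
valid for any qubit strategy achieving winning probability $\omega$; this rests on the Jordan-lemma qubit reduction of Lemma~\ref{Lemma:DimRed} together with the Tsirelson-type semidefinite optimisation. Promoting this to the block level produces the function $g(p)$ of the theorem, rescaled by the expected block length $\bar s$ and by the factor $1-(1-\gamma)^{s_{\max}}$ that accounts for the probability a block contains a terminating test. Because $g$ is convex on the relevant interval, EAT cannot use it directly; I would instead take its tangent (``cut-and-glue'') affine majorant $F_{\min}(p,p_t)$ at a free point $p_t$, optimise over $p_t$ subject to the stated constraint, and record $\max|\,\mathrm{d}g/\mathrm{d}p\,|$, which governs the second-order penalty.

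Third, I would apply the EAT to obtain, on the non-aborting event, a bound of the form
\[
H_{\min}^{\varepsilon_s}(A\mid E)\ge \tfrac{n}{\bar s}\,\eta_{opt} - \sqrt{\tfrac{n}{\bar s}}\,\nu_1,
\]
where $\eta_{opt}$ packages $F_{\min}$ evaluated at the confidence-interval boundary $\omega_{exp}[1-(1-\gamma)^{s_{\max}}]-\delta_{est}$, and $\nu_1,\nu_2$ are the EAT second-order corrections (their dependence on $\log$ of the statistic alphabet size is the origin of the $2^{s_{\max}}$ appearing in $\nu_2$). I would then absorb information reconciliation through the chain rule for smooth min-entropy, subtracting $\text{leak}_{EC}$ and the $\varepsilon_s$-smoothing correction $-3\log\bigl(1-\sqrt{1-[\varepsilon_s/4(\varepsilon_{EA}+\varepsilon_{EC})]^2}\,\bigr)$, and finally invoke the Leftover Hash Lemma for privacy amplification, which extracts a final key that is $\varepsilon_{PA}$-close to ideal at the cost of the $2\log(1/2\varepsilon_{PA})$ term. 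Collecting the smoothing, EAT, reconciliation, and amplification failure events by the triangle inequality for the purified distance yields the claimed soundness $\varepsilon_{QKD}^s\le 2\varepsilon_{EC}+\varepsilon_{PA}+\varepsilon_s$ together with the stated abort/completeness dichotomy.

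I expect the main obstacle to be the rigorous handling of the variable-length block structure: the EAT is stated for a fixed number of channels with fixed per-step output, whereas here each block has a random length bounded by $s_{\max}$ and the total physical round count $n=m\bar s$ is only an expectation. Making the min-tradeoff function and the alphabet bookkeeping consistent across blocks of different realised lengths, correctly propagating the $1-(1-\gamma)^{s_{\max}}$ factors into $g$, $F_{\min}$, and the statistical confidence interval, and confirming that the Markov condition survives the intra-block adaptivity are the delicate steps. By contrast, the single-round CHSH entropy bound, though itself nontrivial, can be imported as a known result.
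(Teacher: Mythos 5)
First, note what this paper actually does with this statement: it does not prove it. Theorem~\ref{Thm:NewEAT} is imported verbatim from Ref.~\cite{murta2019towards} purely as a benchmark for the numerical comparison in the appendix, so the only proof to compare against is the one in the cited work. Your outline does track that proof's architecture faithfully in most respects: block-wise EAT channels with one channel per block, the Markov condition, the single-round bound $1-h\bigl(\tfrac12+\tfrac12\sqrt{16\omega(\omega-1)+3}\bigr)$ obtained via the Jordan-lemma qubit reduction, the affine tangent cut $F_{\min}$ of the convex $g$ optimised over $p_t$, the resolution of the variable-length blocks by fixed channels with an enlarged output alphabet (which you correctly identify as the origin of the $\log(1+6\cdot 2^{s_{\max}})$ in $\nu_2$), and the final leftover-hash and smoothing bookkeeping.

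There is, however, a concrete gap in your accounting: your argument never produces the term $-\frac{n}{\bar{s}}h(\omega_{exp}-\delta_{est})-\sqrt{\frac{n}{\bar{s}}}\,\nu_1$ of the claimed formula. You attach $\nu_1$ to the smooth min-entropy bound, writing $H_{\min}^{\varepsilon_s}(A|E)\geq \frac{n}{\bar{s}}\eta_{opt}-\sqrt{\frac{n}{\bar{s}}}\,\nu_1$, but in the theorem the second-order EAT penalty for the $H_{\min}(A|E)$ accumulation is $\nu_2$, which already sits \emph{inside} $\eta_{opt}$ with its $1/\sqrt{m}$ prefactor. The $\nu_1$ term is structurally different --- observe that it contains $h'(\omega_{exp}+\delta_{est})$, the derivative of the binary entropy itself, not of $g$ --- and it arises from a \emph{second}, separate entropy-accumulation estimate: the publicly announced test-statistic string $C$ (Bob's bits $c_i$) is computed from the raw data and broadcast, so its smooth max-entropy must be subtracted via the chain rule before privacy amplification. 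The original analysis of Ref.~\cite{arnon2019simple} handles this crudely with the $-\gamma n$ term visible in Theorem~\ref{Thm:OriginEAT}; a key improvement of Ref.~\cite{murta2019towards} is to replace it by the tighter $m\,h(\omega_{exp}-\delta_{est})+\sqrt{m}\,\nu_1$, obtained by accumulating the entropy of $C$ with tradeoff function $h(\cdot)$. As sketched, your proof either drops this $\Theta(n)$ contribution or double-counts $\nu_1$, so it cannot arrive at the stated key length; adding this second EAT application on the $C$ register would close the gap.
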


Similarly, we shall take the Shannon limit for information reconciliation for the simulation of this result, given by Eq.~\eqref{Eq:LeakEC}. Note that since the number of rounds is not fixed, the term $n$ should be interpreted as the expected number of rounds. Nevertheless, with a large number of blocks $m$, $n$ shall be close to $m\bar{s}$.

\subsubsection{Complementarity-based analysis for the spot-checking protocol}
To have a full comparison between the complementarity-based method and the EAT-based method, we also apply the complementarity approach to the spot-checking protocol. The protocol we consider is similar to the one in Box~\ref{box:OriginEAT}. Here, we simply state the difference from the standard protocol listed in Box~\ref{box:ProtocolDetail}:\\
\emph{Measurement: }For every round $i\in[n]$, Alice and Bob
\begin{enumerate}[leftmargin=*]
    \item randomly set $t_i\in\{0,1\}$ such that $\Pr(t_i=1)=\gamma$;
    \item if $t_i=0$, set measurement bases $(x_i,y_i)=(2,0)$; otherwise, randomly set $x_i,y_i\in\{0,1\}$;
    \item record the outputs of the devices $a_i,b_i\in\{\pm1\}$.
\end{enumerate}
The rest steps are the same as in Box~\ref{box:ProtocolDetail}.

Before we commence, we make some remarks.
\begin{enumerate}
  \item\emph{Consumption of pre-shared secret keys for measurement settings: }In the context of QKD, local private randomness is a free resource, while shared randomness, or, secret keys, cannot be taken freely. Different from in the standard protocol, Alice and Bob need to consume pre-shared secret keys to set up a coordinated input probability distribution in the spot-checking protocol, i.e., for the joint random choice of $t_i$ in each round. This results in additional consumption in the key length calculation.
  \item\emph{Cost of pre-shared keys in information reconciliation: }The raw keys are acquired from the rounds with $(x_i,y_i)=(2,0)$, or $t_i=0$ in the spot-checking protocols. The cost of secret keys in the information reconciliation step differs depending on the analysis method. In the complementarity approach, as stated by the security statements in Sec.~\ref{Supp:SecDef}, one can sift the key generation rounds in evaluating the information reconciliation cost. For instance, in the spot-checking protocol, the cost is roughly given by $n(1-\gamma)h(e_b)$. On the contrary, in the current EAT-based method, as the key length is analysed by entropy accumulation under the Markov-chain condition, the information reconciliation cost in terms of entropy has the additional term $n\gamma h(\omega_{exp})$, as shown in Eq.~\eqref{Eq:LeakEC}.
  \item\emph{Pre-determined abort condition: }In the protocol listed in Box~\ref{box:OriginEAT}, there is a pre-determined abort condition assessed by the observed Bell value, which relates to the completeness parameter. The threshold Bell value enters the key-length calculation. In the complementarity approach, we do not explicitly assign the abort condition. The protocol is aborted if the observed Bell statistics cannot support positive key generation. In other words, the abort condition is embedded in parameter estimation. We shall come to this point later in Sec.~\ref{Sec:NumResults} when presenting numerical simulation results.
\end{enumerate}

When adopting the complementarity approach in the spot-checking protocol, the key rate analysis is basically the same as for the standard protocol. One simply needs to reconstruct the random variables for the martingale-based privacy estimation and take account for the additional cost of pre-shared secret keys for establishing the measurement settings. Here, we present the final result.

\begin{theorem}[Key length for the spot-checking protocol, complementarity approach]
For the spot-checking DIQKD protocol, suppose the number of experimental rounds is $n$ and $\Pr(t_i=1)=\gamma\in(0,1]$. Consider failure probabilities in phase-error sample entropy estimation $\varepsilon_{pe}\in(0,1)$, in parameter estimation of the Bell value $\varepsilon_{S}\in(0,1)$, and in privacy amplification $\varepsilon_{pc}\in(0,1)$, such that they satisfy $\varepsilon = \varepsilon_{pe} + \varepsilon_{S} + \varepsilon_{pc}\in(0,1)$. Choose parameters $\xi\in\left(0,\frac{1}{2}\right)$, $\delta_{p}\in [0,\infty),\delta_{p}'\in\mathbb{R}$, and $\delta_{S}\in [0,\infty),\delta_{S}'\in\mathbb{R}$ that satisfy
\begin{equation}
\begin{split}
   \varepsilon_{pe} &= \exp[-\frac{2n(\delta_{p}^2-\delta_{p}'^2)}{\left(1+\frac{4\delta_{p}'}{3}\right)^2}], \\
   \frac{1}{4}\varepsilon_{S} &= \exp[-\frac{2n(\delta_{S}^2-\delta_{S}'^2)}{\left(1+\frac{4\delta_{S}'}{3}\right)^2}]. \\
\end{split}
\end{equation}
With a failure probability no larger than $\varepsilon$, the cost for privacy amplification can be upper bounded by
\begin{equation}
  I_{pa}\leq \frac{1}{1+2\delta_{p}'}n\left\{(1-\gamma)h\left[e_{p}^{\xi} (S_{est})\right] + c_{\xi}(\delta_{p}+\delta_{p}') \right\}-\log\varepsilon_{pc},
\end{equation}
where $S_{est}$ is the estimated expected Bell value,
\begin{equation}
\begin{split}
   S_{est}&= \sum_{x,y\in\{0,1\}}\left\{\dfrac{8\left[ (1+2\delta_{S}')(m_{xy}-q_{xy})-n(\delta_{S}+\delta_{S}')\right]}{n\gamma}-1\right\}, \\
\end{split}
\end{equation}
the $\xi$-regularised phase-error probability function $e_p^{\xi}(S)$ is given in Eq.~\eqref{Eq:SmoothPhaseErrorProb}, and $c_{\xi}$ is a constant originated from the regularisation parameter $\xi$,
\begin{equation}
  c_{\xi}= \log(\frac{1+2\xi}{\xi}).
\end{equation}
The amount of generated key bits is given by
\begin{equation}
  k = m [
  1-f_{ec}h(e_b)]-nh(\gamma)-I_{pa},
\end{equation}
where $e_b$ is the quantum bit error rate, $f_{ec}$ is the efficiency parameter for information reconciliation, and other parameters are the same as in Box~\ref{box:ProtocolDetail}.
\end{theorem}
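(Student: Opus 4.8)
The plan is to recognise the spot-checking protocol as a direct reparametrisation of the standard protocol already analysed in Theorem~\ref{Thm:SuppMainResult}, so that essentially the whole finite-size machinery of Sec.~\ref{Supp:FiniteSize} transfers with two bookkeeping changes in the probabilities of the measurement settings and one extra term in the final key length. The trusted i.i.d.~variable $t_i$ merely induces the marginals $\Pr[(X_i,Y_i)=(2,0)]=1-\gamma$ for key generation and $\Pr[(X_i,Y_i)=(x,y)]=\gamma/4$ for each $x,y\in\{0,1\}$ in the Bell test. Since the system modelling of Sec.~\ref{Supp:SysModel} and the single-round phase-error bound of Lemma~\ref{lemma:Correlation} depend only on the settings being trusted inputs and not on how their distribution is generated, they apply verbatim here. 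I would first state this reduction explicitly and fix the filtration $\mathbb{F}=\{\mathcal{F}_i\}$ exactly as in Sec.~\ref{Sec:PhaseErrorSample}, now also absorbing $t_i$ into $\mathcal{F}_{i-1}$ so that the settings remain predictable with respect to the filtration.

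Next I would re-run the two martingale analyses. For the phase-error side, the sample-entropy variable $\zeta_p^{(i)}$ of Eq.~\eqref{Eq:RelaxSampleEntropy} is nonzero exactly on the key generation rounds, so the bounded-difference estimates of Lemma~\ref{Lemma:MartingaleConstruct} hold verbatim with $p_X(2)p_Y(0)$ replaced by $1-\gamma$; applying the Kato-refined Lemma~\ref{Lemma:RefinedSamEnt} then yields the bound on $\sum_i\zeta_p^{(i)}$ with prefactor $(1-\gamma)$ and deviation $c_\xi(\delta_p+\delta_p')/(1+2\delta_p')$. For the Bell-test side, I would build the four martingales $\{\Delta_{xy}^{(l)}\}$ of Eq.~\eqref{eq:DeltaBellRV} with $p_X(x)p_Y(y)=\gamma/4$, invoke the symmetry of the four settings to use common deviation terms $\delta_S,\delta_S'$, and allocate failure probability $\varepsilon_S/4$ to each so that a union bound gives total $\varepsilon_S$. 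Substituting $np_X(x)p_Y(y)=n\gamma/4$ into the refined Corollary~\ref{Corollary:EstBellValue} turns the factor $2/(np_X(x)p_Y(y))$ into $8/(n\gamma)$, reproducing the stated $S_{est}$. Combining the two bounds through the pattern-counting Lemma~\ref{Lemma:PatternCount} and adding $-\log\varepsilon_{pc}$ for the privacy-amplification step gives the claimed $I_{pa}$, with the overall failure probability bounded by $\varepsilon=\varepsilon_{pe}+\varepsilon_S+\varepsilon_{pc}$ by a final union bound.

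The genuinely new ingredient, and the point I expect to require the most care, is the key-length accounting rather than the concentration arguments. The reconciled key contributes $I_{AB}=m[1-f_{ec}h(e_b)]$ from the $m$ key generation rounds exactly as in Sec.~\ref{Supp:SecDef}, and subtracting $I_{pa}$ is standard. The subtlety is the extra $nh(\gamma)$ term: unlike the standard protocol, where Alice and Bob draw their bases from independent local private randomness at no cost, the spot-checking protocol requires them to \emph{agree} on the common variable $t_i$ in every round, and in QKD such shared randomness is a secret-key resource rather than a free one. I would argue that establishing this coordinated choice consumes pre-shared secret key at the rate of the entropy of an i.i.d.~$\mathrm{Bernoulli}(\gamma)$ source, namely $h(\gamma)$ per round and $nh(\gamma)$ in total, and that this consumption is independent of (and hence simply additive with) the privacy-amplification and reconciliation costs. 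Collecting $k=I_{AB}-nh(\gamma)-I_{pa}$ then completes the proof. The main obstacle is thus conceptual — justifying cleanly that the $t_i$ coordination must draw on secret shared randomness and that its cost is exactly $nh(\gamma)$ with no recycling — whereas all the analytic steps reduce to the already-established lemmas.
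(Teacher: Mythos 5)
Your proposal is correct and follows essentially the same route as the paper, which itself gives no detailed proof but states that the analysis is ``basically the same as for the standard protocol'': one reconstructs the martingales of Sec.~\ref{Supp:FiniteSize} with the marginals $\Pr[(X_i,Y_i)=(2,0)]=1-\gamma$ and $\Pr[(X_i,Y_i)=(x,y)]=\gamma/4$ (whence the $8/(n\gamma)$ factor and the $(1-\gamma)$ prefactor), allocates $\varepsilon_S/4$ to each Bell-test setting, and subtracts the $nh(\gamma)$ cost of the pre-shared secret key needed to coordinate the i.i.d.\ Bernoulli$(\gamma)$ variable $t_i$. Your identification of the $nh(\gamma)$ term as the only genuinely new accounting step matches the paper's own remark on the consumption of shared randomness for the measurement settings.
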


\subsubsection{Numerical results}\label{Sec:NumResults}
The complementarity and EAT methods are both tight in the asymptotic limit of infinite data size with the difference of a basis sifting factor, converging to the asymptotic key rate in~\cite{pironio2009device}. Hence, we shall focus on the comparison of their finite-size performance.

In the first place, we state the parameter settings in the numerical simulation in this section. We have also taken these settings for numerical simulation in the main text. For the security parameters, the total soundness error is set as $\varepsilon_{QKD}^s=1\times10^{-5}$. By Eq.~\eqref{Supp:FailProbtoSound}, the total failure probability when adopting the complementarity-based analysis is taken as $\varepsilon_f\approx{\varepsilon_{QKD}^s}^2/2=5\times10^{-11}$. The total soundness error or the total failure probabilities is composed of several terms, corresponding to various stages in the security analyses. In general, an optimisation could be employed for the assignment of their values. For simplicity, we take a heuristic optimisation of these terms. We summarise the components in the total soundness error or the total failure probability in each protocol. For the simulation of the complementarity-based approach, the total failure probability is upper-bounded by
\begin{equation}\label{Eq:FailAssign}
\varepsilon_f=\varepsilon_{S}+\varepsilon_{pe}+\varepsilon_{pc},
\end{equation}
where $\varepsilon_{S},\varepsilon_{pe}$ are failure probabilities of parameter estimation, and $\varepsilon_{pc}$ is the failure probability of privacy amplification. In Box~\ref{box:ProtocolDetail}, the failure probability in estimating the Bell value is given by $\varepsilon_{S}=\sum_{x,y\in\{0,1\}}\varepsilon_{xy}$. For the original EAT-based approach in Box~\ref{box:OriginEAT}, the total soundness error is given by
\begin{equation}
  \varepsilon_{QKD}^s=\varepsilon_s+\varepsilon_{EC}+\varepsilon_{PA}+\varepsilon_{EA}.
\end{equation}
For the modified EAT-based approach in Box~\ref{box:NewEAT}, the total soundness error is given by
\begin{equation}
  \varepsilon_{QKD}^s=2\varepsilon_{EC}+\varepsilon_{PA}+\varepsilon_s.
\end{equation}

In simulation, the expected values of random variables shall be determined under a set of experimental parameters. To model the statistical fluctuation in an actual experiment, we use the normal distribution to calculate the deviation term. In the EAT-based protocols, this issue is covered in the completeness error, $\varepsilon_{QKD}^{c}$. We take the typical value considered in~\cite{arnon2019simple},
\begin{equation}\label{Eq:Deviation}
  \delta_{est} = \sqrt{-\frac{\ln{\varepsilon_{QKD}^{c}}}{2\gamma n}},
\end{equation}
and $\varepsilon_{QKD}^{c}=1\times10^{-2}$. In the complementarity-based protocol, one does not need to pre-determine an abort criterion. Instead, one takes the observed values and directly evaluates key generation by Theorem~\ref{Thm:SuppMainResult}. Nevertheless, to derive a relatively fair comparison with EAT-based results, we simulate a worst-case observed Bell value that corresponds to the deviation in Eq.~\eqref{Eq:Deviation}, where the observed Bell value shall be simulated by
\begin{equation}
  \bar{S} = 8(\omega_{exp}-\delta_{est})-4.
\end{equation}
Note that in the complementarity-based analysis, one estimates the expected Bell value for each input setting separately. We consider that the four input settings share a symmetric behaviour. That is, the ratios of $m_{xy}$ to $q_{xy}$ are almost the same.

In addition, there is the issue to optimise the probability distribution of measurement settings. For the spot-checking protocols, we optimise over the ratio of Bell tests, $\gamma$. For the standard protocol, we take the following input probability distributions,
\begin{equation}
\begin{split}
   p_X(0)&=p_X(1)=\frac{\gamma}{2}, p_X(2)=1-\gamma, \\
   p_Y(0)&=p_Y(1)=\frac{1}{2},
\end{split}
\end{equation}
where $\gamma\in(0,1)$. Then, the ratio of Bell-test rounds is $\gamma$ and the ratio of key generation rounds is $(1-\gamma)/2$. We optimise over $\gamma$ in the simulation.

We first study the single-parameter influence on the key generation performance, where we examine the influence of noise and loss individually. In simulation, we take the average number of final secure bits per key-generation round, $k/m$, as the figure of merit here. The simulation result is shown in Figure~\ref{Fig:SingleParameters}. We can see that channel loss and noise affect the key rate in a similar manner; hence, we can focus on either of the parameters. With increasing data sizes, the results converge to the asymptotic result in the limit of infinite data size, consistent with the result assuming i.i.d.~attacks \cite{pironio2009device}.

\begin{figure*}[hbt!]
\hfil
\vbox{
\hsize 0.4\textwidth
\hbox{
\begin{overpic}[width=\hsize]{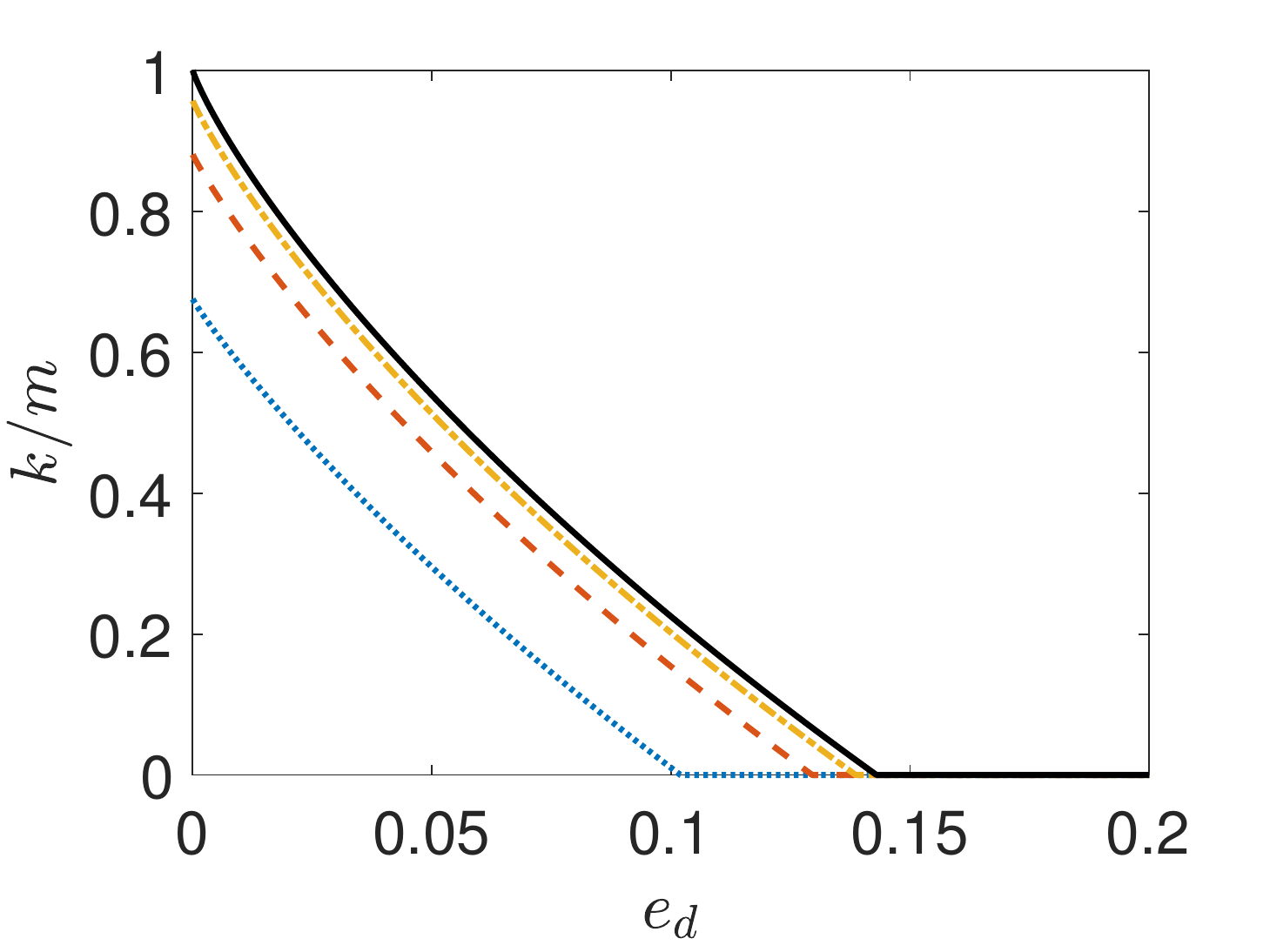}
\put(49,-4)
{\fontsize{10}{10}\selectfont (a)}
\put(45,50)
{\fontsize{10}{10}\selectfont $\eta=100\%$
}
\end{overpic}
\hbox{
{\begin{overpic}[width=\hsize]{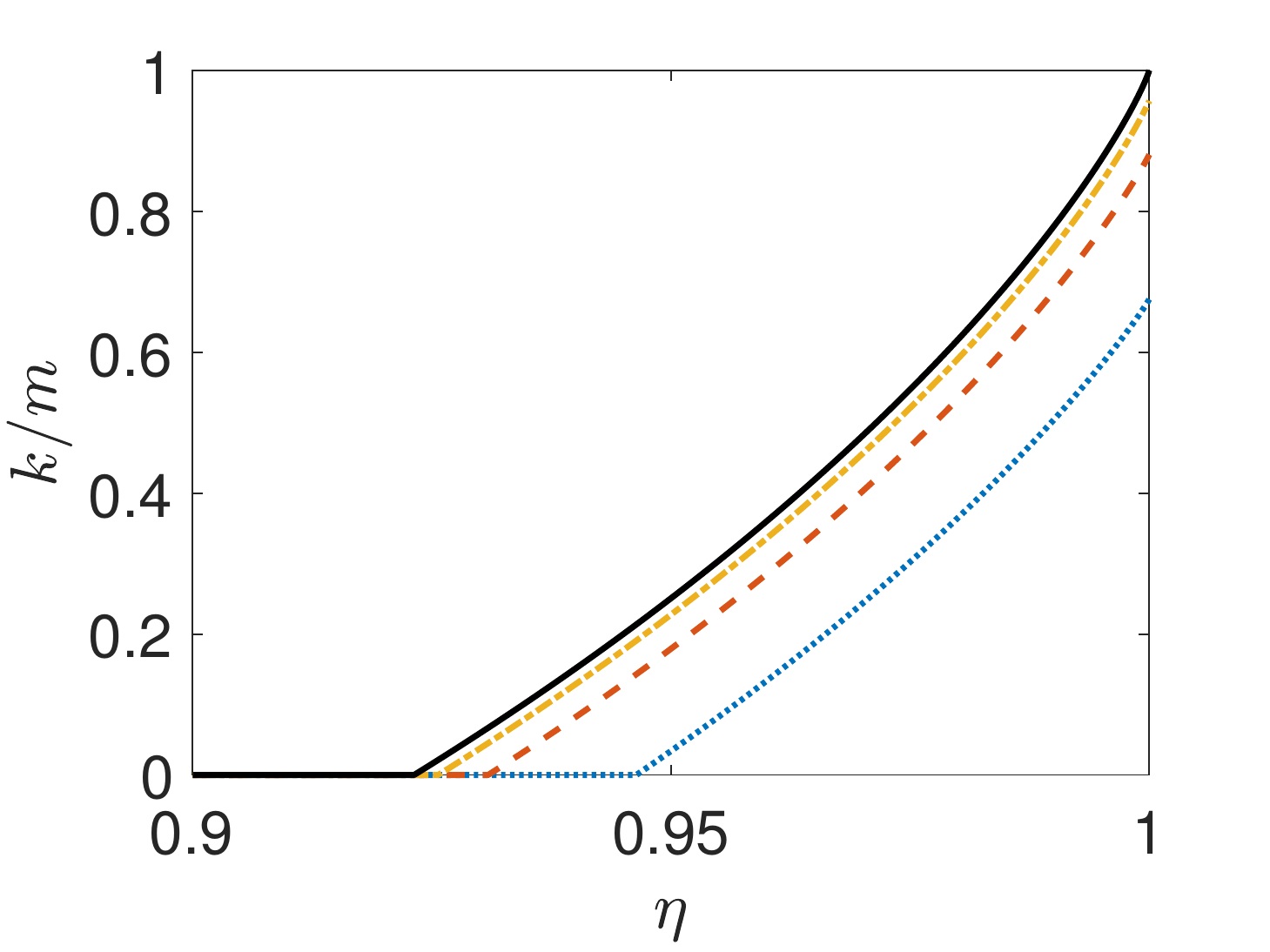}
\put(49,-4)
{\fontsize{10}{10}\selectfont (b)}
\put(35,50)
{\fontsize{10}{10}\selectfont $e_d=0$}
\end{overpic}}
}
}
}
\hfil
\caption{Simulation results of key-generation performance in the standard protocol. We consider the data sizes of $n=10^{6},10^7,10^8$ and $n\rightarrow\infty$, represented by the blue dotted lines, orange dashed lines, yellow dash-dot lines, and black solid lines, respectively. The transmittance is set equal on both sides, $\eta_A=\eta_B=\eta$. (a) The key rate $k/m$ with respect to $e_d$ at $\eta=1$. (b) The key rate $k/m$ with respect to $\eta$ at $e_d=0$.}
\label{Fig:SingleParameters}
\end{figure*}

Similar to the numerical simulation in the main text, we present the smallest data size required by the complementarity approach and the EAT approach with respect to the total transmittance in Figure~\ref{Fig:ThresholdEff}. In this simulation, we set the state preparation fidelity to be unity. The motivation for this simulation comes from the photonic platforms, where the entangled state could be distributed with a high fidelity, while the detection efficiency is relatively limited. For completeness, we consider the smallest required data sizes for all the protocols and methods discussed above: applying the complementarity approach to the standard protocol in Box~\ref{box:ProtocolDetail}, applying the complementarity approach to the spot-checking protocol, applying the original EAT-based method to the spot-checking protocol in Box~\ref{box:OriginEAT}, and applying the modified EAT-based method to the spot-checking protocol in Box~\ref{box:NewEAT}. We label the data sizes required at the efficiencies of $92.6\%$ and $96.0\%$. As a reference, we also label the data points at the unity transmittance. The complementarity approach allows for a more flexible protocol design, where the users do not need to apply a spot-checking protocol and are allowed to set their own measurement settings locally at random. With this advantage, the result of applying the complementarity approach to the standard protocol yields the best result. In comparison to the EAT-based analysis that optimise both the spot-checking protocol design and the key-rate formula, our method cuts down the requirement on the data size and hence the experimental time span by one to two orders of magnitude. Even when we restrict to the same spot-checking protocol, our method is still advantageous to the original EAT-based result in saving the time span for positive key generation.

\begin{figure*}
{\begin{overpic}[width=1\hsize]{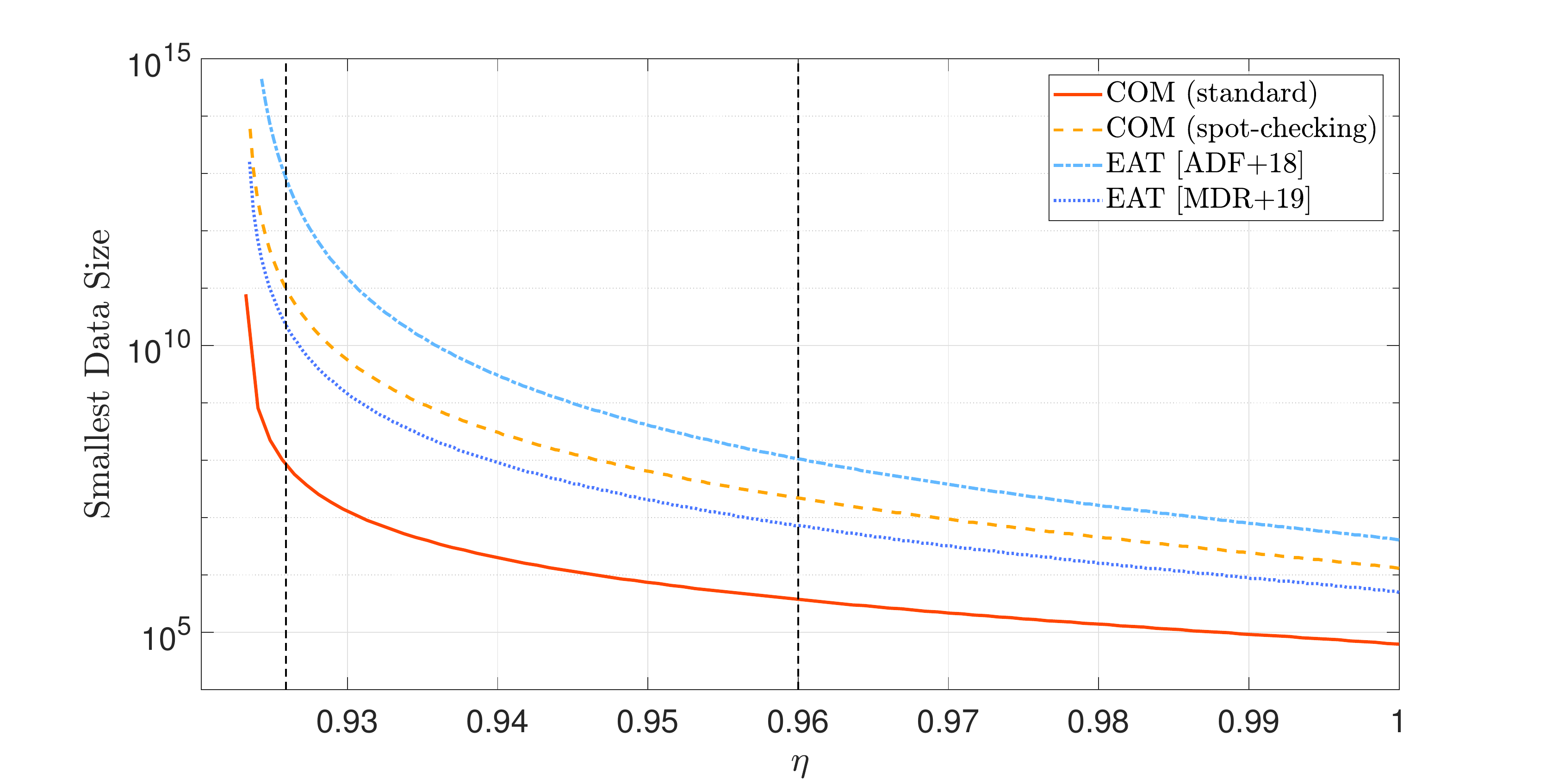}
\put(19,38)
{\fontsize{7}{7}\selectfont
\begin{tabular}{c}
$n=8.51\times10^{12}$
\end{tabular}
}
\put(19,32)
{\fontsize{7}{7}\selectfont
\begin{tabular}{c}
$n=1.01\times10^{11}$
\end{tabular}
}
\put(19,29)
{\fontsize{7}{7}\selectfont
\begin{tabular}{c}
$n=2.40\times10^{10}$
\end{tabular}
}
\put(19,20)
{\fontsize{7}{7}\selectfont
\begin{tabular}{c}
$n=8.32\times10^{7}$
\end{tabular}
}

\put(52,21.5)
{\fontsize{7}{7}\selectfont
\begin{tabular}{c}
$n=1.06\times10^{8}$
\end{tabular}
}
\put(52,19)
{\fontsize{7}{7}\selectfont
\begin{tabular}{c}
$n=2.16\times10^{7}$
\end{tabular}
}
\put(52,17)
{\fontsize{7}{7}\selectfont
\begin{tabular}{c}
$n=7.24\times10^{6}$
\end{tabular}
}

\put(52,12.5)
{\fontsize{7}{7}\selectfont
\begin{tabular}{c}
$n=3.75\times10^{5}$
\end{tabular}
}

\put(79,17.5)
{\fontsize{7}{7}\selectfont
\begin{tabular}{c}
$n=4.07\times10^{6}$
\end{tabular}
}
\put(79,15)
{\fontsize{7}{7}\selectfont
\begin{tabular}{c}
$n=1.30\times10^{6}$
\end{tabular}
}
\put(79,12.5)
{\fontsize{7}{7}\selectfont
\begin{tabular}{c}
$n=4.94\times10^{5}$
\end{tabular}
}
\put(79,10)
{\fontsize{7}{7}\selectfont
\begin{tabular}{c}
$n=6.21\times10^{4}$
\end{tabular}
}
\put(15,4)
{\fontsize{8}{8}\selectfont
\begin{tabular}{c}
$92.6\%$ \\
\end{tabular}
}
\end{overpic}
}
\caption{The smallest data sizes required for successful key generation with respect to transmittance at $e_d=0$. We denote the result by applying the complementarity approach to the standard protocol as `COM (standard)', the result by applying the complementarity approach to the spot-checking protocol as `COM (spot-checking)', the original EAT-based result as `EAT [ADF+18]'~\cite{arnon2018practical}, and the modified EAT-based result as `EAT [MDR+19]'~\cite{murta2019towards}, respectively. The detector efficiencies are set equal on both sides, $\eta_A=\eta_B=\eta$. In the simulation of the standard protocol, we set Bob's input settings to be chosen uniformly at random and optimise over the probability distribution of Alice's input settings. In the simulation of the spot-checking protocols, we optimise over the ratio of Bell test rounds.}
\label{Fig:ThresholdEff}
\end{figure*}

To show the convergence behaviour with an increasing data size, we consider the spot-checking protocol and simulate the key rate given by the complementarity approach, as shown in Figure~\ref{Fig:RateRound}. As a fair comparison, we also simulate the key rate given by the original EAT-based method under the same protocol. We take the expected Bell value to be $S_{exp}=2.47$ and the quantum bit error rate to be $e_b=0.051$, which are readily-implementable in nitrogen-vacancy (NV)-centre experimental set-ups~\cite{murta2019towards}. It can be seen that under realistic experimental parameters of the state-of-the-art NV-centre platforms, the complementarity-based method yields a faster convergence.

\begin{figure}[hbt!]
\centering \resizebox{8cm}{!}{\includegraphics{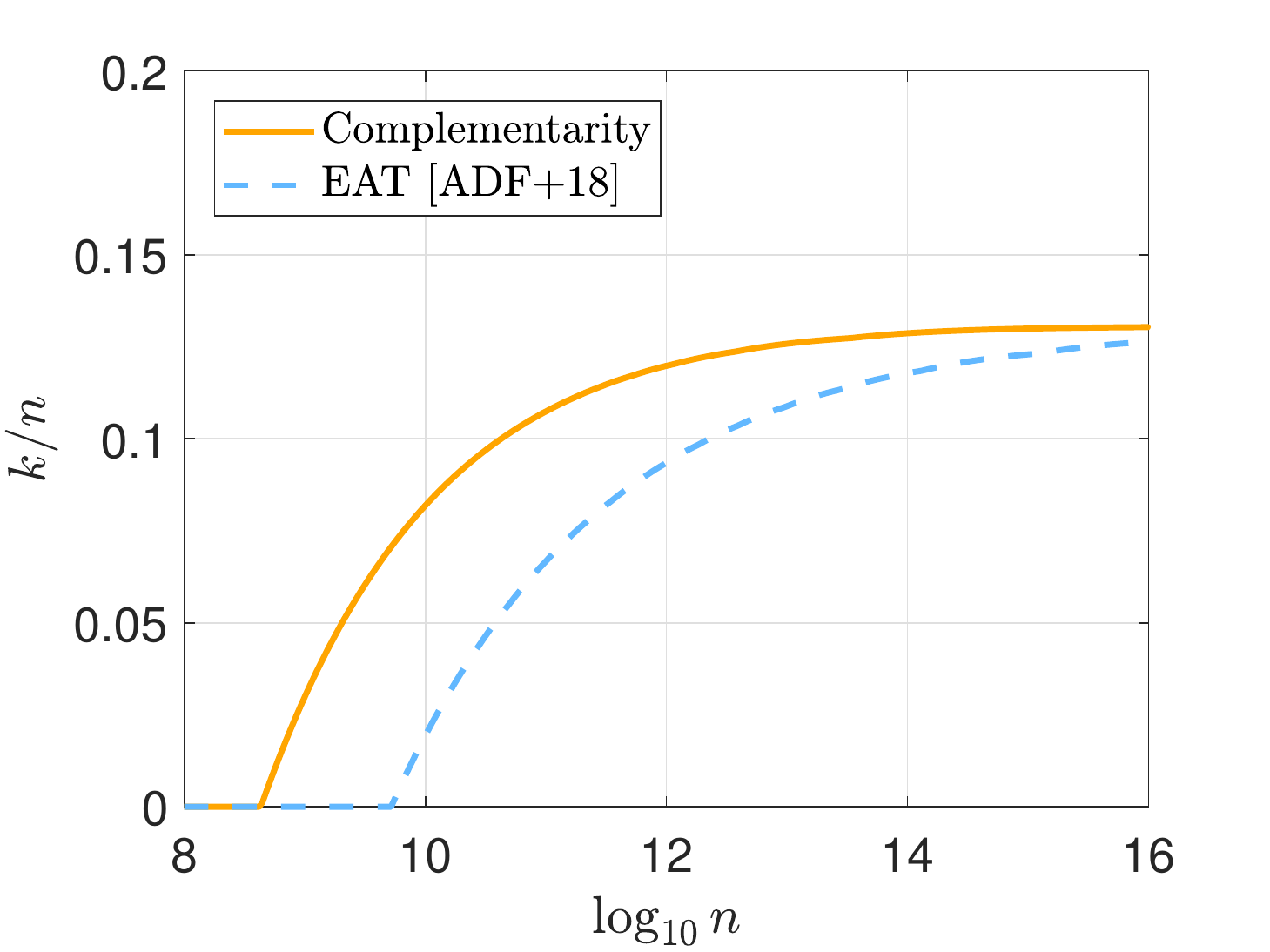}}
\caption{The key rate versus the number of rounds. The key rate refers to the ratio of the key length to the number of total experimental rounds, $k/n$.}
\label{Fig:RateRound}
\end{figure}

We remark that one may perform a more delicate optimisation over all the involved parameters in each protocol, including the full probability distribution of measurement settings and the assignment of failure probabilities in each process. Nevertheless, the advantage shown in these simulation results remains valid in the order of magnitudes. We also validate the results in accordance with the descriptions in Refs.~\cite{arnon2018practical,arnon2019simple,murta2019towards}.

\subsubsection{Data analysis of a recent experiment}\label{Sec:IonSim}
Upon finishing our manuscript, we notice a recent experimental work of a DIQKD experimental realisation on the ion-trap platform~\cite{nadlinger2022experimental}. In this work, a CHSH Bell value of $S=2.64$ and a quantum bit error rate of $e_b=1.8\%$ have been reported. The authors have employed a spot-checking protocol with one-way classical communication and a modified EAT-based result that takes advantage of the second-order analysis~\cite{liu2021device}. They succeed in obtaining $95884$ secure key bits in approximately $n=1.5\times10^6$ rounds with a total soundness error of $\varepsilon_{QKD}^s=1\times10^{-10}$. By simulation, positive key generation can be obtained after approximately $10^6$ rounds under this soundness error (Figure 4 in the cited reference).

Under the reported experimental performance and security parameters, we analyse the potential key generation results when adopting our complementarity-based security analysis. We shall use the protocol in Box~\ref{box:ProtocolDetail}. For key generation simulation,
\begin{enumerate}
  \item In information reconciliation, we use Eq.~\eqref{Supp:InfoRec} and
      \begin{equation}
        H(\kappa^B|\kappa^A)=mh(e_b),e_b=1.8\%.
      \end{equation}
      For the efficiency of information reconciliation, $f_{ec}$, we take the result in~\cite{tang2021shannon}, where $f_{ec}=1.09$ for a wide range of quantum bit error rates. We remark that a more efficient error correction is possible, see, e.g., Ref.~\cite{mao2022high}.
  \item In privacy amplification, we take the following term in correspondence with the reported Bell value,
      \begin{equation}
      \bar{S}\equiv\sum_{x,y\in\{0,1\}}\left[\dfrac{2\left( m_{xy}-q_{xy}\right)}{np_X(x)p_Y(y)}-1\right]=2.64.
      \end{equation}
  \item For the security parameter, we take
      \begin{equation}
        \varepsilon_f\approx\frac{{\varepsilon_{QKD}^s}^2}{2}=5\times10^{-21}.
      \end{equation}
\end{enumerate}

With the above parameter settings, we simulate the potential experimental results if our security analysis is adopted. We find that the least data size required for positive key generation drops down to $n=3.16\times10^5$, which is smaller than the required data size of approximately $10^6$ by the modified EAT-based result. The experimental time span could be shortened by nearly $70\%$~\footnote{We do not reproduce the results in Ref.~\cite{nadlinger2022experimental} due to lack of some optimisation details in the advanced EAT-based results. We refer readers who are interested in this issue to Figure 4 in Ref.~\cite{nadlinger2022experimental}.}. In Figure~\ref{Fig:IonSim}, we also depict the convergence behaviour of the key rate with respect to an increasing data size.

\begin{figure}[hbt!]
\centering \resizebox{8cm}{!}{\includegraphics{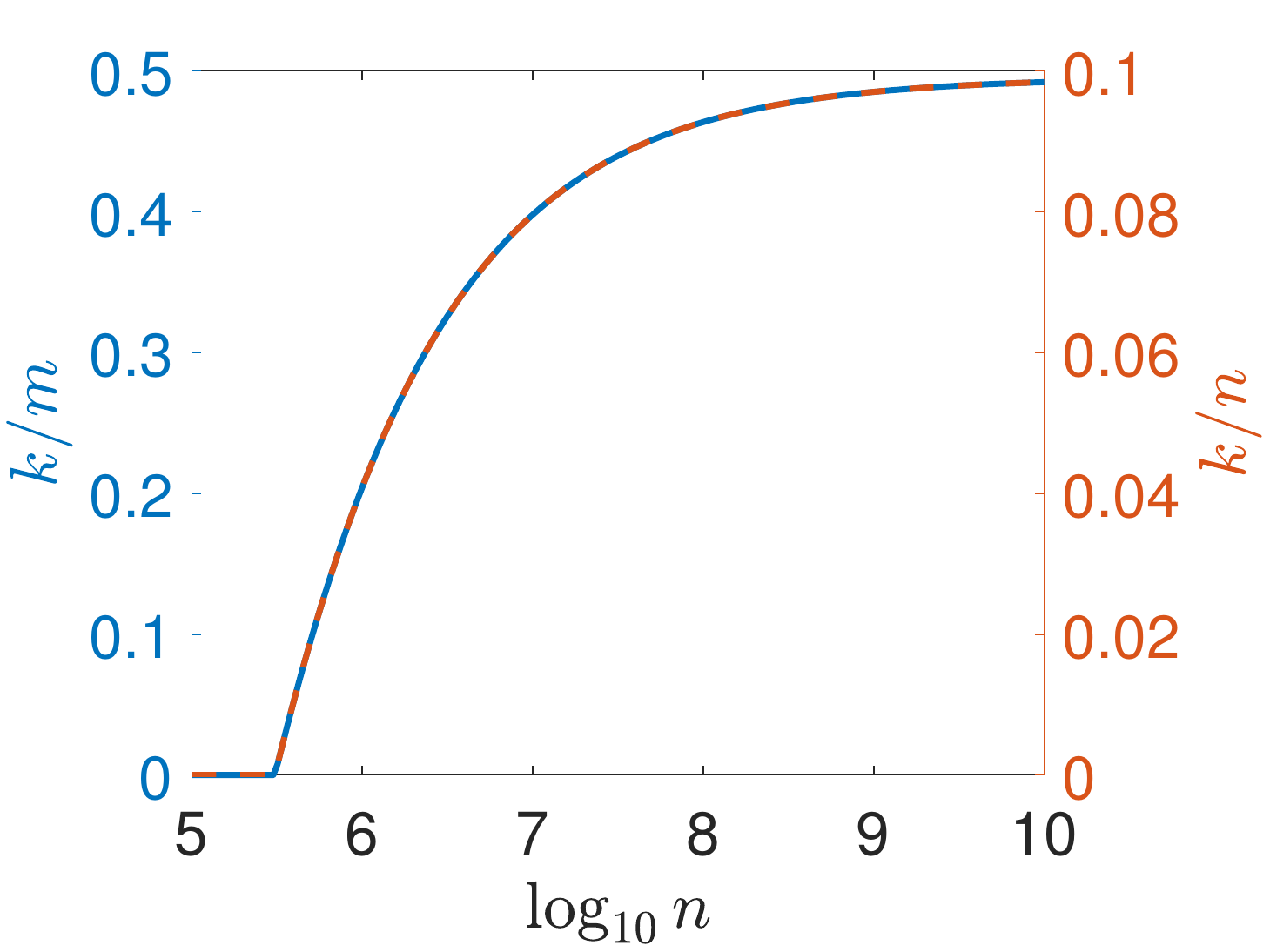}}
\caption{Examination of reported experimental data in Ref.~\cite{nadlinger2022experimental} using the complementarity-based analysis. The curve with respect to the left vertical axis shows the average number of generated key bit in key generation rounds, $k/m$, and the curve with respect to the left vertical axis shows the key rate, $k/n$.}
\label{Fig:IonSim}
\end{figure}

\subsubsection{Summary}\label{Sec:EATSummary}
We end the section of numerical comparison with a brief summary. In Table~\ref{tab:CompareSummary}, we qualitatively summarise the performance of the two analysis methods.

\begin{table}[htb]
\centering
\caption{Qualitative summary of comparison between the complementarity approach and the EAT-based method.}
\begin{tabular}{c|cc|c|c|c}
\hline
 analysis & \multicolumn{2}{|c|}{protocol} & smallest data size & convergence & key expansion, $k/n$ \\
\hline
\multirow{2}{*}{Complementarity}  & local basis choice & $\checkmark$ & small & fast & tight (up to a basis-sifting factor)\\
 & spot-checking & $\checkmark$ & intermediate & intermediate & tight \\
\hline
\multirow{3}{*}{EAT} & local basis choice & $\times$ & - & - & -\\
 & \multirow{2}{*}{spot-checking} & \multirow{2}{*}{$\checkmark$} & large (original) & slow & tight \\
  & & & intermediate (modified) & intermediate & tight \\
\hline
\end{tabular}
\label{tab:CompareSummary}
\end{table}

We remark that the complementarity approach allows for more flexible protocol designs. Using the phase-error-correction security statements, one can sift the key generation rounds and analyse the key privacy therein in an easier and tighter manner. A direct consequence is that we can apply protocols other than the spot-checking protocol with the complementarity analysis. For instance, in the standard protocol in Box~\ref{box:ProtocolDetail}, the users set their measurements settings locally at random. Such flexibility brings more convenience in experiments. If one is interested in generating secret keys quickly, or, a `one-bit key generation' is enough, then applying the complementarity approach to the standard protocol is a good choice. Differently, if one hopes to have a large key expansion, that is, a larger value of $k/n$ with a reasonably large data size, then the spot-checking protocol would do the job, as it has an advantage in the basis sifting.

At the end of this section, we would like to mention some recent progress along the EAT-based line of research. First, it may be plausible to apply the EAT method to more general protocols. For instance, in Ref.~\cite{bhavsar2021calculation}, the authors apply the EAT method to analyse a device-independent randomness expansion protocol, where the nonlocal users set their measurement settings locally at random. This work also exhibits a detailed analysis tailored to the CHSH Bell test. Notwithstanding, one might need to establish a delicate Markov chain from the protocol to use EAT. There are also advanced EAT works that take advantage of the second-order statistics~\cite{dupuis2019entropy,liu2021device}. The recently notable experimental work of full DIQKD is based on such advanced techniques~\cite{nadlinger2022experimental}.
In these works, the additional information of variance between probability distributions is utilised, hence bringing a much tighter finite-size performance. Last but not least, shortly after we posted an earlier version of our work on the preprint platform, we become aware of the result of generalised entropy accumulation~\cite{metger2022generalised}. This work has a similar flavour as EAT, where the final result is to lower-bound the smooth min-entropy by the linear accumulation of conditional von Neumann entropy as the major term. The difference is the basis for entropy accumulation, where the Markov chain condition is relaxed to a relatively general non-signalling condition between the parties. While this technique can be applied to the device-independent setting, its finite-size performance remains to be checked rigorously. A more thorough discussion of these advanced entropy-accumulation-type results and a comparison of our method with them would be both theoretically and practically appealing, yet it is beyond the scope of this work. We would like to leave the study for future research.

\subsection{Key performance with advantage key distillation methods}\label{Sec:AdvNum}
In this section, we explain the details of the numerical results with advantage key distillation methods presented in the main text. All the discussions in this section are restricted to the asymptotic limit of infinite data size. In the figures, for simplicity, we denote the protocol in Box~\ref{box:ProtocolDetail} as (S), the protocol in Box~\ref{box:Vacancy} as (L), the protocol in Box~\ref{box:NoisyProcess} as (N), and the protocol in Box~\ref{box:B-step} as (B). In addition, we also consider a combined protocol utilising both the loss information and a noisy pre-processing step. That is, the sender flips his key bits randomly, and the receiver utilises the loss information on her side for information reconciliation. We denote this combined protocol as (C).

For the protocol utilising a noisy pre-processing step (Box~\ref{box:NoisyProcess}), the strength of noise that is deliberately added, $q=\Pr{u_i={b_i}^c|b_i}$, could be optimised. In Figure \ref{Fig:OptNoise} we depict the optimal noise level with respect to total transmittance. In this simulation, we set the state preparation fidelity to be unity. It can be seen that with a high transmittance, the optimal noise level becomes smaller. For the region where the standard protocol fails to generate secure keys, a high level of added noise could help.

\begin{figure}[hbt!]
\centering
    \resizebox{8cm}{!}{\includegraphics{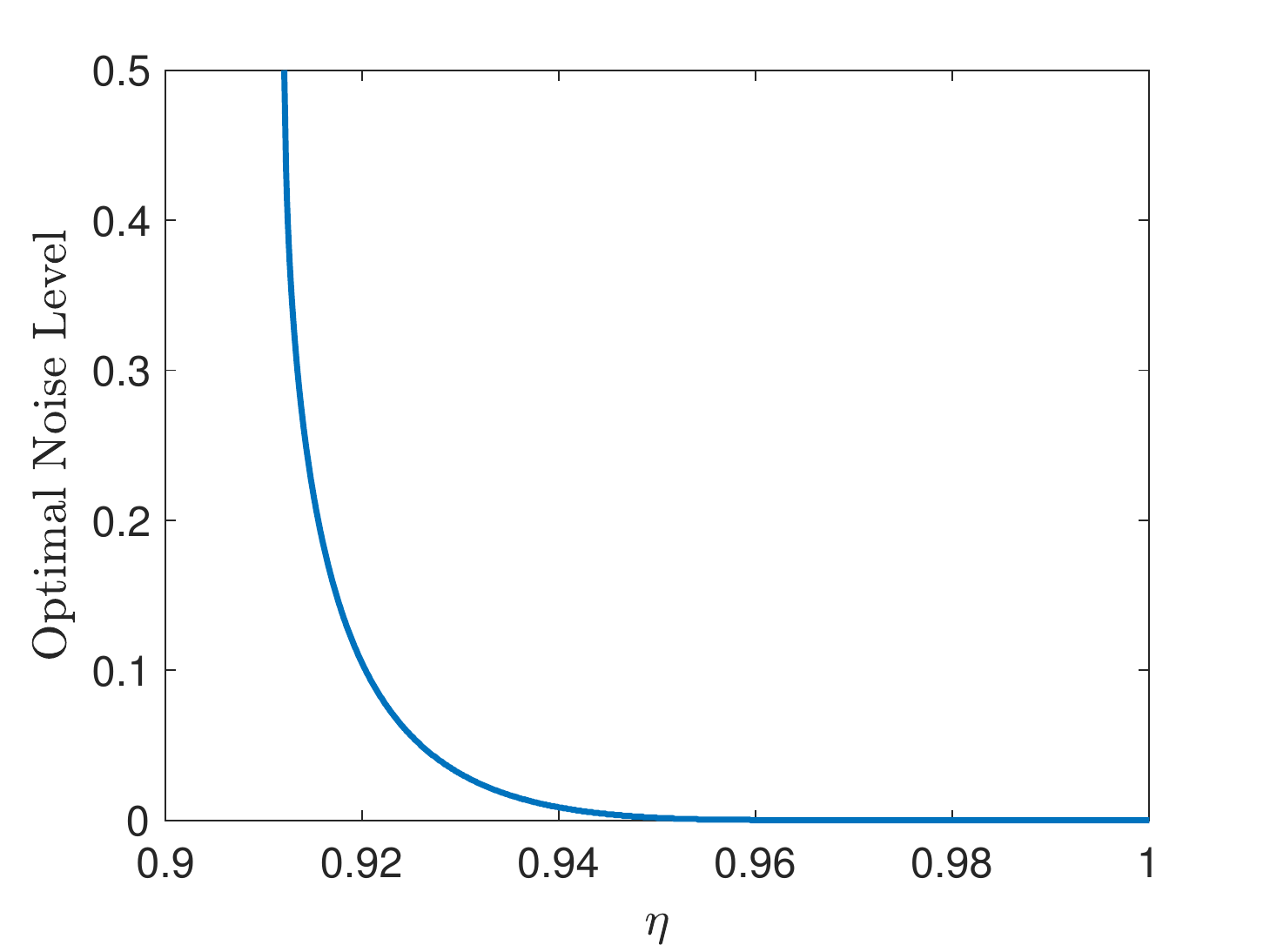}}
\caption{The optimal noise level with respect to total transmittance. In this simulation, we set $e_d=0$, corresponding to a state preparation fidelity of unity.}
\label{Fig:OptNoise}
\end{figure}

In Figure \ref{Fig:AdvThresh}, we depict the key rates $k/m$ when utilising different advantage key distillation methods with respect to transmittance $\eta$ and the channel depolarising factor $e_d$, respectively. When analysing the relation between key rates and transmittance, we set $e_d=0$, and similarly for the analysis of the key rates with respect to the depolarising factor. The protocol utilising the loss information does not change the key generation performance in the presence of channel noise. In Figure \ref{Fig:AdvThresh}(b) we only depict the key rates for the protocols in Box~\ref{box:ProtocolDetail} (S) and Box~\ref{box:NoisyProcess} (N). In the analysis for the channel noise level, the performance of (L) is the same as (S), and (C) is the same as (N). The points where the curves intersect with the horizontal axes correspond to the threshold transmittance and fidelities listed in Table~\ref{tab:ThresholdEff} of the main text, respectively. The correspondence between the depolarising factor and fidelity is given in Eq.~\eqref{Eq:Fidelity}.

\begin{figure*}[hbt!]
\hfil
\vbox{
\hsize 0.5\textwidth
\hbox{
{\begin{overpic}[width=\hsize]{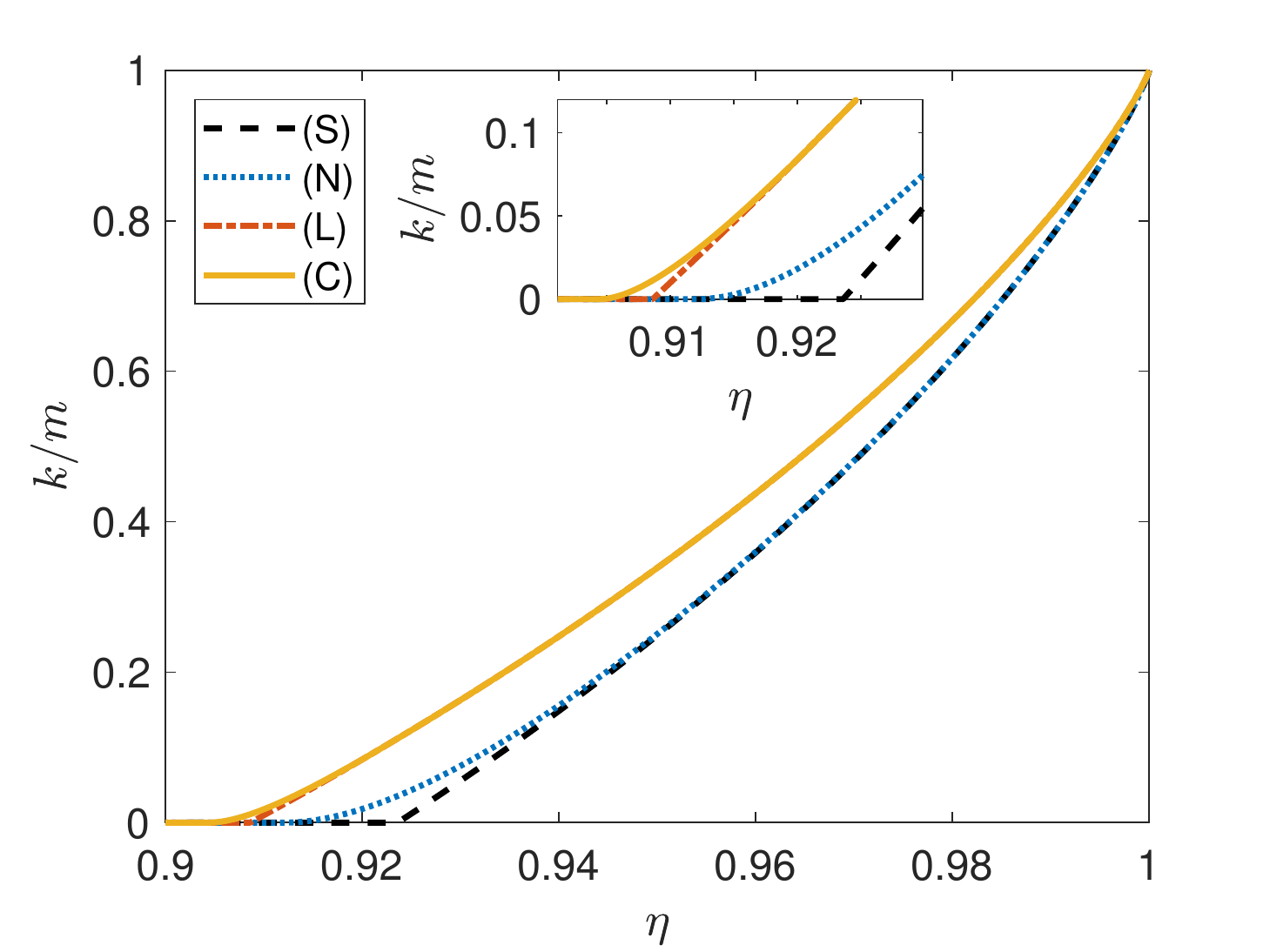}
\put(49,-4)
{\fontsize{10}{10}\selectfont (a)}
\end{overpic}}
\hbox{
{\begin{overpic}[width=\hsize]{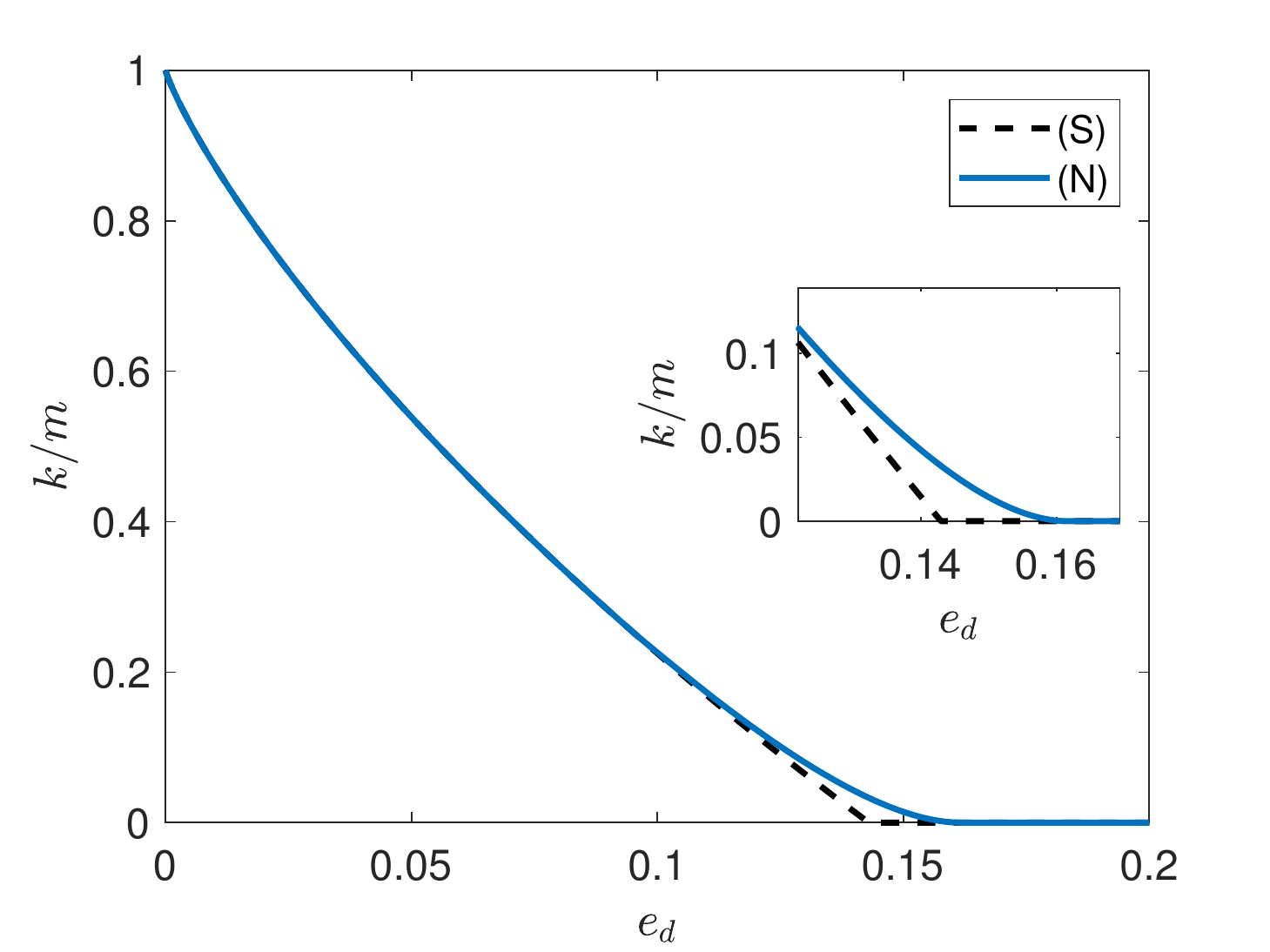}
\put(49,-4)
{\fontsize{10}{10}\selectfont (b)}
\end{overpic}}
}
}
}
\hfil
\caption{DIQKD key rates when using advantage key distillation methods. As a reference, we also depict the key rate of the standard protocol. (a) The key rates under different total transmittance. (b) The key rates under different values of the channel depolarising factor. We zoom in the region where the key rates approach to zero.}
\label{Fig:AdvThresh}
\end{figure*}

For the simulation of the protocol in Box~\ref{box:B-step}, we borrow the protocol design from~\cite{tan2020advantage}. In the cited work, the authors show the possibility of applying a two-way information reconciliation in DIQKD via the use of repetition codes under the condition of collective attacks. In the study of two-way information reconciliation in common device-dependent QKD protocols, the use of repetition code gives similar performance as the B-step method. Hence, we apply the same setting as Setting (vi) in Table I,~\cite{tan2020advantage} here for the analysis of our B-step-based DIQKD protocol. The experiment for this protocol is designed with the following entangled state between Alice and Bob,
\begin{equation}\label{Eq:BstepState}
\ket{\psi}=\cos{\Omega}\ket{00}+\sin{\Omega}\ket{11},\Omega=0.6224,
\end{equation}
and the observables Alice and Bob measure are given by
\begin{equation}\label{Eq:BstepObservable}
\begin{split}
\hat{A}_i&=\ketbra{\theta_{A_i^+}}-\ketbra{\theta_{A_i^-}},\ket{\theta_{A_i^+}}=\cos{\frac{\theta_{A_i}}{2}}\ket{0}+\sin{\frac{\theta_{A_i}}{2}}\ket{1},i=0,1,\\
\theta_{A_0}&=-0.35923,\theta_{A_1}=1.1538,\\
\hat{B}_i&=\ketbra{\theta_{B_i^+}}-\ketbra{\theta_{B_i^-}},\ket{\theta_{B_i^+}}=\cos{\frac{\theta_{B_i}}{2}}\ket{0}+\sin{\frac{\theta_{B_i}}{2}}\ket{1},i=0,1,\\
\theta_{B_0}&=0.35923,\theta_{B_1}=-1.1538.\\
\end{split}
\end{equation}

In Figure \ref{Fig:BSteps}, we depict the key rates adopting B steps for one and two times under different transmittance. In comparison, we depict the key rates for the experimental set-up designed according Eq.~\eqref{Eq:BstepState}\eqref{Eq:BstepObservable} without using B-steps. In the simulation, we set the state preparation fidelity to be unity. Since the experimental configuration has a systematic quantum bit-error rate, even if there is no loss in transmittance, the key rate is lower than one bit per round. Nevertheless, after applying B steps, the threshold transmittance for a positive key rate could be lowered down. In Table~\ref{tab:BstepThreshold}, we list the threshold total transmittance and the tolerable values of the depolarising factors after applying B step. We use MATLAB R2019b for Our numerical calculations. Due to the numerical precision, we only list the simulation data of applying at most 5 B steps. The advantage by applying more B steps is limited and the numerical values suffer from a poor numerical precision, hence we report the threshold efficiency as $88.30\%$ and the threshold fidelity as $88.28\%$ in the main text. The threshold fidelity is calculated by Eq.~\eqref{Eq:Fidelity}, where $F_t=1-0.75\times0.1562\approx88.28\%$.

\begin{figure}[hbt!]
\centering \resizebox{8cm}{!}{\includegraphics{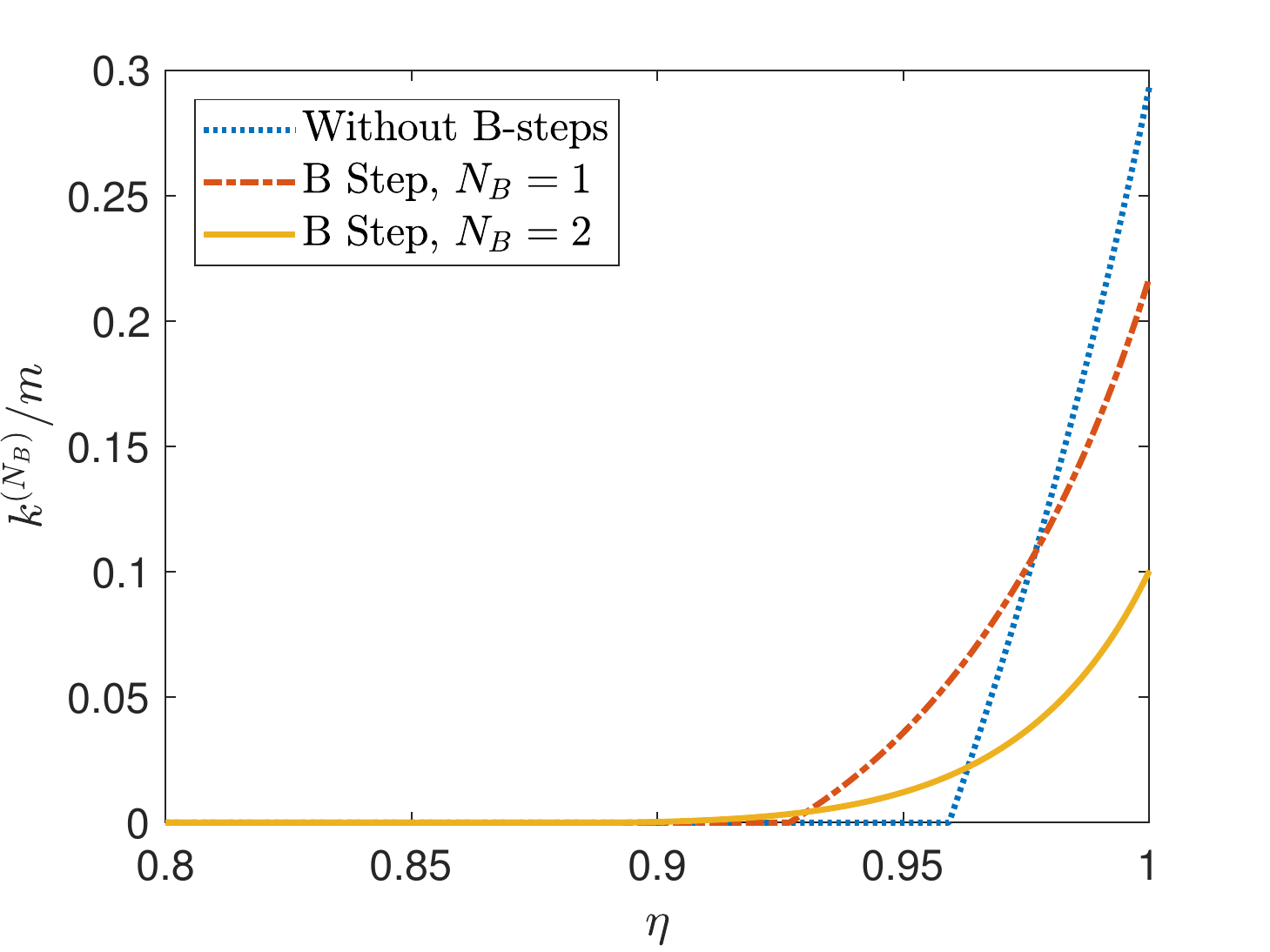}}
\caption{The key rates after applying B steps. The amount of final key bits that are obtained after $N_B$ B steps is denoted as $k^{N_B}$. In comparison, we depict the key rate without using B steps.}
\label{Fig:BSteps}
\end{figure}

\begin{table*}[htbp!]
\centering \caption{Threshold transmittance and tolerable depolarising factors after applying B steps. }
\begin{tabular}{c|ccccc}
\hline
$\#$ B Steps & $1$ & $2$ & $3$ & $4$ & $5$\\
\hline
Threshold Transmittance & $93.54\%$ & $91.22\%$ & $89.70\%$ & $88.81\%$ & $88.30\%$\\
Tolerable Depolarising Factor & $0.0901$ & $0.1197$ & $0.1389$ & $0.1499$ & $0.1562$\\
\hline
\end{tabular}
\label{tab:BstepThreshold}
\end{table*}


We also make some remarks on the cited experimental data in the main text~\cite{hensen2015loophole,liu2021device,li2021experimental}. In these works, both the (effective) state preparation fidelity and detection efficiencies are reported. The reported state preparation fidelities are given by a state tomography. We note that none of the state tomography results actual obeys the depolarising channel model we use for numerical simulation, as described in Sec.~\ref{Supp:DetailNum}. Nevertheless, for a state preparation of high fidelity, these values can be taken as references. The two photonic experiments report the highest detection efficiencies so far. The target of the two experiments is to realise a device-independent quantum randomness expansion. In~\cite{liu2021device}, the locality loophole is not closed with a space-like separation. Instead, it is assumed that both Alice and Bob hold a secure lab preventing any information leakage. We note that this assumption is reasonable in DIQKD. The experiment~\cite{li2021experimental} provides the best loophole-free Bell test data, where the detection loophole, the free-will loophole, and the locality loophole are all closed.

\section{Further Discussions}


\subsection{Security anlysis for parallel DIQKD}
In our security analysis, we consider sequential DIQKD, where the valid users perform quantum measurements one time after another in a definite sequential order. That is, the users shall not start the $(i+1)$th round until they have finished the $i$th round. The measurement results might depend on the previous events, but not the future ones. Another setting is parallel DIQKD, where the users do not wait for the outputs from previous rounds when they input later ones. Then, in an `$n$-round' parallel setting, we can even assume that the users set their own measurement bases with $n$ inputs simultaneously. Afterwards, the measurement devices output all the results together. In this case, there is no definite time order between the inputs and outputs from different rounds. As a result, the model used in Section \ref{Supp:SequentialSetting} is not applicable to the parallel DIQKD setting.

Recently, the security of parallel DIQKD has been shown with the magic-square Bell test \cite{jain2020parallel}. The main drawback is that the loss tolerance is extremely poor, so that an experimental demonstration is practically infeasible.  A natural question is whether we can apply the security technique developed in our work based on complementarity to parallel DIQKD. It might be a good starting point to develop new methods to break down the parallel multiple-round correlations into the analysis of single rounds.

We should emphasise that the sequential setting considered in our work is not more restrictive for the users than the parallel setting. On the one hand, in the parallel setting, the devices can execute joint operations over multiple rounds, which is not allowed in the sequential setting. On the other hand, in the sequential setting shown in Section \ref{Supp:SequentialSetting}, we allow the untrusted devices to communicate with each other in between two adjacent rounds of quantum measurements, which is forbidden in parallel DIQKD.


\subsection{Memory attack and covert channels}
The existing device-independent quantum information processing protocols suffer from the problem of memory attacks~\cite{barrett2013memory}. Let us consider that a DIQKD system is used for multiple times. If we focus on one  session, our security analysis works well. However, if we look at all the sessions together, since the legitimate users cannot monitor the inner operation status of the untrusted devices, the devices could store key information and leak it in the necessary classical communication procedures in later sessions.

One na\"ive countermeasure is to destroy the devices after one session and buy new ones. This is practically infeasible, though. For practical security, we could borrow ideas from secure computation in classical cryptographic engineering~\cite{curty2019foiling}. Nevertheless, one might consider that such methods are contradictory to the tenet of device-independent security. It remains open whether there is a complete solution to memory attacks.

Another issue in the security assumption is that except for the classical communication of post-processing, the devices do not signal to the outside. In the experimental realisation of DIQKD and other device-independent information processing tasks, either implicitly or explicitly, the assumption is taken as guaranteed by some physical shielding. One might argue whether such an assumption is indeed reasonable~\cite{xu2020secure}, as an adversary could employ covert channels in some completely unimaginable ways.

Note that in a different QKD protocol, measurement-device-independent quantum key distribution (MDIQKD)~\cite{lo2012measurement,braunstein2012side}, the problem of memory attacks and covert channels could be avoided naturally on the measurement site. At first glance, MDIQKD requires the valid users to have their quantum information sources fully characterised and thus raises an additional security assumption on the quantum sources. However, different from DIQKD, the untrusted measurement device is assumed to be completely possessed and controlled by Eve.

\subsection{Use of other Bell inequalities}
In this work, we employ the CHSH Bell inequality. Generally, we could use other Bell inequalities in DIQKD. The techniques developed in this work could directly be applied to protocols utilising the family of Bell inequalities with binary inputs and outputs for each party. Some Bell inequalities in this family can help ease the experimental realisation of loophole-free Bell tests, such as lessening the tolerable transmittance and state preparation fidelity~\cite{eberhard1993background}. We also expect similar phenomena in DIQKD. Due to the symmetry features of these inequalities, it may be challenging to obtain an elegant analytic result. We might need to utilise some numerical algorithms for analysis. There are also Bell inequalities with larger input and output alphabets~\cite{brunner2014bell}. For these Bell inequalities, unfortunately, Jordan's lemma cannot be directly applies. It would be interesting to see if a similar effective dimension reduction result exists for these Bell inequalities. The research in characterising the boundary of nonlocal correlations might help~\cite{navascues2007bounding}.

\subsection{Other device-independent quantum information processing tasks}
Our complementarity-based security analysis could apply to other device-independent quantum information processing tasks, such as device-independent quantum randomness generation (DIQRNG) and device-independent entanglement quantification. We note that, unlike DIQKD, the listed two tasks involve only one user. In DIQRNG, the user hopes to obtain private quantum random numbers in a device-independent way~\cite{Colbeck09}. The usual randomness measure is the (smooth) min-entropy~\cite{renner2008security}, while it is closely related to the phase-error probability in the complementarity approach~\cite{tsurumaru2021equivalence}. Therefore, we could say that our results in this work already provides a scheme for DIQRNG analysis. Since our security analysis is in the spirit of quantum-error-correction-based entanglement distillation~\cite{Bennett96Mixed}, it is almost directly applicable in the quantification of distillable entanglement in a device-independent manner. In addition, note that in quantifying the amount of distillable entanglement in a system, one needs only to prove the existence of an entanglement distillation protocol instead of carrying it out in reality. Consequently, we might witness distillable entanglement at regions that do not support key generation in DIQKD.

\subsection{Inspiration from recent progress in entropic methods}
In our work, we make some preliminary comparisons between our method and EAT, a property of min- and max-entropy~\cite{arnon2018practical}. In comparing the performance of our method and the EAT method, we have taken the original EAT results in Refs.~\cite{arnon2018practical,arnon2019simple}. Since EAT was proposed, there has been a series of optimisation works aiming at a better finite-size performance. We have listed a series of notable results along this line of research in Section~\ref{Sec:EATSummary}. Another notable line of research is the R\'enyi-divergence-based analysis. Actually, the first entropy-accumulation-type result in the general non-i.i.d. sequential setting is for the R\'enyi divergence~\cite{miller2016robust}. As one can bound the min-entropy by using the R\'enyi divergence, this result provides one of the first full device-independent security analyses, yet the key-rate formula is not tight. Later results tighten the bound and bring better finite-size performance~\cite{knill2018quantum}. At a high level, all these security analyses provide bounds on the conditional smooth min-entropy, but the approaches are different. It would be interesting to study the links and differences among the variant methods. Moreover, these entropic methods may also inspire further study along the complementarity-based approach.

\end{appendices}

\bibliographystyle{apsrev}
\bibliography{bibDIPh}

\begin{thebibliography}{72}
\expandafter\ifx\csname natexlab\endcsname\relax\def\natexlab#1{#1}\fi
\expandafter\ifx\csname bibnamefont\endcsname\relax
  \def\bibnamefont#1{#1}\fi
\expandafter\ifx\csname bibfnamefont\endcsname\relax
  \def\bibfnamefont#1{#1}\fi
\expandafter\ifx\csname citenamefont\endcsname\relax
  \def\citenamefont#1{#1}\fi
\expandafter\ifx\csname url\endcsname\relax
  \def\url#1{\texttt{#1}}\fi
\expandafter\ifx\csname urlprefix\endcsname\relax\def\urlprefix{URL }\fi
\providecommand{\bibinfo}[2]{#2}
\providecommand{\eprint}[2][]{\url{#2}}

\bibitem[{\citenamefont{Heisenberg}(1927)}]{heisenberg1927anschaulichen}
\bibinfo{author}{\bibfnamefont{W.}~\bibnamefont{Heisenberg}},
  \bibinfo{journal}{Z. Phys.} \textbf{\bibinfo{volume}{43}},
  \bibinfo{pages}{172} (\bibinfo{year}{1927}),
  \urlprefix\url{https://doi.org/10.1007/BF01397280}.

\bibitem[{\citenamefont{Einstein et~al.}(1935)\citenamefont{Einstein, Podolsky,
  and Rosen}}]{einstein1935can}
\bibinfo{author}{\bibfnamefont{A.}~\bibnamefont{Einstein}},
  \bibinfo{author}{\bibfnamefont{B.}~\bibnamefont{Podolsky}}, \bibnamefont{and}
  \bibinfo{author}{\bibfnamefont{N.}~\bibnamefont{Rosen}},
  \bibinfo{journal}{Phys. Rev.} \textbf{\bibinfo{volume}{47}},
  \bibinfo{pages}{777} (\bibinfo{year}{1935}),
  \urlprefix\url{https://link.aps.org/doi/10.1103/PhysRev.47.777}.

\bibitem[{\citenamefont{Bell}(1964)}]{bell1964einstein}
\bibinfo{author}{\bibfnamefont{J.~S.} \bibnamefont{Bell}},
  \bibinfo{journal}{Physics Physique Fizika} \textbf{\bibinfo{volume}{1}},
  \bibinfo{pages}{195} (\bibinfo{year}{1964}).

\bibitem[{\citenamefont{Clauser et~al.}(1969)\citenamefont{Clauser, Horne,
  Shimony, and Holt}}]{clauser1969proposed}
\bibinfo{author}{\bibfnamefont{J.~F.} \bibnamefont{Clauser}},
  \bibinfo{author}{\bibfnamefont{M.~A.} \bibnamefont{Horne}},
  \bibinfo{author}{\bibfnamefont{A.}~\bibnamefont{Shimony}}, \bibnamefont{and}
  \bibinfo{author}{\bibfnamefont{R.~A.} \bibnamefont{Holt}},
  \bibinfo{journal}{Phys. Rev. Lett.} \textbf{\bibinfo{volume}{23}},
  \bibinfo{pages}{880} (\bibinfo{year}{1969}),
  \urlprefix\url{https://link.aps.org/doi/10.1103/PhysRevLett.23.880}.

\bibitem[{\citenamefont{Mayers and Yao}(1998)}]{mayers1998quantum}
\bibinfo{author}{\bibfnamefont{D.}~\bibnamefont{Mayers}} \bibnamefont{and}
  \bibinfo{author}{\bibfnamefont{A.}~\bibnamefont{Yao}}, in
  \emph{\bibinfo{booktitle}{Proceedings of the 39th Annual Symposium on
  Foundations of Computer Science}} (\bibinfo{publisher}{IEEE Computer
  Society}, \bibinfo{address}{Washington, DC, USA}, \bibinfo{year}{1998}), FOCS
  '98, pp. \bibinfo{pages}{503--509}, ISBN \bibinfo{isbn}{0-8186-9172-7},
  \urlprefix\url{http://dl.acm.org/citation.cfm?id=795664.796390}.

\bibitem[{\citenamefont{Ac\'{i}n et~al.}(2007)\citenamefont{Ac\'{i}n, Brunner,
  Gisin, Massar, Pironio, and Scarani}}]{acin2007device}
\bibinfo{author}{\bibfnamefont{A.}~\bibnamefont{Ac\'{i}n}},
  \bibinfo{author}{\bibfnamefont{N.}~\bibnamefont{Brunner}},
  \bibinfo{author}{\bibfnamefont{N.}~\bibnamefont{Gisin}},
  \bibinfo{author}{\bibfnamefont{S.}~\bibnamefont{Massar}},
  \bibinfo{author}{\bibfnamefont{S.}~\bibnamefont{Pironio}}, \bibnamefont{and}
  \bibinfo{author}{\bibfnamefont{V.}~\bibnamefont{Scarani}},
  \bibinfo{journal}{Phys. Rev. Lett.} \textbf{\bibinfo{volume}{98}},
  \bibinfo{pages}{230501} (\bibinfo{year}{2007}),
  \urlprefix\url{https://link.aps.org/doi/10.1103/PhysRevLett.98.230501}.

\bibitem[{\citenamefont{Bennett and Brassard}(1984)}]{bennett1984quantum}
\bibinfo{author}{\bibfnamefont{C.~H.} \bibnamefont{Bennett}} \bibnamefont{and}
  \bibinfo{author}{\bibfnamefont{G.}~\bibnamefont{Brassard}}, in
  \emph{\bibinfo{booktitle}{Proceedings of IEEE International Conference on
  Computers, Systems and Signal Processing}} (\bibinfo{organization}{Bangalore,
  India}, \bibinfo{year}{1984}), pp. \bibinfo{pages}{175--179}.

\bibitem[{\citenamefont{Ekert}(1991)}]{ekert1991quantum}
\bibinfo{author}{\bibfnamefont{A.~K.} \bibnamefont{Ekert}},
  \bibinfo{journal}{Phys. Rev. Lett.} \textbf{\bibinfo{volume}{67}},
  \bibinfo{pages}{661} (\bibinfo{year}{1991}),
  \urlprefix\url{https://link.aps.org/doi/10.1103/PhysRevLett.67.661}.

\bibitem[{\citenamefont{Lo and Chau}(1999)}]{lo1999unconditional}
\bibinfo{author}{\bibfnamefont{H.-K.} \bibnamefont{Lo}} \bibnamefont{and}
  \bibinfo{author}{\bibfnamefont{H.~F.} \bibnamefont{Chau}},
  \bibinfo{journal}{Science} \textbf{\bibinfo{volume}{283}},
  \bibinfo{pages}{2050} (\bibinfo{year}{1999}).

\bibitem[{\citenamefont{Shor and Preskill}(2000)}]{shor2000simple}
\bibinfo{author}{\bibfnamefont{P.~W.} \bibnamefont{Shor}} \bibnamefont{and}
  \bibinfo{author}{\bibfnamefont{J.}~\bibnamefont{Preskill}},
  \bibinfo{journal}{Phys. Rev. Lett.} \textbf{\bibinfo{volume}{85}},
  \bibinfo{pages}{441} (\bibinfo{year}{2000}),
  \urlprefix\url{https://link.aps.org/doi/10.1103/PhysRevLett.85.441}.

\bibitem[{\citenamefont{Koashi}(2009)}]{koashi2009simple}
\bibinfo{author}{\bibfnamefont{M.}~\bibnamefont{Koashi}}, \bibinfo{journal}{New
  J. Phys.} \textbf{\bibinfo{volume}{11}}, \bibinfo{pages}{045018}
  (\bibinfo{year}{2009}),
  \urlprefix\url{http://stacks.iop.org/1367-2630/11/i=4/a=045018}.

\bibitem[{\citenamefont{Xu et~al.}(2020)\citenamefont{Xu, Ma, Zhang, Lo, and
  Pan}}]{xu2020secure}
\bibinfo{author}{\bibfnamefont{F.}~\bibnamefont{Xu}},
  \bibinfo{author}{\bibfnamefont{X.}~\bibnamefont{Ma}},
  \bibinfo{author}{\bibfnamefont{Q.}~\bibnamefont{Zhang}},
  \bibinfo{author}{\bibfnamefont{H.-K.} \bibnamefont{Lo}}, \bibnamefont{and}
  \bibinfo{author}{\bibfnamefont{J.-W.} \bibnamefont{Pan}},
  \bibinfo{journal}{Rev. Mod. Phys.} \textbf{\bibinfo{volume}{92}},
  \bibinfo{pages}{025002} (\bibinfo{year}{2020}),
  \urlprefix\url{https://link.aps.org/doi/10.1103/RevModPhys.92.025002}.

\bibitem[{Note1()}]{Note1}
Note1, \bibinfo{note}{note, however, that DIQKD is still vulnerable to memory
  attacks and covert channels. A memory attack may occur when the devices are
  reused for multiple QKD sessions, where the untrusted devices may store the
  key information in one DIQKD session and leak it via the necessary public
  communication required by the protocol in new sessions~\cite
  {barrett2013memory}. A covert channel signals the key information to the
  outside in an unnoticed way other than the public communication allowed by
  the protocol. We do not discuss these issues in this work. Particularly,
  there is no memory attack if we only consider one QKD session. For
  discussions on practical countermeasures, one may refer to Ref.~\cite
  {curty2019foiling}. Note that in a different QKD protocol,
  measurement-device-independent quantum key distribution (MDIQKD)~\cite
  {lo2012measurement,braunstein2012side}, the problem of memory attacks and
  covert channels can be naturally avoided on the measurement site.}

\bibitem[{\citenamefont{Vazirani and Vidick}(2014)}]{vazirani2014fully}
\bibinfo{author}{\bibfnamefont{U.}~\bibnamefont{Vazirani}} \bibnamefont{and}
  \bibinfo{author}{\bibfnamefont{T.}~\bibnamefont{Vidick}},
  \bibinfo{journal}{Phys. Rev. Lett.} \textbf{\bibinfo{volume}{113}},
  \bibinfo{pages}{140501} (\bibinfo{year}{2014}),
  \urlprefix\url{https://link.aps.org/doi/10.1103/PhysRevLett.113.140501}.

\bibitem[{\citenamefont{Miller and Shi}(2016)}]{miller2016robust}
\bibinfo{author}{\bibfnamefont{C.~A.} \bibnamefont{Miller}} \bibnamefont{and}
  \bibinfo{author}{\bibfnamefont{Y.}~\bibnamefont{Shi}}, \bibinfo{journal}{J.
  ACM} \textbf{\bibinfo{volume}{63}}, \bibinfo{pages}{33:1}
  (\bibinfo{year}{2016}), ISSN \bibinfo{issn}{0004-5411},
  \urlprefix\url{http://doi.acm.org/10.1145/2885493}.

\bibitem[{\citenamefont{Arnon-Friedman
  et~al.}(2018)\citenamefont{Arnon-Friedman, Dupuis, Fawzi, Renner, and
  Vidick}}]{arnon2018practical}
\bibinfo{author}{\bibfnamefont{R.}~\bibnamefont{Arnon-Friedman}},
  \bibinfo{author}{\bibfnamefont{F.}~\bibnamefont{Dupuis}},
  \bibinfo{author}{\bibfnamefont{O.}~\bibnamefont{Fawzi}},
  \bibinfo{author}{\bibfnamefont{R.}~\bibnamefont{Renner}}, \bibnamefont{and}
  \bibinfo{author}{\bibfnamefont{T.}~\bibnamefont{Vidick}},
  \bibinfo{journal}{Nat. Commn.} \textbf{\bibinfo{volume}{9}},
  \bibinfo{pages}{459} (\bibinfo{year}{2018}), ISSN \bibinfo{issn}{2041-1723},
  \urlprefix\url{https://doi.org/10.1038/s41467-017-02307-4}.

\bibitem[{\citenamefont{Zhang et~al.}(2020)\citenamefont{Zhang, Fu, and
  Knill}}]{knill2018quantum}
\bibinfo{author}{\bibfnamefont{Y.}~\bibnamefont{Zhang}},
  \bibinfo{author}{\bibfnamefont{H.}~\bibnamefont{Fu}}, \bibnamefont{and}
  \bibinfo{author}{\bibfnamefont{E.}~\bibnamefont{Knill}},
  \bibinfo{journal}{Phys. Rev. Research} \textbf{\bibinfo{volume}{2}},
  \bibinfo{pages}{013016} (\bibinfo{year}{2020}),
  \urlprefix\url{https://link.aps.org/doi/10.1103/PhysRevResearch.2.013016}.

\bibitem[{\citenamefont{Nadlinger et~al.}(2022)\citenamefont{Nadlinger, Drmota,
  Nichol, Araneda, Main, Srinivas, Lucas, Ballance, Ivanov, Tan
  et~al.}}]{nadlinger2022experimental}
\bibinfo{author}{\bibfnamefont{D.}~\bibnamefont{Nadlinger}},
  \bibinfo{author}{\bibfnamefont{P.}~\bibnamefont{Drmota}},
  \bibinfo{author}{\bibfnamefont{B.}~\bibnamefont{Nichol}},
  \bibinfo{author}{\bibfnamefont{G.}~\bibnamefont{Araneda}},
  \bibinfo{author}{\bibfnamefont{D.}~\bibnamefont{Main}},
  \bibinfo{author}{\bibfnamefont{R.}~\bibnamefont{Srinivas}},
  \bibinfo{author}{\bibfnamefont{D.}~\bibnamefont{Lucas}},
  \bibinfo{author}{\bibfnamefont{C.}~\bibnamefont{Ballance}},
  \bibinfo{author}{\bibfnamefont{K.}~\bibnamefont{Ivanov}},
  \bibinfo{author}{\bibfnamefont{E.-Z.} \bibnamefont{Tan}},
  \bibnamefont{et~al.}, \bibinfo{journal}{Nature}
  \textbf{\bibinfo{volume}{607}}, \bibinfo{pages}{682} (\bibinfo{year}{2022}).

\bibitem[{\citenamefont{Liu et~al.}(2022)\citenamefont{Liu, Zhang, Zhen, Li,
  Liu, Fan, Xu, Zhang, and Pan}}]{liu2022toward}
\bibinfo{author}{\bibfnamefont{W.-Z.} \bibnamefont{Liu}},
  \bibinfo{author}{\bibfnamefont{Y.-Z.} \bibnamefont{Zhang}},
  \bibinfo{author}{\bibfnamefont{Y.-Z.} \bibnamefont{Zhen}},
  \bibinfo{author}{\bibfnamefont{M.-H.} \bibnamefont{Li}},
  \bibinfo{author}{\bibfnamefont{Y.}~\bibnamefont{Liu}},
  \bibinfo{author}{\bibfnamefont{J.}~\bibnamefont{Fan}},
  \bibinfo{author}{\bibfnamefont{F.}~\bibnamefont{Xu}},
  \bibinfo{author}{\bibfnamefont{Q.}~\bibnamefont{Zhang}}, \bibnamefont{and}
  \bibinfo{author}{\bibfnamefont{J.-W.} \bibnamefont{Pan}},
  \bibinfo{journal}{Phys. Rev. Lett.} \textbf{\bibinfo{volume}{129}},
  \bibinfo{pages}{050502} (\bibinfo{year}{2022}),
  \urlprefix\url{https://link.aps.org/doi/10.1103/PhysRevLett.129.050502}.

\bibitem[{\citenamefont{Zhang et~al.}(2022)\citenamefont{Zhang, van Leent,
  Redeker, Garthoff, Schwonnek, Fertig, Eppelt, Rosenfeld, Scarani, Lim
  et~al.}}]{zhang2022device}
\bibinfo{author}{\bibfnamefont{W.}~\bibnamefont{Zhang}},
  \bibinfo{author}{\bibfnamefont{T.}~\bibnamefont{van Leent}},
  \bibinfo{author}{\bibfnamefont{K.}~\bibnamefont{Redeker}},
  \bibinfo{author}{\bibfnamefont{R.}~\bibnamefont{Garthoff}},
  \bibinfo{author}{\bibfnamefont{R.}~\bibnamefont{Schwonnek}},
  \bibinfo{author}{\bibfnamefont{F.}~\bibnamefont{Fertig}},
  \bibinfo{author}{\bibfnamefont{S.}~\bibnamefont{Eppelt}},
  \bibinfo{author}{\bibfnamefont{W.}~\bibnamefont{Rosenfeld}},
  \bibinfo{author}{\bibfnamefont{V.}~\bibnamefont{Scarani}},
  \bibinfo{author}{\bibfnamefont{C.~C.-W.} \bibnamefont{Lim}},
  \bibnamefont{et~al.}, \bibinfo{journal}{Nature}
  \textbf{\bibinfo{volume}{607}}, \bibinfo{pages}{687} (\bibinfo{year}{2022}).

\bibitem[{\citenamefont{Tsurumaru and Ichikawa}(2016)}]{tsurumaru2016multi}
\bibinfo{author}{\bibfnamefont{T.}~\bibnamefont{Tsurumaru}} \bibnamefont{and}
  \bibinfo{author}{\bibfnamefont{T.}~\bibnamefont{Ichikawa}},
  \bibinfo{journal}{New J. Phys.} \textbf{\bibinfo{volume}{18}},
  \bibinfo{pages}{103043} (\bibinfo{year}{2016}).

\bibitem[{\citenamefont{Woodhead et~al.}(2021)\citenamefont{Woodhead,
  Ac{\'\i}n, and Pironio}}]{woodhead2021device}
\bibinfo{author}{\bibfnamefont{E.}~\bibnamefont{Woodhead}},
  \bibinfo{author}{\bibfnamefont{A.}~\bibnamefont{Ac{\'\i}n}},
  \bibnamefont{and} \bibinfo{author}{\bibfnamefont{S.}~\bibnamefont{Pironio}},
  \bibinfo{journal}{Quantum} \textbf{\bibinfo{volume}{5}}, \bibinfo{pages}{443}
  (\bibinfo{year}{2021}).

\bibitem[{\citenamefont{Ac{\'\i}n et~al.}(2006)\citenamefont{Ac{\'\i}n, Massar,
  and Pironio}}]{acin2006efficient}
\bibinfo{author}{\bibfnamefont{A.}~\bibnamefont{Ac{\'\i}n}},
  \bibinfo{author}{\bibfnamefont{S.}~\bibnamefont{Massar}}, \bibnamefont{and}
  \bibinfo{author}{\bibfnamefont{S.}~\bibnamefont{Pironio}},
  \bibinfo{journal}{New J. Phys.} \textbf{\bibinfo{volume}{8}},
  \bibinfo{pages}{126} (\bibinfo{year}{2006}).

\bibitem[{\citenamefont{Pironio et~al.}(2009)\citenamefont{Pironio, Ac\'{i}n,
  Brunner, Gisin, Massar, and Scarani}}]{pironio2009device}
\bibinfo{author}{\bibfnamefont{S.}~\bibnamefont{Pironio}},
  \bibinfo{author}{\bibfnamefont{A.}~\bibnamefont{Ac\'{i}n}},
  \bibinfo{author}{\bibfnamefont{N.}~\bibnamefont{Brunner}},
  \bibinfo{author}{\bibfnamefont{N.}~\bibnamefont{Gisin}},
  \bibinfo{author}{\bibfnamefont{S.}~\bibnamefont{Massar}}, \bibnamefont{and}
  \bibinfo{author}{\bibfnamefont{V.}~\bibnamefont{Scarani}},
  \bibinfo{journal}{New J. Phys.} \textbf{\bibinfo{volume}{11}},
  \bibinfo{pages}{045021} (\bibinfo{year}{2009}),
  \urlprefix\url{http://stacks.iop.org/1367-2630/11/i=4/a=045021}.

\bibitem[{\citenamefont{Azuma}(1967)}]{azuma1967weighted}
\bibinfo{author}{\bibfnamefont{K.}~\bibnamefont{Azuma}},
  \bibinfo{journal}{Tohoku Math. J. Second Ser.} \textbf{\bibinfo{volume}{19}},
  \bibinfo{pages}{357} (\bibinfo{year}{1967}).

\bibitem[{\citenamefont{Kato}(2020)}]{kato2020concentration}
\bibinfo{author}{\bibfnamefont{G.}~\bibnamefont{Kato}},
  \bibinfo{journal}{arXiv:2002.04357}  (\bibinfo{year}{2020}).

\bibitem[{\citenamefont{Shannon}(1948)}]{shannon1948mathematical}
\bibinfo{author}{\bibfnamefont{C.~E.} \bibnamefont{Shannon}},
  \bibinfo{journal}{Bell Syst. Tech. J.} \textbf{\bibinfo{volume}{27}},
  \bibinfo{pages}{379} (\bibinfo{year}{1948}).

\bibitem[{\citenamefont{Li et~al.}(2021)\citenamefont{Li, Zhang, Liu, Zhao,
  Bai, Liu, Zhao, Peng, Zhang, Zhang et~al.}}]{li2021experimental}
\bibinfo{author}{\bibfnamefont{M.-H.} \bibnamefont{Li}},
  \bibinfo{author}{\bibfnamefont{X.}~\bibnamefont{Zhang}},
  \bibinfo{author}{\bibfnamefont{W.-Z.} \bibnamefont{Liu}},
  \bibinfo{author}{\bibfnamefont{S.-R.} \bibnamefont{Zhao}},
  \bibinfo{author}{\bibfnamefont{B.}~\bibnamefont{Bai}},
  \bibinfo{author}{\bibfnamefont{Y.}~\bibnamefont{Liu}},
  \bibinfo{author}{\bibfnamefont{Q.}~\bibnamefont{Zhao}},
  \bibinfo{author}{\bibfnamefont{Y.}~\bibnamefont{Peng}},
  \bibinfo{author}{\bibfnamefont{J.}~\bibnamefont{Zhang}},
  \bibinfo{author}{\bibfnamefont{Y.}~\bibnamefont{Zhang}},
  \bibnamefont{et~al.}, \bibinfo{journal}{Phys. Rev. Lett.}
  \textbf{\bibinfo{volume}{126}}, \bibinfo{pages}{050503}
  (\bibinfo{year}{2021}),
  \urlprefix\url{https://link.aps.org/doi/10.1103/PhysRevLett.126.050503}.

\bibitem[{\citenamefont{Murta et~al.}(2019)\citenamefont{Murta, van Dam,
  Ribeiro, Hanson, and Wehner}}]{murta2019towards}
\bibinfo{author}{\bibfnamefont{G.}~\bibnamefont{Murta}},
  \bibinfo{author}{\bibfnamefont{S.~B.} \bibnamefont{van Dam}},
  \bibinfo{author}{\bibfnamefont{J.}~\bibnamefont{Ribeiro}},
  \bibinfo{author}{\bibfnamefont{R.}~\bibnamefont{Hanson}}, \bibnamefont{and}
  \bibinfo{author}{\bibfnamefont{S.}~\bibnamefont{Wehner}},
  \bibinfo{journal}{Quantum Sci. Technol.} \textbf{\bibinfo{volume}{4}},
  \bibinfo{pages}{035011} (\bibinfo{year}{2019}).

\bibitem[{\citenamefont{Rosenfeld et~al.}(2017)\citenamefont{Rosenfeld,
  Burchardt, Garthoff, Redeker, Ortegel, Rau, and
  Weinfurter}}]{rosenfeld2017event}
\bibinfo{author}{\bibfnamefont{W.}~\bibnamefont{Rosenfeld}},
  \bibinfo{author}{\bibfnamefont{D.}~\bibnamefont{Burchardt}},
  \bibinfo{author}{\bibfnamefont{R.}~\bibnamefont{Garthoff}},
  \bibinfo{author}{\bibfnamefont{K.}~\bibnamefont{Redeker}},
  \bibinfo{author}{\bibfnamefont{N.}~\bibnamefont{Ortegel}},
  \bibinfo{author}{\bibfnamefont{M.}~\bibnamefont{Rau}}, \bibnamefont{and}
  \bibinfo{author}{\bibfnamefont{H.}~\bibnamefont{Weinfurter}},
  \bibinfo{journal}{Phys. Rev. Lett.} \textbf{\bibinfo{volume}{119}},
  \bibinfo{pages}{010402} (\bibinfo{year}{2017}),
  \urlprefix\url{https://link.aps.org/doi/10.1103/PhysRevLett.119.010402}.

\bibitem[{\citenamefont{Hensen et~al.}(2015)\citenamefont{Hensen, Bernien,
  Dr{\'e}au, Reiserer, Kalb, Blok, Ruitenberg, Vermeulen, Schouten, Abell{\'a}n
  et~al.}}]{hensen2015loophole}
\bibinfo{author}{\bibfnamefont{B.}~\bibnamefont{Hensen}},
  \bibinfo{author}{\bibfnamefont{H.}~\bibnamefont{Bernien}},
  \bibinfo{author}{\bibfnamefont{A.~E.} \bibnamefont{Dr{\'e}au}},
  \bibinfo{author}{\bibfnamefont{A.}~\bibnamefont{Reiserer}},
  \bibinfo{author}{\bibfnamefont{N.}~\bibnamefont{Kalb}},
  \bibinfo{author}{\bibfnamefont{M.~S.} \bibnamefont{Blok}},
  \bibinfo{author}{\bibfnamefont{J.}~\bibnamefont{Ruitenberg}},
  \bibinfo{author}{\bibfnamefont{R.~F.} \bibnamefont{Vermeulen}},
  \bibinfo{author}{\bibfnamefont{R.~N.} \bibnamefont{Schouten}},
  \bibinfo{author}{\bibfnamefont{C.}~\bibnamefont{Abell{\'a}n}},
  \bibnamefont{et~al.}, \bibinfo{journal}{Nature}
  \textbf{\bibinfo{volume}{526}}, \bibinfo{pages}{682} (\bibinfo{year}{2015}).

\bibitem[{\citenamefont{Liu et~al.}(2021)\citenamefont{Liu, Li, Ragy, Zhao,
  Bai, Liu, Brown, Zhang, Colbeck, Fan et~al.}}]{liu2021device}
\bibinfo{author}{\bibfnamefont{W.-Z.} \bibnamefont{Liu}},
  \bibinfo{author}{\bibfnamefont{M.-H.} \bibnamefont{Li}},
  \bibinfo{author}{\bibfnamefont{S.}~\bibnamefont{Ragy}},
  \bibinfo{author}{\bibfnamefont{S.-R.} \bibnamefont{Zhao}},
  \bibinfo{author}{\bibfnamefont{B.}~\bibnamefont{Bai}},
  \bibinfo{author}{\bibfnamefont{Y.}~\bibnamefont{Liu}},
  \bibinfo{author}{\bibfnamefont{P.~J.} \bibnamefont{Brown}},
  \bibinfo{author}{\bibfnamefont{J.}~\bibnamefont{Zhang}},
  \bibinfo{author}{\bibfnamefont{R.}~\bibnamefont{Colbeck}},
  \bibinfo{author}{\bibfnamefont{J.}~\bibnamefont{Fan}}, \bibnamefont{et~al.},
  \bibinfo{journal}{Nat. Phys.} \textbf{\bibinfo{volume}{17}},
  \bibinfo{pages}{448} (\bibinfo{year}{2021}).

\bibitem[{\citenamefont{Kraus et~al.}(2005)\citenamefont{Kraus, Gisin, and
  Renner}}]{kraus2005lower}
\bibinfo{author}{\bibfnamefont{B.}~\bibnamefont{Kraus}},
  \bibinfo{author}{\bibfnamefont{N.}~\bibnamefont{Gisin}}, \bibnamefont{and}
  \bibinfo{author}{\bibfnamefont{R.}~\bibnamefont{Renner}},
  \bibinfo{journal}{Phys. Rev. Lett.} \textbf{\bibinfo{volume}{95}},
  \bibinfo{pages}{080501} (\bibinfo{year}{2005}),
  \urlprefix\url{https://link.aps.org/doi/10.1103/PhysRevLett.95.080501}.

\bibitem[{\citenamefont{Ma and L{\"u}tkenhaus}(2012)}]{ma2012improved}
\bibinfo{author}{\bibfnamefont{X.}~\bibnamefont{Ma}} \bibnamefont{and}
  \bibinfo{author}{\bibfnamefont{N.}~\bibnamefont{L{\"u}tkenhaus}},
  \bibinfo{journal}{Quantum Inf Comput.} \textbf{\bibinfo{volume}{12}},
  \bibinfo{pages}{203} (\bibinfo{year}{2012}).

\bibitem[{\citenamefont{Gottesman and Lo}(2003)}]{gottesman2003proof}
\bibinfo{author}{\bibfnamefont{D.}~\bibnamefont{Gottesman}} \bibnamefont{and}
  \bibinfo{author}{\bibfnamefont{H.-K.} \bibnamefont{Lo}},
  \bibinfo{journal}{IEEE T. Inform. Theory} \textbf{\bibinfo{volume}{49}},
  \bibinfo{pages}{457} (\bibinfo{year}{2003}),
  \urlprefix\url{https://doi.org/10.1109/TIT.2002.807289}.

\bibitem[{\citenamefont{Chau}(2002)}]{chau2002practical}
\bibinfo{author}{\bibfnamefont{H.~F.} \bibnamefont{Chau}},
  \bibinfo{journal}{Phys. Rev. A} \textbf{\bibinfo{volume}{66}},
  \bibinfo{pages}{060302} (\bibinfo{year}{2002}),
  \urlprefix\url{https://link.aps.org/doi/10.1103/PhysRevA.66.060302}.

\bibitem[{\citenamefont{Ho et~al.}(2020)\citenamefont{Ho, Sekatski, Tan,
  Renner, Bancal, and Sangouard}}]{ho2020noisy}
\bibinfo{author}{\bibfnamefont{M.}~\bibnamefont{Ho}},
  \bibinfo{author}{\bibfnamefont{P.}~\bibnamefont{Sekatski}},
  \bibinfo{author}{\bibfnamefont{E.~Y.-Z.} \bibnamefont{Tan}},
  \bibinfo{author}{\bibfnamefont{R.}~\bibnamefont{Renner}},
  \bibinfo{author}{\bibfnamefont{J.-D.} \bibnamefont{Bancal}},
  \bibnamefont{and}
  \bibinfo{author}{\bibfnamefont{N.}~\bibnamefont{Sangouard}},
  \bibinfo{journal}{Phys. Rev. Lett.} \textbf{\bibinfo{volume}{124}},
  \bibinfo{pages}{230502} (\bibinfo{year}{2020}),
  \urlprefix\url{https://link.aps.org/doi/10.1103/PhysRevLett.124.230502}.

\bibitem[{\citenamefont{Tan et~al.}(2020)\citenamefont{Tan, Lim, and
  Renner}}]{tan2020advantage}
\bibinfo{author}{\bibfnamefont{E.~Y.-Z.} \bibnamefont{Tan}},
  \bibinfo{author}{\bibfnamefont{C.~C.-W.} \bibnamefont{Lim}},
  \bibnamefont{and} \bibinfo{author}{\bibfnamefont{R.}~\bibnamefont{Renner}},
  \bibinfo{journal}{Phys. Rev. Lett.} \textbf{\bibinfo{volume}{124}},
  \bibinfo{pages}{020502} (\bibinfo{year}{2020}),
  \urlprefix\url{https://link.aps.org/doi/10.1103/PhysRevLett.124.020502}.

\bibitem[{\citenamefont{Jain et~al.}(2020)\citenamefont{Jain, Miller, and
  Shi}}]{jain2020parallel}
\bibinfo{author}{\bibfnamefont{R.}~\bibnamefont{Jain}},
  \bibinfo{author}{\bibfnamefont{C.~A.} \bibnamefont{Miller}},
  \bibnamefont{and} \bibinfo{author}{\bibfnamefont{Y.}~\bibnamefont{Shi}},
  \bibinfo{journal}{IEEE Trans. Inf. Theory} \textbf{\bibinfo{volume}{66}},
  \bibinfo{pages}{5567} (\bibinfo{year}{2020}).

\bibitem[{\citenamefont{Eberhard}(1993)}]{eberhard1993background}
\bibinfo{author}{\bibfnamefont{P.~H.} \bibnamefont{Eberhard}},
  \bibinfo{journal}{Phys. Rev. A} \textbf{\bibinfo{volume}{47}},
  \bibinfo{pages}{R747} (\bibinfo{year}{1993}),
  \urlprefix\url{https://link.aps.org/doi/10.1103/PhysRevA.47.R747}.

\bibitem[{\citenamefont{Barrett et~al.}(2013)\citenamefont{Barrett, Colbeck,
  and Kent}}]{barrett2013memory}
\bibinfo{author}{\bibfnamefont{J.}~\bibnamefont{Barrett}},
  \bibinfo{author}{\bibfnamefont{R.}~\bibnamefont{Colbeck}}, \bibnamefont{and}
  \bibinfo{author}{\bibfnamefont{A.}~\bibnamefont{Kent}},
  \bibinfo{journal}{Phys. Rev. Lett.} \textbf{\bibinfo{volume}{110}},
  \bibinfo{pages}{010503} (\bibinfo{year}{2013}),
  \urlprefix\url{https://link.aps.org/doi/10.1103/PhysRevLett.110.010503}.

\bibitem[{\citenamefont{Curty and Lo}(2019)}]{curty2019foiling}
\bibinfo{author}{\bibfnamefont{M.}~\bibnamefont{Curty}} \bibnamefont{and}
  \bibinfo{author}{\bibfnamefont{H.-K.} \bibnamefont{Lo}},
  \bibinfo{journal}{Npj Quantum Inf.} \textbf{\bibinfo{volume}{5}},
  \bibinfo{pages}{1} (\bibinfo{year}{2019}).

\bibitem[{\citenamefont{Ben-Or et~al.}(2005)\citenamefont{Ben-Or, Horodecki,
  Leung, Mayers, and Oppenheim}}]{ben2005universal}
\bibinfo{author}{\bibfnamefont{M.}~\bibnamefont{Ben-Or}},
  \bibinfo{author}{\bibfnamefont{M.}~\bibnamefont{Horodecki}},
  \bibinfo{author}{\bibfnamefont{D.~W.} \bibnamefont{Leung}},
  \bibinfo{author}{\bibfnamefont{D.}~\bibnamefont{Mayers}}, \bibnamefont{and}
  \bibinfo{author}{\bibfnamefont{J.}~\bibnamefont{Oppenheim}}, in
  \emph{\bibinfo{booktitle}{Proceedings of the Second International Conference
  on Theory of Cryptography}} (\bibinfo{publisher}{Springer-Verlag},
  \bibinfo{address}{Berlin, Heidelberg}, \bibinfo{year}{2005}), TCC'05, pp.
  \bibinfo{pages}{386--406}, ISBN \bibinfo{isbn}{3-540-24573-1,
  978-3-540-24573-5},
  \urlprefix\url{http://dx.doi.org/10.1007/978-3-540-30576-7_21}.

\bibitem[{\citenamefont{Renner and K\"{o}nig}(2005)}]{renner2005Security}
\bibinfo{author}{\bibfnamefont{R.}~\bibnamefont{Renner}} \bibnamefont{and}
  \bibinfo{author}{\bibfnamefont{R.}~\bibnamefont{K\"{o}nig}}, in
  \emph{\bibinfo{booktitle}{Proceedings of the Second International Conference
  on Theory of Cryptography}} (\bibinfo{publisher}{Springer-Verlag},
  \bibinfo{address}{Berlin, Heidelberg}, \bibinfo{year}{2005}), TCC'05, pp.
  \bibinfo{pages}{407--425}, ISBN \bibinfo{isbn}{3-540-24573-1,
  978-3-540-24573-5},
  \urlprefix\url{http://dx.doi.org/10.1007/978-3-540-30576-7_22}.

\bibitem[{Note2()}]{Note2}
Note2, \bibinfo{note}{there is an error in the security parameter in the
  original work of Ref.~\cite {koashi2009simple} due to the definition of
  fidelity measures. This is later corrected in Ref.~\cite
  {fung2010practical}}.

\bibitem[{\citenamefont{Fung et~al.}(2010)\citenamefont{Fung, Ma, and
  Chau}}]{fung2010practical}
\bibinfo{author}{\bibfnamefont{C.-H.~F.} \bibnamefont{Fung}},
  \bibinfo{author}{\bibfnamefont{X.}~\bibnamefont{Ma}}, \bibnamefont{and}
  \bibinfo{author}{\bibfnamefont{H.~F.} \bibnamefont{Chau}},
  \bibinfo{journal}{Phys. Rev. A} \textbf{\bibinfo{volume}{81}},
  \bibinfo{pages}{012318} (\bibinfo{year}{2010}),
  \urlprefix\url{https://link.aps.org/doi/10.1103/PhysRevA.81.012318}.

\bibitem[{\citenamefont{Lo}(2003)}]{lo2003method}
\bibinfo{author}{\bibfnamefont{H.-K.} \bibnamefont{Lo}}, \bibinfo{journal}{New
  J. Phys.} \textbf{\bibinfo{volume}{5}}, \bibinfo{pages}{36}
  (\bibinfo{year}{2003}).

\bibitem[{\citenamefont{Huang et~al.}(2021)\citenamefont{Huang, Zhang, and
  Ma}}]{huang2021stream}
\bibinfo{author}{\bibfnamefont{Y.}~\bibnamefont{Huang}},
  \bibinfo{author}{\bibfnamefont{X.}~\bibnamefont{Zhang}}, \bibnamefont{and}
  \bibinfo{author}{\bibfnamefont{X.}~\bibnamefont{Ma}},
  \bibinfo{journal}{arXiv:2111.14108}  (\bibinfo{year}{2021}).

\bibitem[{\citenamefont{Brassard and Salvail}(1994)}]{brassard1993secret}
\bibinfo{author}{\bibfnamefont{G.}~\bibnamefont{Brassard}} \bibnamefont{and}
  \bibinfo{author}{\bibfnamefont{L.}~\bibnamefont{Salvail}}, in
  \emph{\bibinfo{booktitle}{Advances in Cryptology --- EUROCRYPT '93}}, edited
  by \bibinfo{editor}{\bibfnamefont{T.}~\bibnamefont{Helleseth}}
  (\bibinfo{publisher}{Springer Berlin Heidelberg}, \bibinfo{address}{Berlin,
  Heidelberg}, \bibinfo{year}{1994}), pp. \bibinfo{pages}{410--423}, ISBN
  \bibinfo{isbn}{978-3-540-48285-7}.

\bibitem[{\citenamefont{Tang et~al.}(2021)\citenamefont{Tang, Liu, Yu, and
  Wu}}]{tang2021shannon}
\bibinfo{author}{\bibfnamefont{B.-Y.} \bibnamefont{Tang}},
  \bibinfo{author}{\bibfnamefont{B.}~\bibnamefont{Liu}},
  \bibinfo{author}{\bibfnamefont{W.-R.} \bibnamefont{Yu}}, \bibnamefont{and}
  \bibinfo{author}{\bibfnamefont{C.-Q.} \bibnamefont{Wu}},
  \bibinfo{journal}{Quantum Inf. Process.} \textbf{\bibinfo{volume}{20}},
  \bibinfo{pages}{1} (\bibinfo{year}{2021}).

\bibitem[{\citenamefont{Neumark}(1940)}]{neumark1940spectral}
\bibinfo{author}{\bibfnamefont{M.}~\bibnamefont{Neumark}},
  \bibinfo{journal}{Izv. Ross. Akad. Nauk Ser. Mat.}
  \textbf{\bibinfo{volume}{4}}, \bibinfo{pages}{277} (\bibinfo{year}{1940}).

\bibitem[{\citenamefont{Bennett et~al.}(1993)\citenamefont{Bennett, Brassard,
  Cr\'epeau, Jozsa, Peres, and Wootters}}]{bennett1993teleporting}
\bibinfo{author}{\bibfnamefont{C.~H.} \bibnamefont{Bennett}},
  \bibinfo{author}{\bibfnamefont{G.}~\bibnamefont{Brassard}},
  \bibinfo{author}{\bibfnamefont{C.}~\bibnamefont{Cr\'epeau}},
  \bibinfo{author}{\bibfnamefont{R.}~\bibnamefont{Jozsa}},
  \bibinfo{author}{\bibfnamefont{A.}~\bibnamefont{Peres}}, \bibnamefont{and}
  \bibinfo{author}{\bibfnamefont{W.~K.} \bibnamefont{Wootters}},
  \bibinfo{journal}{Phys. Rev. Lett.} \textbf{\bibinfo{volume}{70}},
  \bibinfo{pages}{1895} (\bibinfo{year}{1993}),
  \urlprefix\url{https://link.aps.org/doi/10.1103/PhysRevLett.70.1895}.

\bibitem[{\citenamefont{Tsirelson}(1993)}]{tsirelson1993some}
\bibinfo{author}{\bibfnamefont{B.~S.} \bibnamefont{Tsirelson}},
  \bibinfo{journal}{Hadronic Journal Supplement} \textbf{\bibinfo{volume}{8}},
  \bibinfo{pages}{329} (\bibinfo{year}{1993}).

\bibitem[{\citenamefont{Berta et~al.}(2010)\citenamefont{Berta, Christandl,
  Colbeck, Renes, and Renner}}]{berta2010uncertainty}
\bibinfo{author}{\bibfnamefont{M.}~\bibnamefont{Berta}},
  \bibinfo{author}{\bibfnamefont{M.}~\bibnamefont{Christandl}},
  \bibinfo{author}{\bibfnamefont{R.}~\bibnamefont{Colbeck}},
  \bibinfo{author}{\bibfnamefont{J.~M.} \bibnamefont{Renes}}, \bibnamefont{and}
  \bibinfo{author}{\bibfnamefont{R.}~\bibnamefont{Renner}},
  \bibinfo{journal}{Nature Physics} \textbf{\bibinfo{volume}{6}},
  \bibinfo{pages}{659} (\bibinfo{year}{2010}).

\bibitem[{\citenamefont{Cover and Thomas}(2012)}]{cover2012elements}
\bibinfo{author}{\bibfnamefont{T.~M.} \bibnamefont{Cover}} \bibnamefont{and}
  \bibinfo{author}{\bibfnamefont{J.~A.} \bibnamefont{Thomas}},
  \emph{\bibinfo{title}{Elements of information theory}}
  (\bibinfo{publisher}{John Wiley \& Sons}, \bibinfo{year}{2012}).

\bibitem[{\citenamefont{Wilde}(2011)}]{wilde2011classical}
\bibinfo{author}{\bibfnamefont{M.~M.} \bibnamefont{Wilde}},
  \bibinfo{journal}{arXiv:1106.1445}  (\bibinfo{year}{2011}).

\bibitem[{\citenamefont{Renes and Smith}(2007)}]{renes2007noisy}
\bibinfo{author}{\bibfnamefont{J.~M.} \bibnamefont{Renes}} \bibnamefont{and}
  \bibinfo{author}{\bibfnamefont{G.}~\bibnamefont{Smith}},
  \bibinfo{journal}{Phys. Rev. Lett.} \textbf{\bibinfo{volume}{98}},
  \bibinfo{pages}{020502} (\bibinfo{year}{2007}),
  \urlprefix\url{https://link.aps.org/doi/10.1103/PhysRevLett.98.020502}.

\bibitem[{\citenamefont{Arnon-Friedman
  et~al.}(2019)\citenamefont{Arnon-Friedman, Renner, and
  Vidick}}]{arnon2019simple}
\bibinfo{author}{\bibfnamefont{R.}~\bibnamefont{Arnon-Friedman}},
  \bibinfo{author}{\bibfnamefont{R.}~\bibnamefont{Renner}}, \bibnamefont{and}
  \bibinfo{author}{\bibfnamefont{T.}~\bibnamefont{Vidick}},
  \bibinfo{journal}{SIAM J. Comput.} \textbf{\bibinfo{volume}{48}},
  \bibinfo{pages}{181} (\bibinfo{year}{2019}).

\bibitem[{\citenamefont{Renner}(2008)}]{renner2008security}
\bibinfo{author}{\bibfnamefont{R.}~\bibnamefont{Renner}},
  \bibinfo{journal}{Int. J. Quantum Inf.} \textbf{\bibinfo{volume}{06}},
  \bibinfo{pages}{1} (\bibinfo{year}{2008}),
  \urlprefix\url{https://doi.org/10.1142/S0219749908003256}.

\bibitem[{\citenamefont{Arnon-Friedman and Bancal}(2019)}]{arnon2019device}
\bibinfo{author}{\bibfnamefont{R.}~\bibnamefont{Arnon-Friedman}}
  \bibnamefont{and} \bibinfo{author}{\bibfnamefont{J.-D.}
  \bibnamefont{Bancal}}, \bibinfo{journal}{New J. Phys.}
  \textbf{\bibinfo{volume}{21}}, \bibinfo{pages}{033010}
  (\bibinfo{year}{2019}).

\bibitem[{\citenamefont{Mao et~al.}(2022)\citenamefont{Mao, Li, Hao,
  Abd-El-Atty, and Iliyasu}}]{mao2022high}
\bibinfo{author}{\bibfnamefont{H.-K.} \bibnamefont{Mao}},
  \bibinfo{author}{\bibfnamefont{Q.}~\bibnamefont{Li}},
  \bibinfo{author}{\bibfnamefont{P.-L.} \bibnamefont{Hao}},
  \bibinfo{author}{\bibfnamefont{B.}~\bibnamefont{Abd-El-Atty}},
  \bibnamefont{and} \bibinfo{author}{\bibfnamefont{A.~M.}
  \bibnamefont{Iliyasu}}, \bibinfo{journal}{Opt. Quantum Electron.}
  \textbf{\bibinfo{volume}{54}}, \bibinfo{pages}{1} (\bibinfo{year}{2022}).

\bibitem[{Note3()}]{Note3}
Note3, \bibinfo{note}{we do not reproduce the results in Ref.~\cite
  {nadlinger2022experimental} due to lack of some optimisation details in the
  advanced EAT-based results. We refer readers who are interested in this issue
  to Figure 4 in Ref.~\cite {nadlinger2022experimental}.}

\bibitem[{\citenamefont{Bhavsar et~al.}(2021)\citenamefont{Bhavsar, Ragy, and
  Colbeck}}]{bhavsar2021calculation}
\bibinfo{author}{\bibfnamefont{R.}~\bibnamefont{Bhavsar}},
  \bibinfo{author}{\bibfnamefont{S.}~\bibnamefont{Ragy}}, \bibnamefont{and}
  \bibinfo{author}{\bibfnamefont{R.}~\bibnamefont{Colbeck}},
  \bibinfo{journal}{arXiv:2103.07504}  (\bibinfo{year}{2021}).

\bibitem[{\citenamefont{Dupuis and Fawzi}(2019)}]{dupuis2019entropy}
\bibinfo{author}{\bibfnamefont{F.}~\bibnamefont{Dupuis}} \bibnamefont{and}
  \bibinfo{author}{\bibfnamefont{O.}~\bibnamefont{Fawzi}},
  \bibinfo{journal}{IEEE Trans. Inf. Theory} \textbf{\bibinfo{volume}{65}},
  \bibinfo{pages}{7596} (\bibinfo{year}{2019}).

\bibitem[{\citenamefont{Metger et~al.}(2022)\citenamefont{Metger, Fawzi,
  Sutter, and Renner}}]{metger2022generalised}
\bibinfo{author}{\bibfnamefont{T.}~\bibnamefont{Metger}},
  \bibinfo{author}{\bibfnamefont{O.}~\bibnamefont{Fawzi}},
  \bibinfo{author}{\bibfnamefont{D.}~\bibnamefont{Sutter}}, \bibnamefont{and}
  \bibinfo{author}{\bibfnamefont{R.}~\bibnamefont{Renner}},
  \emph{\bibinfo{title}{Generalised entropy accumulation}}
  (\bibinfo{year}{2022}), \urlprefix\url{https://arxiv.org/abs/2203.04989}.

\bibitem[{\citenamefont{Lo et~al.}(2012)\citenamefont{Lo, Curty, and
  Qi}}]{lo2012measurement}
\bibinfo{author}{\bibfnamefont{H.-K.} \bibnamefont{Lo}},
  \bibinfo{author}{\bibfnamefont{M.}~\bibnamefont{Curty}}, \bibnamefont{and}
  \bibinfo{author}{\bibfnamefont{B.}~\bibnamefont{Qi}}, \bibinfo{journal}{Phys.
  Rev. Lett.} \textbf{\bibinfo{volume}{108}}, \bibinfo{pages}{130503}
  (\bibinfo{year}{2012}),
  \urlprefix\url{https://link.aps.org/doi/10.1103/PhysRevLett.108.130503}.

\bibitem[{\citenamefont{Braunstein and Pirandola}(2012)}]{braunstein2012side}
\bibinfo{author}{\bibfnamefont{S.~L.} \bibnamefont{Braunstein}}
  \bibnamefont{and}
  \bibinfo{author}{\bibfnamefont{S.}~\bibnamefont{Pirandola}},
  \bibinfo{journal}{Phys. Rev. Lett.} \textbf{\bibinfo{volume}{108}},
  \bibinfo{pages}{130502} (\bibinfo{year}{2012}),
  \urlprefix\url{https://link.aps.org/doi/10.1103/PhysRevLett.108.130502}.

\bibitem[{\citenamefont{Brunner et~al.}(2014)\citenamefont{Brunner, Cavalcanti,
  Pironio, Scarani, and Wehner}}]{brunner2014bell}
\bibinfo{author}{\bibfnamefont{N.}~\bibnamefont{Brunner}},
  \bibinfo{author}{\bibfnamefont{D.}~\bibnamefont{Cavalcanti}},
  \bibinfo{author}{\bibfnamefont{S.}~\bibnamefont{Pironio}},
  \bibinfo{author}{\bibfnamefont{V.}~\bibnamefont{Scarani}}, \bibnamefont{and}
  \bibinfo{author}{\bibfnamefont{S.}~\bibnamefont{Wehner}},
  \bibinfo{journal}{Rev. Mod. Phys.} \textbf{\bibinfo{volume}{86}},
  \bibinfo{pages}{419} (\bibinfo{year}{2014}),
  \urlprefix\url{https://link.aps.org/doi/10.1103/RevModPhys.86.419}.

\bibitem[{\citenamefont{Navascu\'es et~al.}(2007)\citenamefont{Navascu\'es,
  Pironio, and Ac\'{\i}n}}]{navascues2007bounding}
\bibinfo{author}{\bibfnamefont{M.}~\bibnamefont{Navascu\'es}},
  \bibinfo{author}{\bibfnamefont{S.}~\bibnamefont{Pironio}}, \bibnamefont{and}
  \bibinfo{author}{\bibfnamefont{A.}~\bibnamefont{Ac\'{\i}n}},
  \bibinfo{journal}{Phys. Rev. Lett.} \textbf{\bibinfo{volume}{98}},
  \bibinfo{pages}{010401} (\bibinfo{year}{2007}),
  \urlprefix\url{https://link.aps.org/doi/10.1103/PhysRevLett.98.010401}.

\bibitem[{\citenamefont{{Colbeck}}(2006)}]{Colbeck09}
\bibinfo{author}{\bibfnamefont{R.}~\bibnamefont{{Colbeck}}}, Ph.D. thesis,
  \bibinfo{school}{Trinity College, University of Cambridge, arXiv: 0911.3814}
  (\bibinfo{year}{2006}).

\bibitem[{\citenamefont{Tsurumaru}(2022)}]{tsurumaru2021equivalence}
\bibinfo{author}{\bibfnamefont{T.}~\bibnamefont{Tsurumaru}},
  \bibinfo{journal}{IEEE Trans. Inf. Theory} \textbf{\bibinfo{volume}{68}},
  \bibinfo{pages}{1016} (\bibinfo{year}{2022}).

\bibitem[{\citenamefont{Bennett et~al.}(1996)\citenamefont{Bennett, DiVincenzo,
  Smolin, and Wootters}}]{Bennett96Mixed}
\bibinfo{author}{\bibfnamefont{C.~H.} \bibnamefont{Bennett}},
  \bibinfo{author}{\bibfnamefont{D.~P.} \bibnamefont{DiVincenzo}},
  \bibinfo{author}{\bibfnamefont{J.~A.} \bibnamefont{Smolin}},
  \bibnamefont{and} \bibinfo{author}{\bibfnamefont{W.~K.}
  \bibnamefont{Wootters}}, \bibinfo{journal}{Phys. Rev. A}
  \textbf{\bibinfo{volume}{54}}, \bibinfo{pages}{3824} (\bibinfo{year}{1996}),
  \urlprefix\url{https://link.aps.org/doi/10.1103/PhysRevA.54.3824}.

\end{thebibliography}

\end{document}